\documentclass[11pt]{article}

\usepackage{amsfonts,latexsym,amsthm,amssymb,amsmath,amscd,euscript,tikz,mathtools}
\usepackage{framed}
\usepackage[margin=1in]{geometry}
\usepackage{color}
\usepackage[colorlinks=true,citecolor=blue,linkcolor=blue]{hyperref}
\usepackage{enumitem}
\usepackage{siunitx}
\usepackage{textcomp}
\usepackage{physics}
\usepackage{mathrsfs}
\usepackage{dsfont}
\usepackage[ruled,vlined,linesnumbered]{algorithm2e}
\usepackage{titlesec}
\usepackage{tikz-cd}
\usepackage{thmtools}
\usepackage{thm-restate}
\usepackage{cleveref}
\usepackage{stmaryrd}
\usepackage{graphicx}
\usetikzlibrary{positioning,chains,fit,shapes,calc}

\allowdisplaybreaks[1]



\newtheorem{theorem}{Theorem}[section]

\newtheorem{lemma}[theorem]{Lemma}
\newtheorem{claim}[theorem]{Claim}
\newtheorem{corollary}[theorem]{Corollary}

\newtheorem{definition}[theorem]{Definition}
\newtheorem{example}[theorem]{Example}
\newtheorem{remark}[theorem]{Remark}

\theoremstyle{remark}

\newcommand{\cA}{\mathcal{A}}\newcommand{\cB}{\mathcal{B}}
\newcommand{\cC}{\mathcal{C}}\newcommand{\cD}{\mathcal{D}}
\newcommand{\cF}{\mathcal{F}}
\newcommand{\cG}{\mathcal{G}}
\newcommand{\cI}{\mathcal{I}}

\newcommand{\cS}{\mathcal{S}}

\newcommand{\cX}{\mathcal{X}}
\newcommand{\cY}{\mathcal{Y}}

\newcommand{\bC}{\mathbb{C}}
\newcommand{\bE}{\mathbb{E}}\newcommand{\bF}{\mathbb{F}}

\newcommand{\bN}{\mathbb{N}}

\newcommand{\bR}{\mathbb{R}}

\newcommand{\bZ}{\mathbb{Z}}

\newcommand{\1}{\mathds{1}}

\newcommand{\poly}{\operatorname{poly}}

\newcommand{\Enc}{\operatorname{Enc}}

\newcommand{\evl}{\operatorname{ev}}

\newcommand{\ver}{\operatorname{v}}
\newcommand{\dir}[1]{{#1}^{\operatorname{dir}}}
\newcommand{\spectrum}{\operatorname{spectrum}}
\newcommand{\labV}{\operatorname{L}_V}
\newcommand{\labE}{\operatorname{L}_E}
\newcommand{\liftlab}{\operatorname{L}_{\mathrm{lift}}}

\newcommand{\ind}[1]{\mathbf{1}_{#1}}
\newcommand{\subrank}{\operatorname{subrank}}
\newcommand{\loc}{\mathrm{loc}}
\newcommand{\direc}{\operatorname{Dir}}

\newcommand{\nc}{\newcommand}


\nc{\on}{\operatorname}
\nc{\Spec}{\on{Spec}}
\nc{\Aut}{\textit{Aut}}
\nc{\id}{\textit{id}}
\nc{\chr}{\on{char}}
\nc{\im}{\on{im}}
\nc{\Hom}{\on{Hom}}
\nc{\lcm}{\on{lcm}}
\nc{\dual}[1]{\prescript{t}{}{#1}}
\nc{\transpose}[1]{{#1}^{\intercal}}
\nc{\Sym}{\on{Sym}}
\nc{\End}{\on{End}}
\nc{\stab}{\on{stab}}
\nc{\Li}{\on{Li}}
\nc{\spn}{\on{span}}
\nc{\sgn}{\on{sign}}
\nc{\supp}{\on{supp}}
\nc{\Unif}{\on{Unif}}



\makeatletter
\newcommand\footnoteref[1]{\protected@xdef\@thefnmark{\ref{#1}}\@footnotemark}
\makeatother

\title{Quantum LDPC Codes with Transversal Non-Clifford Gates \\ via Products of Algebraic Codes}
\author{Louis Golowich\thanks{Supported by a National Science Foundation Graduate Research Fellowship under Grant No.~DGE 2146752. Research also supported in part by a ONR grant N00014-24-1-2491 and a UC Noyce initiative award.} \\
  UC Berkeley \\
  \href{mailto:lgolowich@berkeley.edu}{\texttt{lgolowich@berkeley.edu}}
  \and
  Ting-Chun Lin\thanks{Supported in part by funds provided by the U.S. Department of Energy (D.O.E.) under the cooperative research agreement DE-SC0009919 and by the Simons Collaboration on Ultra-Quantum Matter, which is a grant from the Simons Foundation (652264 JM).} \\
  UC San Diego \\
  Hon Hai Research Institute \\
  \href{mailto:til022@ucsd.edu}{\texttt{til022@ucsd.edu}}
}

\parskip=0.5ex
\begin{document}

\pagenumbering{gobble}

\maketitle
\thispagestyle{empty}

\begin{abstract}
  For every integer $r\geq 2$ and every $\epsilon>0$, we construct an explicit infinite family of quantum LDPC codes supporting a transversal $C^{r-1}Z$ gate with length $N$, dimension $K\geq N^{1-\epsilon}$, distance $D\geq N^{1/r}/\poly(\log N)$, and stabilizer weight $w\leq\poly(\log N)$. The previous state of the art construction (in most parameter regimes) was the $r$-dimensional color code, which has only constant dimension $K=O(1)$, and otherwise has the same parameters up to polylogarithmic factors. Our construction provides the first known codes with low-weight stabilizers that are capable of magic state distillation with arbitrarily small yield parameter $\gamma=\log(N/K)/\log(D)>0$.
  

  A classical analogue of transversal $C^{r-1}Z$ gates is given by the multiplication property, which requires component-wise products of classical codewords to belong to another similar code. As a byproduct of our techniques, we also obtain a new construction of classical locally testable codes with such a multiplication property.

  We construct our codes as products of chain complexes associated to classical LDPC codes, which in turn we obtain by imposing local Reed-Solomon codes on a specific spectral expander that we construct. We prove that our codes support the desired transversal $C^{r-1}Z$ gates by using the multiplication property to combine local circuits based on the topological structure.
\end{abstract}

\newpage

\tableofcontents

\newpage

\pagenumbering{arabic}

\section{Introduction}
\label{sec:intro}
It is a major challenge in fault-tolerant quantum computation to efficiently perform non-Clifford gates, such as $CCZ$ or $T$. An ideal object for this task is a quantum LDPC (qLDPC) code of large rate and distance supporting transversal $CCZ$ (or $T$) gates. Such codes are defined to permit a constant-depth circuit of physical $CCZ$ (or $T$) gates that induces logical $CCZ$ gates on the encoded message. However, such codes have proven difficult to construct, with existing constructions suffering from poor dimension and/or distance.

Our main result in the theorem below addresses this question. In fact, we consider the more general $r$-qudit gate
\begin{equation}
  \label{eq:cczdefinf}
  C^{r-1}Z^a = \sum_{z_1,\dots,z_r\in\bF_q} e^{2\pi i\tr_{\bF_q/\bF_p}(a\cdot z_1\cdots z_r)/p}\ket{z_1,\dots,z_r}\bra{z_1,\dots,z_r}
\end{equation}
for $r\in\bN$ and $a\in\bF_q$, where $\bF_q$ is a finite field of characteristic $p$, and we let $C^{r-1}Z=C^{r-1}Z^1$. This gate lies outside the Clifford group for $r\geq 3$.

\begin{theorem}[Informal statement of Corollary~\ref{cor:qldpcmain}]
  \label{thm:qldpcmaininf}
  For every fixed real number $\epsilon>0$, every fixed integer $r\geq 2$, and every fixed prime power $q$ (including $q=2$), there exists an explicit infinite family of\footnote{Recall that an $[N,K,D]_q$ (resp.~$[[N,K,D]]_q$) code is a classical (resp.~quantum) code of length $N$, dimension (i.e.~message length) $K$, distance (i.e.~error tolerance) $D$, and alphabet size $q$.}
  \begin{equation*}
    [[N,\; K\geq N^{1-\epsilon},\; D\geq N^{1/r}/\poly(\log N)]]_q,
  \end{equation*}
  quantum LDPC codes of locality (i.e.~stabilizer weight) $w\leq\poly(\log N)$ that support a transversal $C^{r-1}Z$ gate in the following sense: There exists a depth-1 physical circuit consisting of $C^{r-1}Z^a$ gates acting across $r$ code states that induces logical $C^{r-1}Z$ gates on $N^{1-\epsilon}$ disjoint $r$-tuples of logical qudits.
\end{theorem}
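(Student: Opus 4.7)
The plan is to construct the codes in three stages: classical LDPC codes with a multiplication property, a product chain complex yielding a CSS code, and extraction of the transversal $C^{r-1}Z$ gate from the combination of the multiplication property and topological structure. For the classical stage I would fix an explicit spectral expander graph (or higher-dimensional analogue) with highly structured local neighborhoods, and impose a Reed--Solomon evaluation constraint on each neighborhood in the style of a Tanner/expander code. Because each constraint involves only $\poly(\log N)$ symbols, the resulting code $C$ is LDPC; because RS codes have near-optimal rate, I can tune parameters so $C$ has rate $1 - o(1)$. The crucial multiplication property is inherited locally from Reed--Solomon: the pointwise product of $r$ codewords of $C$ lies in a related RS-based code $C'$ of slightly larger local degree, and spectral expansion is what will eventually convert this local multiplicative structure into a global one while preserving good distance.

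For the quantum stage I would build a short chain complex from the parity check structure of $C$ and take its $r$-fold tensor product, then read off the CSS code from the middle degree. The length is multiplicative, so $N \approx n^r$; the middle-degree dimension is controlled by a K\"unneth-type formula and scales like $K^r$, so choosing the classical rate sufficiently close to $1$ yields quantum rate $\geq 1-\epsilon$. Locality stays $\poly(\log N)$ because each check in the product complex combines a bounded number of constituent checks. The distance bound $D \geq N^{1/r}/\poly(\log N)$ is the delicate point: the naive tensor-product distance bound is too weak, so I expect to need a robustness or cocycle/coboundary-expansion property of the individual classical chain complexes, exploiting both the local RS distance and the spectral expansion, in order to prevent small-weight product cocycles.

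For the transversal $C^{r-1}Z$ gate, the multiplication property at the classical level lifts to an $r$-linear pairing on the cochains of the product complex, playing the role of the cup product used in $r$-dimensional color codes. I would use this pairing to index a depth-$1$ physical circuit of $C^{r-1}Z^a$ gates across $r$ copies of the CSS code; applying this circuit stabilizes logical cochains (because their images under the pairing are coboundaries, by the multiplication property) and induces logical $C^{r-1}Z$ on $N^{1-\epsilon}$ disjoint $r$-tuples by a K\"unneth count. The main obstacle I anticipate is reconciling all four constraints simultaneously: the multiplication property essentially forces Reed--Solomon-like local structure, yet combining it with a spectral expander of the correct local geometry, near-optimal rate, $\poly(\log N)$ locality, and a distance analysis that survives the $r$-fold product is a tight balance --- in particular, showing that the local RS degree, the expander's local sizes, and the required cocycle-expansion bound can all be chosen consistently so that no small-weight element escapes both the global expansion and the tensor structure is where I would expect most of the technical difficulty to lie.
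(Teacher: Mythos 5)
Your high-level outline matches the paper's architecture (Sipser--Spielman with local Reed--Solomon codes, tensor products of chain complexes, a cup-product-like multilinear form), but there is a fundamental gap at the classical-code stage that the paper spends a substantial fraction of its effort circumventing, and which you do not address.

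You assert that ``because RS codes have near-optimal rate, I can tune parameters so $C$ has rate $1-o(1)$'' while simultaneously retaining a nontrivial multiplication property. These two goals are in direct tension: for a local RS code $\evl_S(\bF_q[X]^{<\ell})$ of block length $\Delta$, the product $C_v*C_v = \evl_S(\bF_q[X]^{<2\ell-1})$ is a proper subspace only when $\ell \leq \Delta/2$, i.e.\ rate $\leq 1/2$. But the only generic dimension lower bound for a Sipser--Spielman code, $\dim(C) \geq |E| - \sum_v(\Delta - \dim(C_v))$, is positive precisely when the local rate exceeds $1/2$. So with the naive constraint-counting argument you invoke, the multiplication property trivializes exactly where the dimension becomes positive. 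The paper's central new idea is to escape this by constructing a bespoke expander $\bar\Gamma$ whose vertices are affine lines in $\bF_q^{t+1}$ (built as an explicit $\bF_q^t$-lift with pseudorandomly labeled vertices, analyzed via the trace power method), so that the Sipser--Spielman code with low-rate RS local codes provably contains a large ``planted'' Reed--Muller code. This planting both supplies the dimension lower bound and furnishes the explicit multiplicative codewords used to witness the subrank of the resulting multilinear form. Without some replacement for this idea, your construction would have vanishing (or uncontrolled) rate.

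Two smaller inaccuracies. First, you suggest the naive tensor-product distance bound is too weak and that a robustness or cocycle-expansion argument is needed to reach $D \geq N^{1/r}/\poly(\log N)$; in fact the paper's quantum distance bound comes exactly from the standard tensor-product cosystolic bound (Lemma~\ref{lem:proddis}, in the style of Tillich--Z\'emor/Bravyi--Hastings), which for an $r$-fold product of near-linear-distance factors of length $\approx N^{1/r}$ yields precisely distance $\approx N^{1/r}$ --- no product-expansion is needed on the quantum side, and indeed the paper explicitly avoids balanced products for the quantum codes because product-expansion is known to be insufficient for quantum distance. Second, for the tensor-product quantum code to have good distance one needs \emph{both} the classical code and its transpose code $C^\top = \ker(H^\top)$ to have large distance; your plan does not address this, and it is precisely the reason the paper cannot simply use off-the-shelf Reed--Muller codes with a derandomized low-degree test but must instead carefully engineer the Sipser--Spielman structure on $\bar\Gamma$.
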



Previously, the best known parameters in most regimes for qLDPC codes with transversal non-Clifford gates were achieved by the $r\geq 3$-dimensional color code \cite{bombin_exact_2007,bombin_topological_2007,bombin_gauge_2015}, which has only constant dimension $K=O(1)$, and otherwise matches our parameters in Theorem~\ref{thm:qldpcmaininf} up to polylogarithmic factors. Thus we improve this dimension $K$ from constant to almost linear, at the cost of a polylogarithmic loss in locality and distance. 

\begin{remark}
  \label{remark:logicalstructure}
  We emphasize that our transversal gates induce logical $C^{r-1}Z$ gates on an almost linear number of disjoint tuples of logical qudits. This condition is useful in applications to fault-tolerance, such as magic state distillation (see below), as it provides high (almost linear) logical gate parallelization without inducing undesired entanglement (due to the disjointness condition).

  Some prior works \cite{zhu_non-clifford_2023,scruby_quantum_2024} have constructed qLDPC codes of linear or almost linear dimension with transversal $T$ gates. However, these codes have only logarithmic distance, and furthermore, the transversal $T$ gates induce logical $CCZ$ gates on various overlapping triples of logical qubits. As a result, it remains an open question to determine how such codes perform in fault-tolerance applications, including magic state distillation.
\end{remark}

Applying the techniques of \cite{bravyi_magic-state_2012} to our codes in Theorem~\ref{thm:qldpcmaininf} with fixed $r$ (e.g.~$r=3$) and arbitrarily small $\epsilon>0$ gives a magic state distillation protocol with arbitrarily small yield parameter $\gamma=\log(N/K)/\log(D)=r\epsilon+o(1)$, meaning that $M$ copies of the magic state $C^{r-1}Z\ket{+}^{\otimes r}$ with some constant noise rate can be used to distill $M/\log^\gamma(1/\delta)$ copies of the magic state with noise rate $\delta$. To the best of our knowledge, these codes from Theorem~\ref{thm:qldpcmaininf} are the first to achieve $\gamma\rightarrow 0$ with low-weight stabilizers.\footnote{\label{footnote:rainbowcaveat} The recent work \cite{scruby_quantum_2024} claimed to construct qLDPC codes supporting transversal $T$ gates with $K=\Theta(N)$ and $D=\Theta(\log N)$, so that $\log(N/K)/\log(D)\rightarrow 0$. However, the optimal overhead for a magic state distillation protocol using the codes of \cite{scruby_quantum_2024} is unclear due to the discussion in Remark~\ref{remark:logicalstructure}. In particular, to the best of our knowledge no such protocol with yield parameter $\gamma\rightarrow 0$ has been constructed.} Indeed, the first known magic state distillation protocols with $\gamma\rightarrow 0$ were only recently constructed by \cite{wills_constant-overhead_2024,golowich_asymptotically_2024,nguyen_good_2024}, building on the work of \cite{krishna_towards_2019}. These works used codes supporting transversal $CCZ$ gates with linear-weight stabilizers, and left it as an open question to construct similar codes but with low-weight stabilizers. Theorem~\ref{thm:qldpcmaininf} helps address this question, and provides progress towards the larger goal of constructing asymptotically optimal qLDPC codes with transversal $CCZ$ gates.

While magic state distillation yield provides a useful metric for evaluating code parameters, we emphasize that qLDPC codes of close-to-linear dimension and polynomial distance supporting transversal $CCZ$ gates present the opportunity for alternative protocols for universal fault-tolerant quantum computation, which perform $CCZ$ gates directly on qLDPC code states and hence do not require magic state distillation. It is an interesting question to determine how our codes in Theorem~\ref{thm:qldpcmaininf} would perform in such a protocol.

A classical analogue of transversal quantum $CCZ$ gates is given by the \textit{multiplication property}, which requires the component-wise product of classical codewords to belong to another code of similar parameters. Using similar techniques as in our qLDPC construction, we also construct \textit{classical locally testable codes (cLTCs)} with such a multiplication property:

\begin{theorem}[Informal statement of Corollary~\ref{cor:cltcmain}]
  \label{thm:cltcmaininf}
  For every fixed $\epsilon>0$, it holds for every sufficiently large prime power $q$ that there exist explicit
  \begin{equation*}
    [N=q^{\poly(q)},\; K\geq N^{1-\epsilon}, D\geq N/\poly(\log N)]_q
  \end{equation*}
  classical LTCs of locality $w\leq\poly(\log N)$ and soundness $\rho\geq 1/\poly(\log N)$, which exhibit the multiplication property.
\end{theorem}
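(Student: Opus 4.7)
The plan is to construct the classical LTCs of Theorem~\ref{thm:cltcmaininf} as Tanner-type codes in which a carefully chosen spectral expander serves as the backbone and the local constraint at each vertex is a Reed-Solomon constraint on its neighborhood. The multiplication property will then be inherited from the algebraic identity that the product of $r$ polynomials of degree $\leq d$ has degree $\leq rd$, while the distance and soundness will come from spectral expansion combined with the robust local testability of the inner Reed-Solomon code.

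Concretely, I would fix a prime power $q$ much larger than $r/\epsilon$, construct a spectral expander $X$ of degree $\Delta=\poly(q)$ whose link at each vertex is canonically identified with a size-$\Delta$ subset $S_v$ of an affine space over $\bF_q$, and define the code $C=C(X,d)$ as the set of $\bF_q$-valued assignments to the vertices (or edges) of $X$ whose restriction to each $S_v$ is the evaluation of a polynomial of degree $\leq d$. Choosing $d=(1-\epsilon)\Delta/r$ ensures high local rate (yielding $K\geq N^{1-\epsilon}$ globally) while leaving enough slack to define a companion code $C'=C(X,rd)$ whose degree parameter is still a $(1-\epsilon)$-fraction of $\Delta$. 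Since $N=q^{\poly(q)}$ one has $\log N=\poly(q)$, so the locality $w=\Delta+1$ is indeed $\poly(\log N)$.

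The multiplication property is then essentially immediate: if $f_1,\dots,f_r\in C$, then for every vertex $v$ the restriction of $\prod_i f_i$ to $S_v$ is the evaluation of a product of $r$ polynomials of degree $\leq d$, hence a polynomial of degree $\leq rd$, so $\prod_i f_i\in C'$. For distance and soundness I would run an expansion-based argument in the spirit of Sipser--Spielman, combined with the robust local testability of the inner Reed-Solomon code: any word that is $\rho$-far from $C$ must trigger local violations on at least a $\rho/\poly(\log N)$ fraction of vertices, and the spectral expansion of $X$ prevents these violations from clustering, yielding both distance $D\geq N/\poly(\log N)$ and soundness $\rho\geq 1/\poly(\log N)$.

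The hardest part will be constructing the spectral expander $X$ so that (i) it has the algebraic local structure required to support Reed-Solomon constraints, (ii) its global spectral gap is sharp enough that only polylogarithmic factors are lost in the distance and soundness, and (iii) its expansion interacts well with the tensor/agreement-type arguments needed to lift local testability of the inner RS code to global local testability of $C$. Simultaneously satisfying these three demands is precisely what forces the construction to use a \emph{specific} spectral expander rather than an off-the-shelf one, and is what I expect to be the main technical bottleneck; once such an expander is in hand, the remaining parameter bookkeeping reduces to the outline above.
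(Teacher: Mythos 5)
Your proposal correctly identifies the Sipser--Spielman/Tanner template with local Reed--Solomon codes and the right source of the multiplication property (polynomial multiplication commutes with restriction to affine lines). However, there are two substantive gaps, the second of which is fatal to the approach as written.

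First, the dimension claim does not follow from ``high local rate.'' To get a nontrivial multiplication property you must take $d<\Delta/2$ (otherwise the product code is trivially everything), and in that regime the naive constraint count $\dim C\geq|E|-\sum_v(\Delta-d)$ is \emph{negative}. Section~\ref{sec:ssbarrier} of the paper identifies this as the fundamental obstruction. The paper's fix is that the expander $\bar\Gamma$ is built so its vertices are affine lines in $\bF_q^{t+1}$, which forces the global Tanner code to contain a \emph{planted Reed--Muller code} $\evl(\bF_q[X_0,\dots,X_t]^{<\ell})$, and the Reed--Muller dimension provides the lower bound on $K$ (Lemma~\ref{lem:RMplant} and the computation in Corollary~\ref{cor:cltcmain}). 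You gesture at the algebraic structure of the links, but the argument you give for $K\geq N^{1-\epsilon}$ is not valid and needs to be replaced by the planting argument.

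Second, and more seriously, a single Sipser--Spielman code on a spectral expander is \emph{not} locally testable, and no amount of spectral expansion or robustness of the inner Reed--Solomon test repairs this: spectral expansion gives distance (and the Sipser--Spielman argument), but local testability of Tanner codes requires additional structure. Your mention of ``tensor/agreement-type arguments'' names the right intuition but is not part of the construction you describe. The paper's actual construction is the \emph{balanced product} $\cC^{(\ell)}=\cF^{(\ell)}_{\bar\cX}\otimes_{\bF_q^t}\cF^{(\ell)}_{\bar\cX}$ of two copies of the one-dimensional Tanner complex, and the cLTC is the level-$2$ code $Z_2(\cC^{(\ell)})$ living on $E^2\times\bF_q^t$ (so each codeword symbol sits at the intersection of \emph{two} transverse families of affine lines). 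Soundness is then obtained by applying the local testability theorem of \cite{dinur_expansion_2024} for such square-complex/lifted-product codes (Theorem~\ref{thm:dlv}), which in turn requires the product-expansion of the inner Reed--Solomon pair from \cite{polishchuk_nearly-linear_1994} (Theorem~\ref{thm:RSpe}). Without this two-dimensional product structure there is no known route to the claimed soundness bound, so the step ``the spectral expansion of $X$ prevents violations from clustering, yielding soundness $\rho\geq1/\poly(\log N)$'' does not go through.
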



CLTCs are classical LDPC codes for which the distance of a given string to the code can be approximated (up to a multiplicative error of $\rho$) by querying a small number ($\leq w$) of random components of the string. The construction of cLTCs with the multiplication property is in part motivated by applications to complexity theory, such as for probabilistically checkable proofs (PCPs; see e.g.~\cite{dinur_new_2023} for more details on such motivation).

The parameters of our cLTCs described above are comparable to known constructions (up to polylogarithmic factors), such as Reed-Muller codes with the derandomized low-degree test of \cite{ben-sasson_randomness-efficient_2003}, as well as the codes of \cite{dinur_new_2023}. In fact, our cLTCs contain Reed-Muller codes, from which they inherit many properties. However, Theorem~\ref{thm:cltcmaininf} highlights the flexibility of our techniques, as we are able to obtain both qLDPC codes and cLTCs with desirable fault-tolerance properties, analogously to how the work of \cite{panteleev_asymptotically_2022} provided a unified construction of asymptotically good qLDPC codes and cLTCs.

\subsection{Our Techniques}
\label{sec:qldpcinf}
In this section, we outline our techniques for proving our main results described above. At a high level, our qLDPC codes in Theorem~\ref{thm:qldpcmaininf} are constructed as tensor products (also known as ``hypergraph products'' \cite{tillich_quantum_2014} or ``homological products'') of chain complexes associated to classical LDPC codes exhibiting a multiplication property, which helps ensure the resulting quantum codes support transversal $C^{r-1}Z$ gates. Our cLTCs in Theorem~\ref{thm:cltcmaininf} are constructed as balanced products \cite{breuckmann_balanced_2021} (or equivalently in our case, lifted products \cite{panteleev_quantum_2022,panteleev_asymptotically_2022}) of related chain complexes. Therefore we face two main challenges in proving our results:
\begin{enumerate}
\item\label{it:taskmp} Construct classical LDPC codes with an appropriate multiplication property.
\item\label{it:taskccz} Prove that quantum codes from products of such classical LDPC codes support transversal $C^{r-1}Z$ gates.
\end{enumerate}

To address item~\ref{it:taskmp} above, our key result is a new spectral expander graph construction with an appropriate embedding into a high-dimensional vector space. To address item~\ref{it:taskccz}, our main idea is to view the product of classical LDPC codes as a cubical complex with an associated system of local coefficients (i.e.~local codes). Within each ($r$-dimensional) cube in this complex, we use known techniques for performing $C^{r-1}Z$ gates on topological codes, and then we use the multiplication property to ensure that these local gates assemble to a global circuit with the desired logical action.

Specifically, we construct our classical LDPC codes in item~\ref{it:taskmp} using the Sipser-Spielman paradigm \cite{sipser_expander_1996}. In a standard code from this paradigm, codewords consist of assignments of elements of a finite field $\bF_q$ to edges of a low-degree expander graph, such that the values on edges incident to each vertex form a codeword of some ``local code.'' We choose the local codes to be Reed-Solomon codes, which satisfy the multiplication property; this multiplication property then naturally lifts to the global code as well.

However, there is a known barrier for this approach: the local codes will typically only satisfy a nontrivial multiplication property if they have (encoding) rate $<1/2$, whereas the global code will typically only have good dimension of the local codes have rate $>1/2$. To circumvent this barrier, we construct expander graphs that embed into the vector space $\bF_q^t$ for $t=\poly(q)$, in such a way that when the local codes are Reed-Solomon (even of low rate $<1/2$), the resulting global code will contain a large ``planted'' Reed-Muller code. At a high level, our graphs arise from affine lines in $\bF_q^t$ pointing in directions sampled from a pseudorandom (or more specifically, low-bias) distribution, in a manner reminiscent of the derandomized low-degree tests of \cite{ben-sasson_randomness-efficient_2003}. We prove expansion of our graphs using the trace power method (see e.g.~\cite{mohanty_explicit_2021,jeronimo_explicit_2022,paredes_expansion_2022}), and we specifically adapt the derandomization technique of \cite{jeronimo_explicit_2022} to obtain an explicit construction.

This ``planting'' approach is inspired by the works \cite{dinur_new_2023,golowich_nlts_2024}, which used related approaches for planting codewords in LDPC codes that may otherwise have vanishing dimension. Similar constructions are also sometimes called ``lifted codes'' (see e.g.~\cite{guo_new_2013,frank-fischer_locality_2017}). We remark that we construct classical Sipser-Spielman codes containing planted Reed-Muller codes, instead of directly using ordinary Reed-Muller codes, to ensure that the resulting product construction of quantum LDPC codes has good distance (see the discussion of ``transpose code'' distance in Section~\ref{sec:ssinitial}).

To address item~\ref{it:taskccz} above, we view the product of Sipser-Spielman codes as a system of local coefficients on a product of the underlying graphs. This graph product can be viewed as a cubical complex, where for instance the product of $r$ edges is an $r$-dimensional cube. Therefore this complex locally consists of $r$-dimensional cubes, i.e.~manifolds with boundaries. There are known techniques for performing transversal $C^{r-1}Z$ gates on such manifolds, such as by using $r$-dimensional color codes \cite{bombin_topological_2007}. We use related ideas to construct local $C^{r-1}Z$ gates within the cubes of our complex (though for simplicity, we do not use color codes).

We then combine these local $C^{r-1}Z$ gates into a global circuit. We use the multiplication property of the local Reed-Solomon codes to show that the resulting circuit preserves the code space of our quantum code. Meanwhile, we use the multiplication property of the planted Reed-Muller codes to show that the logical action induced by this circuit consists of $C^{r-1}Z$ gates on many disjoint tuples of logical qudits.

A related approach based on gluing together local color codes was studied in the recent paper \cite{scruby_quantum_2024}, building on prior works including \cite{bombin_topological_2007,vuillot_quantum_2022,zhu_non-clifford_2023}. Our techniques allow for more general complexes and local codes, resulting in parameter improvements (see Remark~\ref{remark:logicalstructure} for a comparison).

In a companion paper, \cite{lin_transversal_2024} shows that our approach to constructing transversal $C^{r-1}Z$ gates generalizes to a broad class of complexes based on local coefficient systems. In contrast, in the present paper, we focus on our instantiation with cubical complexes, and hence provide a less general (but more concrete) presentation.

We remark that we prove the distance and soundness bounds for the classical balanced product codes in Theorem~\ref{thm:cltcmaininf} by applying the results of \cite{polishchuk_nearly-linear_1994,dinur_expansion_2024} (the latter of which builds upon \cite{panteleev_asymptotically_2022,dinur_good_2023,leverrier_quantum_2022-1}). We are not able to also obtain near-linear distance quantum LDPC or locally testable codes with transversal $C^{r-1}Z$ gates using balanced products because as shown in \cite{kalachev_two-sided_2023}, the \textit{product-expansion} bound of \cite{polishchuk_nearly-linear_1994} (see Definition~\ref{def:prodexp}) is too weak to show good quantum code distance using existing techniques. It is an interesting open question to address this challenge.

\section{Technical Overview}
\label{sec:techoverview}
In this section, we provide an overview of our qLDPC codes with transversal $C^{r-1}Z$ gates. Our construction of cLTCs with the multiplication property uses related techniques.

As described in Section~\ref{sec:intro}, we construct our qLDPC codes with transversal $C^{r-1}Z$ gates as the tensor product (also called a ``hypergraph'' or ``homological'' product) of $r$ 2-term chain complexes associated to classical LDPC codes. We construct these classical LDPC codes using the well-known Sipser-Spielman paradigm \cite{sipser_expander_1996}, by imposing parity-checks (i.e.~linear constraints) from small local codes according to the incidence structure of a low-degree expander graph.
We specifically construct these classical LDPC codes to possess the multiplication property, which as described in Section~\ref{sec:intro}, requires component-wise products of codewords to belong to similar codes. We then show that products of such classical codes yield quantum codes with transversal $C^{r-1}Z$ gates.

Therefore we must perform the following:

\begin{enumerate}
\item\label{it:taskldpcmp} Construct classical LDPC codes with good parameters that have the multiplication property.
\item\label{it:taskmptoccz} Show that tensor products of chain complexes associated to classical LDPC codes with the multiplication property yield quantum codes with transversal $C^{r-1}Z$ gates.
\end{enumerate}

Section~\ref{sec:qldpcinf} provided a brief overview of our approach to addressing the two problems above; the following two sections provide more details.

\subsection{Classical LDPC Codes with the Multiplication Property}
\label{sec:ldpcmp}
In this section, we describe how we construct classical LDPC codes with the multiplication property using the Sipser-Spielman paradigm \cite{sipser_expander_1996}. Our key result here is a construction of spectral expander graphs with an appropriate linear-algebraic structure, which interacts well with local Reed-Solomon codes.

For a classical code $C=\ker(H)$ with specified parity-check matrix $H$, we define the \textit{transpose code} by $C^\top:=\ker(H^\top)$. In order to ultimately construct quantum LDPC codes with transversal $C^{r-1}Z$ gates via products, we want to construct classical codes $C$ for which the following hold:
\begin{enumerate}
\item $C$ satisfies an appropriate multiplication property.
\item $C$ and $C^\top$ have large distance and small locality (meaning that the rows and columns of $H$ have small Hamming weight).
\item $C$ or $C^\top$ has large dimension.
\end{enumerate}
As described above, we will use the multiplication property of $C$ construct transversal $C^{r-1}Z$ gates on the resulting product quantum codes. The distance, dimension, and locality of the quantum codes will follow from the respective properties for $C$ and $C^\top$, as was shown in \cite{tillich_quantum_2014}. Though we will ultimately need to be able to ensure that either $C$ or $C^\top$ has large dimension, in the remainder of this section for simplicity we focus on ensuring that $C$ has large dimension, and refer the reader to Section~\ref{sec:cldpc} and Section~\ref{sec:qldpc} for more details.

The requirement that the transpose code $C^\top$ has good distance is perhaps unnatural from a classical coding theoretic perspective. However, the Sipser-Spielman paradigm naturally yields LDPC codes with good distance for both $C$ and $C^\top$, as long as the local codes and their duals have good distance (see e.g.~\cite{panteleev_quantum_2022,breuckmann_balanced_2021}). For this reason, we use the Sipser-Spielman paradigm to construct construct our desired codes achieving the properties outline above.

\subsubsection{Known Obstruction to Achieving Positive Dimension}
\label{sec:ssbarrier}
Unfortunately, there is a known obstruction to achieving such a construction of Sipser-Spielman codes with positive dimension supporting the multiplication property. Recall that (the most basic form of) a Sipser-Spielman code is defined by a $\Delta$-regular expander graph $\Gamma=(V,E)$ with an assignment of a length-$\Delta$ local code $C_v$ (over some field $\bF_q$) to each vertex $v\in V$, whose components are labeled with the $\Delta$ edges incident to vertex $v$. The global Sipser-Spielman code $C$ then consists of all elements $c\in\bF_q^E$ such that for each vertex $v$, the assignment to the edges incident to $v$ lies in the local code $C_v$. \cite{sipser_expander_1996} showed that if the local codes $C_v$ have large enough distance relative to the expansion of $\Gamma$, then the global code $C$ will have large distance.

Assume that each $C_v$ satisfies the multiplication property, so that for instance $C_v*C_v$ lies inside some appropriate $C_v'\subseteq\bF_q^\Delta$, where $*$ denotes component-wise multiplication. Then the global code $C$ satisfies the multiplication property $C*C\subseteq C'$, where $C'$ is the Sipser-Spielman code with local codes $C_v'$. Therefore Sipser-Spielman codes naturally lift local multiplication properties to global multiplication properties.

Perhaps the most natural choice of a local code with a multiplication property is a Reed-Solomon code $\evl_S(\bF_q[X]^{<\ell})$, whose codewords are evaluations of degree $<\ell$ polynomials over $\bF_q$ on points in a set $S\subseteq\bF_q$; here we take $|S|=\Delta$. Indeed, by definition we have $\evl_S(\bF_q[X]^{<\ell})^{*2}=\evl_S(\bF_q[X]^{<2\ell-1})$. However, this equality is only nontrivial when $\ell\leq|S|/2=\Delta/2$, as otherwise $\evl_S(\bF_q[X]^{<2\ell-1})=\evl_S(\bF_q[X]^{<\Delta})=\bF_q^\Delta$, because every element of $\bF_q^\Delta$ is the evaluation of some degree $<\Delta$ polynomial.

However, the only known general lower bound on the dimension of the Sipser-Spielman code $C$ is $\dim(C)\geq|E|-\sum_{v\in V}(\Delta-\dim(C_v))$, which follows from counting linear constraints. When $C_v=\evl_S(\bF_q[X]^{<\ell})$, this bound is positive precisely when $\ell=\dim(C_v)>\Delta/2$. That is, for the Sipser-Spielman construction with local Reed-Solomon codes, we obtain a positive dimension bound under the precise condition $\ell>\Delta/2$ that the multiplication property becomes trivial and therefore not useful.

Note that the only property of Reed-Solomon codes we used in the reasoning above is that the dimension (approximately) adds under component-wise multiplication, so that $\dim(C_v^{*2})\geq\min\{2\cdot\dim(C_v)-1,\Delta\}$. Although this inequality does not hold for some edge cases such as $C_v=\bF_q^\ell\times\{0\}^{\Delta-\ell}$, it tends to hold for local codes with the properties we want, such as good distance.

\subsubsection{Initial Attempt to Circumvent Barrier: Reed-Muller Codes}
\label{sec:ssinitial}
In this section, we describe an initial attempt at circumventing the apparent incompatibility of positive dimension and the multiplication property described in Section~\ref{sec:ssbarrier}, namely, that Reed-Muller codes appear to solve our problem. However, we will see that if we try to directly instantiate $C$ as a Reed-Muller code, we will be unable to ensure good distance of the transpose code $C^\top$. Instead, as described in Section~\ref{sec:sscircumvent} below, we will need to let $C$ be a carefully constructed Sipser-Spielman code that contains a planted Reed-Muller code.

Recall that a codeword of a Reed-Muller code $\evl_S(\bF_q[X_1,\dots,X_t]^{<\ell})$ consists of the evaluations of a multivariate polynomial of degree $<\ell$ on points in some set $S$; here it is convenient to think of $S=\bF_q^t$. Therefore Reed-Muller codes satisfy the multiplication property by the same reasoning that Reed-Solomon codes do, as the product of two degree $<\ell$ polynomials has degree $<2\ell-1$. Furthermore, Reed-Muller codes are LDPC when the number of variables $t$ is large relative to $q$. Specifically, if $\ell<q$, then the Reed-Muller code $\evl_S(\bF_q[X_1,\dots,X_t]^{<\ell})$ has parity-checks of weight $q$, as the restriction of a multivariate polynomial of total degree $<\ell$ to an affine line is a univariate polynomial of degree $<\ell$.


However, a naive choice of parity-check matrix $H$ for a Reed-Muller code, such as one that places checks on all affine lines in $\bF_q^t$, will have many $(\tilde{\Omega}(q^t))$ checks touching each code component, and thus $C^\top=\ker(H^\top)$ will not even be LDPC. This issue is mitigated by instead choosing a parity-check matrix $H$ associated to a derandomized low-degree test, such as that of \cite{ben-sasson_randomness-efficient_2003}, which only queries a much sparser set of affine lines in $\bF_q^t$. However, the associated transpose code $C^\top=\ker(H^\top)$ may still have poor distance. For instance, inside an appropriate 2-dimensional subspace of $\bF_q^t$, the test of \cite{ben-sasson_randomness-efficient_2003} queries all $2q$ affine lines pointing in two different directions within the subspace; using this fact, we may construct a codeword in $\ker(H^\top)$ of weight $O(q^2)$, which is much smaller than the block length $|S|=q^t$. Despite this challenge, as described in Section~\ref{sec:sscircumvent} below, our ultimate construction will draw on related ideas as in \cite{ben-sasson_randomness-efficient_2003}.

\subsubsection{Circumventing the Barrier by Planting Reed-Muller Codes}
\label{sec:sscircumvent}
We now describe how we circumvent the problems described in Section~\ref{sec:ssbarrier} and Section~\ref{sec:ssinitial} to obtain classical LDPC codes $C=\ker(H)$ with the multiplication property that have good dimension, such that both $C$ and $C^\top$ have good distance. Our solution is to construct a specific $\Delta$-regular graph $\bar{\Gamma}$ on which we can ensure that an associated Sipser-Spielman code $C$ contains a large ``planted'' Reed-Muller code, when the local codes are Reed-Solomon of dimension $\ell\leq\Delta/2$. As described above, the Sipser-Spielman paradigm yields good distance for both $C$ and $C^\top$ as long as the local codes and their duals have good distance, which is indeed the case for Reed-Solomon local codes (as Reed-Solomon codes are self-dual).



Our approach is inspired by \cite{dinur_new_2023}, who used similar ideas to construct cLTCs with the multiplication property, and by \cite{golowich_nlts_2024}, who showed how to plant codewords (though in their case, only the all-1s codeword) in quantum product codes. However, \cite{dinur_new_2023} use a high-dimensional expander (HDX) constructed by \cite{kaufman_construction_2018} in the place of $\bar{\Gamma}$, which makes it difficult to obtain good distance for the associated transpose codes. As a related note, the codes of \cite{dinur_new_2023} are constructed using chain complexes and hence have a natural quantum analogue with good $Z$ distance, but it is unclear if they also can be made to have good $X$ distance given the ``one-sided'' nature of their construction from HDXs. Indeed, prior constructions of qLDPC codes from HDXs have been limited by poor distance in one of the two bases (e.g.~\cite{evra_decodable_2020,kaufman_new_2021}).


We therefore instead construct a new graph $\bar{\Gamma}=(\bar{V},\bar{E})$, which we show has good spectral expansion and yields Sipser-Spielman codes with the desired planted Reed-Muller codes. At a high level, we construct $\bar{\Gamma}$ as follows. For some small constant $\tau>0$, we choose an integer $t\approx q^\tau=\poly(q)$. We then let the vertices in $\bar{V}$ correspond to certain affine lines in $\bF_q^{t+1}$, and we let the edges in $\bar{E}$ correspond to points in $\bF_q^{t+1}$. Each edge $\bar{e}\in\bar{E}=\bF_q^{t+1}$ connects two vertices whose associated affine lines intersect at the point $\bar{e}$. (In fact $\bar{E}$ may be a subset of $\bF_q^{t+1}$, but the distinction is not important for this overview.)

Our construction of $\bar{\Gamma}$ shares similarities with the derandomized low-degree test of \cite{ben-sasson_randomness-efficient_2003}, which is also based on choosing parity-checks of Reed-Muller codes associated to certain affine lines in $\bF_q^t$; both our work and \cite{ben-sasson_randomness-efficient_2003} also choose such lines using a low-bias distribution (see Setion~\ref{sec:pseudorandom}). However, we only associate two affine lines to each point in $\bF_q^t$, whereas \cite{ben-sasson_randomness-efficient_2003} associates a growing number of affine lines to each point; this difference ultimately allows us to obtain Sipser-Spielman LDPC codes, whose transpose codes therefore have good distance.

We remark that our construction $\bar{\Gamma}$ can be viewed as a ``coset complex,'' as vertices of $\bar{\Gamma}$ correspond to affine lines, which are simply cosets of 1-dimensional subspaces. The HDXs of \cite{kaufman_construction_2018} used by \cite{dinur_new_2023} can also be viewed as coset complexes, though with a more involved group theoretic structure.

Formally, to construct $\bar{\Gamma}$, we first define a smaller ``base graph'' $\Gamma=(V,E=\bF_q)$, whose edges are each associated with a unique element of $\bF_q$ (in fact we may take $E\subseteq\bF_q$, but for simplicity in this section we assume $E=\bF_q$). It will in fact suffice to take $\Gamma$ to be a $\Delta$-regular complete bipartite graph with multiedges allowed, where $\Delta=\poly(q)$. For each $v\in V$, we choose a random label $\labV(v)\in\bF_q^t$. (Following \cite{jeronimo_explicit_2022}, we are in fact able to derandomize this construction using the low-bias distributions of \cite{jalan_near-optimal_2021}.)

We then define the vertex set of $\bar{\Gamma}$ to be the set
\begin{align*}
  \bar{V} &= \bigsqcup_{v\in V}\bF_q^{t+1}/\spn\{(1,\labV(v))\}
\end{align*}
of affine lines in $\bF_q^{t+1}$ that point in the direction of $(1,\labV(v))\in\bF_q\times\bF_q^t$ for some $v\in V$. We define the edge set
\begin{align*}
  \bar{E} &= \bF_q^{t+1},
\end{align*}
where an edge $\bar{e}=(\bar{e}_0,\dots,\bar{e}_t)\in\bF_q^{t+1}$ connects the unique two vertices associated to respective affine lines $\bar{e}+\spn\{(1,\labV(v_0))\}$ and $\bar{e}+\spn\{(1,\labV(v_1))\}$ passing through the point $\bar{e}$, for which the associated vertices $v_0,v_1\in V$ are connected by edge $\bar{e}_0$ in $\Gamma$.

Recall that a Sipser-Spielman code has large distance if the local codes have large distance, and if the graph has good (i.e.~small) spectral expansion, defined as the second largest eigenvalue of the adjacency matrix. Therefore we show the following:

\begin{theorem}[Informal statement of Corollary~\ref{cor:expinst}]
  \label{thm:expinstinf}
  For every fixed $\delta,\eta\in(0,1)$, there exists a sufficiently small $\tau>0$ such that for every sufficiently large prime power $q$, the graph $\bar{\Gamma}$ defined above is an $\eta\Delta$-spectral expander of degree $\Delta=q^\delta$, for an appropriate choice of the vertex labels $(\labV(v))_{v\in V}$ that can be found in $\poly(|\bar{V}|)$ time.
\end{theorem}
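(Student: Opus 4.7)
The plan is to apply the trace power method: since $\bar\Gamma$ is $\Delta$-regular with top eigenvalue $\Delta$, we have $\lambda_2(\bar\Gamma)^{2k}\le \operatorname{tr}(A_{\bar\Gamma}^{2k})-\Delta^{2k}$, so it suffices to bound the number of closed walks of length $2k$ in $\bar\Gamma$ by $\Delta^{2k}+(\eta\Delta)^{2k}$ for some $k$ of order $\log|\bar V|/\log q$. First I would unpack what a closed walk in $\bar\Gamma$ is. Each vertex $(v,L)$ is a base-graph vertex together with an affine line of direction $d_v:=(1,\labV(v))$; a step consists of choosing a point $p\in L$ whose first coordinate $(p)_0\in\bF_q$ selects one of the $\Delta$ edges of $\Gamma$ incident to $v$, which in turn determines the neighbor $v'$ and forces the next line $L'=p+\spn\{d_{v'}\}$. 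Consequently a closed walk of length $2k$ is the data of a closed walk $v_0,\ldots,v_{2k}=v_0$ in the base graph $\Gamma$ together with a compatible sequence of intersection points $p_1,\ldots,p_{2k}\in\bF_q^{t+1}$, which translates into a linear closure relation $\sum_{i}s_id_{v_i}=0$ in $\bF_q^{t+1}$, where $s_i\in\bF_q$ records the step length along $d_{v_i}$ within the line $L_i$.

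The core estimate is to show that under a uniformly random assignment of labels $(\labV(v))_{v\in V}$, the expected number of closed walks is bounded as above. Following the trace-power analyses for random Cayley and lifted expanders in \cite{mohanty_explicit_2021,jeronimo_explicit_2022,paredes_expansion_2022}, I would group the walks by their combinatorial shape, meaning the pattern of vertex and edge repetitions in $(v_0,\dots,v_{2k})$. Tree-like shapes impose no non-trivial linear relation on the labels and contribute at most $|\bar V|\cdot C_k\Delta^k\approx|\bar V|\cdot(2\sqrt\Delta)^{2k}$ walks in total, which is dominated by $(\eta\Delta)^{2k}$ once $k$ is chosen so that $(\eta\sqrt\Delta/2)^{2k}\ge|\bar V|$. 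Each non-tree shape forces at least one additional $\bF_q$-linear constraint on the $\labV(v_i)$'s, contributing a factor of $1/q$ in expectation; taking $\tau$ sufficiently small relative to $\delta$ and $\eta$ (so that $t\asymp q^\tau$ is large enough) ensures that this label-entropy savings dominates the combinatorial multiplicity of non-tree shapes. Since the entire analysis depends only on the $O(k)$-wise joint moments of the labels, one can replace the uniform distribution by a sufficiently small-bias distribution in the sense of \cite{jalan_near-optimal_2021}, as carried out for related Cayley-graph analyses in \cite{jeronimo_explicit_2022}. Such distributions admit explicit sample spaces of size $\poly(|\bar V|)$, so one enumerates candidate labelings and verifies the spectral bound by polynomial-time eigenvalue computation, producing a deterministic $\labV$ in $\poly(|\bar V|)$ time.

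I expect the main obstacle to be the shape-by-shape combinatorial bookkeeping together with the careful handling of the geometric closure relation $\sum s_id_{v_i}=0$. Two subtleties deserve care. First, the step sizes $s_i$ are not free but are determined by the intersections $L_{i-1}\cap L_i$, so even a combinatorially ``simple'' shape can impose a non-trivial global linear constraint on the labels that must be accounted for when counting walks realizing that shape. Second, the $\Gamma$-edge-matching condition $(p_i)_0=\bar e_i$ couples the base-graph walk with the geometric walk, so one must verify that the per-step degree in $\bar\Gamma$ is exactly $\Delta$ and that no accidental coincidences $d_{v_i}=d_{v_j}$ with $v_i\ne v_j$ (equivalently $\labV(v_i)=\labV(v_j)$) inflate the walk count. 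Adapting the double-counting scheme of \cite{jeronimo_explicit_2022} to this coset-graph setting, so that the combinatorial multiplicity of each shape is balanced against the $1/q^m$ savings coming from $m$ independent label constraints, is the technical heart of the argument.
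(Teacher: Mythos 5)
Your high-level plan (trace power method plus replacing uniform labels by a low-bias distribution from \cite{jalan_near-optimal_2021} and brute-force search over its support, as in \cite{jeronimo_explicit_2022}) is the same as the paper's, but the combinatorial core of your argument has a genuine gap. By Lemma~\ref{lem:islift} the lift labels are $\liftlab(e)=e\cdot(\labV(\ver_0(e))-\labV(\ver_1(e)))$, so along a closed base walk the net label telescopes to $\sum_i(e_i-e_{i-1})\labV(v_i)$: whether a walk closes up in $\bar\Gamma$ depends on the labels only through the $\bF_q$-coefficients $c_v=\sum_{i:v_i=v}(e_i-e_{i-1})$. Your dichotomy ``tree-like shapes, counted directly, versus non-tree shapes, each forcing a label constraint'' is therefore not valid: since the base graph is a complete bipartite multigraph on only $|V|\approx 2q^{\nu}$ vertices with $\Delta/|V_0|$ parallel edges between every pair, there are enormously many non-backtracking closed walks with all $c_v=0$ identically --- for instance a length-$4$ walk alternating between $v_0$ and $v_1$ along four distinct parallel edges with $e_1+e_3=e_2+e_4$ --- and such walks impose no constraint whatsoever on the labels, so no label-entropy saving is available for them. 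In the relevant parameter regime ($\Delta=q^{\delta}$ with $\delta$ much larger than $\nu$) these label-free degenerate walks far outnumber the tree-like walks you budget for (already $\approx(\Delta/|V_0|)^{2k}/q$ of them per vertex pair, versus $\approx(2\sqrt{\Delta})^{2k}$ tree-like), so your accounting does not close. The paper controls exactly this dominant term by a different mechanism: any degenerate (more precisely, ``redundant'') walk must revisit base vertices so heavily that it is confined to at most $3k/2$ of the $|V|$ base vertices, and the count in Claim~\ref{claim:countwalks} beats it using the factor $|V|^{1/4}$ coming from the smallness of the base vertex set together with the choice of $t$ in Corollary~\ref{cor:expinst}; randomness of the labels plays no role for these walks.

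Two further inaccuracies would need repair even for the walks that do constrain the labels. The saving from a nontrivial relation $\sum_i(e_i-e_{i-1})\labV(v_i)=0$ among labels in $\bF_q^{t}$ is $q^{-t}$ (or the bias $\beta$), not $1/q$; since the multiplicity to be beaten is exactly $|G|=q^{t}$ (equivalently, the union bound over the $q^{t}$ characters in Lemma~\ref{lem:specunion}), these factors cancel, so enlarging $t$ cannot help --- what closes the argument is taking $t$ small relative to $k\log_q\bigl(\eta|V|^{1/4}/4k\bigr)$, which is the opposite of your suggestion that $\tau$ be chosen ``so that $t$ is large enough.'' Also, your opening reduction is vacuous as stated: $\bar\Gamma$ is bipartite and $\Delta$-regular, so $-\Delta$ is an eigenvalue and the number of closed walks of length $2k$ is at least $2\Delta^{2k}$, hence it can never be bounded by $\Delta^{2k}+(\eta\Delta)^{2k}$. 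The paper sidesteps both issues by splitting off the trivial character (for the complete bipartite base, $A_\Gamma$ has spectrum $\{\pm\Delta,0\}$) and applying the trace method character-by-character to the small matrices $A_{\Gamma,\chi}$ of the base graph, which is the Fourier-dual, and cleaner, version of your direct count in the lifted graph. The derandomization and eigenvalue-checking step of your plan does match the paper.
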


Recall that the spectral expansion of a $\Delta$-regular graph is at most $\Delta$. For our application, we will want spectral expansion $\leq\eta\Delta$ for a sufficiently small constant $\eta>0$, as is provided by Theorem~\ref{thm:expinstinf}. Note that as $|\bar{V}|,|\bar{E}|=q^{\Theta(t)}$ with $t=\poly(q)$, the degree $\Delta=\poly(q)$ is polylogarithmic in the size of $\bar{\Gamma}$.

To prove Theorem~\ref{thm:expinstinf}, we apply the trace power method, as mentioned in Section~\ref{sec:intro}. This technique has previously been used (e.g.~\cite{mohanty_explicit_2021,jeronimo_explicit_2022,paredes_expansion_2022}) to bound the expansion of certain random \textit{abelian lifts} of graphs (see Section~\ref{sec:ablift}) by counting walks on the graph. Our graph $\bar{\Gamma}$ is by definition a lift of $\Gamma$ by the abelian (additive) group $\bF_q^t$, though we place random labels $\labV(v)$ on vertices $v$ of the base graph $\Gamma$, whereas prior works typically considered random labels placed on edges of the base graph. Fortunately, we show that the trace power method be extended to our setting. This approach gives a randomized construction of $\bar{\Gamma}$ that has good expansion with high probability. We obtain an explicit (i.e.~polynomial-time computable) construction in Theorem~\ref{thm:expinstinf} by adapting the derandomization techniques of \cite{jeronimo_explicit_2022}, specifically using the low-bias distributions of \cite{jalan_near-optimal_2021}.

Given some $\ell\leq\Delta$, we can then define a Sipser-Spielman code $C$ on $\bar{\Gamma}$ as follows: we let the local code $C_{\bar{v}}$ for vertex $\bar{v}\in\bar{V}$ be a Reed-Solomon code of degree (i.e.~dimension) $\ell$ on the affine line associated to $\bar{v}$, where the Reed-Solomon evaluation points are those points on the affine line associated to edges $\bar{e}$ that are incident to vertex $\bar{v}$ in the graph $\bar{\Gamma}$.

In particular, recall that the Reed-Muller code $\evl_{\bF_q^{t+1}}(\bF_q[X_0,\dots,X_t]^{<\ell})$ has codewords given by evaluations of $(t+1)$-variate polynomials on all points in $\bF_q^{t+1}$. Because the restriction of a degree $<\ell$ multivariate polynomial to an affine line is a degree $<\ell$ univariate polynomial, all such Reed-Muller codewords lie in our Sipser-Spielman code $C$. Thus $C$ has dimension
\begin{equation*}
  K = \dim(C) = \dim(\bF_q[X_0,\dots,X_t]^{<\ell}) = {\ell+t\choose t+1} \geq \left(\ell/t\right)^{t+1}.
\end{equation*}
In particular, given an arbitrarily small constant $\epsilon>0$, if we choose $\delta=1-\epsilon/4$, $\tau<\epsilon/4$, and $\ell<\Delta/2$ with $\ell\geq\Omega(\Delta)$ so that $\ell/t\geq q^{1-\epsilon}$ (for $q$ sufficiently large), the resulting code $C$ has parameters
\begin{equation*}
  \left[N=|\bar{E}|=q^{t+1},\; K\geq N^{1-\epsilon},\; D\geq N/\poly(\log N)\right]_q,
\end{equation*}
and the transpose code $C^\top$ also has distance $D^\top\geq N/\poly(\log N)$, where the distance bounds follow from \cite{sipser_expander_1996}. By construction, $C$ and $C^\top$ are LDPC of locality $O(q)=\poly(\log N)$. Thus $C$ satisfies the desired properties outlined at the start of Section~\ref{sec:ldpcmp} above.

We remark that we actually use the local codes $C_v$ described above for our cLTC construction, and we use slightly different Reed-Solomon local codes for our qLDPC construction; see Definition~\ref{def:RMplant} for these alternative local codes. At a high level, the difference arises because in the quantum case, it is more convenient to work with a multiplication property for the transpose codes~$C^\top$.

\subsection{Quantum LDPC Codes with Transversal $C^{r-1}Z$ Gates}
\label{sec:mptoccz}
In this section, we describe the main ideas behind our proof that the product of $r$ classical LDPC codes with an appropriate multiplication property yield quantum LDPC codes with transversal $C^{r-1}Z$ gates. In this paper, we consider products of chain complexes associated to Sipser-Spielman codes similar to those defined in Section~\ref{sec:sscircumvent} above, and leverage the cubical nature of the underlying graph product. The companion paper \cite{lin_transversal_2024} applies similar ideas to present a more general framework for transversal $C^{r-1}Z$ gates on a large class of chain complexes. Our presentation here is less general but more concrete, leading to more explicit expressions for the relevant circuits of $C^{r-1}Z$ gates on our codes.

Throughout this section, we fix an arbitrary integer $r\geq 2$ and an arbitrary finite field $\bF_q$ of characteristic $p$. Prior constructions of qLDPC codes with transversal $C^{r-1}Z$ gates were primarily $r$-dimensional toplogical codes based on manifolds. Our high-level approach is to generalize such ideas to more general complexes based on Sipser-Spielman codes by leveraging the local topological structure of such complexes, and ensuring these local structures fit together globally using a multiplication property of the local codes. A similar idea of gluing together local topological codes was also used in the recent work of \cite{scruby_quantum_2024}, though with less general complexes and local codes, resulting in poor (i.e.~logarithmic) distance.

\subsubsection{Transversal $C^{r-1}Z$ Gates Preliminaries}
\label{sec:cczpreliminf}
We first briefly describe some necessary notation for discussing transversal $C^{r-1}Z$ gates on quantum CSS codes (see Definition~\ref{def:css}). If we have $r$ code states of length-$N$ codes, which form a superposition of length-$rN$ strings $z=(z^{(1)},\dots,z^{(r)})\in(\bF_q^N)^r$, then by definition (see~(\ref{eq:cczdefinf})), a $C^{r-1}Z^a$ gate acting on qudit $i_h$ of $z^{(h)}$ for each $h\in[r]$ simply induces a phase on $\ket{z}$ with exponent proportional to $a\cdot z^{(1)}_{i_1}\cdots z^{(r)}_{i_r}$. That is, such a $C^{r-1}Z^a$ gate maps
\begin{equation*}
  \ket{z} \mapsto e^{2\pi i\tr_{\bF_q/\bF_p}(a\cdot z^{(1)}_{i_1}\cdots z^{(r)}_{i_r})/p}\ket{z}.
\end{equation*}

Therefore if we apply many $C^{r-1}Z^a$ gates to different $r$-tuples of length-$N$ code states, where each gate acts on one qudit from each of the $r$ code states, the resulting unitary $U^\zeta$ acts on a basis element $\ket{z}=\ket{z^{(1)},\dots,z^{(r)}}$ by
\begin{equation*}
  U^\zeta:\ket{z} \mapsto e^{2\pi i\tr_{\bF_q/\bF_p}(\zeta(z))/p}\ket{z}
\end{equation*}
for some $r$-multilinear form
\begin{equation*}
  \zeta:(\bF_q^N)^r\rightarrow\bF_q.
\end{equation*}
Specifically, $\zeta$ can be expressed as homogeneous degree-$r$ polynomial in the $rN$ variables $(z^{(h)}_i)_{i\in[N]}^{h\in[r]}$, which is defined to have a term $a\cdot z^{(1)}_{i_1}\cdots z^{(r)}_{i_r}$ associated to every $C^{r-1}Z^a$ gate acting on qudits $\ket{z^{(1)}_{i_1}},\dots,\ket{z^{(r)}_{i_r}}$ respectively.

We are interested in the case where we have $r$ quantum CSS codes $(Q^{(h)}=(Q^{(h)}_X,Q^{(h)}_Z))_{h\in[r]}$, for which a basis of code states is given by the equally weighted superpositions $\sum_{x\in{Q^{(h)}_X}^\perp}\ket{z+x}$ over the elements of cosets $z+{Q^{(h)}_X}^\perp\in Q^{(h)}_Z/{Q^{(h)}_X}^\perp$. In order for the unitary $U^\zeta$ defined above to preserve the code space (i.e.~map code states to code states), we therefore need $\zeta$ to be invariant on cosets, meaning that
\begin{equation}
  \label{eq:cobinvinf}
  \zeta(z^{(1)}+x^{(1)},\dots,z^{(r)}+x^{(r)}) = \zeta(z^{(1)},\dots,z^{(r)})
\end{equation}
for every $(z^{(h)}\in Q^{(h)}_Z)_{h\in[r]}$ and every $(x^{(h)}\in{Q^{(h)}_Z}^\perp)_{h\in[r]}$. We call this condition in~(\ref{eq:cobinvinf}) \textit{coboundary-invariance}.

If coboundary-invariance holds, then $\zeta$ naturally induces a well-defined multilinear form $\zeta'$ on cosets
\begin{equation*}
  \zeta':(Q^{(1)}_Z/{Q^{(1)}_X}^\perp)\times\cdots\times(Q^{(r)}_Z/{Q^{(r)}_X}^\perp) \rightarrow \bF_q
\end{equation*}
given by applying $\zeta$ to arbitrary coset representatives. The number of logical $C^{r-1}Z$ gates on disjoint triples of logical (message) qudits that can be extracted from the circuit $U^\zeta$ is precisely the \textit{subrank} of $\zeta'$, which essentially measures the largest identity tensor contained within $\zeta'$ (see Definition~\ref{def:subrank}).

\subsubsection{Background on Transversal $C^{r-1}Z$ Gates on Topological Codes}
\label{sec:topcczinf}
We now briefly describe the (well known) intuition behind transversal $C^{r-1}Z$ gates on topological codes obtained by tiling $r$-dimensional manifolds. Fix such a manifold (such as an $r$-dimensional torus). Consider $r$ CSS codes $Q^{(1)},\dots,Q^{(r)}$ associated to some tilings of this manifold, such that the physical qudits (i.e.~code components) are associated to $(r-1)$-dimensional faces in the tilings. Assuming for simplicity that our local qudit dimension is $q=2$, then a codeword in $Q^{(h)}_Z$ can be intuitively viewed as a set of tiles that collectively form a closed $(r-1)$-dimensional submanifold. Meanwhile, a codeword of ${Q^{(h)}_X}^\perp$ can be viewed as such a submanifold that is the boundary of some $r$-dimensional volume within the manifold.

Now we define the multilinear form $\zeta$ to contain a nonzero term (corresponding to a physical $C^{r-1}Z$ gate) for each $r$-tuple of tiles (with one tile associated to a qubit from each of the $r$ codes) that collectively intersect at a point. Then for codewords $z^{(h)}\in Q^{(h)}_Z$, we can interpret $\zeta(z^{(1)},\dots,z^{(r)})$ as the parity of the number of collective intersection points of the $(r-1)$-dimensional submanifolds $z^{(1)},\dots,z^{(r)}$.

The coboundary-invariance condition~(\ref{eq:cobinvinf}) can then be enforced by showing that the intersection number is a topological invariant. That is, adding elements of ${Q^{(h)}_X}^\perp$ can be interpreted as performing local perturbations of the submanifold $z^{(h)}$. However, in a manifold such as an $r$-dimensional torus, one can visually see (at least for $r=2$ or $r=3$) that such local perturbations will never change the parity of the number intersections of $r$ different $(r-1)$-dimensional submanifolds. Here we must assume that the intersections are ``generic,'' so that for instance no two submanifolds are tangent to each other at some point, though this condition can be enforced by choosing different tilings for each of the $r$ codes $Q^{(h)}$.

Thus we obtain transversal $C^{r-1}Z$ gates on codes associated to tilings of such manifolds. Indeed, the standard $r$-dimensional color code (see e.g.~\cite{bombin_topological_2007}) is based on the reasoning above for the $r$-dimensional torus. We remark that the intuition described above can be formalized and generalized using the notion of a \textit{cup product} from algebraic topology.

\subsubsection{Extending to More General QLDPC Codes}
We now describe how we extend the techniques outlined in Section~\ref{sec:topcczinf} to products of Sipser-Spielman codes. We can view a Sipser-Spielman code as a graph with an associated \textit{system of local coefficients} (i.e.~local codes). The tensor product of the chain complexes associated to $r$ Sipser-Spielman codes can then be viewed as an $r$-dimensional cubical complex given by the product of the $r$ underlying graphs (where the product of $r$ edges is an $r$-dimensional cube), with a system of local coefficients given by tensor products of the local codes. For more details, see Example~\ref{example:graphproduct} in Section~\ref{sec:ccdef}.

Now we construct our desired multilinear form $\zeta$ (corresponding to our circuit of physical $C^{r-1}Z$ gates) on the associated product code as the sum of local multilinear forms within each $r$-dimensional cube of the cubical complex. Each such local form is obtained as described in Section~\ref{sec:topcczinf}, and simply computes an intersection number within the $r$-dimensional cube, viewed as an $r$-dimensional manifold with boundary. However, these boundaries pose an additional challenge to proving the coboundary-invariance property~(\ref{eq:cobinvinf}), which we address with the multiplication property of the local codes.

Specifically, if we interpret elements of ${Q^{(h)}_X}^\perp$ as local perturbations of submanifolds as described in Section~\ref{sec:topcczinf} above, then to prove coboundary-invariance, we now must ensure that when we perturb a submanifold across a boundary between $r$-dimensional cubes, we preserve the desired intersection numbers. As the elements of ${Q^{(h)}_X}^\perp$ correspond to parity-checks of $Q^{(h)}_X$, which are simply parity-checks of the local codes, the local codes (at a high level) dictate the rules for how submanifolds transform across boundaries. It turns out that if the local codes satisfy an appropriate multiplication property, then these local transformations behave in a way amenable to coboundary-invariance. We therefore obtain the following result.

\begin{theorem}[Informal statement of transversal $C^{r-1}Z$ property in Theorem~\ref{thm:qldpcmain}]
  \label{thm:cczinf}
  The product of $r$ chain complexes associated to classical Sipser-Spielman LDPC codes whose local codes satisfy an appropriate multiplication property yields a quantum LDPC code supporting a transversal $C^{r-1}Z$ gate.
\end{theorem}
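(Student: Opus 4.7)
My plan is to prove the transversal $C^{r-1}Z$ property in three stages: (i) define a circuit of physical $C^{r-1}Z^a$ gates locally inside each $r$-cube of the product cubical complex, guided by the topological intersection picture of Section~\ref{sec:topcczinf}; (ii) show that the resulting global multilinear form $\zeta$ is coboundary-invariant by reducing the check across shared faces to an identity supplied by the multiplication property of the local codes; and (iii) lower-bound the subrank of the induced logical form $\zeta'$ using the structure of the planted Reed-Muller codewords constructed in Section~\ref{sec:sscircumvent}.

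For stage (i), I regard the tensor product of the $r$ chain complexes associated to the classical Sipser-Spielman codes as an $r$-dimensional cubical complex whose cells of each dimension are products of vertex/edge data from the $r$ underlying graphs $\bar{\Gamma}^{(h)}$, equipped with the tensor-product local coefficient system coming from the local codes $C^{(h)}_{\bar{v}}$. The qudits of the $r$ quantum codes $Q^{(h)}$ are naturally identified with certain $(r-1)$-faces of this complex. Each $r$-cube $\kappa$ is a genuine hypercube whose $2r$ codimension-$1$ faces split into $r$ opposite pairs, one per coordinate direction; inside $\kappa$ I place a $C^{r-1}Z^{a_\kappa}$ gate — with $a_\kappa\in\bF_q$ determined by an explicit local structure constant — on every $r$-tuple of qudits consisting of one from each opposite pair. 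Such tuples are precisely the collections of $(r-1)$-faces meeting transversely at the central point of $\kappa$, so the local contribution to $\zeta$ is the standard mod-$p$ cup-product intersection form on the cube, matching the construction used on the $r$-dimensional color code.

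The main obstacle is stage (ii). Inside a single cube the coboundary-invariance of the local intersection form is a routine topological statement: a local perturbation of an $(r-1)$-chain does not change its transverse intersection parity with $(r-1)$ other generic chains. Globally, a vector $x^{(h)}\in{Q^{(h)}_X}^\perp$ is a combination of rows of the product parity-check matrix $H^{(h)}_X$, and each such row is a tensor product of factors that are either local-code parity-checks at some vertex $\bar{v}$ of $\bar{\Gamma}^{(h')}$ or edge indicators. When I sum the local forms over all $r$-cubes meeting the support of such a row and expand, I expect the total change in $\zeta$ to collapse to a sum of terms of the shape $\langle \chi,\ c^{(1)}*c^{(2)}*\cdots*c^{(r)}\rangle$, where $\chi$ is a parity-check of the local code $C^{(h')}_{\bar{v}}$ and each $c^{(h)}$ is the restriction of $z^{(h)}$ to the star of $\bar{v}$ (a codeword of the appropriate local code, up to reindexing). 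The relevant multiplication property of the local codes is exactly the statement that $c^{(1)}*\cdots*c^{(r)}$ is orthogonal to every such $\chi$, making each boundary contribution vanish and yielding coboundary-invariance. This local-to-global assembly is the technical heart of the argument, and explains why the multiplication property was central in Section~\ref{sec:ldpcmp}.

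For stage (iii), once $\zeta$ descends to a well-defined form $\zeta'$ on the logical spaces $Q^{(h)}_Z/{Q^{(h)}_X}^\perp$, I would restrict to the sublattices of logical classes represented by planted Reed-Muller codewords. Because these codewords are polynomial evaluations, the cube-by-cube intersection sum simplifies (using the Reed-Muller multiplication property) to a natural pairing on $\bF_q[X_0,\ldots,X_t]^{<\ell}$ given by summing the product of $r$ polynomials over $\bF_q^{t+1}$. Exhibiting a monomial basis on which this pairing realizes an identity tensor of rank at least $N^{1-\epsilon}$ would give a $\subrank$ bound of $N^{1-\epsilon}$ for $\zeta'$, which translates to $N^{1-\epsilon}$ disjoint logical $C^{r-1}Z$ gates and completes the proof.
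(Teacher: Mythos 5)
Your stage (i) matches the paper: the local multilinear form $\xi^{\mathrm{loc}}$ in Section~\ref{sec:zetadef} is exactly the cup-product/transverse-intersection form on each $r$-cube that you describe, organized via the length-$r$ paths $e^1(\pi),\dots,e^r(\pi)$ from $0^r$ to $1^r$ (eq.~(\ref{eq:xilocdef})). Your stage (iii) intuition (polynomial-evaluation pairings coming from the planted Reed--Muller codewords) is also in the right direction and is roughly what Lemma~\ref{lem:subrankbound} does via the encoding maps $\phi^{(h)}$.

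The gap is in stage (ii), and it is a genuine one: you propose to define $\zeta$ as a cube-by-cube sum $\sum_\kappa a_\kappa\,\xi_\kappa^{\mathrm{loc}}(\cdots)$ with ``$a_\kappa$ determined by an explicit local structure constant,'' and to prove coboundary-invariance by showing the change collapses to local pairings $\langle\chi, c^{(1)}*\cdots*c^{(r)}\rangle$ that vanish by the multiplication property. That is not what this paper does, and the paper explicitly explains why a simple local-cancellation argument is not available here. In Section~\ref{sec:zetadef} the coefficients are a specific global linear functional $\alpha:\bF_q^{Y(r)}\to\bF_q$ that (a) interpolates the cubewise outputs of $\xi$ as a multivariate polynomial, (b) \emph{zeroes out} all monomials whose degree in some direction $h$ falls below the threshold $t(q-1)+|E|-\ell+(r-1)\ell'$, and (c) sums evaluations over the special grid $A^r$. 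Coboundary-invariance (Lemma~\ref{lem:cobinv}) is then proved by Lemma~\ref{lem:xilowdeg}, which shows that substituting a coboundary in position $h^*$ forces $\xi(g^{(1)},\dots,g^{(r)})$ to be the evaluation of a polynomial of degree strictly below the threshold in direction $h^*$, so $\alpha$ annihilates it. Crucially, $\xi$ itself does \emph{not} vanish pointwise or after a naive sum over $Y(r)$ when a coboundary is inserted --- the degree bound is an upper bound, and the all-cubes sum would pick up contributions from monomials in the range $[t(q-1),\,t(q-1)+|E|-\ell+(r-1)\ell')$ which need not cancel. Appendix~\ref{sec:ciformintuition} flags exactly this: the filtering step is an addition over the non-LDPC precursors~\cite{krishna_towards_2019,wills_constant-overhead_2024,golowich_asymptotically_2024,nguyen_good_2024} precisely because, under the LDPC/Sipser--Spielman constraints, one cannot directly control which polynomials the coboundaries evaluate to. So the ``explicit local structure constant'' you leave unspecified is in fact a highly nonlocal object (the coefficients of $\alpha$ as a vector in $\bF_q^{Y(r)}$), and the multiplication property enters not as a local parity-check orthogonality at each vertex but through the degree arithmetic of Lemma~\ref{lem:xilowdeg} and eq.~(\ref{eq:Fproddeg}). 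Your proposal needs to either supply and justify these coefficients, or show that a genuinely local orthogonality identity holds for these particular local codes and degree parameters --- the latter is exactly what the paper sidesteps by introducing $\alpha$.
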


We have intentionally stated Theorem~\ref{thm:cczinf} informally, as there are many technical details regarding the local codes and the structure of the product that we have omitted in this intuitive overview.

While Theorem~\ref{thm:cczinf} provides a circuit of physical $C^{r-1}Z$ gates that preserves the logical code space, we also must show that the induced logical circuit consists of $C^{r-1}Z$ gates on many disjoint $r$-tuples of logical qudits, as stated in Theorem~\ref{thm:qldpcmaininf}. That is, as described in Section~\ref{sec:cczpreliminf}, we must show that the multilinear form $\zeta'$ associated to the physical circuit $U^\zeta$ has large subrank.

For this purpose, we again use the planted Reed-Muller codes in our Sipser-Spielman codes described in Section~\ref{sec:sscircumvent}. Specifically, we are able to show that the qLDPC codes we obtain inherit large planted Reed-Muller codes from the underlying classical Sipser-Spielman codes, and that the multiplication property of these planted Reed-Muller codes directly translates into a good bound on the subrank of $\zeta'$. See Lemma~\ref{lem:subrankbound} for more details (where $\zeta'$ is denoted $\zeta_{H'}$).


\section{Preliminaries}
This section presents preliminary various preliminary notions, results, and notation that we will use throughout the paper.

\subsection{Notation}
\label{sec:notation}
We use standard ``big-$O$'' notation $O,o,\Omega,\omega,\Theta$, and use subscripts to denote variables on which the hidden constants may depend. For instance, $O_\alpha(1)$ describes any function bounded above by a value depending only on $\alpha$.

For a prime power $q$, we let $\bF_q$ denote the finite field of order $q$, and we let $\bF_q^*=\bF_q\setminus\{0\}$. We let $[n]=\{1,\dots,n\}$, and for an element $x\in\bF_q^n$, we let $|x|=|\{i\in[n]:x_i\neq 0\}|$ denote the Hamming norm of $x$. For vectors $x,y\in\bF_q^n$, we let $x\cdot y=\sum_{i\in[n]}x_iy_i\in\bF_q$ denote the standard bilinear form, and we let $x*y=(x_iy_i)_{i\in[n]}\in\bF_q^n$ denote the component-wise product. We extend this notation to subspaces, so that for $C,C'\subseteq\bF_q^n$, then $C*C'=\spn\{x*x':x\in C,x'\in C'\}\subseteq\bF_q^n$. For a matrix $M\in\bF_q^{m\times n}$, the locality refers to the maximum number of nonzero entries in any row or column.

For an event $E$, we let $\1_E$ denote the indicator random variable for $E$ occurring. We often also use this notation for deterministic events. For instance, for variables $i,j$, then $\1_{i=j}$ equals $1$ if $i=j$ and $0$ if $i\neq j$. For a set $S$ and an element $s\in S$, we let $\ind{s}\in\bF_q^S$ denote the indicator vector for element $s$. In particular, for $i\in[n]$, we let $\ind{i}\in\bF_q^{[n]}=\bF_q^n$ denote the $i$th standard basis vector.

For a complex-valued square matrix $M$, we $\spectrum(M)$ denote the multiset of eigenvalues of $M$, counting multiplicities.

We let $\cS_n$ denote the symmetric group on $n$ elements, that is, the group of permutations of $[n]$.

\subsection{Graphs and Expansion}
This section presents preliminary notions pertaining to graphs and graph expansion. To begin, we present our basic terminology and notation for graphs.

\begin{definition}
  A \textbf{directed (multi)graph} is a pair $\Gamma=(V,E,\ver)$ consisting of a vertex set $V$ and an edge set $E$ with a function $\ver:E\rightarrow V\times V$. We let $\ver_0,\ver_1:E\rightarrow V$ be the projections of $\ver$ onto the two components respectively, so that $\ver(e)=(\ver_0(e),\ver_1(e))$.

  An \textbf{undirected (multi)graph} is a pair $\Gamma=(V,E,\ver)$ consisting of a vertex set $V$ and an edge set $E$ with a function $\ver:E\rightarrow V\times V/\sim$, where $\sim$ denotes the equivalence relation $(v_0,v_1)\sim(v_1,v_0)$. We denote elements of $V\times V/\sim$ as unordered pairs $\{v_0,v_1\}=\{v_1,v_0\}$. We define $\ver_0,\ver_1:E\rightarrow V$ such that $\ver(e)=\{\ver_0(e),\ver_1(e)\}$; thus $\ver_0,\ver_1$ implicitly define an arbitrary but fixed orientation of each edge.

  For simplicity, we restrict attention to graphs with no self-loops, meaning that every $e\in E$ has $\ver_0(e)\neq\ver_1(e)$.

  We sometimes abuse notation and use the same symbol $\ver$ for different graphs, and rely on the argument to disambiguate the graph at hand. We then write $\Gamma=(V,E)$ as a shorthand for $\Gamma=(V,E,\ver)$. 

  An undirected graph $\Gamma=(V,E,\ver)$ has a naturally associated directed graph $\dir{\Gamma}=(V,\dir{E},\dir{\ver})$, where $\dir{E}=E\times\{0,1\}$, such that $\dir{\ver}:\dir{E}\rightarrow V$ is given by $\dir{\ver}(e,0)=(\ver_0(e),\ver_1(e))$ and $\dir{\ver}(e,1)=(\ver_1(e),\ver_0(e))$.

  If $\Gamma$ is directed, then for subsets $S,T\subseteq V$, we let $E(S,T)=\ver^{-1}(S\times T)\subseteq E$. If $\Gamma$ is undirected, we let $E(S,T)\subseteq E$ denote the projection of $(\dir{\ver})^{-1}(S\times T)$ onto the first coordinate. For directed or undirected $\Gamma$, we also let $E(S)=E(S,V)$. In particular, $E(v)$ denotes the set of edges containing vertex $v\in V$.

  For an undirected graph $\Gamma$, the \textbf{degree} of a vertex $v\in V$ equals $|E(v)|$, and $\Gamma$ is said to be \textbf{$\Delta$-regular} if every $v\in V$ has degree $\Delta$. We say that $\Gamma$ is \textbf{simple} if $\Gamma$ has no self-loops and if every pair $\{v_0,v_1\}$ is the image under $\ver$ of at most one edge.
  
  An (undirected or directed) graph is \textbf{bipartite} if there exists a partition $V=V_0\sqcup V_1$ such that every edge contains one vertex in each of $V_0$ and $V_1$. The graph is \textbf{balanced} if $|V_0|=|V_1|$. If $\Gamma$ is undirected, we slightly abuse notation and assume for every $e\in E$ that $\ver_b(e)\in V_b$, and we write $(v_0,v_1)$ to denote a pair $\{v_0,v_1\}$ of vertices for which $v_0\in V_0$ and $v_1\in V_1$.
\end{definition}

Unless explicitly stated otherwise, we allow arbitrarily many edges between the same two vertices, and for notational convenience call such objects ``graphs'' as a shorthand for ``multigraphs.'' We will also explicitly say when a graph is directed, and otherwise it can be assumed to be undirected.

\begin{definition}
  \label{def:expansion}
  The \textbf{adjacency matrix $A_\Gamma$} of an undirected (resp.~directed) graph $\Gamma=(V,E)$ is the matrix $A_\Gamma\in\bZ_{\geq 0}^{V\times V}\subseteq\bR^{V\times V}$ given by $(A_\Gamma)_{v_0,v_1}=|\ver^{-1}(\{v_0,v_1\})|$ (resp.~$(A_\Gamma)_{v_0,v_1}=|\ver^{-1}(v_0,v_1)|$).

  The \textbf{(onesided) spectral expansion $\lambda_2(\Gamma)$} of an undirected graph $\Gamma$ equals the second largest eigenvalue of the adjacency matrix $A_\Gamma$.
  If $\Gamma$ is an undirected $\Delta$-regular graph, then equivalently
  \begin{equation}
    \label{eq:expansionform}
    \lambda_2(\Gamma) = \max_{x\in\vec{1}^\perp\setminus\{0\}}\frac{x^\top A_\Gamma x}{\|x\|^2},
  \end{equation}
  where $\vec{1}^\perp\subseteq\bR^V$ denotes the orthogonal complement of the all-1s vector $\vec{1}\in\bR^V$.
\end{definition}

As an undirected graph $\Gamma=(V,E)$ has a symmetric adjacency matrix $A_\Gamma$, there are $|V|$ eigenvalues of $A_\Gamma$ (counting multiplicites), all of which are real. Therefore $\lambda_2(\Gamma)$ is well-defined. If $\Gamma$ is $\Delta$-regular, then the largest eigenvalue equals $\Delta$, corresponding to the all-1s vector. Thus as the sum of the eigenvalues $\tr(A_\Gamma)\geq 0$, we have $\lambda_2(\Gamma)\in[-\Delta/(|V|-1),\Delta]$.

In this paper, we say ``spectral expansion'' or simply ``expansion'' to refer to onesided spectral expansion $\lambda_2$. Note that other works sometimes use this phrase to refer to \textbf{twosided spectral expansion}, which equals the second largest \textit{absolute value} of an eigenvalue of the adjacency matrix. However, we will rarely need to consider twosided expansion.

The following property of spectral expanders is well known; a proof can for instance be found in \cite[Lemma~2.1]{hsieh_explicit_2024}.

\begin{lemma}[Expander Mixing Lemma]
  \label{lem:expmix}
  Let $\Gamma=(V,E,\ver)$ be a $\Delta$-regular undirected graph. For every set $S\subseteq V$,
  \begin{equation*}
    |\dir{E}(S,S)| = 2|E(S,S)| \leq |S|\cdot\left(\lambda_2(\Gamma)+\Delta\cdot\frac{|S|}{|V|}\right).
  \end{equation*}
\end{lemma}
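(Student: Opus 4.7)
The proof follows the standard Rayleigh-quotient argument for the expander mixing lemma, with essentially all the content packed into rewriting the edge count as a quadratic form and applying the variational characterization of $\lambda_2(\Gamma)$ in~(\ref{eq:expansionform}).

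First, I would translate the edge count into a quadratic form. Letting $\mathbf{1}_S\in\bR^V$ denote the indicator vector of $S$, the definition $(A_\Gamma)_{v_0,v_1}=|\ver^{-1}(\{v_0,v_1\})|$ gives
$$\mathbf{1}_S^\top A_\Gamma\mathbf{1}_S \;=\; \sum_{v_0,v_1\in S}(A_\Gamma)_{v_0,v_1}.$$
Since $\Gamma$ has no self-loops, every contributing term comes from an ordered pair $(v_0,v_1)$ with $v_0\neq v_1$, and each undirected edge in $E(S,S)$ is counted once for each of its two orderings. This recovers the asserted equality $\mathbf{1}_S^\top A_\Gamma\mathbf{1}_S=2|E(S,S)|=|\dir{E}(S,S)|$, since by construction each undirected edge $e\in E(S,S)$ gives rise to exactly the two directed edges $(e,0),(e,1)\in\dir{E}(S,S)$.

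Next I would peel off the top eigenspace. Write $\mathbf{1}_S=\alpha\vec{1}+y$ with $\alpha=\langle\mathbf{1}_S,\vec{1}\rangle/\|\vec{1}\|^2=|S|/|V|$, so that $y\in\vec{1}^\perp$; a short computation gives $\|y\|^2=|S|-|S|^2/|V|\leq|S|$. Because $\Gamma$ is $\Delta$-regular, $A_\Gamma\vec{1}=\Delta\vec{1}$, so the cross term $2\alpha\,\vec{1}^\top A_\Gamma y=2\alpha\Delta\,\vec{1}^\top y$ vanishes, and
$$\mathbf{1}_S^\top A_\Gamma\mathbf{1}_S \;=\; \alpha^2\,\vec{1}^\top A_\Gamma\vec{1} + y^\top A_\Gamma y \;=\; \Delta\frac{|S|^2}{|V|} + y^\top A_\Gamma y.$$

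Finally I would apply~(\ref{eq:expansionform}): since $y\in\vec{1}^\perp\setminus\{0\}$ (or $y=0$, which is trivial), we have $y^\top A_\Gamma y\leq\lambda_2(\Gamma)\|y\|^2\leq\lambda_2(\Gamma)\cdot|S|$, where the second inequality uses $\|y\|^2\leq|S|$ in the relevant regime $\lambda_2(\Gamma)\geq 0$ (which is the setting of interest, since the lemma will be applied to spectral expanders with small positive $\lambda_2$). Substituting yields $|\dir{E}(S,S)|\leq|S|\cdot(\lambda_2(\Gamma)+\Delta\cdot|S|/|V|)$, as desired. There is no real obstacle: the lemma is a textbook consequence of the Courant--Fischer min-max principle once the edge count is written as $\mathbf{1}_S^\top A_\Gamma\mathbf{1}_S$, and the only point requiring attention is that the Rayleigh quotient bound $y^\top A_\Gamma y/\|y\|^2\leq\lambda_2(\Gamma)$ is valid only on $\vec{1}^\perp$, which is why we first strip off the $\vec{1}$-component.
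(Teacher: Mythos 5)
The paper does not give its own proof of this lemma; it defers to the cited reference. Your argument is the standard Rayleigh-quotient proof of the expander mixing lemma, and it is essentially correct. The edge-count identity $|\dir{E}(S,S)|=2|E(S,S)|=\mathbf{1}_S^\top A_\Gamma\mathbf{1}_S$ is right (each undirected edge inside $S$ contributes both orderings, and each gives rise to two directed copies), the orthogonal split $\mathbf{1}_S=\tfrac{|S|}{|V|}\vec{1}+y$ with $\|y\|^2\leq|S|$ is right, and the cross terms vanish because $\vec{1}$ is a $\Delta$-eigenvector.

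One thing worth emphasizing: the caveat you flag at the end is not a formality. The final step $\lambda_2(\Gamma)\|y\|^2\leq\lambda_2(\Gamma)|S|$ genuinely requires $\lambda_2(\Gamma)\geq 0$, and the lemma as stated in the paper is actually \emph{false} without that hypothesis. For instance $K_n$ is $(n-1)$-regular with $\lambda_2=-1$, and taking $S$ a single vertex gives $|\dir{E}(S,S)|=0$ while the right-hand side is $-1+(n-1)/n=-1/n<0$. So what you have proved is the lemma under the additional hypothesis $\lambda_2(\Gamma)\geq 0$. This is exactly the right thing to prove and the omission in the paper is benign, because the paper only ever invokes Lemma~\ref{lem:expmix} (through Lemma~\ref{lem:ssdis}) for the graph $\bar{\Gamma}$, which is bipartite and $\Delta$-regular with more than two vertices; for such graphs the spectrum is symmetric about $0$ and contains eigenvalues other than $\pm\Delta$, forcing $\lambda_2\geq 0$. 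It would have been cleaner, though, to state this explicitly rather than leaving it as ``the setting of interest,'' since the lemma is presented as a stand-alone fact.
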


\subsection{Abelian Lifts of Graphs}
\label{sec:ablift}
Below, we present the notion of the lift of a graph by an abelian group. A similar definition exists for non-abelian groups, but we will not need such generality.

\begin{definition}
  \label{def:graphlift}
  Let $G$ be an abelian group, and let $\Gamma=(V,E,\ver)$ be an undirected graph with a $G$-valued labeling $\liftlab:E\rightarrow G$. Then the \textbf{$G$-lift of $\Gamma$ with labeling $\liftlab$} is the graph $\tilde{\Gamma}=(\tilde{V},\tilde{E},\tilde{\ver})$ with vertex set $\tilde{V}=V\times G$ and edge set $\tilde{E}=E\times G$ such that
  \begin{align*}
    \tilde{\ver}_0(e,g) &= (\ver_0(e),g) \\
    \tilde{\ver}_1(e,g) &= (\ver_1(e),g+\liftlab(e)).
  \end{align*}
\end{definition}

Below, we state a well-known lemma showing that the spectrum of $A_{\tilde{\Gamma}}$ is the union of the spectra of certain signed (or more precisely, phase-adjusted) versions of $A_\Gamma$. We first must formally define these signings.

\begin{definition}
  A \textbf{character} of a finite abelian group $G$ is a homomorphism $\chi:G\rightarrow\{z\in\bC:|z|=1\}$, meaning that $\chi(x+y)=\chi(x)\chi(y)$ for every $x,y\in G$. The \textbf{trivial character $\chi_0$} is given by $\chi_0(x)=1$ for every $x\in G$.
\end{definition}

The following fact is well known:

\begin{lemma}
  \label{lem:charbasic}
  A finite abelian group $G$ has precisely $|G|$ distinct characters. In particular, if $G$ is the additive group $\bF_p^n$, then the characters are indexed by elements of $\bF_p^n$, where for $a\in\bF_p^n$ we have $\chi_a(x)=e^{2\pi i(a\cdot x)/p}$.
\end{lemma}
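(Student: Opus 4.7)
The plan is to prove the two assertions separately: first the counting statement for a general finite abelian group, and then the explicit parametrization for $\bF_p^n$.

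For the general counting, I would invoke the structure theorem for finite abelian groups to write $G\cong\bZ/n_1\bZ\times\cdots\times\bZ/n_k\bZ$ with $|G|=n_1\cdots n_k$. Since a character $\chi$ on a direct product is determined by, and can be freely chosen as, a tuple of characters on each factor (by the homomorphism property applied to tuples of generators), it suffices to prove the claim for a single cyclic group $\bZ/n\bZ$. On $\bZ/n\bZ=\langle g\rangle$, a character is uniquely determined by $\chi(g)$, which must be an $n$th root of unity since $\chi(g)^n=\chi(0)=1$; conversely every $n$th root of unity extends to a character via $\chi(jg)=\chi(g)^j$. This gives exactly $n$ characters of $\bZ/n\bZ$, so the product construction yields exactly $n_1\cdots n_k=|G|$ characters of $G$. (One should check that distinct tuples of cyclic characters give distinct characters of the product, which follows by evaluating on standard generators.)

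For the second part, I would directly verify that each $\chi_a(x)=e^{2\pi i(a\cdot x)/p}$ is a character of the additive group $\bF_p^n$: the identity $a\cdot(x+y)=a\cdot x+a\cdot y$ in $\bF_p$ (where the dot product is computed mod $p$) immediately gives $\chi_a(x+y)=\chi_a(x)\chi_a(y)$, and $|\chi_a(x)|=1$ is clear. Next, I would show distinctness: if $\chi_a=\chi_b$ then $\chi_{a-b}$ is trivial, so $(a-b)\cdot x\equiv 0\pmod p$ for every $x\in\bF_p^n$; taking $x$ to range over standard basis vectors forces $a=b$. This produces $p^n=|\bF_p^n|$ distinct characters, and by the first part these exhaust all characters of $\bF_p^n$.

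I do not expect any step to be a genuine obstacle; the entire argument is standard representation theory of finite abelian groups. The only subtlety worth flagging is to handle the ``$a\cdot x$ is an element of $\bF_p$'' convention carefully in the exponent (interpreting it via any lift to $\bZ$, which is well-defined mod $p$ and hence well-defined in the exponent modulo $2\pi i$), and to note that distinctness in the second part relies on $p$ being prime so that a nonzero element of $\bF_p$ acts nontrivially by multiplication.
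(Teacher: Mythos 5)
Your proof is correct. Note that the paper does not actually prove this lemma --- it is stated as a well-known fact without argument --- so there is no proof to compare against; your route (structure theorem reducing to the cyclic case for the count, then direct verification and distinctness of the $\chi_a$ for $\bF_p^n$) is the standard one, and the subtleties you flag (well-definedness of the exponent via lifts mod $p$, and distinctness via evaluation on standard basis vectors) are exactly the right ones to check.
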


We now define signings of adjacency matrices by characters:

\begin{definition}
  \label{def:signedadjmat}
  Let $\Gamma=(V,E,\ver)$ be an undirected graph with labeling $\liftlab:E\rightarrow G$. For a character $\chi:G\rightarrow\bC$, we define the \textbf{$\chi$-signed adjacency matrix $A_{\Gamma,\chi}\in\bC^{V\times V}$} by
  \begin{equation}
    \label{eq:signedadjmat}
    (A_{\Gamma,\chi})_{v_0,v_1} = \sum_{(e,b)\in \dir{E}:\ver(e,b)=(v_0,v_1)}\chi((-1)^b\liftlab(e)).
  \end{equation}
  We let $\lambda(A_{\Gamma,\chi})\in\bR_{\geq 0}$ denote the largest absolute value of any eigenvalue of $A_{\Gamma,\chi}$.
\end{definition}

The following lemma is well known; we provide a proof for completeness.

\begin{lemma}
  \label{lem:signedadjherm}
  The matrix $A_{\Gamma,\chi}$ in Definition~\ref{def:signedadjmat} is Hermitian.
\end{lemma}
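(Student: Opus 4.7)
The plan is to verify Hermiticity entrywise, by pairing each directed edge from $v_0$ to $v_1$ with its reverse from $v_1$ to $v_0$, and showing that the corresponding summands are complex conjugates. Concretely, I will use the fact that because $\dir{E} = E \times \{0,1\}$ with $\dir{\ver}(e,0) = (\ver_0(e), \ver_1(e))$ and $\dir{\ver}(e,1) = (\ver_1(e), \ver_0(e))$, the involution $(e,b)\mapsto(e,1-b)$ is a bijection between the set of $(e,b)\in\dir{E}$ with $\dir{\ver}(e,b)=(v_0,v_1)$ and those with $\dir{\ver}(e,b)=(v_1,v_0)$.

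Next I will match character values across this bijection. For any $(e,b)$ contributing to $(A_{\Gamma,\chi})_{v_0,v_1}$, the paired edge $(e,1-b)$ contributes $\chi((-1)^{1-b}\liftlab(e))=\chi(-(-1)^b\liftlab(e))$ to $(A_{\Gamma,\chi})_{v_1,v_0}$. Since $\chi$ is a homomorphism into the unit circle, we have $|\chi(x)|=1$, hence $\chi(-x)=\chi(x)^{-1}=\overline{\chi(x)}$. This gives
\begin{equation*}
  \chi((-1)^{1-b}\liftlab(e)) = \overline{\chi((-1)^b\liftlab(e))}.
\end{equation*}

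Summing over the bijection and using that complex conjugation commutes with addition yields $(A_{\Gamma,\chi})_{v_1,v_0} = \overline{(A_{\Gamma,\chi})_{v_0,v_1}}$, which is exactly the Hermiticity condition. The diagonal entries $(A_{\Gamma,\chi})_{v,v}$ vanish since the graph has no self-loops, so the diagonal is trivially real. There is no real obstacle here; the only thing to be slightly careful about is keeping track of the edge orientation convention in $\dir{\ver}$ so that the pairing $(e,b)\leftrightarrow(e,1-b)$ does indeed swap source and target, which is immediate from the definition.
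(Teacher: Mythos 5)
Your proof is correct and is essentially the same as the paper's: both establish $(A_{\Gamma,\chi})_{v_1,v_0} = \overline{(A_{\Gamma,\chi})_{v_0,v_1}}$ by pairing $(e,b)\leftrightarrow(e,1-b)$ (the paper phrases this as the substitution $b'=1-b$ inside the sum) and using that $\chi$ is a homomorphism into the unit circle, so $\chi(-x)=\overline{\chi(x)}$. Your extra remark about the vanishing diagonal is harmless but unnecessary, since the same pairing argument already covers $v_0=v_1$.
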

\begin{proof}
  By definition
  \begin{align*}
    (A_{\Gamma,\chi}^\dagger)_{v_0,v_1}
    &= (A_{\Gamma,\chi})_{v_1,v_0}^* \\
    &= \sum_{(e,b)\in\dir{E}:\ver(e,b)=(v_1,v_0)}\chi((-1)^b\liftlab(e))^* \\
    &= \sum_{(e,b)\in\dir{E}:\ver(e,b)=(v_1,v_0)}\chi((-1)^{1-b}\liftlab(e)) \\
    &= \sum_{(e,b')\in\dir{E}:\ver(e,b')=(v_0,v_1)}\chi((-1)^{b'}\liftlab(e)) \\
    &= (A_{\Gamma,\chi})_{v_0,v_1},
  \end{align*}
  where the third equality above holds beacuse $\chi$ is a homomorphism, and the fourth equality follows by letting $b'=1-b$.
\end{proof}

As mentioned above and stated in the lemma below, the spectrum of $A_{\tilde{\Gamma}}$ is the union over all characters $\chi$ of the spectrum of~$A_{\Gamma,\chi}$. For completeness, we provide a proof, which is based on the fact that the (appropriately normalized) characters of $G$ form an orthonormal basis for $\bC^G$.

\begin{lemma}[Well known]
  \label{lem:specunion}
  Let $\tilde{\Gamma}=(\tilde{V},\tilde{E},\tilde{\ver})$ be a $G$-lift of $\Gamma=(V,E,\ver)$ with labeling $\liftlab$. Then
  \begin{equation}
    \label{eq:specuniongen}
    \spectrum(A_{\tilde{\Gamma}}) = \bigsqcup_{\text{characters }\chi:G\rightarrow\bC}\spectrum(A_{\Gamma,\chi}).
  \end{equation}
\end{lemma}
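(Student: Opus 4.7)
The plan is to exhibit an explicit orthogonal decomposition of $\bC^{\tilde V} = \bC^{V \times G}$ into $|G|$ blocks indexed by the characters of $G$, such that $A_{\tilde\Gamma}$ preserves each block and restricts to (a copy of) $A_{\Gamma,\chi}$. Since a diagonalization by blocks yields the spectrum as the disjoint union of the spectra of the blocks, this will give~(\ref{eq:specuniongen}).

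First I would introduce, for each character $\chi$ of $G$, the subspace
\begin{equation*}
  W_\chi \;=\; \spn\bigl\{\ind{v}\otimes\chi \,:\, v\in V\bigr\} \;\subseteq\; \bC^{\tilde V},
\end{equation*}
where the vector $\ind{v}\otimes\chi \in \bC^{V\times G}$ has coordinate $\chi(g)$ at $(v,g)$ and $0$ elsewhere. By the orthogonality of characters of a finite abelian group (Lemma~\ref{lem:charbasic} together with the standard Plancherel relation $\sum_{g\in G}\chi(g)\overline{\chi'(g)} = |G|\cdot\1_{\chi=\chi'}$), the $|V|\cdot|G|=|\tilde V|$ vectors $\{\ind{v}\otimes\chi\}_{v,\chi}$ form an orthogonal basis of $\bC^{\tilde V}$, so $\bC^{\tilde V}=\bigoplus_\chi W_\chi$.

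Next I would verify that $A_{\tilde\Gamma}$ preserves each $W_\chi$ and acts on it as $A_{\Gamma,\chi}$. For a fixed $v_0\in V$ and $g_0\in G$, the $(v_0,g_0)$-entry of $A_{\tilde\Gamma}(\ind{v}\otimes\chi)$ is obtained by summing $\chi(g)$ over all $(e,b,g)$ with $\tilde\ver(e,g) = \{(v_0,g_0),(v,g)\}$ (accounting for both orientations of $e$). Using Definition~\ref{def:graphlift}, the constraint $\tilde\ver(e,g)=\{(v_0,g_0),(v,g)\}$ translates to $\ver(e,b)=(v_0,v)$ and $g = g_0 + (-1)^b\liftlab(e)$ for the appropriate orientation $b\in\{0,1\}$ (so that the endpoint labeled $v_0$ carries $g_0$). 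Substituting and using $\chi(g_0 + (-1)^b\liftlab(e)) = \chi(g_0)\cdot\chi((-1)^b\liftlab(e))$, the sum factors as
\begin{equation*}
  \chi(g_0)\cdot\sum_{(e,b)\in\dir E:\,\ver(e,b)=(v_0,v)} \chi((-1)^b\liftlab(e)) \;=\; \chi(g_0)\cdot (A_{\Gamma,\chi})_{v_0,v},
\end{equation*}
using the definition~(\ref{eq:signedadjmat}). Summing over $v$ shows that $A_{\tilde\Gamma}(\ind{v}\otimes\chi)$ has $(v_0,g_0)$-entry equal to $\chi(g_0)\cdot\bigl(A_{\Gamma,\chi}\ind{v}\bigr)_{v_0}$, i.e.~$A_{\tilde\Gamma}$ restricted to $W_\chi$ acts exactly as $A_{\Gamma,\chi}$ under the identification $\ind{v}\otimes\chi \leftrightarrow \ind{v}\in\bC^V$.

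Finally, combining the decomposition $\bC^{\tilde V}=\bigoplus_\chi W_\chi$ with the block action gives $\spectrum(A_{\tilde\Gamma}) = \bigsqcup_\chi \spectrum(A_{\tilde\Gamma}|_{W_\chi}) = \bigsqcup_\chi \spectrum(A_{\Gamma,\chi})$, as desired. The only substantive step is the bookkeeping of orientations in the second paragraph: one must be careful that summing over $(e,b)\in\dir E$ in~(\ref{eq:signedadjmat}) corresponds correctly to the two possible ways a lifted edge $(e,g)$ can contribute to a given pair $\{(v_0,g_0),(v,g)\}$, and that the homomorphism property $\chi(-x)=\chi(x)^{-1}$ yields the consistent expression $\chi((-1)^b\liftlab(e))$ in both cases. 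Hermiticity from Lemma~\ref{lem:signedadjherm} then guarantees that $\spectrum(A_{\Gamma,\chi})$ on the right is a well-defined multiset of real numbers.
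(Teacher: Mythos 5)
Your proposal is correct and takes essentially the same approach as the paper: your vectors $\ind{v}\otimes\chi$ are exactly the paper's basis vectors $x^\chi_v$ (up to the $1/\sqrt{|G|}$ normalization), and both arguments use character orthogonality to get the decomposition of $\bC^{\tilde{V}}$ and then the homomorphism property of $\chi$ to show $A_{\tilde{\Gamma}}$ is block diagonal with blocks $A_{\Gamma,\chi}$. The only cosmetic difference is that you verify invariance of each $W_\chi$ by computing $A_{\tilde{\Gamma}}(\ind{v}\otimes\chi)$ directly, whereas the paper computes the matrix elements ${x^{\chi_0}_{v_0}}^\dagger A_{\tilde{\Gamma}}x^{\chi_1}_{v_1}$ in the full orthonormal basis.
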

\begin{proof}
  For every character $\chi:G\rightarrow\bC$ and every vertex $v\in V$, define a unit vector $x^\chi_v\in\bC^{\tilde{V}}=\bC^{V\times G}$ by
  \begin{equation*}
    (x^\chi_v)_{(v',g)} = \begin{cases}
      \chi(g)/\sqrt{|G|},&v=v'\\
      0,&v\neq v'.
    \end{cases}
  \end{equation*}
  Then for every $v_0,v_1\in V$ and for every pair of characters $\chi_0,\chi_1$,
  \begin{equation*}
    {x_{v_0}^{\chi_0}}^\dagger x_{v_1}^{\chi_1} = \1_{v_0=v_1}\cdot\1_{\chi_0=\chi_1}.
  \end{equation*}
  Therefore by Definition~\ref{def:graphlift}, for every $e\in E$, letting $(v_0,v_1)=\ver(e)$, then
  \begin{align*}
    {x^{\chi_0}_{v_0}}^\dagger A_{\tilde{\Gamma}} x^{\chi_1}_{v_1}
    &= \sum_{(e,b)\in \dir{E}:\ver(e,b)=(v_0,v_1)} \sum_{g\in G}\frac{\chi_0(g)^*}{\sqrt{|G|}}\cdot\frac{\chi_1(g+(-1)^b\liftlab(e))}{\sqrt{|G|}} \\
    &= \1_{\chi_0=\chi_1}\cdot\sum_{(e,b)\in \dir{E}:\ver(e,b)=(v_0,v_1)}\chi_0((-1)^b\liftlab(e)).
  \end{align*}
  That is, when expressed in the orthonormal basis $\{x_v^\chi\}_{v,\chi}$ of $\bC^{\tilde{V}}$, the matrix $A_{\tilde{\Gamma}}$ is block diagonal, with $|G|$ blocks given by the matrices $A_{\Gamma,\chi}$ in Definition~\ref{def:signedadjmat} for all $|G|$ characters $\chi:G\rightarrow\bC$. Thus~(\ref{eq:specuniongen}) holds.
\end{proof}

Lemma~\ref{lem:tracebound} below provides a well-known means for bounding the spectrum of $A_{\Gamma,\chi}$ in terms of walks on $\Gamma$; we provide a brief proof for completeness. We first need the following definition.

\begin{definition}
  \label{def:walk}
  Let $\Gamma=(V,E,\ver)$ be an undirected graph. We say that a sequence of $2k$ directed edges $((e_1,b_1),\dots,(e_{2k},b_{2k}))\in(\dir{E})^{2k}$ is a \textbf{length-$2k$ walk on $\Gamma$} if it holds for every $i\in[2k]$ that $\dir{\ver}_1(e_i,b_i)=\dir{\ver}_0(e_{i+1},b_{i+1})$, where we let $(e_{2k+1},b_{2k+1})=(e_1,b_1)$.
\end{definition}

\begin{lemma}[Well known]
  \label{lem:tracebound}
  Let $\Gamma=(V,E,\ver)$ be an undirected graph with labeling $\liftlab:E\rightarrow G$. Then for every character $\chi:G\rightarrow\bC$,
  \begin{equation*}
    \lambda(A_{\Gamma,\chi})^{2k} \leq \tr(A_{\Gamma,\chi}^{2k}) = \sum_{\text{walks }(e_1,b_1),\dots,(e_{2k},b_{2k})}\prod_{i=1}^{2k}\chi((-1)^{b_i}\liftlab(e_i)).
  \end{equation*}
\end{lemma}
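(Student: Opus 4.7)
The proof splits naturally into two parts: the inequality $\lambda(A_{\Gamma,\chi})^{2k}\leq\tr(A_{\Gamma,\chi}^{2k})$, and the combinatorial identity expressing the trace as a sum over walks.

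For the inequality, the plan is to invoke Lemma~\ref{lem:signedadjherm}, which establishes that $A_{\Gamma,\chi}$ is Hermitian. Consequently, all of its eigenvalues are real, and the eigenvalues of $A_{\Gamma,\chi}^{2k}$ are precisely the $2k$-th powers of those eigenvalues. These powers are non-negative real numbers, so the trace $\tr(A_{\Gamma,\chi}^{2k})$ equals the sum of these non-negative values, which in particular is at least the largest single term. That largest term is $\lambda(A_{\Gamma,\chi})^{2k}$ by definition of $\lambda(A_{\Gamma,\chi})$.

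For the combinatorial identity, I would expand the trace entry-wise:
\begin{equation*}
  \tr(A_{\Gamma,\chi}^{2k}) = \sum_{v_1,\dots,v_{2k}\in V}(A_{\Gamma,\chi})_{v_1,v_2}(A_{\Gamma,\chi})_{v_2,v_3}\cdots(A_{\Gamma,\chi})_{v_{2k},v_1}.
\end{equation*}
Then I would substitute the defining formula~(\ref{eq:signedadjmat}) into each factor, so that the $i$-th factor becomes a sum over directed edges $(e_i,b_i)\in\dir{E}$ with $\dir{\ver}(e_i,b_i)=(v_i,v_{i+1})$ (with indices taken cyclically modulo $2k$, so $v_{2k+1}:=v_1$), contributing $\chi((-1)^{b_i}\liftlab(e_i))$. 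Interchanging sums, the outer sum over the $v_i$ together with the sums over $(e_i,b_i)$ collapses to a single sum over all sequences $((e_1,b_1),\dots,(e_{2k},b_{2k}))$ whose consecutive directed edges satisfy $\dir{\ver}_1(e_i,b_i)=\dir{\ver}_0(e_{i+1},b_{i+1})$ cyclically, which is exactly the definition of a length-$2k$ walk (Definition~\ref{def:walk}). This yields the displayed expression.

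Neither step presents a real obstacle; the only item worth being careful about is the bookkeeping in the trace expansion, in particular ensuring that the cyclic closure condition $v_{2k+1}=v_1$ matches the cyclic closure $(e_{2k+1},b_{2k+1})=(e_1,b_1)$ built into Definition~\ref{def:walk}.
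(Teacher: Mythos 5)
Your argument is correct and matches the paper's proof essentially line for line: the inequality comes from the Hermiticity established in Lemma~\ref{lem:signedadjherm}, and the identity comes from the entry-wise expansion of the trace followed by substituting~(\ref{eq:signedadjmat}) and collapsing the sums into a sum over closed walks. The only difference is that you spell out the sum interchange in slightly more detail than the paper does.
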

\begin{proof}
  Lemma~\ref{lem:signedadjherm} implies that all eigenvalues of $A_{\Gamma,\chi}$ are real, so all eigenvalues of $A_{\Gamma,\chi}^{2k}$ must be nonnegative real, one of which by definition equals $\lambda(A_{\Gamma,\chi})^{2k}$. Therefore $\lambda(A_{\Gamma,\chi})^{2k} \leq \tr(A_{\Gamma,\chi}^{2k})$.

  Now by definition,
  \begin{align*}
    \tr(A_{\Gamma,\chi}^{2k})
    &= \sum_{(v_1,\dots,v_{2k})\in V^{2k}}\prod_{i=1}^{2k}(A_{\Gamma,\chi})_{v_i,v_{i+1}},
  \end{align*}
  where we let $v_{2k+1}=v_1$. Expanding each matrix element in the product above using~(\ref{eq:signedadjmat}) then gives the second inequality in the lemma statement.
\end{proof}

\subsection{Pseudorandomness}
\label{sec:pseudorandom}
We will consider abelian lifts with labelings chosen according to some random distribution. To make these lifts explicit, we want to ensure that the distribution has at most polynomially large support in the size of the lifted graph, so that we can deterministically loop through all of elements in the support. For this purpose, we will use the following notion of low-bias distributions.

\begin{definition}
  Let $G$ be an abelian group. The \textbf{bias} of a probability distribution $\cD$ over $G$ is the maximum value of $|\bE_{g\sim G}[\chi(g)]|$ over all nontrivial characters $\chi:G\rightarrow\bC$.
\end{definition}

By definition the uniform distribution is $0$-biased, but has support size $G$. We follow \cite{jeronimo_explicit_2022} and use the following result of \cite{jalan_near-optimal_2021}.

\begin{theorem}[\cite{jalan_near-optimal_2021}]
  \label{thm:lowbias}
  Given $\beta>0$, $t\in\bN$, and a generating set of an abelian group $G$, there is a $\poly(|G|^t,1/\beta)$-time algorithm that outputs a multiset $S\subseteq G^t$ of size
  \begin{equation*}
    |S| \leq O\left(\frac{t\log(|G|)^{O(1)}}{\beta^{2+o(1)}}\right)
  \end{equation*}
  for which $\Unif(S)$ has bias $\leq\beta$. Here the $O$'s hide absolute constants, and $o(1)\rightarrow 0$ as $\beta\rightarrow 0$.
\end{theorem}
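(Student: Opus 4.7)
The plan is to combine the character-theoretic reformulation of bias with an explicit near-optimal small-bias construction for $\bF_2$-vector spaces (such as Ta-Shma's), bridged by the structure theorem for finite abelian groups.

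First I would rephrase the problem spectrally. A multiset $S \subseteq G^t$ has bias at most $\beta$ precisely when every nontrivial eigenvalue of the Cayley graph $\Cay(G^t, S)$ has absolute value at most $\beta |S|$; equivalently, $|\bE_{s \sim \Unif(S)}[\chi(s)]| \leq \beta$ for every nontrivial character $\chi$ of $G^t$. Since characters of $G^t$ factor coordinate-wise as products of characters of $G$, and characters of $G \cong \bigoplus_i \bZ/n_i\bZ$ (by the structure theorem) factor further into characters of the cyclic factors, the task reduces to controlling sums of $|G|$-th roots of unity evaluated at elements drawn from $S$.

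Next I would reduce from $G^t$ to an $\bF_2$-vector space. Encode each cyclic factor $\bZ/n_i\bZ$ into $O(\log n_i)$ bits together with $O(\log(|G|/\beta))$ auxiliary bits that approximate $|G|$-th roots of unity to precision $\beta/|G|^{O(1)}$, so that evaluating any character of $G^t$ becomes (up to error absorbed into $\beta$) a product of $\pm 1$-valued affine functions of a bit string of length $m = O(t \log|G| + \log(1/\beta))$. A $(\beta/2)$-biased distribution on $\bF_2^m$ then pushes forward via the decoding map to a $\beta$-biased multiset on $G^t$ of the same size. Plugging in Ta-Shma's explicit $\bF_2^m$ construction, which produces a $\beta$-biased multiset of size $O(m / \beta^{2+o(1)})$ in time $\poly(m, 1/\beta)$, yields a multiset of size $O(t \log(|G|)^{O(1)} / \beta^{2+o(1)})$ computable in $\poly(|G|^t, 1/\beta)$ time, which is the claimed bound.

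The main obstacle I anticipate is controlling the character-approximation step so that it incurs only polylogarithmic overhead. A crude binary discretization of $|G|$-th roots of unity could, in principle, blow up the bias by a factor growing with $|G|$. The cleanest way to avoid this is to handle the group prime-by-prime: decompose $G$ via the structure theorem into its Sylow factors $(\bZ/p^e\bZ)^{k_p}$, construct a small-bias set directly for each factor using algebraic means (Weil-bound-based exponential sum constructions for large primes, explicit algebraic-geometric or Reed--Solomon-based constructions for small primes), and then tensor the resulting generators across factors and across the $t$ coordinates. Either route should yield the stated parameters and runtime guarantees; the $\log(|G|)^{O(1)}$ slack in the statement is designed exactly to accommodate this discretization or tensoring overhead.
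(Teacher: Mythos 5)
This statement is cited verbatim from \cite{jalan_near-optimal_2021}; the paper under review supplies no proof of it, so there is no internal argument to compare against. What follows is therefore an assessment of your sketch against what the cited work actually does.

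Your central reduction — push a small-bias multiset over $\bF_2^m$ through a decoding map into $G^t$ — has a genuine gap. Bias over $G^t$ is defined against characters $\chi\colon G^t\to\bC^\times$ taking values in $|G|$-th roots of unity, and for a decoding map $\phi\colon\bF_2^m\to G^t$ the composite $\chi\circ\phi$ is \emph{not} an $\bF_2$-character (it is a complex-valued function whose $\bF_2$-Fourier spectrum can have $\ell_1$-norm as large as $2^{m/2}$). Consequently, the $\epsilon$-bias guarantee over $\bF_2^m$, which bounds a single nontrivial Fourier coefficient at a time, does not transfer to a bound on $\bE_{s\sim S}[\chi(\phi(s))]$ without paying the $\ell_1$-Fourier mass of $\chi\circ\phi$. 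Your auxiliary-bit idea does not repair this: the object you need to control is the complex expectation over the pushed-forward distribution, not a bitwise approximation of a single function value, and adding bits that ``encode a root of unity'' does not change the distribution being averaged. This is precisely why the classical $\bF_2$-to-$\bZ_n$ reductions in the literature lose polynomial factors in $1/\beta$ and hence cannot reproduce the near-optimal $1/\beta^{2+o(1)}$ scaling that is the whole point of the theorem. Your fallback (prime-by-prime algebraic/Weil constructions, then tensoring) produces valid $\epsilon$-biased base sets, but by itself only achieves size scaling like $(\log|G|/\beta)^{O(1)}$ with a constant exponent on $1/\beta$ strictly larger than $2$; the $o(1)$ in the exponent is exactly the payoff of a Ta-Shma-style amplification step, which your sketch omits.

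The proof of \cite{jalan_near-optimal_2021} does not route through $\bF_2$ at all. It works directly with complex-valued characters of a general finite abelian group and generalizes Ta-Shma's $s$-wide replacement-product walk amplification to that setting: one starts from an explicit moderate-bias Cayley generating set over $G$ itself and iteratively amplifies the bias via pseudorandom walks, re-proving the key spectral contraction inequalities for Hermitian (rather than real symmetric) signed adjacency operators. Keeping everything on the group $G$ is what avoids the lossy change-of-group step and yields the stated $O(t\log(|G|)^{O(1)}/\beta^{2+o(1)})$ size. If you want to reconstruct the argument, the two ingredients to internalize are (i) the spectral reformulation of bias you already wrote down, and (ii) Ta-Shma's amplification, carried out over $\bC$ rather than over $\{\pm1\}$.
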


\subsection{Classical Codes}
In this section we describe relevant basic notions from classical coding theory.

\begin{definition}
  A \textbf{classical linear code of length $n$ and dimension $k$ over alphabet $\bF_q$} is a $k$-dimensional linear subspace $C\subseteq\bF_q^n$. The \textbf{distance} of $C$ is $d=\min_{c\in C\setminus\{0\}}|c|$. We summarize these parameters by saying that $C$ is an $[n,k,d]_q$ code. The \textbf{dual code} of $C$ is $C^\perp=\{x\in\bF_q^n:x\cdot c=0\forall c\in C\}$.
\end{definition}

In this paper we restrict attention to classical codes that are linear, and call them ``classical codes'' or simply just ``codes'' when clear from context.

We will make use of classical codes given by evaluations of polynomials. To begin, we present some basic notation for polynomials and their evaluations.

\begin{definition}
  For $t\in\bN$, $L\subseteq\bZ_{\geq 0}^t$, and $Z\subseteq\bF_q^t$, we let $\bF_q[X_1,\dots,X_t]_Z^{L}$ denote the space of $t$-variate polynomials over $\bF_q$ that lie in the span of the monomials of the form $X_1^{i_1}\cdots X_t^{i_t}$ for $(i_1,\dots,i_t)\in L$, and that have roots at all points in $Z$. If $L=\{(i_1,\dots,i_t)\in\bZ_{\geq 0}^t:i_1+\cdots+i_t<m\}$, we write $\bF_q[X_1,\dots,X_t]_Z^{L}=\bF_q[X_1,\dots,X_t]_Z^{<m}$. If $L=\bZ_{\geq 0}^t$, we may omit $L$ from the notation, and if $Z=\emptyset$ we may omit $Z$. Thus $\bF_q[X_1,\dots,X_t]$ denotes the space of all $t$-variate polynomials over $\bF_q$. For $E\subseteq\bF_q^t$ and $\alpha\in(\bF_q^*)^E$, we define $\evl_{E,\alpha}:\bF_q[X_1,\dots,X_t]\rightarrow\bF_q^E$ by $\evl_{E,\alpha}(f)=(\alpha_ef(e))_{e\in E}$. If $\alpha=\vec{1}$ is the all-1s vector, we may omit $\alpha$ from this notation.
\end{definition}

In this paper, we refer to codes of the form $\evl_{E,\alpha}(\bF_q[X]^{<m}_Z)$ as \textbf{Reed-Solomon} codes. Note that sometimes in the literature, the bare term ``Reed-Solomon'' is reserved for the case where $E=\bF_q$, $\alpha=\vec{1}$, and $Z=\emptyset$. The term ``punctured Reed-Solomon'' codes is often used for the case where $E\subsetneq\bF_q$, and the term ``generalized Reed-Solomon'' is often used when $\alpha\neq\vec{1}$. \cite{krishna_towards_2019} use refer to codes with $Z\neq\emptyset$ as ``shortened Reed-Solomon codes.'' However, in this paper we will simply refer to all of these codes as ``Reed-Solomon.''

In the multivariate case, we refer to codes of the form $\evl_{E,\alpha}(\bF_q[X_1,\dots,X_t]^{<m})$ as \textbf{Reed-Muller} codes.

We now describe some basic properties of Reed-Solomon codes and their duals.

\begin{lemma}[Well known]
  \label{lem:RSparameters}
  For every $E\subseteq\bF_q$, $\alpha\in(\bF_q^*)^E$, $0\leq m\leq|E|$, then $\evl_{E,\alpha}(\bF_q[X]^{<m})$ is an $[|E|,m,|E|-m+1]_q$ code. Furthermore, there exists $\beta\in(\bF_q^*)^E$ such that $\evl_{E,\alpha}(\bF_q[X]^{<m})^\perp=\evl_{E,\beta}(\bF_q[X]^{<|E|-m})$.
\end{lemma}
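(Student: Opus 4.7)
The plan is to prove the two assertions separately, relying on standard facts about polynomial interpolation. For the parameter claim, the length is immediate from the definition of the evaluation map. For the dimension and distance, I would first observe that the map $\evl_{E,\alpha}$ is $\bF_q$-linear, and that because each $\alpha_e \in \bF_q^*$ the support of $\evl_{E,\alpha}(f)$ equals the set of $e \in E$ with $f(e)\neq 0$. Since a nonzero polynomial in $\bF_q[X]^{<m}$ has at most $m-1$ roots in $\bF_q$ and in particular in $E$, this simultaneously shows injectivity of $\evl_{E,\alpha}$ restricted to $\bF_q[X]^{<m}$ (so the dimension is $m$) and that any nonzero codeword has weight at least $|E|-(m-1)=|E|-m+1$. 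To see that this distance is tight, take any subset $S\subseteq E$ of size $m-1$ and consider the polynomial $\prod_{s\in S}(X-s)\in\bF_q[X]^{<m}$, whose evaluation has weight exactly $|E|-m+1$.

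For the dual identification, the plan is to exhibit an explicit $\beta$ and check containment, then conclude by dimension count. Set
\[
  \beta_e \;=\; \frac{1}{\alpha_e\prod_{e'\in E\setminus\{e\}}(e-e')} \;\in\; \bF_q^*,
\]
which is well defined since the points of $E$ are distinct and $\alpha_e\neq 0$. For $f\in\bF_q[X]^{<m}$ and $g\in\bF_q[X]^{<|E|-m}$, the product $fg$ lies in $\bF_q[X]^{<|E|-1}$. By Lagrange interpolation on $E$, any $h\in\bF_q[X]^{<|E|}$ satisfies
\[
  h(X) \;=\; \sum_{e\in E} h(e)\prod_{e'\in E\setminus\{e\}}\frac{X-e'}{e-e'},
\]
and the coefficient of $X^{|E|-1}$ equals $\sum_{e\in E}h(e)/\prod_{e'\neq e}(e-e')$. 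Applying this to $h=fg$, which has degree strictly less than $|E|-1$, the coefficient of $X^{|E|-1}$ vanishes, giving
\[
  \sum_{e\in E}\alpha_e f(e)\cdot\beta_e g(e) \;=\; \sum_{e\in E}\frac{f(e)g(e)}{\prod_{e'\neq e}(e-e')} \;=\; 0.
\]
Hence $\evl_{E,\beta}(\bF_q[X]^{<|E|-m})\subseteq\evl_{E,\alpha}(\bF_q[X]^{<m})^\perp$.

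To finish, I would apply the already-established parameter claim to both sides: the left-hand side has dimension $|E|-m$ (using the first part with exponent $|E|-m\leq|E|$), while the right-hand side, being the dual of an $m$-dimensional subspace of $\bF_q^{|E|}$, also has dimension $|E|-m$. Equality of dimensions upgrades the containment to the claimed equality. No step is genuinely hard here; the only mildly subtle point is ensuring that the Lagrange-interpolation identity kills exactly the $X^{|E|-1}$ coefficient, which is why the degree bound $<|E|-m$ on $g$ is what makes the pairing vanish.
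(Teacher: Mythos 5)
The paper labels Lemma~\ref{lem:RSparameters} as ``well known'' and provides no proof of its own, only the remark that $\beta$ can be chosen as a nonzero element of the one-dimensional space $\evl_{E,\alpha}(\bF_q[X]^{<|E|-1})^\perp$. Your proof is correct and fills this gap in a standard way: the injectivity/distance argument via root-counting is exactly right, and the explicit $\beta_e = \bigl(\alpha_e\prod_{e'\ne e}(e-e')\bigr)^{-1}$ together with the Lagrange-interpolation identity for the leading coefficient cleanly establishes the containment, with the dimension count finishing the job. It is worth noting that your explicit $\beta$ is consistent with the paper's implicit characterization: taking $g$ ranging over $\bF_q[X]^{<|E|-1}$ in your pairing identity shows precisely that $\beta \in \evl_{E,\alpha}(\bF_q[X]^{<|E|-1})^\perp$. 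The only pedantic caveat is the degenerate case $m=0$, where the code is $\{0\}$ and the distance $|E|+1$ should be read conventionally; your tightness witness $\prod_{s\in S}(X-s)$ applies for $m\geq 1$.
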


Note that in Lemma~\ref{lem:RSparameters}, $\beta$ is chosen as some nonzero element of the 1-dimensional space $\evl_{E,\alpha}(\bF_q[X]^{<|E|-1})^\perp$.

Lemma~\ref{lem:RSparameters} does not characterize the properties of the code $\evl_{E,\alpha}(\bF_q[X]^{<m}_Z)$ for $Z\neq\emptyset$. For this purpose, we use the following basic fact, a proof of which can for instance be found in \cite[Proof of Theorem~3.1]{golowich_asymptotically_2024}.

\begin{lemma}[Well known]
  \label{lem:shortdual}
  For a classical code $C\subseteq\bF_q^n$ and a subset $A\subseteq[n]$, letting $A^c=[n]\setminus A$, then it holds that
  \begin{equation*}
    ((C\cap\{0\}^A\times\bF_q^{A^c})|_{A^c})^\perp = C^\perp|_{A^c}.
  \end{equation*}
\end{lemma}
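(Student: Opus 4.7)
My approach is to reduce the identity to its double dual, which converts it into a cleaner subspace equality. Letting $S = C \cap (\{0\}^A\times\bF_q^{A^c})$, the lemma asserts $(S|_{A^c})^\perp = C^\perp|_{A^c}$. Taking the dual of both sides and using that $V^{\perp\perp}=V$ for subspaces of $\bF_q^n$ (the standard bilinear form being nondegenerate), the equivalent claim is $S|_{A^c} = (C^\perp|_{A^c})^\perp$, where now both sides are described as subspaces rather than one being described as a dual. I plan to prove this reformulation directly, which removes the asymmetry that would otherwise force the two containments of the original statement to use different arguments.

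The single computation everything rests on is: for any $x \in \bF_q^{A^c}$, let $\tilde{x}\in\bF_q^n$ be its zero-extension, i.e., $\tilde{x}|_A=0$ and $\tilde{x}|_{A^c}=x$. Then for every $y'\in\bF_q^n$, one has $\tilde{x}\cdot y' = x\cdot y'|_{A^c}$, since the coordinates in $A$ of $\tilde{x}$ contribute nothing.

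Using this observation, one containment is immediate: if $x \in S|_{A^c}$, then its zero-extension $\tilde{x}$ lies in $S \subseteq C$, so for any $y'\in C^\perp$ we get $x\cdot y'|_{A^c}=\tilde{x}\cdot y'=0$, showing $x\in (C^\perp|_{A^c})^\perp$. For the reverse containment, suppose $x\in(C^\perp|_{A^c})^\perp$; then the same computation gives $\tilde{x}\cdot y' = x\cdot y'|_{A^c}=0$ for every $y'\in C^\perp$, so $\tilde{x}\in(C^\perp)^\perp = C$. Combined with $\tilde{x}|_A=0$ by construction, this places $\tilde{x}\in S$, and hence $x=\tilde{x}|_{A^c}\in S|_{A^c}$.

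This is standard linear algebra and I do not anticipate a real obstacle; the only subtle point is the initial reformulation via double duality, which is what allows both containments to be handled symmetrically via the same zero-extension trick rather than requiring, for instance, a functional-extension argument on $C/S$ in one of the directions.
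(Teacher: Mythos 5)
Your proof is correct, and since the paper itself does not supply a proof of this lemma (it simply cites \cite{golowich_asymptotically_2024}), there is nothing in the text to compare against. The double-dualization reformulation followed by the zero-extension computation is a clean way to package the standard argument that the dual of a shortened code is the punctured dual; both containments do indeed go through via the single identity $\tilde{x}\cdot y' = x\cdot y'|_{A^c}$, and no gap is present.
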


It follows from the above lemmas that by choosing appropriate coefficients $\beta$, we can always assume that $Z=\emptyset$, as shown below.

\begin{corollary}
  \label{cor:SRStoRS}
  For every $E\subseteq\bF_q$, $\alpha\in(\bF_q^*)^E$, $Z\subseteq\bF_q\setminus E$, $|Z|\leq m\leq|E|$, there exists $\beta\in(\bF_q^*)^E$ such that $\evl_{E,\alpha}(\bF_q[X]^{<m}_Z)=\evl_{E,\beta}(\bF_q[X]^{<m-|Z|})$.
\end{corollary}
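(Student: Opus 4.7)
The plan is to prove the equality by exhibiting an explicit bijective correspondence between the two underlying polynomial spaces that intertwines appropriately with evaluation. The key observation is that a polynomial $f\in\bF_q[X]^{<m}$ vanishes on $Z$ if and only if it is divisible (in $\bF_q[X]$) by $p_Z(X):=\prod_{z\in Z}(X-z)$, a fact that follows from unique factorization in $\bF_q[X]$ together with the fact that the elements of $Z$ are distinct. Since $\deg(p_Z)=|Z|\leq m$, division by $p_Z$ gives a $\bF_q$-linear bijection
\begin{equation*}
  \bF_q[X]^{<m}_Z \;\longleftrightarrow\; \bF_q[X]^{<m-|Z|},\qquad f\longleftrightarrow g=f/p_Z.
\end{equation*}

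Next I would use the hypothesis $Z\subseteq\bF_q\setminus E$ to define the desired coefficient vector. For each $e\in E$, set
\begin{equation*}
  \beta_e \;=\; \alpha_e\cdot p_Z(e) \;=\; \alpha_e\prod_{z\in Z}(e-z).
\end{equation*}
Because $e\notin Z$, every factor $e-z$ is nonzero, so $p_Z(e)\in\bF_q^*$, and combined with $\alpha_e\in\bF_q^*$ this yields $\beta\in(\bF_q^*)^E$, as required.

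Finally, for any $f\in\bF_q[X]^{<m}_Z$, writing $f=p_Z\cdot g$ with $g\in\bF_q[X]^{<m-|Z|}$, we have for each $e\in E$
\begin{equation*}
  (\evl_{E,\alpha}(f))_e \;=\; \alpha_e f(e) \;=\; \alpha_e p_Z(e) g(e) \;=\; \beta_e g(e) \;=\; (\evl_{E,\beta}(g))_e,
\end{equation*}
and the correspondence $f\leftrightarrow g$ is bijective by the first step. Hence the two evaluation codes coincide. The proof is essentially a direct computation; I do not anticipate any genuine obstacle, and in particular Lemma~\ref{lem:shortdual} is not required for this reduction (it is more naturally used together with Lemma~\ref{lem:RSparameters} to analyze the dual of such a code). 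The only mild subtlety is verifying that $\beta_e\neq 0$ for all $e\in E$, which is precisely where the assumption $Z\cap E=\emptyset$ is consumed.
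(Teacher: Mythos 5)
Your proof is correct, and it takes a genuinely different route from the paper. The paper proves this corollary by dualizing twice: it first views $\evl_{E,\alpha}(\bF_q[X]^{<m}_Z)$ as a shortening of the longer Reed--Solomon code $\evl_{E\sqcup Z,\alpha'}(\bF_q[X]^{<m})$, applies Lemma~\ref{lem:shortdual} to identify the dual of the shortened code with a punctured dual, uses Lemma~\ref{lem:RSparameters} to recognize that dual as a (generalized) Reed--Solomon code of degree $<|E|+|Z|-m$, and then applies Lemma~\ref{lem:RSparameters} once more to dualize back. Your argument instead factors out the vanishing locus directly: every $f\in\bF_q[X]^{<m}_Z$ equals $p_Z\cdot g$ with $p_Z(X)=\prod_{z\in Z}(X-z)$ and $g\in\bF_q[X]^{<m-|Z|}$, and absorbing $p_Z(e)\neq 0$ (valid since $Z\cap E=\emptyset$) into the coefficient vector gives the explicit choice $\beta_e=\alpha_e\,p_Z(e)$. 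This is more elementary and self-contained, avoids the duality machinery entirely (so your remark that Lemma~\ref{lem:shortdual} is not needed is accurate, even though the paper does use it), and has the added benefit of producing $\beta$ explicitly rather than merely asserting its existence; the paper's route, by contrast, reuses lemmas it has already set up and treats the statement uniformly with its other dual-code manipulations. The only points your argument needs, and does supply, are the divisibility criterion (distinct roots, coprime linear factors), the degree bookkeeping $|Z|\leq m$, and the nonvanishing of $\beta_e$.
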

\begin{proof}
  Let $\alpha'\in(\bF_q^*)^{E\cup Z}$ be any vector with $\alpha'|_E=\alpha$. Then by Lemma~\ref{lem:shortdual}, $\evl_{E,\alpha}(\bF_q[X]^{<m}_Z)^\perp = \evl_{E\sqcup Z,\alpha'}(\bF_q[X]^{<m})^\perp|_E$. By Lemma~\ref{lem:RSparameters}, there exists $\gamma'\in(\bF_q^*)^{E\sqcup Z}$ such that $\evl_{E\sqcup Z,\alpha'}(\bF_q[X]^{<m})^\perp=\evl_{E\sqcup Z,\gamma'}(\bF_q[X]^{<|E|+|Z|-m})$. Combining these equalities and letting $\gamma=\gamma'|_E\in(\bF_q^*)^E$ gives that $\evl_{E,\alpha}(\bF_q[X]^{<m}_Z)^\perp=\evl_{E,\gamma}(\bF_q[X]^{<|E|+|Z|-m})$. Now again applying Lemma~\ref{lem:RSparameters}, there exists $\beta\in(\bF_q^*)^E$ such that $\evl_{E,\gamma}(\bF_q[X]^{<|E|+|Z|-m})^\perp=\evl_{E,\beta}(\bF_q[X]^{<m-|Z|})$. Thus $\evl_{E,\alpha}(\bF_q[X]^{<m}_Z)=\evl_{E,\beta}(\bF_q[X]^{<m-|Z|})$, as desired.
\end{proof}

Reed-Solomon codes by definition satisfy the following multiplication property. Recall from Section~\ref{sec:notation} that for $c,c'\in\bF_q^n$, then $c*c'=(c_ic'_i)_{i\in[n]}\in\bF_q^n$ denotes the component-wise product.

\begin{lemma}
  \label{lem:RSmult}
  It holds for every prime power $q$, every set $E\subseteq\bF_q$, and every $m,r\in\bN$ that $\evl_E(\bF_q[X]^{\leq m})^{*r}=\evl_E(\bF_q[X]^{\leq rm})$.
\end{lemma}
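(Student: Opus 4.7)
The proof is a direct consequence of the fact that evaluation is a ring homomorphism: for any $f,g \in \bF_q[X]$ and any $e \in E$, one has $\evl_E(f)_e \cdot \evl_E(g)_e = f(e)g(e) = (fg)(e) = \evl_E(fg)_e$. In other words, $\evl_E(f) * \evl_E(g) = \evl_E(fg)$. I would exploit this identity to establish both inclusions separately.

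For the forward inclusion $\evl_E(\bF_q[X]^{\leq m})^{*r} \subseteq \evl_E(\bF_q[X]^{\leq rm})$, I would argue by induction on $r$: the generating set on the left is spanned by products $\evl_E(f_1) * \cdots * \evl_E(f_r)$ with each $\deg f_i \leq m$, and by the observation above this equals $\evl_E(f_1 \cdots f_r)$, which is the evaluation of a polynomial of degree $\leq rm$. Taking the span gives the inclusion.

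For the reverse inclusion $\evl_E(\bF_q[X]^{\leq rm}) \subseteq \evl_E(\bF_q[X]^{\leq m})^{*r}$, I would use the monomial basis. It suffices to show that $\evl_E(X^k) \in \evl_E(\bF_q[X]^{\leq m})^{*r}$ for every $0 \leq k \leq rm$. Given such $k$, I would write $k = k_1 + \cdots + k_r$ with each $0 \leq k_i \leq m$ (which is possible since $k \leq rm$), and observe
\begin{equation*}
\evl_E(X^k) = \evl_E(X^{k_1}) * \cdots * \evl_E(X^{k_r}),
\end{equation*}
where each factor lies in $\evl_E(\bF_q[X]^{\leq m})$. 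Since $\{\evl_E(X^k) : 0 \leq k \leq rm\}$ spans $\evl_E(\bF_q[X]^{\leq rm})$, this completes the reverse inclusion.

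There is no real obstacle here; the only thing to note is that $*$-products of \emph{spans} are defined by taking the span of componentwise products, so the reverse direction only requires hitting a spanning set rather than every element as a single product. The whole argument is a few lines.
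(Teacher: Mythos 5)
Your proof is correct and follows essentially the same route as the paper: both directions rest on the identity $\evl_E(f)*\evl_E(g)=\evl_E(fg)$, with the forward inclusion from additivity of degrees and the reverse inclusion from decomposing each monomial $X^k$, $k\leq rm$, as a product of $r$ monomials of degree $\leq m$ and taking spans. No gaps.
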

\begin{proof}
  For $f,g\in\bF_q[X]$, by definition $\evl_E(fg)=\evl_E(f)*\evl_E(g)$. Therefore $\evl_E(\bF_q[X]^{\leq m})^{*r}\subseteq\evl_E(\bF_q[X]^{\leq rm})$ because $\deg(fg)=\deg(f)+\deg(g)$, and $\evl_E(\bF_q[X]^{\leq m})^{*r}\supseteq\evl_E(\bF_q[X]^{\leq rm})$ beacuse every polynomial of degree $\leq rm$ can be expressed as a linear combination of monomials $X^i$ for $i\leq rm$, and each such monomial is by definition a product of $r$ monomials of degree $\leq m$.
\end{proof}

\subsection{Quantum Codes}
In this section we describe relevant notions from quantum coding theory.

\begin{definition}
  \label{def:css}
  A \textbf{quantum CSS code of length $n$ and dimension $k$ over alphabet $\bF_q$} is a pair $Q=(Q_X,Q_Z)$ of subspaces $Q_X,Q_Z\subseteq\bF_q^n$ such that $Q_X^\perp\subseteq Q_Z$ (so that also $Q_Z^\perp\subseteq Q_X$) such that $k=\dim(Q_Z)-\dim(Q_X^\perp)$. The \textbf{distance} of $C$ is
  \begin{equation*}
    d = \min_{c\in(Q_X\setminus Q_Z^\perp)\cup(Q_Z\setminus Q_X^\perp)}|c|.
  \end{equation*}
  We summarize these parameters by saying that $Q$ is an $[[n,k,d]]_q$ code.
\end{definition}

In this paper we restrict attention to quantum codes given by the CSS framework described above, and call them ``quantum codes'' or simply just ``codes'' when clear from context.

\subsection{Chain Complexes}
\label{sec:ccdef}
We will use the language of chain complexes with systems of local coefficients similarly to \cite{panteleev_asymptotically_2022} (also called ``sheaves'' in \cite{first_couboundary_2022,dinur_expansion_2024}) to describe the codes we study. To begin, we define chain complexes below. Note that here we only provide definitions that are sufficiently general for our purpose.

\begin{definition}
  For $r\in\bN$, an \textbf{$r$-dimensional chain complex $\cC_*$ over a ring $R$} is a free $R$-module $\cC_*=\bigoplus_{i=0}^r\cC_i$ with an $R$-module homomorphism $\partial^{\cC}:\cC_*\rightarrow\cC_*$ satisfying $(\partial^{\cC})^2=0$ and $\partial^{\cC}\cC_i\subseteq\cC_{i-1}$. We call $\cC_i$ the space of \textbf{$i$-chains}, $\partial^{\cC}$ the \textbf{boundary map}, and $\partial^{\cC}_i=\partial^{\cC}|_{\cC_i}$ the \textbf{$i$-boundary map}. For $i\notin\{0,\dots,r\}$, we write $\cC_i=0$ and $\partial^{\cC}_i=0$. We also define
  \begin{align*}
    Z_i(\cC) &= \ker(\partial^{\cC}_i) \text{ the space of \bf $i$-cycles} \\
    B_i(\cC) &= \im(\partial^{\cC}_{i+1}) \text{ the space of \bf $i$-boundaries} \\
    H_i(\cC) &= Z_i(\cC)/B_i(\cC) \text{ the \bf $i$-homology group}.
  \end{align*}

  The \textbf{cochain complex $\cC^*$} associated to $\cC_*$ is the chain complex with free $R$-module $\cC^*=\bigoplus_{i=0}^r\cC^i$ for $\cC^i=\Hom_R(\cC_i,R)$, with boundary map $(\partial^{\cC})^\top:\cC^*\rightarrow\cC^*$. We call $\cC^i$ the space of \textbf{$i$-cochains}, $\delta^{\cC}=(\partial^{\cC})^\top$ the \textbf{coboundary map} and $\delta^{\cC}_i=(\partial^{\cC}_{i+1})^\top:\cC^i\rightarrow\cC^{i+1}$ the \textbf{$i$-coboundary map}. We also define
  \begin{align*}
    Z^i(\cC) &= \ker(\delta^{\cC}_i) \text{ the space of \bf $i$-cocycles} \\
    B^i(\cC) &= \im(\delta^{\cC}_{i-1}) \text{ the space of \bf $i$-coboundaries} \\
    H^i(\cC) &= Z^i(\cC)/B^i(\cC) \text{ the \bf $i$-cohomology group}.
  \end{align*}

  When the (co)chain complex is clear from context, we write $\partial=\partial^{\cC}$ and $\delta=\delta^{\cC}$. We restrict attention to chain complexes over $R=\bZ$ and $R=\bF_q$ in which $\cC_*$ is finite-dimensional, and we assume each $\cC_i$ has some fixed basis $C(i)$. Therefore $\cC_i\cong R^{C(i)}\cong\cC^i$ and hence we can write $C=\bigsqcup_{i=0}^rC(i)$ and $\cC_*\cong\cC^*=\cC:=R^C$.
  
  The \textbf{locality $w^{\cC}$} of $\cC_*$ refers to the locality of the matrix $\partial\in R^{C\times C}$ (see Section~\ref{sec:notation}).
\end{definition}

We will consider chain complexes obtained by imposing some local structures over $\bF_q$ on top of an incidence chain complex over $\bZ$, as defined below.

\begin{definition}
  \label{def:inccomplex}
  An \textbf{incidence chain complex} is a chain complex $\cX$ over $\bZ$ such that every entry of the coboundary map matrix $\delta^{\cX}\in\bZ^{\cX\times\cX}$ lies in $\{-1,0,1\}$. For $x\in X(i):$ and $x'\in X(i+1)$, the matrix entry $\delta^{\cX}_{x',x}\in\{-1,0,1\}$ is called the \textbf{incidence number} of $x',x'$. We write $x'\triangleright x$ if $\delta^{\cX}_{x',x}\neq 0$. Then for $x_{i'}\in X(i')$ and $x_i\in X(i)$ with $i'>i$, we write $x_{i'}\succ x_i$ if there exist $x_j\in X(j)$ for $i'>j>i$ such that $x_{j+1}\succ x_j$ for each $i'>j\geq i$. We also write $X_{\preceq x}=\{x'\in X:x'\preceq x\}$.
\end{definition}

\begin{remark}
  The precedence relation in Definition~\ref{def:inccomplex} endows $X=\bigsqcup_{i=1}^rX(i)$ the structure of a \textbf{rank-$r$ graded poset}. Specifically, define $\rank:X\rightarrow\bZ$ by $\rank(x)=i$ for every $x\in X(i)$. Then the following conditions are by definition satisfied:
  \begin{enumerate}
  \item If $x\prec x'$ then $\rank(x)<\rank(x')$.
  \item If $x\prec x'$ and there is no $x''$ with $x\prec x''\prec x'$, then $\rank(x)+1=\rank(x')$.
  \end{enumerate}
  For such posets, we often say ``dimension'' instead of ``rank,'' as in the constructions considered in this paper, rank-$i$ elements will have a natural interpretation as $i$-dimensional hypercubes.
\end{remark}

Graphs provide a particularly simply class of incidence complexes:

\begin{definition}
  \label{def:graphinc}
  For an undirected graph $\Gamma=(V,E,\ver)$, the associated 1-dimensional incidence cochain complex $\cX^*$ is given by $X(0)=V$, $X(1)=E$, and
  \begin{equation*}
    \delta^{\cX}_{e,v} = \begin{cases}
      1,&v=\ver_0(e)\\
      -1,&v=\ver_1(e)\\
      0,&v\notin\ver(e).
    \end{cases}
  \end{equation*}
\end{definition}

We will later obtain higher-dimensional incidence complexes by taking products of graphs.

\begin{definition}
  A \textbf{system of local coefficients $\cF$ over alphabet $\bF_q$} for an $r$-dimensional incidence complex $\cX$ consists of an assignment of an $\bF_q$-vector space $\cF_x$ to each $x\in X$, along with a linear map $\cF_{x'\leftarrow x}:\cF_x\rightarrow\cF_{x'}$ to each $x'\succ x$ with the property that $\cF_{x''\leftarrow x'}\circ\cF_{x'\leftarrow x}=\cF_{x''\leftarrow x}$ for every $x''\succ x'\succ x'$.

  We then define an associated $r$-dimensional cochain complex $\cF_{\cX}^*$ with $i$-cochain space $\cF_{\cX}^i=\bigoplus_{x\in X(i)}\cF_x$, and with $i$-coboundary map given by $(\delta^{\cF_{\cX}}_i(f))_{x'}=\sum_{x\triangleleft x'}\delta^{\cX}_{x',x}\cF_{x'\leftarrow x}f_x$.

  We also define an \textbf{$X$-Hamming norm} $|\cdot|_X:\cF_{\cX}^*\rightarrow\bZ_{\geq 0}$ by $|f|_X=|\{x\in X:f_x\neq 0\}|$.
\end{definition}

We typically assume each $\cF_x$ has some fixed basis with respect to which we can define Hamming norm, but the choice of basis will not affect our results.

We now describe how classical and quantum codes are naturally associated to (co)chain complexes of dimension $\geq 1$ and $\geq 2$, respectively.

\begin{definition}
  \label{def:cctocode}
  Let $\cC^*$ be a cochain complex. The \textbf{classical code associated to level $i$ of $\cC^*$} is $Z^i(\cC)\subseteq\cC^i$, and the \textbf{classical code associated to level $i$ of $\cC_*$} is $Z_i(\cC)\subseteq\cC_i$. These two respective codes are said to be \textbf{low-density parity-check (LDPC) of locality $w$} if the maps $\delta_i$ and $\partial_i$ respectively have locality $w$ (see Section~\ref{sec:notation}).

  Similarly,the \textbf{quantum code associated to level $i$ of $\cC^*$} is $(Z_i(\cC),Z^i(\cC))$, and the \textbf{quantum code associated to level $i$ of $\cC_*$} is $(Z^i(\cC),Z_i(\cC))$. These codes are said to be \textbf{LDPC of locality $w$} if both maps $\delta_i$ and $\partial_i$ have locality $\leq w$.
\end{definition}

The quantum CSS codes above are well defined because the coboundary map condition $\delta_i\delta_{i-1}=0$ implies (and in fact, is equivalent to) the CSS orthogonality condition $\ker(\partial_i)^\perp=\im(\delta_{i-1})\subseteq\ker(\delta_i)$. The quantum codes associated to level $i$ of $\cC^*$ and $\cC_*$ by definition have dimension equal to $\dim H^i(\cC)=\dim H_i(\cC)$.

Naturally, if we are given codes with chosen parity-check matrices, we can also recover chain complexes:

\begin{definition}
  \label{def:codetocc}
  Let $C$ be a length-$n$ classical code with chosen parity-check matrix $H\in\bF_q^{m\times n}$, so that $C=\ker(H)$. Then the \textbf{associated 1-dimensional cochain complex} is given by $\cC^0=\bF_q^n,\; \cC^1=\bF_q^m,\; \delta_0=H$.
  
  Let $Q=(Q_X,Q_Z)$ be a length-$n$ quantum CSS code with chosen parity check matrices $H_X\in\bF_q^{m_X\times n}$, $H_Z\in\bF_q^{m_Z\times n}$, so that $Q_X=\ker(H_X)$, $Q_Z=\ker(H_Z)$. The \textbf{associated 2-dimensional cochain complex} is given by $\cC^0=\bF_q^{m_X},\; \cC^1=\bF_q^n,\; \cC^2=\bF_q^{m_Z},\; \delta_0=H_X^\top,\; \delta_1=H_Z$.
\end{definition}

We now define the analogue of code distance for chain complexes, along with a notion of testability.

\begin{definition}
  Let $\cC$ be a chain complex. The \textbf{$i$-systolic distance $d_i(\cC)$} and \textbf{$i$-cosystolic distance $d^i(\cC)$} are defined as
  \begin{align*}
    d_i(\cC) &= \min_{c\in Z_i(\cC)\setminus B_i(\cC)}|c| \\
    d^i(\cC) &= \min_{c\in Z^i(\cC)\setminus B^i(\cC)}|c|.
  \end{align*}
  The \textbf{$i$-cycle expansion} $\rho_i(\cC)$ and \textbf{$i$-cocycle expansion $\rho^i(\cC)$} are defined as
  \begin{align*}
    \rho_i(\cC) &= \min_{c\in\cC_i\setminus Z_i(\cC)}\max_{c'\in Z_i(\cC)}\frac{|\partial_i(c)|}{|c-c'|} \\
    \rho^i(\cC) &= \min_{c\in\cC^i\setminus Z^i(\cC)}\max_{c'\in Z^i(\cC)}\frac{|\delta_i(c)|}{|c-c'|}
  \end{align*}
  If $\cC=\cF_{\cX}$ is the chain complex associated to a system of local coefficients over an incidence complex, we let $d_i(\cF_{\cX};X),d^i(\cF_{\cX};X),\rho_i(\cF_{\cX};X),\rho^i(\cF_{\cX};X)$ denote the expressions above with the standard Hamming norm $|\cdot|$ replaced by the $X$-Hamming norm $|\cdot|_X$.
\end{definition}

If $\cC$ is $r$-dimensional, then the distance of the classical code associated to level $0$ of $\cC^*$ is $d^0(\cC)$, and the distance of the classical code associated to level $r$ of $\cC_*$ is $d_r(\cC)$. Similarly, the distance of the quantum code associated to level $i$ of $\cC^*$ or $\cC_*$ is by definition $\min\{d_i(\cC),d^i(\cC)\}$. The following definition shows that this analogy extends to a property called local testability, which quantifies how well local queries can measure distance to the code.

\begin{definition}
  Let $\cC$ be a chain complex. We say that the classical code associated to level $i$ of $\cC^*$ (resp.~$\cC_*$) is \textbf{locally testable with soundness $\rho^i(\cC)$ (resp.~$\rho_i(\cC)$)}. Similarly, we say that the quantum code associated to level $i$ of $\cC^*$ (and of $\cC_*$) is \textbf{locally testable with soundness $\min\{\rho_i(\cC),\rho^i(\cC)\}$}.
\end{definition}

The following definition shows how the Sipser-Spielman classical LDPC codes \cite{sipser_expander_1996} can be described as a 1-dimensional chain complex obtained by imposing local coefficients from a classical code on the incidence complex of an expander graph.

\begin{definition}
  \label{def:sscode}
  Let $\Gamma=(V,E,\ver)$ be an undirected $\Delta$-regular graph with associated incidence complex $\cX$ (see Definition~\ref{def:graphinc}). Given $m\in[\Delta]$ and an assignment $(h_v\in\bF_q^{m\times E(v)})_{v\in V}$ of $m\times\Delta$ full-rank matrices to the vertices in $V$, we define a system of local coefficients $\cF$ on $\cX$ as follows: For every $v\in V$ and $e\in E$, let $\cF_v=\bF_q^m$, $\cF_e=\bF_q$ and $\cF_{e\leftarrow v}=\ind{e}^\top h_v^\top$.
\end{definition}

We typically consider the 1-dimensional chain complex $\cF_{\cX}$ from Definition~\ref{def:sscode} in the regime where $\Delta=|E(v)|\ll|V|$. In this case, \cite{sipser_expander_1996} showed that the classical code $Z_1(\cF_{\cX})\subseteq\bF_q^E$ associated to level $1$ of ${\cF_{\cX}}_*$ has good distance if $\Gamma$ is a good spectral expander, and the matrices $h_v$ are such that the ``local codes'' $\ker(h_v)\subseteq\bF_q^{E(v)}$ have good distance. A similar proof also implies that the classical code $Z^0(\cF_{\cX})$ associated to level $0$ of $\cF_{\cX}^*$ has good distance if $\Gamma$ is a good spectral expander and the dual local codes $\im(h_v^\top)=\ker(h_v)^\perp\subseteq\bF_q^{E(v)}$ have good distance. Below we formally state and prove this result for completeness.

\begin{lemma}[\cite{sipser_expander_1996}]
  \label{lem:ssdis}
  Define $\Gamma,\cX,\cF$ as in Definition~\ref{def:sscode}. Let $d$ (resp.~$d^\perp$) denote the minimum over all $v\in V$ of the distance of $\ker(h_v)$ (resp.~of $\im(h_v^\top)=\ker(h_v)^\perp$) Then
  \begin{align}
    \label{eq:ssdis1} d_1(\cF_{\cX}) = d_1(\cF_{\cX};X) &\geq (d-\lambda_2(\Gamma))\cdot\frac{d}{2\Delta}\cdot|V| \\
    \label{eq:ssdis0} d^0(\cF_{\cX}) \geq d^0(\cF_{\cX};X) &\geq (d^\perp-\lambda_2(\Gamma))\cdot\frac{1}{\Delta}\cdot|V|.
  \end{align}
\end{lemma}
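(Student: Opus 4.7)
My plan is to follow the classical Sipser-Spielman argument \cite{sipser_expander_1996}, using the Expander Mixing Lemma (Lemma~\ref{lem:expmix}) to convert local code distance into global distance. Since $\cF_{\cX}$ is $1$-dimensional there are no $2$-chains, so $B_1(\cF_{\cX}) = 0$ and $B^0(\cF_{\cX}) = 0$, and the (co)systolic distances reduce to minimum weights of nonzero (co)cycles. The equality $d_1(\cF_{\cX}) = d_1(\cF_{\cX}; X)$ in~(\ref{eq:ssdis1}) follows because each $\cF_e$ has dimension one; the inequality $d^0(\cF_{\cX}) \geq d^0(\cF_{\cX}; X)$ in~(\ref{eq:ssdis0}) follows because $|f| \geq |f|_X$ when each $\cF_v$ has dimension $m \geq 1$.

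For~(\ref{eq:ssdis1}), the boundary condition $\partial_1 c = 0$ translates, after unwinding the transpose of $\delta_0$, into the statement that $c|_{E(v)} \in \ker(h_v)$ for every $v \in V$. Set $S = \{v \in V : c|_{E(v)} \neq 0\}$. For each $v \in S$ the local distance hypothesis forces $|c|_{E(v)}| \geq d$, so double-counting yields $2|c| = \sum_{v \in S} |c|_{E(v)}| \geq d \, |S|$. Every edge in $\supp(c)$ has both endpoints in $S$, so $|c| \leq |E(S,S)|$, and the Expander Mixing Lemma gives $2|E(S,S)| \leq |S|(\lambda_2(\Gamma) + \Delta |S|/|V|)$. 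Combining the three inequalities and dividing by $|S|$ yields $d \leq \lambda_2(\Gamma) + \Delta |S|/|V|$, whence $|S| \geq |V|(d - \lambda_2(\Gamma))/\Delta$; plugging this into $|c| \geq d|S|/2$ gives the bound in~(\ref{eq:ssdis1}).

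For~(\ref{eq:ssdis0}), I dualize. Let $f \in Z^0(\cF_{\cX})$ be nonzero and set $T = \{v \in V : f_v \neq 0\}$. Since $h_v$ has full row rank $m$, the map $h_v^\top$ is injective, so for each $v \in T$ the element $h_v^\top f_v$ is a nonzero codeword of $\im(h_v^\top) = \ker(h_v)^\perp$ and hence has at least $d^\perp$ nonzero entries in $\bF_q^{E(v)}$. The cocycle condition $(\delta_0 f)_e = 0$ says that at every edge $e$ the scalars $(h_{\ver_0(e)}^\top f_{\ver_0(e)})_e$ and $(h_{\ver_1(e)}^\top f_{\ver_1(e)})_e$ agree up to the sign dictated by $\delta^{\cX}_{e,\cdot}$, so if one is nonzero then so is the other; in particular the other endpoint $v'$ of any such edge must satisfy $f_{v'} \neq 0$, i.e.\ $v' \in T$. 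Hence each $v \in T$ has at least $d^\perp$ edges staying inside $T$, and summing over $v \in T$ gives $d^\perp |T| \leq 2|E(T,T)|$. The Expander Mixing Lemma then yields $d^\perp \leq \lambda_2(\Gamma) + \Delta |T|/|V|$, i.e.\ $|T| \geq |V|(d^\perp - \lambda_2(\Gamma))/\Delta$, which is exactly~(\ref{eq:ssdis0}) since $|f|_X = |T|$.

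I do not anticipate any serious obstacle: the argument is the textbook Sipser-Spielman proof transported to the local coefficients formalism. The only points that require a little care are the $\pm 1$ signs coming from the graph orientation (which never affect supports and so drop out of the counting), and keeping track of the distinction between the standard Hamming norm and the $X$-Hamming norm, which is what forces the inequality in~(\ref{eq:ssdis0}) but not in~(\ref{eq:ssdis1}).
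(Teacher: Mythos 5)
Your proof is correct and follows essentially the same route as the paper's: define the vertex support set, lower-bound the number of (directed) edges internal to it via the local code distance, apply the Expander Mixing Lemma, and then translate the resulting bound on the support size back to a bound on the (co)cycle weight. The minor rearrangement of the chain of inequalities for~(\ref{eq:ssdis1}) (passing through $|c| \leq |E(S,S)|$ rather than directly bounding $|\dir{E}(S,S)|$ from below) and the explicit remark that $B_1 = 0$ and $B^0 = 0$ in a $1$-dimensional complex are harmless clarifications; as you correctly note, the $\pm 1$ entries of $\delta^{\cX}$ preserve supports and hence Hamming weights, so they play no role.
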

\begin{proof}
  First observe that $d_1(\cF_{\cX})=d_1(\cF_{\cX};X)$ because $|x|=|x|_X$ for $x\in X(1)=E$, and $d^0(\cF_{\cX})\geq d^0(\cF_{\cX};X)$ beacuse $|x|\geq|x|_X$ for $x\in X(0)=V$.

  We first prove~(\ref{eq:ssdis1}). Consider an arbitrary nonzero $f\in Z_1(\cF_{\cX})\subseteq\bF_q^E$. Let $S\subseteq V$ denote the set of vertices incident to an edge in the support of $f$. Then
  \begin{equation}
    \label{eq:applyemdis}
    d|S| \leq |\dir{E}(S,S)| \leq |S|\cdot\left(\lambda_2(\Gamma)+\Delta\cdot\frac{|S|}{|V|}\right),
  \end{equation}
  where the first inequality above holds because every $v\in S$ must have $(f_e)_{e\in E(v)}\in\ker h_v$ of weight $\geq d$, and therefore $|\dir{E}(v,S)|\geq d$; the second inequality in~(\ref{eq:applyemdis}) above holds by Lemma~\ref{lem:expmix}. By definition we have $|S|\leq 2|f|/d$; applying this fact in~(\ref{eq:applyemdis}) and rearranging terms then gives that $|f|$ is bounded below by the RHS of~(\ref{eq:ssdis1}), as desired.

  We now prove~(\ref{eq:ssdis0}). Consider an arbitrary nonzero $f\in Z^0(\cF_{\cX})\subseteq(\bF_q^m)^V$. Now let $S=\supp(f)\subseteq V$. Then~(\ref{eq:applyemdis}) again holds, as now the first inequality in~(\ref{eq:applyemdis}) holds because every $v\in S$ must have $h_v^\top f_v\in\im h_v^\top$ of weight $\geq d^\perp$, and all $\geq d^\perp$ edges in the support of $h_v^\top f_v$ must have their other vertex also in $S$ in order for $\delta_0^{\cF_{\cX}}f=0$ to be satisfied, which implies $|\dir{E}(v,S)|\geq d^\perp$. Now rearranging terms in~(\ref{eq:applyemdis}) gives that $|f|_X=|S|$ is bounded below by the RHS of~(\ref{eq:ssdis0}), as desired.
\end{proof}

We will construct higher-dimensional chain complexes as tensor or balanced products of lower-dimensional chain complexes. We now define balanced products, of which tensor products are a special case. Balanced products are taken with respect to a group action on the complexes' chain spaces. In this paper, we restrict attention to abelian groups that act freely on basis elements. In this setting, balanced products \cite{breuckmann_balanced_2021} are also called ``lifted products'' \cite{panteleev_quantum_2022,panteleev_asymptotically_2022}. Note that tensor products of chain complexes are also sometimes called ``homological products'' or ``hypergraph products.''

\begin{definition}
  Let $\cC$ be a chain complex over $\bF_q$. We say that a group homomorphism $\sigma:G\rightarrow \cS_C$ from $G$ to the symmetric group on the set of basis elements $C$ of $\cC$ is a \textbf{free action of $G$ on $\cC$} if the following conditions are satisfied:
  \begin{enumerate}
  \item (Preserves dimension) For every $g\in G$ and $i\in\bZ$, we have $\sigma(g)C(i)=C(i)$. Hence we can extend $\sigma(g)$ to acting on $\cC_i=\bF_q^{C(i)}$ by permuting the basis elements.
  \item (Free) For every $g\in G\setminus\{0\}$ and $c\in C$, we have $\sigma(g)c\neq c$.
  \item (Respects boundary maps) For every $g\in G$, we have $\sigma(g)\partial^{\cC}=\partial^{\cC}\sigma(g)$.
  \end{enumerate}
\end{definition}

\begin{definition}
  For an abelian group $G$ and a ring $R$, we let $R[G]$ denote the \textbf{group algebra}, whose elements are formal combinations $\sum_{g\in G}r_gg$ with each $r_g\in R$, and where multiplication is given by $(\sum_{g\in G}r_gg)\cdot(\sum_{g\in G}r_g'g)=\sum_{g\in G}\sum_{g'\in G}r_{g'}r_{g-g'}g$.\footnote{An unfortunate consequence of our notation here is the in the group algebra $R[G]$, we have $(1g_1)\cdot(1g_2)=1(g_1+g_2)$; that is, multiplication in the group algebra performs the group operation, which we denote by addition as we work with abelian groups. We will ultimately set $G$ to be the additive group $\bF_q^t$, and we will separately use the scalar multiplicative structure in this vector space, so the notational clash here is difficult to avoid. For this reason, we will almost never explicitly write expressions involving addition and multiplication in the group algebra outside of this section.}
  
  Suppose we are given actions $\sigma^A:G\rightarrow \cS_A$ and $\sigma^B:G\rightarrow \cS_B$ of an abelian group $G$ on sets $A,B$, which then naturally endow $R^A$ and $R^B$ the structure of $R[G]$-modules. We let $R^A\otimes_G R^B$ denote the tensor product over $R[G]$ of $R^A$ and $R^B$. Equivalently, $R^A\otimes_G R^B = R^{A\times_G B}$, where
  \begin{equation}
    \label{eq:bpset}
    A\times_GB = A\times B/\sim \hspace{1em} \text{ with } \hspace{1em} (a,b)\sim(\sigma^A(g)a,\sigma^A(-g)b)\;\forall a\in A,\; b\in B,\; g\in G.
  \end{equation}
  denotes the \textbf{balanced product of sets $A,B$}. We let $a\times_Gb\in A\times_GB$ denote the equivalence class of $(a,b)$ in~(\ref{eq:bpset}).
\end{definition}

Note that the choice of $\sigma^A,\sigma^B$ is implicit in the notation $\otimes_B$ and $\times_B$, and will always be made clear from context.

\begin{definition}
  \label{def:product}
  Let $\cA^*$ and $\cB^*$ be cochain complexes of dimension $r^{\cA}$ and $r^{\cB}$ respectively over $\bF_q$, each of which respects a free action of an abelian group $G$. The \textbf{balanced product of cochain complexes $\cA,\cB$} is the cochain complex $\cC^*=\cA^*\otimes_G\cB^*$ of dimension $r^{\cC}=r^{\cA}+r^{\cB}$ given by
  \begin{equation*}
    \cC^i = \bigoplus_{j\in\bZ}\cA^j\otimes_G\cB^{i-j}
  \end{equation*}
  and
  \begin{equation*}
    \delta^{\cC}_i = \bigoplus_{j\in\bZ}(\delta^{\cA}_j\otimes_G I+(-1)^jI\otimes_G\delta^{\cB}_{i-j}).
  \end{equation*}
  If $G=\{0\}$, the balanced product is called a \textbf{tensor product}, and is denoted $\cA^*\otimes\cB^*$.
\end{definition}

Note that when $G=\{0\}$, the group action is trivial and $\otimes_G=\otimes$ is simply the ordinary tensor product over the ring at hand (which in Definition~\ref{def:product} is $R=\bF_q$). Hence tensor products of chain complexes are well defined without a group action.

Tensor products obey the following well-known formula.

\begin{lemma}[K\"{u}nneth formula]
  \label{lem:kunneth}
  Let $\cA^*$ and $\cB^*$ be cochain complexes over $\bF_q$, and let $\cC^*=\cA^*\otimes\cB^*$. Then for $i\in\bZ$,
  \begin{equation*}
    H^i(\cC) \cong \bigoplus_{j\in\bZ}H^j(\cA)\otimes H^{i-j}(\cB).
  \end{equation*}
  Furthermore, the isomorphism above is given by
  \begin{equation*}
    a\otimes b+B^i(\cC) \mapsfrom (a+B^j(\cA))\otimes(b+B^{i-j}(\cB))
  \end{equation*}
  for $a\in Z^j(\cA)$, $b\in Z^{i-j}(\cB)$.
\end{lemma}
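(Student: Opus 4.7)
The plan is to exploit the fact that every short exact sequence of $\bF_q$-vector spaces splits, which lets me decompose each of $\cA^*$ and $\cB^*$ into a ``cohomology part'' with zero coboundary and an acyclic part; once this is done, the K\"{u}nneth formula reduces to the distributivity of the tensor product over direct sums plus the acyclicity of a tensor product in which one factor is acyclic.

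First I would choose splittings. Since $\bF_q$ is a field, for each $j$ the short exact sequences $0 \to Z^j(\cA) \to \cA^j \to B^{j+1}(\cA) \to 0$ and $0 \to B^j(\cA) \to Z^j(\cA) \to H^j(\cA) \to 0$ both split, yielding a decomposition $\cA^j = B^j(\cA) \oplus L^j \oplus W^j$ in which $L^j \subseteq Z^j(\cA)$ is a chosen complement to $B^j(\cA)$ inside $Z^j(\cA)$ (so $L^j \cong H^j(\cA)$), $\delta^{\cA}_j$ vanishes on $B^j(\cA) \oplus L^j$, and $\delta^{\cA}_j$ restricts to an isomorphism $W^j \xrightarrow{\sim} B^{j+1}(\cA)$. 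Analogously I would write $\cB^k = B^k(\cB) \oplus L'^k \oplus W'^k$. Reassembling, these splittings exhibit isomorphisms of cochain complexes $\cA^* \cong L^*_{\cA} \oplus \cK^{\cA}$ and $\cB^* \cong L^*_{\cB} \oplus \cK^{\cB}$, where the $L$-summands carry the zero coboundary map and $\cK^{\cA}, \cK^{\cB}$ are direct sums of elementary two-term acyclic complexes of the form $W^j \xrightarrow{\sim} B^{j+1}$.

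Distributing the tensor product then expands $\cC^* = \cA^* \otimes \cB^*$ into four summands. The summand $L^*_{\cA} \otimes L^*_{\cB}$ has zero coboundary in each factor, so its degree-$i$ cohomology is $\bigoplus_{j\in\bZ} L^j \otimes L'^{i-j} \cong \bigoplus_{j\in\bZ} H^j(\cA) \otimes H^{i-j}(\cB)$, which is exactly the claimed right-hand side. Each of the three remaining summands has an acyclic tensor factor; I would show each is acyclic by further decomposing the acyclic factor into elementary pieces $0 \to \bF_q \xrightarrow{\sim} \bF_q \to 0$ and verifying directly that tensoring any cochain complex over $\bF_q$ with such a two-term piece produces an acyclic complex (this is essentially the exactness of $\otimes$ over a field).

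Finally, for the explicit formula I would check that the natural map $(a + B^j(\cA)) \otimes (b + B^{i-j}(\cB)) \mapsto a \otimes b + B^i(\cC)$ is well defined: replacing $a \in Z^j(\cA)$ by $a + \delta^{\cA} a'$ changes $a \otimes b$ by $\delta^{\cA} a' \otimes b = \delta^{\cC}(a' \otimes b)$, using $\delta^{\cB} b = 0$, and symmetrically for $b$. On each subspace $L^j \otimes L'^{i-j}$ this map agrees with the inclusion coming from the decomposition above, so it realizes the same isomorphism. The main obstacle I anticipate is sign bookkeeping: verifying acyclicity of the cross summands and that the stated map descends correctly requires carefully tracking the $(-1)^j$ appearing in $\delta^{\cC}$, but once the splittings are fixed no conceptually deep issue arises.
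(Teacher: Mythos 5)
Your argument is correct. Note that the paper itself gives no proof of Lemma~\ref{lem:kunneth}: it is stated as a well-known fact (the K\"unneth formula over a field, where all Tor terms vanish), so there is no internal proof to compare against. Your splitting argument is the standard self-contained way to establish it and all the steps go through: over $\bF_q$ the sequences $0\to Z^j\to\cA^j\to B^{j+1}\to 0$ and $0\to B^j\to Z^j\to H^j\to 0$ split, the resulting summands $L^*_{\cA}$ and $\cK^{\cA}=\bigoplus_j(W^j\xrightarrow{\sim}B^{j+1})$ are genuinely subcomplexes, tensoring distributes over this decomposition, and a direct computation (the cone of $\pm\mathrm{id}$) shows that tensoring any complex with a two-term acyclic piece is acyclic regardless of the sign convention $(-1)^j$ in $\delta^{\cC}$, so only $L^*_{\cA}\otimes L^*_{\cB}$ contributes to cohomology. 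The one place I would spell out slightly more is the last step: to see that the natural map $(a+B^j(\cA))\otimes(b+B^{i-j}(\cB))\mapsto a\otimes b+B^i(\cC)$ (which you correctly check is well defined and bilinear, hence factors through the tensor product of cohomologies) coincides with the decomposition isomorphism, evaluate it on classes using representatives chosen in $L^j$ and $L'^{i-j}$; since the map is representative-independent, it agrees with the inclusion $L^j\otimes L'^{i-j}\hookrightarrow\cC^i$ followed by the quotient, and is therefore an isomorphism. With that sentence added, the proof is complete and, if anything, more informative than a bare citation, since the paper later uses precisely the explicit form of the isomorphism (e.g.\ in the proof of Lemma~\ref{lem:proddis}).
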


The K\"{u}nneth formula shows how to bound the cohomology dimension of a product of cochain complexes. In a similar vein, the following known result shows how to bound the cosystolic distance of a product of cochain complexes. \cite{tillich_quantum_2014} proved the result below for the product of two 1-dimensional complexes. For completeness, we provide a proof of the general case, which is based on a proof given in \cite{bravyi_homological_2014} of a similar result.

\begin{lemma}[Similar to \cite{tillich_quantum_2014,bravyi_homological_2014}]
  \label{lem:proddis}
  Let $\cA^*$ and $\cB^*$ be cochain complexes over $\bF_q$, and let $\cC^*=\cA^*\otimes\cB^*$. Then for $i\in\bZ$,
  \begin{equation*}
    d^i(\cC) \geq \min_{j\in\bZ}\max\{d^j(\cA),d^{i-j}(\cB)\}.
  \end{equation*}
\end{lemma}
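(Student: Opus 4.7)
My plan is to extract from any $c \in Z^i(\cC) \setminus B^i(\cC)$ a nontrivial cocycle of either $\cA$ or $\cB$ whose Hamming weight is bounded above by $|c|$. I would begin by decomposing $c = \sum_{j\in\bZ} c_j$ with $c_j \in \cA^j \otimes \cB^{i-j}$, and for each index $j$ and each cycle $\beta \in Z_{i-j}(\cB)$, define a slicing map $\phi_\beta \colon \cC^i \to \cA^j$ by $\phi_\beta(c) := (\mathrm{id} \otimes \beta^\vee)(c_j)$, where $\beta^\vee \colon \cB^{i-j} \to \bF_q$ is the evaluation-at-$\beta$ functional associated to $\beta \in \cB_{i-j} = (\cB^{i-j})^*$.

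The central bookkeeping step is to verify, using the formula $\delta^{\cC}_{i} = \sum_{j'}(\delta^{\cA}_{j'} \otimes I + (-1)^{j'} I \otimes \delta^{\cB}_{i-j'})$ from Definition~\ref{def:product} together with the identity $\beta^\vee \circ \delta^{\cB} = (\partial^{\cB}\beta)^\vee$ (which comes from $(\delta^{\cB})^\top = \partial^{\cB}$), that $\phi_\beta$ sends $Z^i(\cC)$ into $Z^j(\cA)$ whenever $\partial^{\cB}\beta = 0$, sends coboundaries to $B^j(\cA)$, and sends all of $Z^i(\cC)$ into $B^j(\cA)$ when $\beta$ is itself a boundary. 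Thus $\phi_\beta$ descends to a well-defined map $\bar\phi_\beta \colon H^i(\cC) \to H^j(\cA)$ that factors through the class $[\beta] \in H_{i-j}(\cB)$. Moreover, the weight bound $|\phi_\beta(c)| \leq |c_j| \leq |c|$ follows directly from the definition: every basis element $a \in A(j)$ in the support of $\phi_\beta(c)$ must have some $b \in B(i-j)$ with $c(a,b) \neq 0$, and each such pair contributes to $|c_j|$.

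To conclude, I would apply the Künneth formula (Lemma~\ref{lem:kunneth}) to pick an index $j$ for which the $j$th Künneth component of $[c]$ in $H^j(\cA) \otimes H^{i-j}(\cB)$ is nonzero, and write this component as $\sum_k [\alpha_k] \otimes [\gamma_k]$ with the classes $[\alpha_k]$ linearly independent in $H^j(\cA)$, forcing at least one $[\gamma_k]$ to be nonzero. Non-degeneracy of the evaluation pairing $H^{i-j}(\cB) \otimes H_{i-j}(\cB) \to \bF_q$ over the field $\bF_q$ (which follows from working over a field, via the universal coefficient theorem or directly since $\cB^{i-j} = (\cB_{i-j})^*$) then yields a cycle $\beta \in Z_{i-j}(\cB)$ with $\gamma_k(\beta) \neq 0$, so that $\bar\phi_\beta([c]) \neq 0$ and $\phi_\beta(c) \in Z^j(\cA) \setminus B^j(\cA)$, giving $|c| \geq |\phi_\beta(c)| \geq d^j(\cA)$. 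Running the symmetric argument with cycles $\alpha \in Z_j(\cA)$ to slice toward $\cB^{i-j}$ yields $|c| \geq d^{i-j}(\cB)$ for the same $j$, and combining these produces $|c| \geq \max\{d^j(\cA), d^{i-j}(\cB)\} \geq \min_{j'} \max\{d^{j'}(\cA), d^{i-j'}(\cB)\}$. The step I expect to be most error-prone is the sign-tracking in the verification that $\phi_\beta$ respects the (co)boundary structure in the tensor product; once this is settled, the rest is a standard consequence of Künneth together with field duality.
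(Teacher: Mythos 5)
Your proposal is correct and follows essentially the same route as the paper's proof: decompose $c$ into its Künneth components, contract the degree-$j$ block against a cycle of one factor (found via nondegeneracy of the (co)homology pairing over the field $\bF_q$), verify the contraction respects the (co)boundary structure so the slice is a cohomologically nontrivial cocycle of that factor, and bound its weight by $|c|$. The differences are purely organizational — you argue directly and slice in both directions to get $\max\{d^j(\cA),d^{i-j}(\cB)\}$ for the same $j$, whereas the paper argues by contradiction, pairs $c$ against an elementary Künneth homology class $a\otimes b$ of the product, and slices only in the direction dictated by its case split — so the substance of the argument is the same.
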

\begin{proof}
  Assume for a contradiction that there exists some $c\in Z^i(\cC)\setminus B^i(\cC)$ such that $|c|<\min_{j\in\bZ}\max\{d^j(\cA),d^{i-j}(\cB)\}$. Recalling that $\cC^i=\bigoplus_{j\in\bZ}\cA^j\otimes\cB^{i-j}$, we write $c=(c^j)_{j\in\bZ}$, where each $c^j\in\cA^j\otimes\cB^{i-j}$. By Lemma~\ref{lem:kunneth}, $H_i(\cC)$ is spanned by elements of the form $a\otimes b+B_i(\cC)$ for pairs $(a\in Z_j(\cA)\setminus B_j(\cA),\;b\in Z_{i-j}(\cB)\setminus B_{i-j}(\cB))$ for $j\in\bZ$. Because the natural bilinear form $\langle\cdot,\cdot\rangle:H_i(\cC)\times H^i(\cC)\rightarrow\bF_q$ given by $\langle c'+B_i(\cC),c+B^i(\cC)\rangle=c'\cdot c$ is nondegenerate, it follows that there exists some $j\in\bZ$ and some $(a\in Z_j(\cA)\setminus B_j(\cA),\;b\in Z_{i-j}(\cB)\setminus B_{i-j}(\cB))$ such that $(a\otimes b)\cdot c^j=(a^\top\otimes b^\top)c^j\neq 0$.

  By assumption, either $|c|<d^j(\cA)$ or $|c|<d^{i-j}(\cB)$; assume the former, as the proof for the latter is analogous. Then $a\cdot((I\otimes b^\top)c^j)=(a^\top\otimes b^\top)c^j\neq 0$, where $(I\otimes b^\top)c^j\in\cA^j$ denotes the vector whose entries are the dot products of the rows of the matrix $c^j$ with the vector $b$. In fact, because $c\in Z^i(\cC)$, we have $(\delta_j^{\cA}\otimes I)c^j+(-1)^{j+1}(I\otimes\delta_{i-j-1}^{\cB})c^{j+1}=0$, so $\delta^{\cA}(I\otimes b^\top)c^j=(-1)^j(I\otimes b^\top\delta_{i-j-1}^{\cB}c^{j+1}=0$, as $b^\top\delta_{i-j-1}^{\cB}=(\partial_{i-j}^{\cB}b)^\top=0$ beacuse $b\in Z_{i-j}(\cB)$ by assumption. Thus we have shown that $(I\otimes b^\top)c^j\in Z^j(\cA)$. Then because $a\cdot(I\otimes b^\top)c^j\neq 0$ as shown above, and the natural bilinear form $\langle\cdot,\cdot\rangle:H_j(\cA)\times H^j(\cB)\rightarrow\bF_q$ is well defined, we must have that $(I\otimes b^\top)c^j+B^j(\cA)$ is cohomologically nontrivial, meaning that $(I\otimes b^\top)c^j\in Z^j(\cA)\setminus B^j(\cA)$. Thus by definition $|(I\otimes b^\top)c^j|\geq d^j(\cA)$, so at least $d^j(\cA)$ rows of the matrix $c^j$ are nonzero, which implies that $|c|\geq|c^j|\geq d^j(\cA)$, contradicting the assumption above that $|c|<d^j(\cA)$. Thus we indeed must have $|c|\geq\min_{j\in\bZ}\max\{d^j(\cA),d^{i-j}(\cB)\}$, as desired.
\end{proof}

By dualizing the complexes involved, we see that Lemma~\ref{lem:kunneth} and Lemma~\ref{lem:proddis} apply to chain complexes analogously to cochain complexes.

In this paper, we take tensor and balanced products of the chain complexes from Definition~\ref{def:sscode} arising from placing systems of local coefficients $\cF$ on incidence complexes $\cI$ of graphs. In this case, if the graph respects a group action, and the local coefficient spaces and maps are the same within each orbit of the group action, then the associated chain complex $\cF_{\cX}$ also respects the group action. Hence we will be able to take balanced products of such chain complexes. The following example provides more details regarding such products.

\begin{example}
  \label{example:graphproduct}
  Fix some $r\in\bN$ and some abelian group $G$. For $h\in[r]$, let $\Gamma^{(h)}=(V^{(h)}=V^{(h)}_0\sqcup V^{(h)}_1,E^{(h)},\ver^{(h)})$ be an undirected $\Delta$-regular bipartite graph with associated (1-dimensional) incidence complex $\cX^{(h)}$ that respects a free action $\sigma^{(h)}$ of $G$. Let $\cF^{(h)}$ by a system of local coefficients on $\cX^{(h)}$ as given by Definition~\ref{def:sscode} such that the local matrices $h_v$ for $v\in V^{(h)}$ satisfy\footnote{There is a slight clash of notation here: $h_v$ refers to local parity-check matrices as described in Definition~\ref{def:sscode}, while $h$ is an index in $[r]$.} $h_v^{(h)}=h_{\sigma^{(h)}(g)v}^{(h)}$ for every $g\in G$. Recall here that $h_v^{(h)}\in\bF_q^{m\times E^{(h)}(v)}$ and $h_{\sigma^{(h)}(g)v}^{(h)}\in\bF_q^{m\times E^{(h)}(\sigma^{(h)}(g)v)}$ for some $m\in[\Delta]$, where we use the isomorphism $E^{(h)}(v)\cong E^{(h)}(\sigma^{(h)}(g)v)$ given by $\sigma^{(h)}(g)$.

  Then the free action $\sigma^{(h)}$ of $G$ naturally extends to act on $\cF^{(h)}_{\cX^{(h)}}$. Therefore we have a well defined balanced product $\cF^{(1)}_{\cX^{(1)}}\otimes_G\cdots\otimes_G\cF^{(r)}_{\cX^{(r)}}$. This product is an $r$-dimensional chain complex, which can be described as the associated chain complex $\cG_{\cY}$ to a product system of local coefficients $\cG$ on a product incidence complex $\cY$. Formally, the product incidence complex is given by $\cY=\cX^{(1)}\otimes_G\cdots\otimes_G\cX^{(r)}$, and for $y=x_1\times_G\cdots\times_Gx_r\in Y$ with each $x_h\in X^{(h)}$, then we have local coefficient space $\cG_y=\cF_{x_1}\otimes\cdots\otimes\cF_{x_r}$. Furthermore, given some $h\in[r]$ and some $x_h'\in X^{(h)}$ with $x_h'\triangleright x_h$ (so that $x_h'$ is an edge in $\Gamma^{(h)}$ that contains vertex $x_h$), then letting $y'=x_1\times_G\cdots\times_G x_h'\times_G\cdots\times_G x_r$, we have $y'\triangleright y$, and we let the local coefficient map $\cF_{y'\leftarrow y}=I\otimes\cdots\otimes\cF_{x_h'\leftarrow x_h}\otimes\cdots\otimes I$ simply apply $\cF_{x_h'\leftarrow x_h}=(h_{x_h}^{(h)})^\top$ in the $h$th direction of the local $r$-tensor space $\cF_y$.

  The $r$-dimensional product incidence complex $\cY$ described above with associated graded poset $Y=\bigsqcup_{i=0}^rY(i)$ can be viewed as a cubical complex, in which elements of $Y(i)$ are $i$-dimensional cubes. Specifically, every $y=x_1\times_G\cdots\times_Gx_r\in Y(i)$ must have $x_h\in X^{(h)}(1)=E^{(h)}$ for exactly $i$ values $h\in[r]$, and we have $x_h\in X^{(h)}(0)=V^{(h)}$ for the other $r-i$ values $h\in[r]$. Therefore we can define the \textbf{type $T=T(y)\in\{0,1,*\}^r$} of $y$ to be the length-$r$ tuple $T$ given by
  \begin{equation*}
    T_h = \begin{cases}
      0,&x_h\in V^{(h)}_0 \\
      1,&x_h\in V^{(h)}_1 \\
      *,&x_h\in E^{(h)}.
    \end{cases}
  \end{equation*}
  As $T(y)\in Y(i)$ has exactly $i$ $*$'s in its type, $T(y)$ can be viewed as an $i$-dimensional face of the $r$-dimensional boolean hypercube. Hence the elements $y\in Y(i)$ can be interpreted as $i$-dimensional faces within the entire cubical incidence complex $\cY$. This perspective is indeed used in \cite{dinur_expansion_2024} to obtain quantum locally testable codes from balanced product complexes of the form $\cG_{\cY}$.

  To provide slightly more detail, define a ``local'' 1-dimensional incidence complex $\cX^{\loc}$, which has $X^{\loc}(0)=\{0,1\}$, $X^{\loc}(1)=\{*\}$, and has $\delta^{\loc}\in\bZ^{\{*\}\times\{0,1\}}$ defined by $\delta^{\loc}_{*,b}=(-1)^b$ for $b\in\{0,1\}$. Let $\cY^{\loc}=(\cX^{\loc})^{\otimes r}$. Then $\cY^{\loc}$ is the incidence complex of an $r$-dimensional boolean hypercube. The elements of the associated graded poset $Y^{\loc}$ are precisely the type vectors in $\{0,1,*\}^r$. Thus $\cY^{\loc}$ can be viewed as describing the local structure of $\cY$, or alternatively, as a projection of $\cY$ obtained by collapsing all global information except for types.
\end{example}

\subsection{Transversal $C^{r-1}Z$ Gates}
In this section, we present the necessary definitions and basic results for transversal $C^{r-1}Z$ gates on quantum codes. It will be convenient to present these definitions with the chain complex language, which is equivalent to the ordinary CSS formalism by Definition~\ref{def:cctocode} and Definition~\ref{def:codetocc}. Note that despite its name, the $C^{r-1}Z$ gate is symmetric with respect to the $r$ qudits that it acts on.

\begin{definition}
  \label{def:cobinv}
  Given $r\in\bZ_{\geq 2}$, cochain complexes $({\cC^{(h)}}^*)_{h\in[r]}$ over $\bF_q$, an integer $i$, and an $r$-tuple of $i$-cohomology subspaces $H'=({H^{(h)}}'\subseteq H^i(\cC^{(h)}))_{h\in[r]}$, we say a multilinear form $\zeta:(\cC^{(1)})^i\times\cdots\times(\cC^{(r)})^i\rightarrow\bF_q$ is \textbf{coboundary-invariant on $H'$} if it holds for every $(z^{(h)}+B^i(\cC^{(h)})\in {H^{(h)}}')_{h\in[r]}$ that
  \begin{equation*}
    \zeta({z^{(h)}}')_{h\in[r]}
  \end{equation*}
  has the same same value for every $({z^{(h)}}'\in z^{(h)}+B^i(\cC^{(h)}))_{h\in[r]}$. If $\zeta$ is coboundary-invariant on $H'$, it naturally induces a multilinear form on the cohomology subspaces $H'$, which we denote $\zeta_{H'}:{H^{(1)}}'\times\cdots\times {H^{(r)}}'\rightarrow\bF_q$. Formally, for every $(z^{(h)}+B^i(\cC^{(h)})\in H^i(\cC^{(h)}))_{h\in[r]}$, we let
  \begin{equation*}
    \zeta_{H'}(z^{(h)}+B^i(\cC^{(h)}))_{h\in[r]}=\zeta(z^{(h)})_{h\in[r]}.
  \end{equation*}
  Here the coboundary-invariance of $\zeta$ ensures that $\zeta_{H'}$ is well defined, regardless of the choice of cohomology class representatives $z^{(h)}$.

  We also define the \textbf{locality $w^\zeta$} of $\zeta$ to be the maximum number of nonzero entries in any axis-aligned $(r-1)$-dimensional hyperplane of the representation of $\zeta$ as an $r$-tensor with the bases $C^{(h)}(i)$ for $h\in[r]$. More formally, we can define locality using a \textbf{connectivity graph $\Gamma^\zeta=(V_0^\zeta\sqcup V_1^\zeta,E^\zeta,\ver^\zeta)$} for $\zeta$ defined as follows. Let $\Gamma^\zeta$ be the simple bipartite graph with left vertices $V^\zeta_0=C^{(1)}(i)\sqcup\cdots\sqcup C^{(r)}(i)$, right vertices $V^\zeta_1=C^{(1)}(i)\times\cdots\times C^{(r)}(i)$, and with an edge connecting left vertex $a^{(h)}\in C^{(h)}(i)$ with right vertex $(b^{(1)},\dots,b^{(r)})\in V^\zeta_1$ if $a^{(h)}=b^{(h)}$ and $\zeta(b^{(1)},\dots,b^{(r)})\neq 0$. Then the locality $w^\zeta$ of $\zeta$ equals the maximum degree in $\Gamma^\zeta$ of any left vertex $a^{(h)}\in V_0^\zeta$.
\end{definition}

\begin{definition}
  \label{def:subrank}
  For $\bF_q$-vector spaces $V^{(1)},\dots,V^{(r)}$, the \textbf{subrank} of a multilinear form $\zeta:V^{(1)}\times\cdots\times V^{(r)}\rightarrow\bF_q$ is the maximum $s\in\bZ_{\geq 0}$ such that there exist vectors $(v_j^{(h)})_{j\in[s]}^{h\in[r]}$ satisfying
  \begin{equation*}
    \zeta(v_{j_1}^{(1)},\dots,v_{j_r}^{(r)})=\1_{j_1=\cdots=j_r} \hspace{1em} \text{ for every } \hspace{1em} (j_1,\dots,j_r)\in[s]^r.
  \end{equation*}
\end{definition}

Below, we relate the above definitions to transversal $C^{r-1}Z$ gates in the traditional sense. Specifically, we show that if cochain complexes $({\cC^{(h)}}^*)_{h\in[r]}$ admit a multilinear form $\zeta:(\cC^{(1)})^i\times\cdots\times(\cC^{(r)})^i\rightarrow\bF_q$ that is coboundary-invariant on some $H'$, then there exist $\subrank(\zeta_{H'})$ $r$-tuples of logical qudits in the quantum codes associated to level $i$ of $({\cC^{(h)}}^*)_{h\in[r]}$ (where each tuple contains one logical qudit from each code) such that a logical $C^{r-1}Z$ gate on all tuples can be induced by a physical circuit consisting of $C^{r-1}Z$ gates, with each physical qudit involved in at most $w^\zeta$ of these physical gates.

We first formally define the $C^{r-1}Z$ gate.

\begin{definition}
  For a finite field $\bF_q$ of characteristic $p$ and an element $a\in\bF_q^*$, the gate
  \begin{equation*}
    C^{r-1}Z_q^a:(\bC^q)^{\otimes r}\rightarrow(\bC^q)^{\otimes r}
  \end{equation*}
  is a quantum gate (i.e.~a unitary operator) acting on $r$ qudits of local dimension $q$, such that for $(z_1,\dots,z_r)\in\bF_q^r$ then
  \begin{equation*}
    C^{r-1}Z_q^a\ket{z_1}\cdots\ket{z_r} = e^{2\pi i\tr_{\bF_q/\bF_p}(a\cdot z_1\cdots z_r)/p}\ket{z_1}\cdots\ket{z_r}.
  \end{equation*}
  When the field is clear from context, we often omit the $q$ subscript. When $a=1$, we write $C^{r-1}Z^1=C^{r-1}Z$.
\end{definition}

\begin{remark}
  \label{remark:removecoeff}
  For $a\in\bF_q^*$, the gate $C^{r-1}Z_q^a$ can be performed using a $C^{r-1}Z_q^1$ gate along with two Clifford gates. Specifically, for an input state $\ket{z_1}\cdots\ket{z_r}$, if we apply the Clifford gate $\ket{z}\mapsto\ket{az}$ to the first qudit $\ket{z_1}$, then apply $C^{r-1}Z_q^1$ to all $r$ qudits, and then apply the Clifford $\ket{z}\mapsto\ket{a^{-1}z}$ to the first qudit, the resulting state is precisely $C^{r-1}Z_q^a\ket{z_1}\cdots\ket{z_r}$.
\end{remark}

The lemma below formally relates our chain complex language to transversal $C^{r-1}Z$ gates, as described above. Below, for a set $S\subseteq\bF_q^n$, we let $\ket{S}=(1/\sqrt{|S|})\sum_{y\in S}\ket{y}\in(\bC^q)^{\otimes n}$ denote the uniform superposition over elements of $S$. Also recall that for a cochain complex $\cC^*$ over $\bF_q$, the quantum CSS code associated to level $i$ of $\cC^*$ corresponds to the subspace of the Hilbert space $(\bC^q)^{\otimes\dim(\cC^i)}$ given by $\spn\{\ket{z+B^i(\cC)}:z\in Z^i(\cC)\}$.

\begin{lemma}
  \label{lem:cctotrans}
  For some $r\in\bZ_{\geq 2}$, let $({\cC^{(h)}}^*)_{h\in[r]}$ be cochain complexes over a finite field $\bF_q$ of characteristic $p$, and for some integer $i$ let $\zeta:(\cC^{(1)})^i\times\cdots\times(\cC^{(r)})^i\rightarrow\bF_q$ be a multilinear form that is coboundary-invariant on some $H'=({H^{(h)}}'\subseteq H^i(\cC^{(h)}))_{h\in[r]}$. Let $s=\subrank(\zeta_{H'})$, and let $(z^{(h)}_j+B^i(\cC^{(h)})\in{H^{(h)}}')_{j\in[s]}^{h\in[r]}$ be cohomology classes satisfying
  \begin{equation*}
    \zeta_{H'}(z^{(1)}_{j_1}+B^i(\cC^{(1)}),\dots,z^{(r)}_{j_r}+B^i(\cC^{(r)})) = \1_{j_1=\cdots=j_r} \hspace{1em} \text{ for every } \hspace{1em} (j_1,\dots,j_r)\in[s]^r,
  \end{equation*}
  as given by Definition~\ref{def:subrank}. For $h\in[r]$, let $N^{(h)}=|C^{(h)}(i)|=\dim(\cC^{(h)})^i$ be the length of the quantum code at level $i$ of $\cC^{(h)}$, and define an encoding isometry
  \begin{align*}
    &\Enc^{(h)}:(\bC^q)^{\otimes s}\rightarrow(\bC^q)^{\otimes N^{(h)}} \\
    &\Enc^{(h)}\ket{y_1,\dots,y_s} = \ket{\sum_{j\in[s]}y_jz^{(h)}_j+B^i(\cC^{(h)})},
  \end{align*}
  where we index the $N^{(h)}$ physical qudits in $(\bC^q)^{\otimes N^{(h)}}$ by the basis elements in $C^{(h)}(i)$. Also define a unitary
  \begin{equation*}
    C^{r-1}Z^\zeta:\bigotimes_{h\in[r]}(\bC^q)^{\otimes N^{(h)}}\rightarrow\bigotimes_{h\in[r]}(\bC^q)^{\otimes N^{(h)}}
  \end{equation*}
  that applies the gate $C^{r-1}Z^a$ to every $r$-tuple of physical qudits $(u^{(1)},\dots,u^{(r)})\in C^{(1)}(i)\times\cdots\times C^{(r)}(i)$ for which $\zeta(u^{(1)},\dots,u^{(r)})=a$. Then it holds for every $(y^{(1)},\dots,y^{(r)})\in(\bF_q^s)^r$ that
  \begin{align*}
    \hspace{1em}&\hspace{-1em} C^{r-1}Z^\zeta(\Enc^{(1)}\otimes\cdots\otimes\Enc^{(r)})\left(\ket{y^{(1)}}\cdots\ket{y^{(s)}}\right) \\
                &= (\Enc^{(1)}\otimes\cdots\otimes\Enc^{(r)})C^{r-1}Z^{\otimes s}\left(\ket{y^{(1)}}\cdots\ket{y^{(s)}}\right),
  \end{align*}
  where $C^{r-1}Z^{\otimes s}\left(\ket{y^{(1)}}\cdots\ket{y^{(s)}}\right)$ above applies a $C^{r-1}Z$ gate to $\ket{y^{(1)}_j}\cdots\ket{y^{(s)}_j}$ for every $j\in[s]$.
\end{lemma}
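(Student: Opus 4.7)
The plan is to directly verify the claimed identity by computing the phase that $C^{r-1}Z^\zeta$ produces on the encoded state and showing it matches the phase produced by $C^{r-1}Z^{\otimes s}$ on the logical state. First I would unpack the encoded state: by definition of $\Enc^{(h)}$, the tensor product $(\Enc^{(1)}\otimes\cdots\otimes\Enc^{(r)})(\ket{y^{(1)}}\cdots\ket{y^{(r)}})$ is, up to normalization, the uniform superposition over tuples $(b^{(1)},\dots,b^{(r)})\in\prod_h B^i(\cC^{(h)})$ of the computational basis states $\bigotimes_{h\in[r]}\ket{\sum_{j\in[s]}y^{(h)}_jz^{(h)}_j+b^{(h)}}$.

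Next I would analyze $C^{r-1}Z^\zeta$ on such a computational basis state. Since $\zeta$ is represented as an $r$-tensor whose entry at $(u^{(1)},\dots,u^{(r)})$ is the coefficient $a$ of the corresponding physical $C^{r-1}Z^a$ gate, the product of all these physical gates contributes the phase $\exp(2\pi i\tr_{\bF_q/\bF_p}(\zeta(w^{(1)},\dots,w^{(r)}))/p)$ on the basis state $\bigotimes_h\ket{w^{(h)}}$, by multilinearity of $\zeta$ together with additivity of $\tr_{\bF_q/\bF_p}$ and of the exponent. Applied to the encoded superposition, the phase on each term is thus $\exp(2\pi i\tr_{\bF_q/\bF_p}(\zeta(\sum_j y^{(1)}_jz^{(1)}_j+b^{(1)},\dots,\sum_j y^{(r)}_jz^{(r)}_j+b^{(r)}))/p)$.

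The third step is to collapse these phases to a single scalar depending only on $(y^{(1)},\dots,y^{(r)})$. Since each $\sum_j y^{(h)}_jz^{(h)}_j$ lies in $Z^i(\cC^{(h)})$ with cohomology class in ${H^{(h)}}'$, and each $b^{(h)}\in B^i(\cC^{(h)})$, the coboundary-invariance of $\zeta$ on $H'$ (Definition~\ref{def:cobinv}) removes all $b^{(h)}$-dependence, replacing the phase with $\exp(2\pi i\tr_{\bF_q/\bF_p}(\zeta(\sum_j y^{(1)}_jz^{(1)}_j,\dots,\sum_j y^{(r)}_jz^{(r)}_j))/p)$. Expanding by multilinearity and then invoking the subrank identity
\[
\zeta(z^{(1)}_{j_1},\dots,z^{(r)}_{j_r})=\zeta_{H'}(z^{(1)}_{j_1}+B^i(\cC^{(1)}),\dots,z^{(r)}_{j_r}+B^i(\cC^{(r)}))=\mathbf{1}_{j_1=\cdots=j_r}
\]
(which is well defined precisely because of coboundary-invariance) telescopes the double sum down to $\sum_{j\in[s]}y^{(1)}_j\cdots y^{(r)}_j$.

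Finally, the scalar $\exp(2\pi i\tr_{\bF_q/\bF_p}(\sum_{j\in[s]}y^{(1)}_j\cdots y^{(r)}_j)/p)$ is exactly the phase produced by $C^{r-1}Z^{\otimes s}$ on $\ket{y^{(1)}}\cdots\ket{y^{(r)}}$, since that unitary applies $C^{r-1}Z$ to each $j$-th tuple $\ket{y^{(1)}_j}\cdots\ket{y^{(r)}_j}$ independently and the resulting phases multiply. Because this scalar depends only on the logical variables and not on the summation indices $b^{(h)}$, it can be factored out of the encoded superposition, yielding $(\Enc^{(1)}\otimes\cdots\otimes\Enc^{(r)})C^{r-1}Z^{\otimes s}(\ket{y^{(1)}}\cdots\ket{y^{(r)}})$, as desired. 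I do not anticipate a serious obstacle in this proof: it is a chain of three bookkeeping reductions (multilinear expansion of $\zeta$, coboundary-invariance, subrank diagonalization), each of which is immediate from the corresponding definition; the only thing worth double-checking is that the $\tr_{\bF_q/\bF_p}$ indeed passes through the additive structure of $\zeta$ linearly in the exponent, which it does because $\tr_{\bF_q/\bF_p}$ is $\bF_p$-linear.
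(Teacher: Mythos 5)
Your proposal is correct and follows essentially the same route as the paper's proof: expand the encoded state over coboundary tuples, use multilinearity of $\zeta$ (together with $\bF_p$-linearity of $\tr_{\bF_q/\bF_p}$) to identify the phase of $C^{r-1}Z^\zeta$ on each basis term, invoke coboundary-invariance plus the subrank identity to reduce the exponent to $\sum_{j\in[s]}y^{(1)}_j\cdots y^{(r)}_j$, and factor the resulting scalar out of the superposition. The paper compresses the coboundary-invariance, multilinear expansion, and subrank diagonalization into a single step, whereas you spell them out separately, but the argument is the same.
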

\begin{proof}
  In this proof, to simplify notation we work with unnormalized quantum states, so that we write $\ket{x}=\beta\ket{x}$ for every positive real number $\beta$. Then by definition,
  \begin{align*}
    \hspace{1em}&\hspace{-1em} C^{r-1}Z^\zeta(\Enc^{(1)}\otimes\cdots\otimes\Enc^{(r)})\left(\ket{y^{(1)}}\cdots\ket{y^{(s)}}\right) \\
                &= C^{r-1}Z^\zeta\left(\ket{\sum_{j\in[s]}y^{(1)}_jz^{(1)}_j+B^i(\cC^{(1)})}\cdots\ket{\sum_{j\in[s]}y^{(r)}_jz^{(r)}_j+B^i(\cC^{(r)})}\right) \\
                &= \sum_{(b^{(1)},\dots,b^{(r)})\in B^i(\cC^{(1)})\times\cdots\times B^i(\cC^{(r)})} C^{r-1}Z^\zeta\ket{\sum_{j\in[s]}y_j^{(1)}z_j^{(1)}+b^{(1)},\dots,\sum_{j\in[s]}y_j^{(r)}z_j^{(r)}+b^{(r)}} \\
                &= \sum_{(b^{(1)},\dots,b^{(r)})\in B^i(\cC^{(1)})\times\cdots\times B^i(\cC^{(r)})} e^{(2\pi i/p)\tr_{\bF_q/\bF_p}(\zeta(\sum_{j\in[s]}y_j^{(1)}z_j^{(1)}+b^{(1)},\dots,\sum_{j\in[s]}y_j^{(r)}z_j^{(r)}+b^{(r)}))} \\
                &\hspace{15em} \cdot\ket{\sum_{j\in[s]}y_j^{(1)}z_j^{(1)}+b^{(1)},\dots,\sum_{j\in[s]}y_j^{(r)}z_j^{(r)}+b^{(r)}} \\
                &= \sum_{(b^{(1)},\dots,b^{(r)})\in B^i(\cC^{(1)})\times\cdots\times B^i(\cC^{(r)})} e^{(2\pi i/p)\tr_{\bF_q/\bF_p}(\sum_{j\in[s]}y^{(1)}_j\cdots y^{(r)}_j)} \\
                &\hspace{15em} \cdot\ket{\sum_{j\in[s]}y_j^{(1)}z_j^{(1)}+b^{(1)},\dots,\sum_{j\in[s]}y_j^{(r)}z_j^{(r)}+b^{(r)}} \\
                &= (\Enc^{(1)}\otimes\cdots\otimes\Enc^{(r)}) e^{(2\pi i/p)\tr_{\bF_q/\bF_p}(\sum_{j\in[s]}y^{(1)}_j\cdots y^{(r)}_j)} \left(\ket{y^{(1)}}\cdots\ket{y^{(s)}}\right) \\
                &= (\Enc^{(1)}\otimes\cdots\otimes\Enc^{(r)}) C^{r-1}Z^{\otimes s} \left(\ket{y^{(1)}}\cdots\ket{y^{(s)}}\right),
  \end{align*}
  where the third equality above holds by the definition of $C^{r-1}Z^\zeta$, and the fourth equality above holds by the coboundary-invariance of $\zeta$ on $H'$.
\end{proof}

In light of Lemma~\ref{lem:cctotrans}, we make the following definition regarding transversal gates in the ordinary colloquial sense.

\begin{definition}
  \label{def:transversal}
  For some $r\in\bZ_{\geq 2}$, let $({\cC^{(h)}}^*)_{h\in[r]}$ be cochain complexes over $\bF_q$, and for some integer $i$ let $\zeta:(\cC^{(1)})^i\times\cdots\times(\cC^{(r)})^i\rightarrow\bF_q$ be a multilinear form that is coboundary-invariant on some $H'=({H^{(h)}}'\subseteq H^i(\cC^{(h)}))_{h\in[r]}$. If $\zeta$ has locality $w^\zeta=1$, then letting $s=\subrank(\zeta_{H'})$, we say that the quantum codes at level $i$ of $({\cC^{(h)}}^*)_{h\in[r]}$ support a \textbf{transversal $C^{r-1}Z$ gate inducing $s$ logical $C^{r-1}Z$ gates}.
\end{definition}

The condition that $\zeta$ has locality $1$ ensures that by permuting the physical code qudits, we can make the physical $C^{r-1}Z^\zeta$ circuit defined in Lemma~\ref{lem:cctotrans} simply perform the following: for each $j\in\min\{N^{(1)},\dots,N^{(r)}\}$, for some $a_j\in\bF_q$ we apply a $C^{r-1}Z^{a_j}$ gate to the $r$-tuple of $j$th physical qudits in the $r$ codes. Indeed, the term ``transversal gates'' is often used to refer to such a depth-1 physical circuit that applies at most one gate to each physical qudit. Also recall that by Remark~\ref{remark:removecoeff}, the $C^{r-1}Z^a$ gate is equivalent to $C^{r-1}Z=C^{r-1}Z^1$ up to Cliffords.

While the condition that $\zeta$ has locality $w^\zeta=1$ may seem restrictive, in Lemma~\ref{lem:locred} below we show that there is always a basic procedure that reduces the locality down to $1$, while preserving the other code properties up to a small loss in parameters.

However, we first present the following basic lemma, which shows that the coboundary-invariance, locality, and subrank of a multilinear form, as well as quantum code length, dimension, and distance, are preserved under passing to subfields, up to a small loss in parameters. As a consequence, we can for instance obtain chain complexes over $\bF_2$ from chain complexes over $\bF_{2^m}$, while preserving the properties of interest to us. We remark that a more sophisticated such alphabet reduction technique, with a smaller loss in parameters, is given in \cite{golowich_asymptotically_2024,nguyen_good_2024}, though the basic result below suffices for our purposes.

\begin{lemma}
  \label{lem:alphred}
  For a prime power $q$ and for $m,r,i\in\bZ$ with $m,r\geq 2$, let $(\cC^{(h)})_{h\in[r]}$ be cochain complexes over $\bF_{q^m}$ with a multilinear form $\zeta:(\cC^{(1)})^i\times\cdots\times(\cC^{(r)})^i\rightarrow\bF_{q^m}$ that is coboundary-invariant on some $H'=({H^{(h)}}'\subseteq H^i(\cC^{(h)}))_{h\in[r]}$.

  Let $(\tilde{\cC}^{(h)})_{h\in[r]},\tilde{H}'$ equal the respective objects $(\cC^{(h)})_{h\in[r]},H'$ viewed over the subfield $\bF_q\subseteq\bF_{q^m}$, so that each $\tilde{\cC}^{(h)}$ is a chain complex over $\bF_q$ with basis $\tilde{C}^{(h)}=C^{(h)}\times[m]$. Fix an arbitrary $\bF_q$-linear map $\phi:\bF_{q^m}\rightarrow\bF_q$ such that $\phi|_{\bF_q}=\mathrm{id}$, and let $\tilde{\zeta}:(\tilde{\cC}^{(1)})^i\times\cdots\times(\tilde{\cC}^{(r)})^i\rightarrow\bF_q$ be the multilinear form over $\bF_q$ defined by $\tilde{\zeta}=\phi\circ\zeta$. Then $\tilde{\zeta}$ is coboundary-invariant on $\tilde{H}'$, and we have
  \begin{align}
    \label{eq:subfield}
    \begin{split}
      \dim_{\bF_q}(\tilde{\cC}^{(h)})^i &= m\cdot\dim_{\bF_{q^m}}(\cC^{(h)})^i \\
      \dim_{\bF_q}H^i(\tilde{\cC}^{(h)}) &= m\cdot\dim_{\bF_{q^m}}{H^i(\cC^{(h)})} \\
      d^i(\tilde{\cC}^{(h)}) &\geq d^i(\cC^{(h)}) \\
      d_i(\tilde{\cC}^{(h)}) &\geq d_i(\cC^{(h)}) \\
      w^{\tilde{\cC}^{(h)}} &\leq m\cdot w^{\cC^{(h)}} \\
      w^{\tilde{\zeta}} &\leq m^{r-1}\cdot w^{\zeta} \\
      \subrank(\tilde{\zeta}_{\tilde{H}'}) &\geq \subrank(\zeta_{H'}).
    \end{split}
  \end{align}
\end{lemma}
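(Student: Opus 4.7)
The plan is to verify each of the seven inequalities in (\ref{eq:subfield}), along with coboundary-invariance of $\tilde\zeta$, by unfolding the restriction of scalars with respect to a concrete basis. First I would fix an $\bF_q$-basis $\{e_1,\dots,e_m\}$ of $\bF_{q^m}$ with $e_1=1$, and identify the basis $\tilde{C}^{(h)}(i)=C^{(h)}(i)\times[m]$ via $(c,j)\leftrightarrow e_j\cdot c$. Since each boundary map $\partial^{\cC^{(h)}}$ is $\bF_{q^m}$-linear and hence $\bF_q$-linear, the underlying sets of $Z^i$, $B^i$ (and $Z_i$, $B_i$) are literally unchanged upon viewing $\cC^{(h)}$ over $\bF_q$; only the module structure differs. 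Consequently $B^i(\tilde\cC^{(h)})=B^i(\cC^{(h)})$ as subsets, so the equality $\zeta(z^{(h)}+b^{(h)})_h=\zeta(z^{(h)})_h$ for $b^{(h)}\in B^i(\cC^{(h)})$ immediately implies the same equality for $\tilde\zeta=\phi\circ\zeta$, giving coboundary-invariance of $\tilde\zeta$ on $\tilde{H}'$.

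Next I would verify the dimension and distance bounds. The first two identities of (\ref{eq:subfield}) are the standard restriction-of-scalars formula: an $\bF_{q^m}$-vector space of $\bF_{q^m}$-dimension $d$ has $\bF_q$-dimension $md$, applied to $(\cC^{(h)})^i$ and $H^i(\cC^{(h)})$. For the distance bounds, a nonzero element $c$ of $Z^i(\cC^{(h)})\setminus B^i(\cC^{(h)})$ corresponds to the same nonzero element of $Z^i(\tilde\cC^{(h)})\setminus B^i(\tilde\cC^{(h)})$; expressing each nonzero $\bF_{q^m}$-coefficient $c_x$ as $\sum_j c_{x,j}e_j$ shows that at least one of the $m$ $\bF_q$-coefficients indexed by $(x,j)$ is nonzero, so $|c|_{\tilde{C}^{(h)}}\geq|c|_{C^{(h)}}$, proving $d^i(\tilde\cC^{(h)})\geq d^i(\cC^{(h)})$; the same argument for $Z_i\setminus B_i$ gives the $d_i$ inequality.

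For the locality claims, each entry $\partial_{x',x}\in\bF_{q^m}$ of the matrix $\partial$ expands into the $m\times m$ $\bF_q$-matrix representing left multiplication by $\partial_{x',x}$ on $\bF_{q^m}\cong\bF_q^m$; hence any row or column of the blown-up matrix has at most $m$ times as many nonzero entries as the corresponding row or column of $\partial$, giving $w^{\tilde\cC^{(h)}}\leq m\cdot w^{\cC^{(h)}}$. For the form locality, multilinearity gives
\[
\tilde\zeta\bigl((c^{(1)},j_1),\dots,(c^{(r)},j_r)\bigr)=\phi\bigl(e_{j_1}\cdots e_{j_r}\cdot\zeta(c^{(1)},\dots,c^{(r)})\bigr),
\]
which vanishes whenever $\zeta(c^{(1)},\dots,c^{(r)})=0$. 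Fixing one axis coordinate $(a^{(h_0)},j_{h_0})$, the number of nonzero $r$-tuples is at most the number of $(c^{(1)},\dots,c^{(r)})$ with $c^{(h_0)}=a^{(h_0)}$ and $\zeta\neq 0$ (at most $w^\zeta$) times the number of remaining index vectors $(j_h)_{h\neq h_0}\in[m]^{r-1}$, giving $w^{\tilde\zeta}\leq m^{r-1}\cdot w^\zeta$.

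Finally, for the subrank bound, I would take the very same representatives $(z^{(h)}_j+B^i(\cC^{(h)}))_{j\in[s]}^{h\in[r]}$ that witness $\subrank(\zeta_{H'})\geq s$ and reinterpret them as elements of $\tilde{H}^{(h)'}$. Using the hypothesis $\phi|_{\bF_q}=\mathrm{id}$, we compute
\[
\tilde\zeta_{\tilde H'}(z^{(1)}_{j_1}+B^i,\dots,z^{(r)}_{j_r}+B^i)=\phi(\mathbf{1}_{j_1=\cdots=j_r})=\mathbf{1}_{j_1=\cdots=j_r},
\]
which yields $\subrank(\tilde\zeta_{\tilde H'})\geq s$. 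There is no substantive obstacle in the proof; the whole argument is bookkeeping once one chooses a basis of $\bF_{q^m}/\bF_q$. The most delicate point is the subrank step, which crucially uses $\phi|_{\bF_q}=\mathrm{id}$; an arbitrary $\bF_q$-linear $\phi$ could send the identity tensor entries to zero and destroy the bound. The $m^{r-1}$ blowup in the locality of $\tilde\zeta$ is the only place where the reduction is lossy in more than a linear-in-$m$ factor, reflecting the fact that the $\bF_q$-representation of a multilinear form over $\bF_{q^m}$ couples together all $m^r$ sub-indices in each $\bF_{q^m}$-entry.
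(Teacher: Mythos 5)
Your proposal is correct and follows essentially the same route as the paper's proof: identify the cocycle/coboundary spaces as sets under restriction of scalars (giving coboundary-invariance and the dimension equalities), compare Hamming weights via the basis $C^{(h)}\times[m]$ for the distance and locality bounds, note that $\tilde{\zeta}$ on basis elements vanishes whenever $\zeta$ does (giving the $m^{r-1}$ locality factor), and reuse the same subrank witnesses together with $\phi|_{\bF_q}=\mathrm{id}$ so that $\phi(\1_{j_1=\cdots=j_r})=\1_{j_1=\cdots=j_r}$. No gaps; the extra bookkeeping you include (the explicit basis $e_j\cdot c$ and the formula $\tilde\zeta=\phi(e_{j_1}\cdots e_{j_r}\cdot\zeta(\cdot))$) is just a more explicit version of the paper's argument.
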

\begin{proof}
  The coboundary-invariance of $\tilde{\zeta}$ on $\tilde{H}'$ follows directly from the coboundary-invariance of $\zeta$ on $H'$, as $B^i(\cC^{(h)})=B^i(\tilde{\cC}^{(h)})$ and $H'=\tilde{H}'$. The first two equalities in~(\ref{eq:subfield}) also follow by definition. The two distance bounds in~(\ref{eq:subfield}) hold because every $\tilde{c}\in Z^i(\tilde{\cC}^{(h)})\setminus B^i(\tilde{\cC}^{(h)})$ can also be viewed as an element $c\in Z^i(\cC^{(h)})\setminus B^i(\cC^{(h)})$, where $|c|\leq|\tilde{c}|\leq m\cdot|c|$ because $\tilde{C}^{(h)}=C^{(h)}\times[m]$, so that every nonzero component of $\tilde{c}$ corresponds to a nonzero $m$-tuple of components of $c$. The bound on $w^{\tilde{\cC}^{(h)}}$ in~(\ref{eq:subfield}) also holds beacuse $\tilde{C}^{(h)}=C^{(h)}\times[m]$, as for $(x,j),(x',j')\in C^{(h)}\times[m]$, the matrix element $(\delta_i^{\tilde{\cC}^{(h)}})_{(x,j),(x',j')}\in\bF_q$ can only be nonzero if $(\delta_i^{\cC^{(h)}})_{x,x'}\in\bF_{q^m}$ is nonzero. The bound on $w^{\tilde{\zeta}}$ in~(\ref{eq:subfield}) holds by similar reasoning, as for $((x^{(1)},j^{(1)}),\dots,(x^{(r)},j^{(r)}))\in(C^{(1)}(i)\times[m])\times\cdots\times(C^{(r)}(i)\times[m])$, we can only have $\tilde{\zeta}(\ind{(x^{(1)},j^{(1)})},\dots,\ind{(x^{(r)},j^{(r)})})\neq 0$ if $\zeta(\ind{x^{(1)}},\dots,\ind{x^{(r)}})\neq 0$. The subrank bound in~(\ref{eq:subfield}) holds because $0,1\in\bF_q\subseteq\bF_{q^m}$, so if there exist $(v_j^{(h)}\in(H^{(h)})'\cong(\tilde{H}^{(h)})')_{j\in[s]}^{h\in[r]}$ with $\zeta_{H'}(v_{j_1}^{(1)},\dots,v_{j_r}^{(r)})=\1_{j_1=\cdots=j_r}$, then by definition $\tilde{\zeta}_{\tilde{H}'}(v_{j_1}^{(1)},\dots,v_{j_r}^{(r)})=\phi(\1_{j_1=\cdots=j_r})=\1_{j_1=\cdots=j_r}$.
\end{proof}

The following lemma shows that the locality of a coboundary-invariant multilinear form can always be reduced to $1$, at the cost of increasing the associated quantum code length.

\begin{lemma}
  \label{lem:locred}
  For a prime power $q$ and an integer $r\geq 2$, let $(\cC^{(h)})_{h\in[r]}$ be 2-dimensional\footnote{For chain complexes of arbitrary dimension with a coboundary-invariant multilinear form on $i$-cochains, we can always truncate the complex to levels $i-1,i,i+1$ and then relabel these three levels $0,1,2$ in order to obtain a 2-dimensional complex wtih the multilinear form $\zeta$ acting on 1-cochains. Furthermore, the quantum code associated to level $i$ of the original complex is equal to the quantum code associated to level $1$ of the truncated complex.} cochain complexes over $\bF_q$ with a multilinear form $\zeta:(\cC^{(1)})^1\times\cdots\times(\cC^{(r)})^i\rightarrow\bF_q$ that is coboundary-invariant on some $H'=({H^{(h)}}'\subseteq H^1(\cC^{(h)}))_{h\in[r]}$.

  Then there exist 2-dimensional cochain complexes $(\tilde{\cC}^{(h)})_{h\in[r]}$ over $\bF_q$ with a multilinear form $\tilde{\zeta}:(\tilde{\cC}^{(1)})^1\times\cdots\times(\tilde{\cC}^{(r)})^i\rightarrow\bF_q$ that is coboundary-invariant on some $\tilde{H}'=((\tilde{H}^{(h)})'\subseteq H^1(\tilde{\cC}^{(h)}))_{h\in[r]}$, such that the following hold:
  \begin{align}
    \label{eq:decongest}
    \begin{split}
      \dim(\tilde{\cC}^{(h)})^1 &= w^\zeta\cdot\dim(\tilde{\cC}^{(h)})^1 \\
      \dim H^1(\tilde{\cC}^{(h)}) &= \dim H^1(\cC^{(h)}) \\
      d^1(\tilde{\cC}^{(h)}) &= w^\zeta\cdot d^1(\cC^{(h)}) \\
      d_1(\tilde{\cC}^{(h)}) &\geq d_1(\cC^{(h)}) \\
      w^{\tilde{\cC}^{(h)}} &\leq \max\{w^\zeta\cdot w^{\cC^{(h)}},\; w^{\cC^{(h)}}+2\} \\
      w^{\tilde{\zeta}} &= 1 \\
      \subrank(\tilde{\zeta}_{\tilde{H}'}) &= \subrank(\zeta_{H'}).
    \end{split}
  \end{align}
\end{lemma}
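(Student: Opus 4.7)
The plan is to implement a standard ``repetition'' trick: split every 1-cochain basis element into $w^\zeta$ copies so that each copy can host at most one term of $\tilde\zeta$, and add new 2-cochain rows whose role is to force all copies of a given element to agree on 1-cocycles. Concretely I take $\tilde C^{(h)}(0)=C^{(h)}(0)$, $\tilde C^{(h)}(1)=C^{(h)}(1)\times[w^\zeta]$, and $\tilde C^{(h)}(2)=C^{(h)}(2)\sqcup\{(c,j):c\in C^{(h)}(1),\,j\in[w^\zeta-1]\}$. The new coboundary $\tilde\delta_0$ is the ``copy to all $j$'' map $\tilde\delta_0(f)_{(c,j)}=\delta_0(f)_c$; the new $\tilde\delta_1$ has an ``old'' row for each $c''\in C^{(h)}(2)$ defined by $\tilde\delta_1(g)_{c''}=\delta_1(g|_{j=1})_{c''}$ (the old $\delta_1$ applied only to first copies), and a ``repetition'' row $\tilde\delta_1(g)_{(c,j)}=g_{(c,j)}-g_{(c,j+1)}$ for each new index $(c,j)$. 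The form $\tilde\zeta$ is built coordinatewise-greedily: for every $c\in C^{(h)}(1)$ I enumerate the at most $w^\zeta$ nonzero terms of $\zeta$ containing $c$ in the $h$th coordinate and assign the $k$th such term to use copy $(c,k)$, so that each $(c,j)$ appears in at most one nonzero term of $\tilde\zeta$ and hence $w^{\tilde\zeta}=1$.

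The core verifications come next. The cochain condition $\tilde\delta_1\tilde\delta_0=0$ is immediate --- the ``old'' rows evaluate to $\delta_1\delta_0 f=0$, and each ``repetition'' row to $(\delta_0 f)_c-(\delta_0 f)_c=0$. I then show that $Z^1(\tilde\cC^{(h)})$ consists exactly of cochains that are constant across the $w^\zeta$ copies of each $c$ and whose first-copy projection lies in $Z^1(\cC^{(h)})$, while $B^1(\tilde\cC^{(h)})=\im\tilde\delta_0$ consists of the same ``constant on copies'' cochains with projection in $B^1(\cC^{(h)})$. Hence projection-to-the-first-copy induces an isomorphism $H^1(\tilde\cC^{(h)})\cong H^1(\cC^{(h)})$, and I set $(\tilde H^{(h)})'$ to be the preimage of $(H^{(h)})'$. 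Because every 1-cocycle is constant on copies, evaluating $\tilde\zeta$ on a tuple of cocycles reduces (by the greedy assignment defining $\tilde\zeta$) to evaluating $\zeta$ on their projections, immediately yielding both the coboundary-invariance of $\tilde\zeta$ on $\tilde H'$ and the identity $\tilde\zeta_{\tilde H'}=\zeta_{H'}$, so the subrank is preserved.

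Finally I address the numerical and locality parameters. The length and $d^1$ bounds follow from the ``constant on copies'' description, since any nontrivial $\tilde c\in Z^1(\tilde\cC^{(h)})\setminus B^1(\tilde\cC^{(h)})$ satisfies $|\tilde c|=w^\zeta\cdot|\bar c|$ for its first-copy projection $\bar c$. For the systolic bound $d_1(\tilde\cC^{(h)})\geq d_1(\cC^{(h)})$, I introduce the summed projection $\bar g_c=\sum_j g_{(c,j)}$ of a 1-chain $g$ and verify (i) $\tilde\delta_0^\top g=\delta_0^\top\bar g$ by direct expansion, so $g\in Z_1(\tilde\cC^{(h)})\Leftrightarrow\bar g\in Z_1(\cC^{(h)})$; (ii) a telescoping sum gives $g\in B_1(\tilde\cC^{(h)})\Rightarrow\bar g\in B_1(\cC^{(h)})$, and conversely, if $\bar g=\delta_1^\top h_0$ then a recursive definition of $h_{1,(c,j)}$ as a partial tail-sum $\pm\sum_{j'>j}g_{(c,j')}$ builds the remaining components of a compatible 2-chain, with the boundary condition at $j=w^\zeta$ falling out precisely because $\bar g=\delta_1^\top h_0$; and (iii) $|g|\geq|\bar g|\geq d_1(\cC^{(h)})$. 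The locality bound reduces to inspection: columns of $\tilde\delta_0$ are $w^\zeta$-fold repetitions of $\delta_0$ columns (weight $\leq w^\zeta\cdot w^{\cC^{(h)}}$), columns of $\tilde\delta_1$ gain at most two entries from repetition rows beyond the original $\delta_1$ column (weight $\leq w^{\cC^{(h)}}+2$), and all rows have weight at most $w^{\cC^{(h)}}$ or $2$. The main obstacle is step (ii): aligning $B_1$ in the two complexes under summation is what dictates attaching the old $\delta_1$ rows to first copies and chaining the repetition rows linearly, and checking that the recursive construction of $h_1$ terminates cleanly requires this careful setup; once in place, the remaining bookkeeping is routine.
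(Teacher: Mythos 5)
Your proposal is correct and matches the paper's construction essentially verbatim: the same repetition/decongestion complex (copy map $\delta_0\otimes\vec{1}$, old $\delta_1$ rows attached to the first copy, consecutive-difference repetition rows), the same edge-labeling/greedy assignment to make $w^{\tilde\zeta}=1$, and the same ``constant on copies'' description of $Z^1,B^1$ driving the cohomology, $d^1$, coboundary-invariance, and subrank claims. The only cosmetic difference is in the $d_1$ bound, where the paper argues by contradiction that the row-sum projection cannot be a boundary while you explicitly build the 2-chain via tail sums; these are the same argument in contrapositive form.
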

\begin{proof}
  At a high level, we let each $\tilde{\cC}^{(h)}$ be the cochain complex associated to the concatenation of a classical repetition code with the quantum code associated to $\cC^{(h)}$. This concatenation with a classical repetition code ensures that each 1-cochain of $\cC^{(h)}$ has every value repeated many times, so we can obtain $\tilde{\zeta}$ by ``decongesting'' the terms of the multilinear form $\zeta$, meaning that we use different copies of a given value in different terms. The formal details are presented below.

  Let $\vec{1}\in\bF_q^{w^\zeta}$ denote the all-1s vector of length $w^\zeta$, and let $\delta^{\mathrm{rep}}\in\bF_q^{(w^\zeta-1)\times w^\zeta}$ be the matrix whose $i$th row equals $\ind{i}-\ind{i+1}\in\bF_q^{w^\zeta}$, so that $\ker(\delta^{\mathrm{rep}})=\vec{1}$. Let $\partial^{\mathrm{rep}}=(\delta^{\mathrm{rep}})^\top$, so that $\im(\partial^{\mathrm{rep}})=\vec{1}^\perp$ is the set of all vectors in $\bF_q^{w^\zeta}$ whose entries sum to $0$.

  Then for $h\in[r]$, we define $\tilde{\cC}^{(h)}$ by
  \begin{align*}
    (\tilde{\cC}^{(h)})^0 &= (\cC^{(h)})^0 \\
    (\tilde{\cC}^{(h)})^1 &= (\cC^{(h)})^1\otimes\bF_q^{w^\zeta} \\
    (\tilde{\cC}^{(h)})^2 &= (\cC^{(h)})^2 \oplus ((\cC^{(h)})^1\otimes\bF_q^{w^\zeta-1}) \\
    \delta_0^{\tilde{\cC}^{(h)}} &= \delta_0^{\cC^{(h)}}\otimes\vec{1} \\
    \delta_1^{\tilde{\cC}^{(h)}} &= (\delta_1^{\cC^{(h)}}\otimes\ind{1}^\top) \oplus (I\otimes\delta^{\mathrm{rep}}).
  \end{align*}
  Then $\tilde{\cC}^{(h)}$ is a well-defined cochain complex because
  \begin{align*}
    \delta_1^{\tilde{\cC}^{(h)}}\delta_0^{\tilde{\cC}^{(h)}}
    &= ((\delta_1^{\cC^{(h)}}\otimes\ind{1}^\top) \oplus (I\otimes\delta^{\mathrm{rep}}))(\delta_0^{\cC^{(h)}}\otimes\vec{1}) \\
    &= (\delta_1^{\cC^{(h)}}\delta_0^{\cC^{(h)}})\oplus(\delta_0^{\cC^{(h)}}\otimes\delta^{\mathrm{rep}}\vec{1}) \\
    &= 0\oplus 0 \\
    &= 0,
  \end{align*}
  where the third equality above holds because $\cC^{(h)}$ is a cochain complex and $\vec{1}\in\ker(\delta^{\mathrm{rep}})$.

  Now the expression for $\dim(\tilde{\cC}^{(h)})^1$ in~(\ref{eq:decongest}) holds by the definition of $(\tilde{\cC}^{(h)})^1$. By definition, $Z^1(\tilde{\cC}^{(h)})$ is the space of all $\dim(\cC^{(h)})^1\times w^\zeta$ matrices for which the first column lies in $Z^1(\cC^{(h)})$, and all values within each row are equal. Similarly, $B^1(\tilde{\cC}^{(h)})$ is the space of all $\dim(\cC^{(h)})^1\times w^\zeta$ matrices for which the first column lies in $B^1(\cC^{(h)})$, and all values within each row are equal. Thus the mapping
  \begin{align}
    \label{eq:copyiso}
    \begin{split}
      &\phi:(\cC^{(h)})^1\rightarrow(\tilde{\cC}^{(h)})^1 \\
      &\phi(c) = c\otimes\vec{1},
    \end{split}
  \end{align}
  which simply copies each component if its input $w^\zeta$ times, induces isomorphisms $B^1(\cC^{(h)})\cong B^1(\tilde{\cC}^{(h)})$ and $Z^1(\cC^{(h)})\cong Z^1(\tilde{\cC}^{(h)})$. It follows that $\dim H^1(\tilde{\cC}^{(h)})=\dim H^1(\cC^{(h)})$ and that $d^1(\tilde{\cC}^{(h)})=w^\zeta\cdot d^1(\cC^{(h)})$, which are precisely the equalities in~(\ref{eq:decongest}).

  Consider some $\tilde{c}\in Z_1(\tilde{\cC}^{(h)})\setminus B_1(\tilde{\cC}^{(h)})$. Then because $\partial_1^{\tilde{\cC}^{(h)}}=\partial_1^{\cC^{(h)}}\otimes\vec{1}^\top$ by definition, it follows that $(I\otimes\vec{1}^\top)\tilde{c}\in Z_1(\cC^{(h)})$. Furthermore, if $(I\otimes\vec{1}^\top)\tilde{c}=\partial_2^{\cC^{(h)}}c$ for some $c\in(\cC^{(h)})^2$, then $\tilde{c}-(\partial_2^{\cC^{(h)}}\otimes\ind{1})c$ is a $\dim(\cC^{(h)})^1\times w^\zeta$ matrix in which each row sums to $0$, which means that this matrix lies in $\im(I\otimes\partial^{\mathrm{rep}})$. Hence there exists some $b\in(\cC^{(h)})^1\otimes\bF_q^{w^\zeta-1}$ with $(I\otimes\partial^{\mathrm{rep}})b=\tilde{c}-(\partial_2^{\cC^{(h)}}\otimes\ind{1})c$, which implies that $\tilde{c}=\partial_2^{\tilde{\cC}^{(h)}}(c,b)$ by the definition of $\partial_2^{\tilde{\cC}^{(h)}}$. But this equality contradicts the assumption that $\tilde{c}\notin B_1(\tilde{\cC}^{(h)})$, so the assumption that $(I\otimes\vec{1}^\top)\tilde{c}=\partial_2^{\cC^{(h)}}c$ for some $c$ was false. Thus we have shown that $(I\otimes\vec{1}^\top)\tilde{c}\in Z_1(\cC^{(h)})\setminus B_1(\cC^{(h)})$, so $|(I\otimes\vec{1}^\top)\tilde{c}|\geq d_1(\cC^{(h)})$. Then as $(I\otimes\vec{1}^\top)\tilde{c}$ is simply the vector whose entries are the sums of the rows of $\tilde{c}$, it follows that $|\tilde{c}|\geq|(I\otimes\vec{1}^\top)c|\geq d_1(\cC^{(h)})$. Thus $d_1(\tilde{\cC}^{(h)})\geq d_1(\cC^{(h)})$, which is precisely the bound in~(\ref{eq:decongest}).

  By definition the locality of the matrix $\delta_0^{\tilde{\cC}^{(h)}}=\delta_0^{\cC^{(h)}}\otimes\vec{1}$ is at most $w^\zeta$ times the locality of the matrix $\delta_0^{\cC^{(h)}}$, which in turn is at most $w^{\cC^{(h)}}$. Thus $\delta_0^{\tilde{\cC}^{(h)}}$ has locality $\leq w^\zeta\cdot w^{\cC^{(h)}}$. Meanwhile, the locality of the matrix $\delta_1^{\cC^{(h)}}\otimes\ind{1}^\top$ equals the locality of $\delta_1^{\cC^{(h)}}$, which is at most $w^{\cC^{(h)}}$, and the locality of the matrix $I\otimes\delta^{\mathrm{rep}}$ equals the locality of $\delta^{\mathrm{rep}}$, which is $2$. Thus $\delta_1^{\tilde{\cC}^{(h)}} = (\delta_1^{\cC^{(h)}}\otimes\ind{1}^\top) \oplus (I\otimes\delta^{\mathrm{rep}})$ has locality at most $w^{\cC^{(h)}}+2$. Thus we have shown that $\cC^{(h)}$ has locality at most $\max\{w^\zeta\cdot w^{\cC^{(h)}},\; w^{\cC^{(h)}}+2\}$, as stated in~(\ref{eq:decongest}).

  We now define the multilinear form $\tilde{\zeta}$. Let $\Gamma^\zeta=(V_0^\zeta\sqcup V_1^\zeta,E^\zeta,\ver^\zeta)$ be the connectivity graph of $\zeta$ from Definition~\ref{def:cobinv}. Let $U\subseteq V_1^\zeta$ be the subset of right vertices of strictly positive degree. Then by definition
  \begin{align}
    \label{eq:zetaexp}
    \zeta(c^{(1)},\dots,c^{(r)})
    &= \sum_{u=(u^{(1)},\dots,u^{(r)})\in U}\zeta(u)\cdot c^{(1)}_{u^{(1)}}\cdots c^{(r)}_{u^{(r)}}.
  \end{align}
  Note that above we let $\zeta(u)$ denote the evaluation of $\zeta$ on the basis elements $(u^{(1)},\dots,u^{(r)})\in C^{(1)}(1)\times\cdots\times C^{(r)}(1)$, while we let $c^{(h)}_{u^{(h)}}$ denote the $u^{(h)}$-component of $c^{(h)}\in(\cC^{(h)})^1=\bF_q^{C^{(h)}(1)}$ when expressed in the basis $C^{(h)}(1)$.

  By the definition of $w^\zeta$, each left vertex $v_0\in V_0^\zeta$ has degree $\leq w^\zeta$, so we may construct a labeling $\labE:E^\zeta\rightarrow[w^\zeta]$ such that all $\leq w^\zeta$ edges incident to each $v_0\in V_0^\zeta$ have distinct labels. For a vertex $u=(u^{(1)},\dots,u^{(r)})\in U$, for every $h\in[r]$ there is a unique edge from vertex $u^{(h)}\in V_0^\zeta$ to vertex $u\in U\subseteq V_1^\zeta$, so we let $\labE(u^{(h)},u)\in[w^\zeta]$ denote the label of this edge. Then we define
  \begin{align}
    \label{eq:tzetadef}
    \tilde{\zeta}(\tilde{c}^{(1)},\dots,\tilde{c}^{(r)})
    &= \sum_{u=(u^{(1)},\dots,u^{(r)})\in U}\zeta(u)\cdot \tilde{c}^{(1)}_{(u^{(1)},\labE(u^{(1)},u))}\cdots\tilde{c}^{(r)}_{(u^{(r)},\labE(u^{(r)},u))},
  \end{align}
  where similarly as above we let $\tilde{c}^{(h)}_{(u^{(h)},\labE(u^{(h)},u))}$ denote the $(u^{(h)},\labE(u^{(h)},u))$-component of $\tilde{c}^{(h)}\in(\tilde{\cC}^{(h)})^1=\bF_q^{C^{(h)}(1)\times[w^\zeta]}$ expressed in the basis $C^{(h)}(1)\times[w^\zeta]$.

  By definition, for every $h\in[r]$ and every $(u^{(h)},j^{(h)})\in C^{(h)}(1)\times[w^\zeta]$, there is at monst one vertex $u=(u^{(1)},\dots,u^{(r)})\in U$ incident to $u^{(h)}$ with edge label $\labE(u^{(h)},u)=j^{(h)}$. Therefore the $(u^{(h)},j^{(h)})$-component $\tilde{c}^{(h)}_{(u^{(h)},j^{(h)})\in C^{(h)}(1)\times[w^\zeta]}$ of $\tilde{c}^{(h)}$ appears in at most one term of the sum in~(\ref{eq:tzetadef}). It follows that $\tilde{\zeta}$ has locality $w^{\tilde{\zeta}}=1$, as desired in~(\ref{eq:decongest}).

  Recalling that the map $\phi$ defined in~(\ref{eq:copyiso}) induces isomorphisms $B^1(\cC^{(h)})\cong B^1(\tilde{\cC}^{(h)})$ and $Z^1(\cC^{(h)})\cong Z^1(\tilde{\cC}^{(h)})$, we may define $\tilde{H'}$ by letting $(\tilde{H}^{(h)})'=\phi((H^{(h)})')$ for each $h\in[r]$. By definition, for every $(\tilde{c}^{(1)},\dots,\tilde{c}^{(r)})\in Z^{(1)}(\tilde{\cC}^{(1)})\times\cdots\times Z^{(r)}(\tilde{\cC}^{(r)})$, if we let $c^{(h)}=\phi|_{Z^1(\cC^{(h)})}^{-1}(\tilde{c}^{(h)})$ for $h\in[r]$, then we have
  \begin{align*}
    \tilde{\zeta}(\tilde{c}^{(1)},\dots,\tilde{c}^{(r)})
    &= \sum_{u=(u^{(1)},\dots,u^{(r)})\in U}\zeta(u)\cdot \tilde{c}^{(1)}_{(u^{(1)},\labE(u^{(1)},u))}\cdots\tilde{c}^{(r)}_{(u^{(r)},\labE(u^{(r)},u))} \\
    &= \sum_{u=(u^{(1)},\dots,u^{(r)})\in U}\zeta(u)\cdot c^{(1)}_{u^{(1)}}\cdots c^{(r)}_{u^{(r)}} \\
    &= \zeta(c^{(1)},\dots,c^{(r)}),
  \end{align*}
  where the second equality above follows by the definition of $\phi$, and the third equality follows from~(\ref{eq:zetaexp}). Thus under the isomorphism $\phi|_{Z^1(\cC^{(h)})}$, the multilinear forms $\tilde{\zeta}$ and $\zeta$ act identically on 1-cocycles (and as a consequence, also on 1-coboundaries) in their respective cochain complexes. Thus because $\zeta$ is coboundary-invariant on $H'$, it follows that $\tilde{\zeta}$ is coboundary-invariant on $\tilde{H}'$, and that $\subrank(\tilde{\zeta}_{\tilde{H}'})=\subrank(\zeta_{H'})$, as desired.
\end{proof}

\section{Expander Construction}
\label{sec:expander}
In this section, we present a family of spectral expander graphs that embed nicely into a high-dimensional vector space. Our expanders are lifts of a small expander by the abelian (additive) group $\bF_q^t$, where we will typically choose $t\approx q^\tau$ for a small $\tau>0$.

\subsection{Construction}
\label{sec:expconstruct}
Here we present the details of the expander construction. The construction takes as input a prime power $q$, positive integers $t,\Delta\in\bN$, and an undirected $\Delta$-regular bipartite graph $\Gamma=(V=V_0\sqcup V_1,E,\ver)$ with an $\bF_q^t$-valued vertex labeling $\labV:V_0\sqcup V_1\rightarrow\bF_q^t$, where the edge set $E\subseteq\bF_q$. This latter condition can be viewed as giving an injective $\bF_q$-valued edge labeling $E\hookrightarrow\bF_q$ (which, unlike the edge labelings $\liftlab$ in Section~\ref{sec:ablift}, will \textit{not} be used to directly take an $\bF_q$-lift). We refer to $\Gamma$ as the base graph, and for simplicity we will assume that $\Gamma$ is balanced. Note that we allow multiple edges in $E$ to connect the same two vertices, but these edges must all have distinct values in $\bF_q$.

We then construct our desired graph $\bar{\Gamma}=(\bar{V}=\bar{V}_0\sqcup\bar{V}_1,\bar{E},\bar{\ver})$, which we refer to as the lifted graph (see Lemma~\ref{lem:islift} below), as follows. For $i=0,1$, we define the vertex set
\begin{align*}
  \bar{V}_i &= \bigsqcup_{v\in V_i}\bF_q^{t+1}/\spn\{(1,\labV(v))\},
\end{align*}
where $(1,\labV(v))\in\bF_q^{t+1}$ denotes the vector whose first component is $1$ and whose remaining $t$ components are given by $\labV(v)\in\bF_q^t$. We then define the edge set $\bar{E}=E\times\bF_q^t$, and for every $e\in E$, $x\in\bF_q^t$, and $b\in\{0,1\}$, we let
\begin{align*}
  \bar{\ver}_b(e,x) &= (e,x)+\spn\{(1,\labV(\ver_b(e)))\}.
\end{align*}

In words, the vertices of $\bar{\Gamma}$ correspond to affine lines in $\bF_q^t$ pointing in the directions $(1,\labV(v_0))$ for $v_0\in V_0$ and $(1,\labV(v_1))$ for $v_1\in V_1$. The edges of $\bar{\Gamma}$ correspond to pairs of such affine lines that intersect, such that the first component of the intersection point equals the value $e\in\bF_q$ of the associated edge $e$ in the base graph $\Gamma$, so that $\ver(e)=(v_0,v_1)$. Therefore for every $e\in E$, each point in $\{e\}\times\bF_q^t$ is associated to a unique edge in $\bar{E}$ connecting a vertex in $\bF_q^{t+1}/\spn\{(1,\labV(\ver_0(e)))\}$ to a vertex in $\bF_q^{t+1}/\spn\{(1,\labV(\ver_1(e)))\}$.

The following lemma shows that $\bar{\Gamma}$ is an $\bF_q^t$-lift of $\Gamma$ in the sense of Definition~\ref{def:graphlift}.

\begin{lemma}
  \label{lem:islift}
  The graph $\bar{\Gamma}=(\bar{V},\bar{E},\bar{\ver})$ is isomorphic to the $\bF_q^t$-lift $\tilde{\Gamma}=(\tilde{V},\tilde{E},\tilde{\ver})$ of $\Gamma$ with labeling $\liftlab:E\rightarrow\bF_q^t$ given by
  \begin{equation*}
    \liftlab(e) = e\cdot(\labV(\ver_0(e))-\labV(\ver_1(e))).
  \end{equation*}
\end{lemma}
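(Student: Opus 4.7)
The plan is to construct an explicit isomorphism $\phi = (\phi_V, \phi_E) \colon \bar{\Gamma} \to \tilde{\Gamma}$ using the canonical identification $\bF_q^{t+1}/\spn\{(1, \labV(v))\} \cong \bF_q^t$ that sends the coset of $(a, z)$ to $z - a\labV(v)$. (This is just the unique representative of the coset lying in $\{0\} \times \bF_q^t$, obtained by subtracting $a \cdot (1, \labV(v))$.) Concretely, I would define
\[ \phi_V\bigl((a, z) + \spn\{(1, \labV(v))\}\bigr) = (v,\; z - a\labV(v)) \]
for $v \in V$ and $(a, z) \in \bF_q^{t+1}$. Well-definedness is immediate because replacing $(a, z)$ by $(a+c,\; z + c\labV(v))$ for any $c \in \bF_q$ leaves $z - a\labV(v)$ unchanged; and restricted to the block indexed by each $v \in V$, the map is visibly a bijection between two $q^t$-element sets, so globally $\phi_V \colon \bar{V} \to \tilde{V} = V \times \bF_q^t$ is a bijection.

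For edges, both $\bar{E}$ and $\tilde{E}$ equal $E \times \bF_q^t$ as sets, but I need an identification that is compatible with the two endpoint maps simultaneously. I would set
\[ \phi_E(e, x) = (e,\; x - e \cdot \labV(\ver_0(e))), \]
which is clearly a bijection of $E \times \bF_q^t$. The only remaining task is to check the compatibility conditions $\phi_V \circ \bar{\ver}_b = \tilde{\ver}_b \circ \phi_E$ for $b \in \{0, 1\}$. For $b = 0$, both sides evaluate directly to $(\ver_0(e),\; x - e \cdot \labV(\ver_0(e)))$. For $b = 1$, the left-hand side unwinds to $(\ver_1(e),\; x - e \cdot \labV(\ver_1(e)))$, while by Definition~\ref{def:graphlift} the right-hand side equals $(\ver_1(e),\; x - e \cdot \labV(\ver_0(e)) + \liftlab(e))$. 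These agree precisely when $\liftlab(e) = e \cdot (\labV(\ver_0(e)) - \labV(\ver_1(e)))$, which is exactly the choice in the lemma statement.

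There is no substantive obstacle: the entire content of the lemma is encoded in the definition of $\liftlab$, which is engineered so that the discrepancy between the two natural quotient coordinates at the endpoints of $(e, x)$ matches the label shift used in the $\bF_q^t$-lift. Thus the proof reduces to writing down $\phi_V$ and $\phi_E$ and performing the two short bookkeeping checks above.
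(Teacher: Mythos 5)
Your proof is correct and takes essentially the same approach as the paper: both write down explicit bijections $\phi_V,\phi_E$ and verify the two endpoint compatibilities $\phi_V\circ\bar{\ver}_b=\tilde{\ver}_b\circ\phi_E$ for $b\in\{0,1\}$. The only cosmetic difference is direction — you map $\bar{\Gamma}\to\tilde{\Gamma}$ while the paper's maps $\phi_V(v,x)=(0,x)+\spn\{(1,\labV(v))\}$ and $\phi_E(e,x)=(e,x+e\cdot\labV(\ver_0(e)))$ go $\tilde{\Gamma}\to\bar{\Gamma}$; your maps are exactly the inverses of these.
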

\begin{proof}
  By Definition~\ref{def:graphlift}, $\tilde{V}=V\times\bF_q^t$ and $\tilde{E}=E\times\bF_q^t$.

  Define set isomorphisms $\phi_V:\tilde{V}\xrightarrow{\sim}\bar{V}$ and $\phi_E:\tilde{E}\xrightarrow{\sim}\bar{E}$ by
  \begin{align*}
    \phi_V(v,x) &= (0,x)+\spn\{(1,\labV(v))\} \\
    \phi_E(e,x) &= (e,x+e\cdot\labV(\ver_0(e))).
  \end{align*}
  Then by definition for every $(e,x)\in\tilde{E}$,
  \begin{align*}
    \phi_V(\tilde{\ver}_0(e,x))
    &= \phi_V(\ver_0(e),x) \\
    &= (0,x)+\spn\{(1,\labV(\ver_0(e)))\} \\
    &= (e,x+e\cdot\labV(\ver_0(e)))+\spn\{(1,\labV(\ver_0(e)))\} \\
    &= \bar{\ver}_0(e,x+e\cdot\labV(\ver_0(e))) \\
    &= \bar{\ver}_0(\phi_E(e,x)).
  \end{align*}
  and
  \begin{align*}
    \phi_V(\tilde{\ver}_1(e,x))
    &= \phi_V(\ver_0(e),x+\liftlab(e)) \\
    &= (0,x+e\cdot(\labV(\ver_0(e))-\labV(\ver_1(e))))+\spn\{(1,\labV(\ver_1(e)))\} \\
    &= (e,x+e\cdot\labV(\ver_0(e)))+\spn\{(1,\labV(\ver_1(e)))\} \\
    &= \bar{\ver}_1(e,x+e\cdot\labV(\ver_0(e))) \\
    &= \bar{\ver}_1(\phi_E(e,x)).
  \end{align*}
  Thus the images under $\phi_V$ of the vertices of every $(e,x)\in\tilde{E}$ equal the vertices of $\phi_E(e,x)\in\bar{E}$, so $\phi_V,\phi_E$ provide an isomorphism between $\tilde{\Gamma}$ and $\bar{\Gamma}$.
\end{proof}

It also follows by definition that $\bar{\Gamma}\cong\tilde{\Gamma}$ naturally respects a free action of the additive group $\bF_q^t$:

\begin{lemma}
  \label{lem:expaction}
  The 1-dimensional incidence complex associated to $\bar{\Gamma}$ (see Definition~\ref{def:graphinc}) respects a free action $\sigma$ of the additive group $\bF_q^t$, defined so that for $y\in\bF_q^t$, $\bar{v}\in\bar{V}$, and $\bar{e}\in\bar{E}$ we have $\sigma(y)\bar{v}=(0,y)+\bar{v}$ and $\sigma(y)\bar{e}=(0,y)+\bar{e}$.
\end{lemma}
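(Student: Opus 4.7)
The lemma is a direct verification that $\sigma$ satisfies each of the three conditions defining a free action on the incidence chain complex, together with well-definedness. My plan is to unpack the definitions of $\bar{V}$, $\bar{E}$, and $\bar{\ver}$ and check each condition in turn, noting that the fact that $(0,y)$ has zero first coordinate is what makes the action preserve all relevant structure.

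First I would verify well-definedness. For any $y\in\bF_q^t$ and any $\bar{v}\in\bar{V}_i$, the element $\bar{v}$ is a coset of $\spn\{(1,\labV(v))\}$ in $\bF_q^{t+1}$ for a specific $v\in V_i$; shifting a representative by $(0,y)$ yields another coset of the \emph{same} subspace, so $\sigma(y)\bar{v}\in\bar{V}_i$. Similarly, an edge $\bar{e}=(e,x)\in\bar{E}=E\times\bF_q^t$ is mapped to $(0,y)+(e,x)=(e,x+y)\in E\times\bF_q^t$, since the first coordinate is unchanged. Hence $\sigma(y)$ acts by permutations on the basis $\bar{V}\sqcup\bar{E}$ and preserves the grading (dimension).

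Next I would check that $\sigma$ is a group homomorphism and that the action is free. The homomorphism property is immediate from $(0,y_1)+(0,y_2)=(0,y_1+y_2)$ in $\bF_q^{t+1}$. For freeness, if $\sigma(y)\bar{e}=\bar{e}$ for some edge $\bar{e}$ then $(0,y)=0$, hence $y=0$. For a vertex $\bar{v}=\bar{v}_0+\spn\{(1,\labV(v))\}$, the equality $\sigma(y)\bar{v}=\bar{v}$ forces $(0,y)\in\spn\{(1,\labV(v))\}$, i.e.\ $(0,y)=c(1,\labV(v))$ for some $c\in\bF_q$; comparing first coordinates gives $c=0$ and hence $y=0$.

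Finally I would verify that $\sigma$ commutes with the boundary map of the 1-dimensional incidence complex from Definition~\ref{def:graphinc}. It suffices to show $\bar{\ver}_b\circ\sigma(y)=\sigma(y)\circ\bar{\ver}_b$ for $b\in\{0,1\}$, since the boundary map is a signed sum of the vertex maps. For $\bar{e}=(e,x)\in\bar{E}$, the definition of $\bar{\ver}_b$ gives
\begin{equation*}
  \bar{\ver}_b(\sigma(y)(e,x)) = \bar{\ver}_b(e,x+y) = (e,x+y)+\spn\{(1,\labV(\ver_b(e)))\},
\end{equation*}
which is exactly $(0,y)+\bigl((e,x)+\spn\{(1,\labV(\ver_b(e)))\}\bigr)=\sigma(y)\bar{\ver}_b(e,x)$. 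This is the one place where it matters that the labels on vertex cosets depend only on $\ver_b(e)$ and not on the specific representative of $\bar{e}$; fortunately the definition of $\bar{\ver}$ already builds this in.

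I do not anticipate a genuine obstacle here: each condition follows from the observation that adding $(0,y)$ shifts cosets within the same quotient $\bF_q^{t+1}/\spn\{(1,\labV(v))\}$ and preserves the first coordinate of points in $\bF_q^{t+1}$. The proof is essentially a bookkeeping exercise, and its main value is to record the $\bF_q^t$-symmetry that will let us later take balanced products (as in Example~\ref{example:graphproduct}) of chain complexes built on $\bar{\Gamma}$.
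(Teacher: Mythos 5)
Your proof is correct and follows essentially the same route as the paper, which simply notes the key identity $\bar{\ver}_b((0,y)+\bar{e})=(0,y)+\bar{\ver}_b(\bar{e})$ and lets the rest follow from the definitions. Your more explicit checks of well-definedness, the homomorphism property, and freeness (where the zero first coordinate of $(0,y)$ rules out membership in $\spn\{(1,\labV(v))\}$) are exactly the routine verifications the paper leaves implicit.
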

\begin{proof}
  By the definition of $\bar{\Gamma}$, for $y\in\bF_q^t$, $\bar{e}\in\bar{E}$, and $b\in\{0,1\}$, we have $\bar{\ver}_b((0,y)+\bar{e})=(0,y)+\bar{\ver}_b(\bar{e})$, so the result follows.
\end{proof}

\subsection{Spectral Expansion Bound}
We now show that appropriate instantiations of the construction $\bar{\Gamma}$ in Section~\ref{sec:expconstruct} have good spectral expansion. In particular, we will choose the base graph $\Gamma$ to be a complete bipartite (multi)graph with multiple edges between every pair of vertices, and then we will choose the vertex labeling $\labV:V=V_0\sqcup V_1\rightarrow\bF_q^t$ at random from a low-bias distribution. Our main expansion bound is stated below.

\begin{theorem}
  \label{thm:expproof}
  Define all variables as in Section~\ref{sec:expconstruct}. Let $\Gamma=(V=V_0\sqcup V_1,E,\ver)$ be a $\Delta$-regular complete balanced bipartite (multi)graph, for some $\Delta\in[|V_0|,q/|V_0|]$ that is a multiple of $|V_0|=|V_1|$. Specifically, let $E\subseteq\bF_q$ be an arbitrary subset of size $|E|=|V_0|\cdot\Delta$, and choose $\ver:E\rightarrow V_0\times V_1$ such that for every $(v_0,v_1)\in V_0\times V_1$, there are precisely $\Delta/|V_0|$ edges $e\in E$ with $\ver(e)=(v_0,v_1)$.

  Also for some $k\in\bN$, let $\cD$ be a $\beta=1/|V|^{2k+1}$-biased distribution over $(\bF_q^t)^V$. Then for every $\eta\in(0,1)$,
  \begin{align*}
    \Pr_{\labV\sim\cD}[\lambda_2(\bar{\Gamma})\geq\eta\Delta]
    &< q^t \cdot \left(\frac{4k}{\eta\cdot|V|^{1/4}}\right)^{2k}.
  \end{align*}
\end{theorem}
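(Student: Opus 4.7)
The plan is to combine the spectral decomposition of abelian lifts (Lemma~\ref{lem:specunion}) with the trace power method (Lemma~\ref{lem:tracebound}), applied to the random labeling $\labV\sim\cD$. By Lemma~\ref{lem:islift}, $\bar{\Gamma}$ is the $\bF_q^t$-lift of $\Gamma$ with labeling $\liftlab(e)=e(\labV(\ver_0(e))-\labV(\ver_1(e)))$, so Lemma~\ref{lem:specunion} gives $\spectrum(A_{\bar{\Gamma}})=\bigsqcup_\chi\spectrum(A_{\Gamma,\chi})$ over characters $\chi$ of $\bF_q^t$. For the trivial character, $A_{\Gamma,\chi_0}=A_\Gamma$ has spectrum $\{\Delta,-\Delta,0,\dots,0\}$, since $\Gamma$ is a complete balanced bipartite multigraph with $\Delta/|V_0|$ parallel edges per cross pair; apart from the Perron eigenvalue $\Delta$, every eigenvalue from this block is nonpositive. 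It therefore suffices to bound $\lambda(A_{\Gamma,\chi})\leq\eta\Delta$ for each nontrivial $\chi$ and then union bound over the $q^t$ characters.

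Fix a nontrivial $\chi$. A short direction-by-direction calculation shows that for each closed walk $W=((e_1,b_1),\dots,(e_{2k},b_{2k}))$ with ``from''-vertex sequence $(v_1,\dots,v_{2k})$ (indices cyclic mod $2k$),
\begin{equation*}
  \prod_{i=1}^{2k}\chi((-1)^{b_i}\liftlab(e_i))\;=\;\chi\!\left(\sum_{v\in V}c_v(W)\,\labV(v)\right)\;=\;\prod_{v\in V}\chi(c_v(W)\,\labV(v)),
\end{equation*}
where $c_v(W)=\sum_{i:\,v_i=v}e_i-\sum_{i:\,v_{i+1}=v}e_i\in\bF_q$ is the difference between outgoing and incoming edge labels at $v$ summed over all visits. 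As a function of $\labV\in(\bF_q^t)^V$ this is a character, and it is nontrivial unless $c_v(W)=0$ for every $v$: scaling by any nonzero element of $\bF_q$ is an automorphism of $\bF_q^t$, so a single nonzero $c_v$ already yields a nontrivial factor. Calling $W$ \emph{balanced} when $c_v(W)=0$ for all $v$, the $\beta$-bias hypothesis of $\cD$ yields
\begin{equation*}
  \bE[\tr(A_{\Gamma,\chi}^{2k})]\;\leq\;B\;+\;\beta\,\tr(A_\Gamma^{2k})\;=\;B\;+\;2\beta\,\Delta^{2k},
\end{equation*}
where $B$ is the number of balanced closed walks of length $2k$ and $\tr(A_\Gamma^{2k})=2\Delta^{2k}$ by the spectrum above. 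With $\beta=|V|^{-(2k+1)}$, the second term is entirely negligible relative to the target $(4k\Delta/|V|^{1/4})^{2k}$, so the proof reduces to the deterministic estimate $B\leq(4k)^{2k}\Delta^{2k}/|V|^{k/2}$.

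The main obstacle will be this combinatorial bound on $B$, which is purely a statement about closed walks on the base graph $\Gamma$. I would organize balanced walks by their alternating vertex sequence $(v_1,\dots,v_{2k})$: given such a sequence visiting $m$ distinct vertices, each $e_i$ lies in a specific $\bF_q$-subset of size $\Delta/|V_0|$ (the values of the parallel edges between $v_i$ and $v_{i+1}$), and the balance conditions $c_v=0$ form at most $m-1$ independent $\bF_q$-linear constraints on the $e_i$'s (the sole automatic relation being $\sum_v c_v=0$), yielding at most $(\Delta/|V_0|)^{2k-m+1}$ balanced edge assignments per sequence. Sequences with many distinct vertices are numerous but strongly constrained, while sequences with many revisits are themselves scarce. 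The delicate point is that this naive linear-algebra bound is not always tight (for instance, when all $2k$ vertices are distinct, balance forces all $e_i$ equal, which is impossible since different vertex pairs have disjoint $\bF_q$-labels), and one must carefully sum the trade-off over $m$ while tracking when the balance constraints are actually solvable inside the prescribed subsets; the combinatorial factor $(4k)^{2k}$ will absorb the choices of how and when vertices are revisited. Once $B$ is controlled, Markov applied to $\lambda(A_{\Gamma,\chi})^{2k}\leq\tr(A_{\Gamma,\chi}^{2k})$ and the union bound over the $q^t$ characters finish the proof.
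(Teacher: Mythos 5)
Your setup --- identifying $\bar{\Gamma}$ as the $\bF_q^t$-lift via Lemma~\ref{lem:islift}, splitting the spectrum with Lemma~\ref{lem:specunion}, handling the trivial character separately, and rewriting the walk weight as $\chi\big(\sum_{v}c_v(W)\labV(v)\big)$ so that every \emph{non-balanced} walk contributes at most $\beta$ by the low-bias hypothesis --- is exactly the paper's strategy (the paper phrases that last step with the weaker notion of ``redundant'' walks, so your reduction to counting balanced walks is, if anything, slightly sharper). The problem is that the one step you leave open is the heart of the proof, and the route you sketch for it does not work as stated. Summing over the number $m$ of distinct vertices the product (number of vertex sequences) $\times$ (number of edge assignments satisfying the balance constraints), with the bound $(\Delta/|V_0|)^{2k-m+1}$ per sequence, overshoots the target badly: at $m=2k$ the naive product is about $(2k)^{2k}\,\Delta\,|V|^{2k-1}$, which for $\Delta=|V_0|$ exceeds the required $(4k)^{2k}\Delta^{2k}|V|^{-k/2}$ by a factor of order $|V|^{k/2}$, and similar blow-ups occur for all large $m$. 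You acknowledge this (``not always tight'') but only express the hope that tracking solvability inside the prescribed label sets will repair it; no argument is given, and that repair is precisely the nontrivial content. There is also a small logical slip: you assert the balance conditions form ``at most $m-1$ independent constraints,'' but for an upper bound on the number of solutions you need the rank to be \emph{at least} $m-1$ (true here because the walk's arc--vertex incidence matrix on its $m$ visited vertices is connected, plus an information-set argument to count solutions inside the box of allowed labels), and as written your inequality points the wrong way.

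The paper closes exactly this hole with a purely structural observation that avoids the linear algebra entirely: if a vertex $v_i$ is visited only once, then balance at $v_i$ forces $e_{i-1}=e_i$, hence $v_{i-1}=v_{i+1}$, so the set of ``uniquely visited'' indices contains no two cyclically consecutive positions; therefore any such walk visits at most $3k/2$ distinct vertices, and one simply counts all closed walks confined to a vertex set of size $\lfloor 3k/2\rfloor$ by $\binom{|V|}{\lfloor 3k/2\rfloor}\,(3k/2)^{2k}\,(\Delta/|V_0|)^{2k}\le(3k\Delta/|V|^{1/4})^{2k}$, leaving slack for the bias term. Your balanced walks satisfy the same hypothesis, so this argument would finish your proof essentially verbatim; but as the proposal stands, the key combinatorial bound on $B$ is a genuine gap.
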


We will prove Theorem~\ref{thm:expproof} in Section~\ref{sec:expboundproof} below. However, we first present the following corollary of Theorem~\ref{thm:expproof}, which shows how to instantiate parameters to obtain the explicit expanders that we will ultimately use to construct codes.

\begin{corollary}
  \label{cor:expinst}
  For every $\nu\in(0,1/2)$, $\delta\in(\nu,1-\nu]$, and $\eta\in(0,1)$, there exists $\tau=\tau(\nu)\in(0,\nu/32)$ and $q_0=q_0(\nu,\delta,\eta)\in\bN$ such that the following holds for every prime power $q\geq q_0$.

  Let $\Delta=\lfloor q^\nu\rfloor\cdot\lfloor q^{\delta-\nu}\rfloor$, and let $\Gamma=(V=V_0\sqcup V_1,E,\ver)$ be a $\Delta$-regular complete bipartite (multi)graph with $|V_0|=|V_1|=\lfloor q^\nu\rfloor$ and with $E\subseteq\bF_q$, as defined in Theorem~\ref{thm:expproof}.

  Then there exists a deterministic $|\bar{E}|^{O(1)}$-time algorithm that computes a labeling $\labV:V\rightarrow\bF_q^t$ for some integer $t\in(q^\tau,q^{\nu/32})$ such that the lifted graph $\bar{\Gamma}=(\bar{V},\bar{E},\bar{\ver})$ defined in Section~\ref{sec:expconstruct} with base graph $\Gamma$ has $\lambda_2(\bar{\Gamma})\leq\eta\Delta$.
\end{corollary}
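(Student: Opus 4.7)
The approach is a standard probabilistic-to-explicit reduction: apply Theorem~\ref{thm:expproof} to a labeling drawn from an explicit low-bias distribution produced by Theorem~\ref{thm:lowbias}, then deterministically enumerate the polynomial-size support to locate a good labeling.

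First, I will fix $\tau=\nu/64\in(0,\nu/32)$. Given $q\geq q_0(\nu,\delta,\eta)$ sufficiently large, I choose an integer $t$ with $q^\tau<t<q^{\nu/32}$ and set $k=\lceil 8t/\nu\rceil$. Since $|V|^{1/4}\geq q^{\nu/4}/2$, the factor $4k/(\eta|V|^{1/4})$ appearing in Theorem~\ref{thm:expproof} is at most $O(t/(\nu\eta q^{\nu/4}))=O(q^{-7\nu/32}/(\nu\eta))$. Raising this to the $2k=\Theta(t/\nu)$ power and multiplying by $q^t$ yields roughly $q^{-(5/2)t}\cdot(\nu\eta)^{-O(t/\nu)}$, which is strictly less than $1$ for $q$ large enough relative to $\nu,\eta$. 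Next, I invoke Theorem~\ref{thm:lowbias} with the additive group $\bF_q$ and its parameter set to $n=t|V|$ (identifying $(\bF_q^t)^V$ with $\bF_q^{t|V|}$), obtaining in $\poly(|\bar{E}|)$ time a $\beta$-biased multiset $S$ of labelings with $\beta=|V|^{-(2k+1)}$ and $|S|=q^{O(t)}$; since $|\bar{E}|\geq q^t$, this is $\poly(|\bar{E}|)$.

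The algorithm then iterates through $S$: for each candidate $\labV\in S$ it constructs the lifted graph $\bar{\Gamma}$ via Lemma~\ref{lem:islift} and computes $\lambda_2(\bar{\Gamma})$. By Lemma~\ref{lem:specunion}, the latter reduces to computing the top eigenvalue of each of the $q^t$ Hermitian matrices $A_{\Gamma,\chi}\in\bC^{V\times V}$, which takes $q^t\cdot|V|^{O(1)}=\poly(|\bar{E}|)$ time per candidate. The probability bound from the previous paragraph guarantees that at least one $\labV\in S$ satisfies $\lambda_2(\bar{\Gamma})\leq\eta\Delta$, and the algorithm returns the first such labeling. The main obstacle is purely parameter bookkeeping: $\tau$ must depend only on $\nu$, while $(t,k,\beta)$ must simultaneously ensure that Theorem~\ref{thm:expproof} yields probability strictly less than $1$, that Theorem~\ref{thm:lowbias} produces a support of size $\poly(|\bar{E}|)$, and that each candidate can be checked in polynomial time. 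The choices above satisfy all four constraints once $q$ is large enough relative to $\nu,\delta,\eta$.
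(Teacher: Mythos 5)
Your proposal is correct and follows essentially the same route as the paper's proof: invoke Theorem~\ref{thm:expproof} with an explicit low-bias distribution over $(\bF_q^t)^V$ from Theorem~\ref{thm:lowbias}, choose the parameters so that the failure probability is below $1$ while the support size and bias are $|\bar{E}|^{O(1)}$-controlled, and then derandomize by enumerating the support and checking $\lambda_2(\bar{\Gamma})$ for each candidate. The only differences are cosmetic bookkeeping — you fix $t\in(q^\tau,q^{\nu/32})$ first and take $k=\lceil 8t/\nu\rceil$, whereas the paper fixes $k=\lfloor q^{\nu/32}\rfloor$ and solves for $t$, and you compute $\lambda_2(\bar{\Gamma})$ via the character block decomposition of Lemma~\ref{lem:specunion} rather than directly on $A_{\bar{\Gamma}}$ — neither of which changes the argument.
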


To obtain an explicit (i.e.~polynomial-time computable) labeling $\labV$ in Corollary~\ref{cor:expinst}, we instantiate the distribution $\cD$ in Theorem~\ref{thm:expproof} with the explicit low-bias distributions of \cite{jalan_near-optimal_2021} described in Theorem~\ref{thm:lowbias}, and then perform a brute force serach over all $\labV$ in the support of $\cD$. This technique of obtaining explicit abelian lifts from such low-bias distributions is adapted from \cite{jeronimo_explicit_2022}. However, whereas \cite{jeronimo_explicit_2022} applied this idea to low-degree base graphs with random edge labels, we consider complete bipartite base graphs $\Gamma$ with random \textit{vertex} labels (see the discussion in Section~\ref{sec:expboundproof} below).

\begin{proof}[Proof of Corollary~\ref{cor:expinst}]
  Let $k=\lfloor q^\kappa\rfloor$ for $\kappa=\nu/32$, and let $\tau=\nu/64$. Let
  \begin{equation*}
    t = \left\lfloor\log_q\left(\frac12\left(\frac{\eta\cdot|V|^{1/4}}{4k}\right)^{2k}\right)\right\rfloor.
  \end{equation*}
  Then by definition
  \begin{align}
    \label{eq:tbound}
    t
    &= \frac{2k\log\left(\frac{\eta\cdot|V|^{1/4}}{4k}\right)}{\log q}+o(1) = 2q^\kappa\left(\frac{\nu}{4}-\kappa+o_\eta(1)\right)
  \end{align}
  where here $o(1)$ denotes a positive or negative function that approaches $0$ as $q\rightarrow\infty$. Because $\nu\in(0,1/2)$, it follows that indeed for all sufficiently large $q$ we have $t<q^\kappa=q^{\nu/32}$, and $t>q^{\kappa/2}=q^\tau$. Thus it remains to give an efficient algorithm to construct some $\labV$ for which $\lambda_2(\bar{\Gamma})\leq\eta\Delta$.
  
  For this purpose, by Theorem~\ref{thm:expproof} along with the definition of $t$, we have
  \begin{equation*}
    \Pr_{\labV\sim\cD}[\lambda_2(\bar{\Gamma})\geq\eta\Delta] < \frac12.
  \end{equation*}
  Therefore for every $\beta$-biased distribution $\cD$ over $\bF_q^t$, there exists some $\labV\in\supp(\cD)$ for which $\lambda_2(\bar{\Gamma})<\eta\Delta$. By Theorem~\ref{thm:lowbias}, there exists a $\poly(q^t,1/\beta)$-time algorithm that outputs a subset of $\bF_q^t$ of size $O(tq^{O(1)}/\beta^{2+o(1)})$ on which the uniform distribution has bias $\leq\beta$; we will set $\cD$ to be this distribution. Note that because $|\bar{E}|\in[q^t,q^{t+1}]$, it follows from~(\ref{eq:tbound}) that
  \begin{align*}
    1/\beta
    &= |V|^{2k+1} = q^{\Theta(\nu q^\kappa)} = q^{\Theta(t)} = |\bar{E}|^{\Theta(1)}.
  \end{align*}

  Thus our algorithm to construct the desired $\labV$ will simply loop through all $O(tq^{O(1)}/\beta^{2+o(1)})=|\bar{E}|^{O(1)}$ elements $\labV$ of the $\beta$-biased set $\supp(\cD)$ from Theorem~\ref{thm:lowbias}, construct $\bar{\Gamma}$ to compute $\lambda_2(\bar{\Gamma})$ for each such $\labV$, and then output whichever $\labV$ yields the smallest $\lambda_2(\bar{\Gamma})$.

  It is well known 
  that the spectral expansion of a graph (i.e.~the 2nd largest eigenvalue of the adjacency matrix) can be computed in polynomial time with respect to the number of edges. By Theorem~\ref{thm:lowbias}, each of the $|\bar{E}|^{O(1)}$ elements $\labV$ in the support of $\cD$ can be computed in time $\poly(|\bar{E}|,1/\beta)=|\bar{E}|^{O(1)}$. Thus the algorithm's overall running time is $|\bar{E}|^{O(1)}$, and for all sufficiently large $q$ we have shown that it must output some $\labV$ for which $\lambda_2(\bar{\Gamma})<\eta\Delta$, as desired.
\end{proof}

\subsection{Proof of Spectral Expansion Bound}
\label{sec:expboundproof}
In this section, we prove Theorem~\ref{thm:expproof}. Our proof is based on the trace power method, which bounds the spectrum of a matrix by bounding the trace of a high power of the matrix. For a signed adjacency matrix with random edge labels, this trace can in turn be bounded by counting walks in the underlying graph. Such techniques have been used to bound expansion of graph lifts in the past (see e.g.~\cite{mohanty_explicit_2021,jeronimo_explicit_2022,paredes_expansion_2022}).

Our analysis is similar to arguments in such prior works. However, such prior results bounding the expansion of abelian lifts typically considered a random edge labeling $\liftlab:E\rightarrow G$ (with $G=\bF_q^t$ in our setting). In contrast, we obtain an edge labeling $\liftlab$ from linear combinations of random \textit{vertex} labels (see Lemma~\ref{lem:islift}). Fortunately, we find that the trace power method applies naturally in our setting as well.

We remark that \cite{jeronimo_explicit_2022} also gave an alternative method for showing expansion of explicit abelian lifts (Section~6 of their paper), based on applying an expander-Chernoff-like bound. It is an interesting question whether this technique also extends to our setting with edge labels obtained as linear combinations of vertex labels.

\begin{proof}[Proof of Theorem~\ref{thm:expproof}]
  For $\labV\sim\cD$, define $\tilde{\Gamma}$ to be the $\bF_q^t$-lift of $\Gamma$ with labeling $\liftlab$ as given in Lemma~\ref{lem:islift}. Then by Lemma~\ref{lem:specunion},
  \begin{equation}
    \label{eq:specunion}
    \spectrum(A_{\tilde{\Gamma}}) = \spectrum(A_{\Gamma})\sqcup\bigsqcup_{\text{nontrivial characters }\chi:\bF_q^t\rightarrow\bC}\spectrum(A_{\Gamma,\chi}).
  \end{equation}
  By definition the the spectrum of the adjacency matrix $A_\Gamma$ of the $\Delta$-regular complete bipartite graph $\Gamma$ consists of $\Delta$, $-\Delta$ (each with multiplicity $1$), and $0$ (with multiplicty $|V|-2$). Thus it remains to be shown that if $\chi$ is nontrivial, then all eigenvalues of $A_{\Gamma,\chi}$ are $<\eta\Delta$ with high probability.

  For this purpose, fix a nontrivial character $\chi:\bF_q^t\rightarrow\bC$. By Lemma~\ref{lem:tracebound},
  \begin{equation}
    \label{eq:lambdatowalks}
    \lambda(A_{\Gamma,\chi})^{2k} \leq \tr(A_{\Gamma,\chi}^{2k}) = \sum_{\text{walks }(e_1,b_1),\dots,(e_{2k},b_{2k})}\prod_{i=1}^{2k}\chi((-1)^{b_i}\liftlab(e_i)),
  \end{equation}
  where we recall from Definition~\ref{def:signedadjmat} that $\lambda(A_{\Gamma,\chi})$ denotes the largest absolute value of any eigenvalue of $A_{\Gamma,\chi}$.

  By Definition~\ref{def:walk}, a length-$2k$ walk $(e_1,b_1),\dots,(e_{2k},b_{2k})$ visits a sequence of vertices $v_1,\dots,v_{2k+1}\in V$ with $v_1=v_{2k+1}$, where $v_i=\dir{\ver}_0(e_i,b_i)$ and $v_{i+1}=\dir{\ver}_1(e_i,b_i)$ for $i\in[2k]$. Then letting $e_0=e_{2k}$, the RHS of~(\ref{eq:lambdatowalks}) can be expressed as
  \begin{align}
    \label{eq:chietov}
    \begin{split}
      \prod_{i=1}^{2k}\chi\left((-1)^{b_i}\liftlab(e_i)\right)
      &= \prod_{i=1}^{2k}\chi\left((-1)^{b_i}\cdot e_i\cdot (\labV(\ver_0(e_i))-\labV(\ver_1(e_i)))\right) \\
      &= \prod_{i=1}^{2k}\chi\left(e_i\cdot(\labV(\dir{\ver}_0(e_i,b_i)) - \labV(\dir{\ver}_1(e_i,b_i)))\right) \\
      &= \prod_{i=1}^{2k}\chi\left((e_i-e_{i-1})\cdot\labV(v_i)\right),
    \end{split}
  \end{align}
  where the first equality above holds by the definition of $\liftlab$, and the third equality above uses the fact that that $\chi$ is a homomorphism and that $v_1=v_{2k+1}$. The above equation motivates the following definition of a redundant walk:

  \begin{definition}
    A length-$2k$ walk $((e_1,b_1),\dots,(e_{2k},b_{2k}))\in(\dir{E})^{2k}$ is said to be \textbf{redundant} if the sequence of vertices $(v_1,\dots,v_{2k})\in V^{2k}$ that it visits satisfies the following: for every $i\in[2k]$, either there exists some $j\in[2k]\setminus\{i\}$ with $v_i=v_j$, or else $e_{i-1}=e_i$ (where we denote $e_0=e_{2k}$).
  \end{definition}

  The following claim explains the importance of this definition. Recall below that we assume $\chi:\bF_q^t\rightarrow\bC$ is a nontrivial character, and $\labV\in(\bF_q^t)^V$ is sampled from the $\beta$-biased distribution $\cD$.

  \begin{claim}
    \label{claim:nrvanish}
    Let $((e_1,b_1),\dots,(e_{2k},b_{2k}))\in(\dir{E})^{2k}$ be a length-$2k$ walk that is not redundant. Then
    \begin{equation*}
      \left|\bE_{\labV\sim\cD}\left[\prod_{i=1}^{2k}\chi((-1)^{b_i}\liftlab(e_i))\right]\right| \leq \beta.
    \end{equation*}
  \end{claim}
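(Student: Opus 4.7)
The plan is to directly unpack the expectation using the expression in~(\ref{eq:chietov}) and then exploit the low-bias property of $\cD$.

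First, I would apply~(\ref{eq:chietov}) to rewrite
\[
\prod_{i=1}^{2k}\chi\bigl((-1)^{b_i}\liftlab(e_i)\bigr)
= \prod_{i=1}^{2k}\chi\bigl((e_i-e_{i-1})\cdot\labV(v_i)\bigr),
\]
where $v_1,\dots,v_{2k}$ are the vertices visited by the walk and $e_0=e_{2k}$. I would then collect terms by distinct vertex: for each $v\in V$, define
\[
c_v \;=\; \sum_{i\in[2k]:\, v_i=v}(e_i-e_{i-1}) \;\in\;\bF_q,
\]
so that the product factors as $\prod_{v\in V}\chi(c_v\cdot\labV(v))$. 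Since $\chi$ is a character of the additive group $\bF_q^t$, each map $x\mapsto \chi(c_v\cdot x)$ is itself a character of $\bF_q^t$, and the whole expression is a character of $(\bF_q^t)^V$ evaluated at $\labV$.

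Next, I would use the non-redundancy of the walk to produce some index $i^*\in[2k]$ with $v_{i^*}$ appearing nowhere else in the vertex sequence and with $e_{i^*-1}\neq e_{i^*}$. For this unique-visit vertex $v^*=v_{i^*}$, the coefficient collapses to $c_{v^*}=e_{i^*}-e_{i^*-1}\neq 0$. I would then note that the character $x\mapsto\chi(c_{v^*}\cdot x)$ of $\bF_q^t$ is nontrivial: because $\chi$ is nontrivial there is some $y\in\bF_q^t$ with $\chi(y)\neq 1$, and $\chi(c_{v^*}\cdot(c_{v^*}^{-1}y))=\chi(y)\neq 1$. Because the overall character on $(\bF_q^t)^V$ is the tensor product of its single-vertex factors, the fact that one factor is nontrivial makes the whole character nontrivial.

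Finally, applying the definition of $\beta$-bias of $\cD$ to this nontrivial character of $(\bF_q^t)^V$ yields
\[
\left|\bE_{\labV\sim\cD}\left[\prod_{v\in V}\chi(c_v\cdot\labV(v))\right]\right| \;\leq\; \beta,
\]
which is the desired conclusion. The only mild subtlety is the verification that nontriviality of one factor implies nontriviality of the product character on $(\bF_q^t)^V$; I do not anticipate a real obstacle, since the characters of $(\bF_q^t)^V\cong\bigoplus_v\bF_q^t$ are precisely products of characters on each coordinate.
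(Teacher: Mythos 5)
Your proposal is correct and follows essentially the same route as the paper: rewrite via~(\ref{eq:chietov}), identify a uniquely-visited vertex $v^*$ with $e_{i^*-1}\neq e_{i^*}$ from non-redundancy, observe that the factor at $v^*$ is a nontrivial character of $\bF_q^t$ (hence the whole character of $(\bF_q^t)^V$ is nontrivial), and invoke the bias bound. The only cosmetic difference is that you explicitly collect the per-vertex coefficients $c_v$, whereas the paper phrases the same point as ``$\chi'(\labV)$ only depends on $\labV(v_i)$ through the factor $\chi((e_i-e_{i-1})\cdot\labV(v_i))$.''
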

  \begin{proof}
    The function $\chi':(\bF_q^t)^V\rightarrow\bC$ given by
    \begin{equation*}
      \chi'(\labV) := \prod_{i=1}^{2k}\chi\left((e_i-e_{i-1})\cdot\labV(v_i)\right)
    \end{equation*}
    is by definition a character of the group $(\bF_q^t)^V$. Non-redundancy implies that there exists some $i\in[2k]$ where $v_i\neq v_j$ for every $j\in[2k]\setminus\{i\}$, and $e_{i-1}\neq e_i$. Therefore $\chi'(\labV)$ only depends on $\labV(v_i)$ through the factor $\chi\left((e_i-e_{i-1})\cdot\labV(v_i)\right)$, and this factor is itself a nontrivial character because $e_i-e_{i-1}\neq 0$. It follows that $\chi'$ is nontrivial, so by~(\ref{eq:chietov}) we have
    \begin{align*}
      \left|\bE_{\labV\sim\cD}\left[\prod_{i=1}^{2k}\chi((-1)^{b_i}\liftlab(e_i))\right]\right|
      &= |\bE_{\labV\sim\cD}\chi'(\labV)| \leq \beta.
    \end{align*}
  \end{proof}

  Applying Claim~\ref{claim:nrvanish} in~(\ref{eq:lambdatowalks}) yields the following simpler inequality:
  \begin{align}
    \label{eq:lambdaboundwalks}
    \bE_{\labV\sim\cD}[\lambda(A_{\Gamma,\chi})^{2k}]
    &\leq |\{\text{length-$2k$ redundant walks on $\Gamma$}\}| + |V|\Delta^{2k}\cdot\beta.
  \end{align}
  Here we have used the fact that there are at most $|V|\Delta^{2k}$ walks of length $2k$ on $\Gamma$, as such a walk is specified by the starting vertex $v_1\in V$, along with $2k$ values in $[\Delta]$, the $i$th of which is used to specify the choice of $e_i$ from the $\Delta$ edges incident to $v_i$.

  The following claim gives a bound on the RHS of~(\ref{eq:lambdaboundwalks}).

  \begin{claim}
    \label{claim:countwalks}
    The number of length-$2k$ redundant walks on $\Gamma$ is at most
    \begin{equation*}
      \left(\Delta\cdot\frac{3k}{|V|^{1/4}}\right)^{2k}.
    \end{equation*}
  \end{claim}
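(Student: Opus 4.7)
\textbf{Proof proposal for Claim~\ref{claim:countwalks}.}

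The plan is to group redundant walks by the number $s$ of distinct vertices they visit, bound $s$ via the redundancy condition, and then count walks per $s$ using the complete bipartite structure of $\Gamma$.

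\emph{Structural step.} Let $U\subseteq[2k]$ be the set of positions at which $v_i$ is unique in $v_1,\dots,v_{2k}$, and let $B\subseteq[2k]$ be the set of backtrack positions (those $i$ with $e_{i-1}=e_i$, indices cyclic mod $2k$). Redundancy gives $U\subseteq B$. For any $i\in B$, the equality $e_{i-1}=e_i$ (as undirected edges) forces $v_{i-1}=v_{i+1}$, so position $i-1$ carries a repeated vertex and hence $i-1\notin U$. Since $i\mapsto i-1\pmod{2k}$ is a bijection, the image of $B$ under this shift has size $|B|$ and is disjoint from $U$, giving $|B|\le 2k-|U|$; combined with $|U|\le|B|$, this yields $|U|\le k$. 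Counting positions via $2k\ge|U|+2(s-|U|)=2s-|U|$ then gives $s\le k+|U|/2\le 3k/2$.

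\emph{Counting step.} A walk is determined by its start $v_1$ and its $2k$ edges. Classify each step $i\in[2k]$ as ``new'' if $v_{i+1}$ is unseen in $v_1,\dots,v_i$, and ``old'' otherwise, giving $s-1$ new steps and $2k-s+1$ old steps. A new step has at most $\Delta$ edge choices from $v_i$; an old step targets one of $\le s$ previously visited vertices via one of $\Delta/|V_0|=2\Delta/|V|$ multi-edges, contributing at most $2s\Delta/|V|$ choices. Selecting the positions of the new steps gives a factor $\binom{2k}{s-1}$, so the number of walks with exactly $s$ distinct vertices is at most
\begin{equation*}
  |V|\binom{2k}{s-1}\Delta^{s-1}\left(\frac{2s\Delta}{|V|}\right)^{2k-s+1}=\binom{2k}{s-1}(2s)^{2k-s+1}\frac{\Delta^{2k}}{|V|^{2k-s}}.
\end{equation*}
For $s\le 3k/2$ we have $|V|^{2k-s}\ge|V|^{k/2}$ and $\binom{2k}{s-1}(2s)^{2k-s+1}\le(2k)^{s-1}(3k)^{2k-s+1}=(3k)^{2k}(2/3)^{s-1}$; summing the resulting geometric series over $s$ yields $O(1)\cdot(3k\Delta/|V|^{1/4})^{2k}$, where the absolute constant is absorbed into the slight looseness of the constant $3k$ in the statement.

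\emph{Main obstacle.} The crux is the shift-by-one bijection converting backtrack positions into repeated-vertex positions, which is what forces $|U|\le k$. The remainder is routine walk counting on a complete bipartite multigraph, though one must track constants carefully enough to remain within the stated $(\Delta\cdot 3k/|V|^{1/4})^{2k}$ envelope.
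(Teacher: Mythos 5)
Your structural step is the paper's own argument in light disguise: the paper notes that if position $i$ carries a once-visited vertex then redundancy forces $e_{i-1}=e_i$, hence $v_{i-1}=v_{i+1}$, so the set of such positions contains no two cyclically consecutive indices and has size at most $k$; your detour through the backtrack set $B$ and the shift bijection proves exactly the same inequality (and, like the paper, implicitly uses $2k>2$ so that positions $i-1$ and $i+1$ are distinct, which is irrelevant for the application where $k\to\infty$). Where you genuinely differ is the counting. The paper takes the crude route: pick a vertex set $S$ of size $\lfloor 3k/2\rfloor$ (at most $|V|^{3k/2}$ choices) and bound all walks confined to $S$ by $(|S|\cdot\Delta/|V_0|)^{2k}$, which lands exactly on $(3k\Delta/|V|^{1/4})^{2k}$ with no summation. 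You instead stratify by the exact number $s$ of distinct vertices and count new/old steps, which is sharper term by term but requires summing over $s$.

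That summation is the one place your write-up falls short of the statement: your chain gives $\sum_s(3k)^{2k}(2/3)^{s-1}\Delta^{2k}/|V|^{k/2}\le 3\,(3k\Delta/|V|^{1/4})^{2k}$, and declaring the factor ``absorbed into the looseness of $3k$'' is not a proof step, since the claim's constant is exact. The slack is easily removed within your own scheme: keep the $(s-1)!$ you discarded, i.e.\ use $\binom{2k}{s-1}\le(2k)^{s-1}/(s-1)!$, and note $s\ge 2$ (a closed walk in a loopless graph visits at least two vertices), so the sum over $s$ is at most $\sum_{m\ge1}(2/3)^m/m!=e^{2/3}-1<1$ times $(3k\Delta/|V|^{1/4})^{2k}$. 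The factor of $3$ would in any case be harmless downstream, where the proof of Theorem~\ref{thm:expproof} immediately relaxes $3k$ to $4k$; but as literally written your argument proves the claim only up to an absolute constant until you make that small repair.
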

  \begin{proof}
    We first show that for a length-$2k$ redundant walk $(e_1,b_1),\dots,(e_{2k},b_{2k})$, the sequence of vertices $(v_1,\dots,v_{2k})\in V^{2k}$ that it visits consists of $\leq 3k/2$ distinct vertices. Specifically, let $I\subseteq[2k]$ be the set of indices of vertices that appear exactly once in the sequence $v_1,\dots,v_{2k}$. Then by the definition of redundancy, for every $i\in I$ we have $v_{i-1}=v_{i+1}$ (where as usual we denote $v_0=v_{2k}$ and $v_1=v_{2k+1}$), so that $v_{i-1},v_{i+1}\notin I$. Thus $I\subseteq[2k]$ contains no pair of consecutive integers, so $|I|\leq k$. It follows that the number of distinct vertices in $\{v_1,\dots,v_{2k}\}$ is at most
    \begin{equation*}
      |I|+\frac{|[2k]\setminus I|}{2} = |I|+\frac{2k-|I|}{2} = k+\frac{|I|}{2} \leq \frac{3k}{2}.
    \end{equation*}

    Now there are ${|V|\choose\lfloor 3k/2\rfloor}$ choices of a subset $S\subseteq V$ of size $|S|=\lfloor 3k/2\rfloor$. For each such set $S$, the number of walks on $\Gamma$ that only visit vertices in $S$ is at most
    \begin{equation*}
      |S|^{2k}\cdot\left(\frac{\Delta}{|V_0|}\right)^{2k} \leq \left(\frac{3k}{2}\cdot\frac{\Delta}{|V_0|}\right)^{2k}
    \end{equation*}
    as there are $|S|^{2k}$ choices for the vertex sequence $(v_1,\dots,v_{2k})\in S^{2k}$, and then for every $i\in[2k]$ there are at most $\Delta/|V_0|$ choices for the edge $e_i$ connecting the pair of vertices $v_i,v_{i+1}$ (as $\Gamma$ is a complete bipartite (multi)graph of degree $\Delta$). Thus the total number of length-$2k$ redundant walks on $\Gamma$ is at most
    \begin{align*}
      {|V|\choose\lfloor 3k/2\rfloor} \cdot \left(\frac{3k}{2}\cdot\frac{\Delta}{|V_0|}\right)^{2k}
      &\leq |V|^{3k/2} \cdot \left(\frac{3k\Delta}{|V|}\right)^{2k} \leq \left(\Delta\cdot\frac{3k}{|V|^{1/4}}\right)^{2k},
    \end{align*}
    as desired.
  \end{proof}

  Applying Claim~\ref{claim:countwalks} in~(\ref{eq:lambdaboundwalks}), along with the fact that $\beta=1/|V|^{2k+1}$ by definition, gives that
  \begin{align*}
    \bE_{\labV\sim\cD}[\lambda(A_{\Gamma,\chi})^{2k}]
    &\leq \left(\Delta\cdot\frac{3k}{|V|^{1/4}}\right)^{2k} + \left(\frac{\Delta}{|V|}\right)^{2k} < \left(\Delta\cdot\frac{4k}{|V|^{1/4}}\right)^{2k}.
  \end{align*}
  Therefore
  \begin{align*}
    \Pr_{\labV\sim\cD}[\lambda(A_{\Gamma,\chi})\geq\eta\Delta]
    &\leq \frac{\bE_{\labV\sim\cD}[\lambda(A_{\Gamma,\chi})^{2k}]}{(\eta\Delta)^{2k}} < \left(\frac{4k}{\eta\cdot|V|^{1/4}}\right)^{2k}.
  \end{align*}
  Union bounding over all nontrivial characters $\chi:\bF_q^t\rightarrow\bC$ and applying~(\ref{eq:specunion}), we obtain the desired inequality
  \begin{align*}
    \Pr_{\labV\sim\cD}[\lambda_2(\bar{\Gamma})\geq\eta\Delta]
    &< q^t \cdot \left(\frac{4k}{\eta\cdot|V|^{1/4}}\right)^{2k}.
  \end{align*}
\end{proof}

\begin{remark}
  For our purposes in this paper, we will just need expansion $\lambda_2(\bar{\Gamma})\leq\eta\Delta$ for a sufficiently small constant $\eta>0$ as $\Delta\rightarrow\infty$. Therefore we emphasized simplicity in the proof of Theorem~\ref{thm:expproof}, and did not attempt to optimize the expansion bound. However, if the walks in Definition~\ref{def:walk} are replaced with \textit{non-backtracking} walks, which are not allowed to traverse backwards along the previous edge, tighter bounds can be shown. The reader is referred to \cite{jeronimo_explicit_2022} for an example application of such non-backtracking walks to prove tight expansion bounds, albeit for different base graphs than the $\Gamma$ we consider. We believe that such techniques could improve the factor of $|V|^{1/4}$ in the bound in Theorem~\ref{thm:expproof} to roughly $|V|^{1/2}$, by allowing the sets $S$ is Claim~\ref{claim:countwalks} to have size roughly $k$ instead of $3k/2$. However, such improvements are not needed for the purpose of our paper.
\end{remark}

\section{Classical LDPC Codes on the Expanders}
\label{sec:cldpc}
In this section, we construct classical LDPC codes following the Sipser-Spielman paradigm \cite{sipser_expander_1996}, in which we will impose local Reed-Solomon codes on the expanders given in Section~\ref{sec:expander} (specifically Corollary~\ref{cor:expinst}).

\begin{definition}
  \label{def:RMplant}
  Fix real numbers $\nu\in(0,1/2)$, $\delta\in(\nu,1-\nu]$, $\eta\in(0,1)$, and define $\tau,q_0,\Delta,t,\;\Gamma=(V,E,\ver),\;\bar{\Gamma}=(\bar{V},\bar{E},\bar{\ver})$ for some prime power $q\geq q_0$ as in Corollary~\ref{cor:expinst}. Let $\bar{\cX}$ be the incidence complex associated to $\bar{\Gamma}$ (see Definition~\ref{def:graphinc}). Also let $\ell\in[0,\Delta]$ be an integer. We now consider the 1-dimensional chain complex $\cF_{\bar{\cX}}$ for local coefficient system $\cF$ defined as in Definition~\ref{def:sscode}, with three different choices of the parameters $m$ and $(h_{\bar{v}}\in\bF_q^{m\times\bar{E}(\bar{v})})_{\bar{v}\in\bar{V}}$. Specifically, we will define $h_v\in\bF_q^{m\times E(v)}$ for $v\in V$, and then we let $h_{\bar{v}}=h_v$ for every $\bar{v}\in\bF_q^{t+1}/\spn\{(1,\labV(v))\}$, using the isomorphism $\bar{E}(\bar{v})\xrightarrow{\sim}E(v)$ given by projection onto the first coordinate.
  \begin{enumerate}
  \item\label{it:ker} Let $m=\Delta-\ell$. For $v\in V$, let $h_v\in\bF_q^{m\times E(v)}$ be a matrix with $\ker(h_v)=\evl_{E(v)}(\bF_q[X]^{<\ell})$.\footnote{Here there is a slight clash of notation: the polynomial indeterminate variable $X$ is distinct from the graded poset $\bar{X}$ associated to $\bar{\Gamma}$.}
  \item\label{it:im} Let $m=\ell$. For $v\in V$, let $h_v\in\bF_q^{m\times E(v)}$ be a matrix with $\im(h_v^\top)=\evl_{E(v)}(\bF_q[X]^{<\ell})$.
  \item\label{it:imcom} Let $m=\Delta-\ell$. For $v\in V$, let $h_v\in\bF_q^{m\times E(v)}$ be a matrix with $\im(h_v^\top)=\evl_{E(v)}(\bF_q[X]^{<|E|-\ell}_{E\setminus E(v)})$.
  \end{enumerate}
  We call the complex $\cF_{\bar{\cX}}$ arising from the three instantiations above a \textbf{1-dimensional RM-planted complex} of \textbf{type} \ref{it:ker}, \ref{it:im}, \ref{it:imcom}, respectively, with parameters $\nu,\delta,\eta,q,\ell$.
\end{definition}

We will ultimately obtain our qLDPC codes in Section~\ref{sec:qldpc} from tensor products of type-\ref{it:im} and type-\ref{it:imcom} RM-planted cochain complexes, and we obtain our cLTCs in Section~\ref{sec:cltc} from balanced products of type-\ref{it:ker} RM-planted chain complexes.

The name ``RM-planted'' comes from the fact that certain cycle and cocycle spaces of these complexes contain ``planted'' copies of a Reed-Muller (RM) code, as shown in the following lemma.

\begin{lemma}
  \label{lem:RMplant}
  Define all variables as in Definition~\ref{def:RMplant}. The RM-planted complexes $\cF_{\bar{\cX}}$ of type \ref{it:ker}, \ref{it:im}, \ref{it:imcom} respectively satisfy the following properties:
  \begin{enumerate}
  \item[\ref{it:ker}.] $\evl_{E\times\bF_q^t}(\bF_q[X_0,\dots,X_t]^{<\ell}) \subseteq Z_1(\cF_{\bar{\cX}})$.
  \item[\ref{it:im}.] Define $\iota^0:\bF_q[X_0,\dots,X_t]^{<\ell}\rightarrow\cF_{\bar{\cX}}^0$ as follows. For every $f\in\bF_q[X_0,\dots,X_t]^{<\ell}$ and every vertex $\bar{v}=x+\spn\{(1,\labV(v))\}\in\bar{V}=\bar{X}(0)$ for $x\in\{0\}\times\bF_q^t$, $v\in V$, we let $\iota^0(f)_{\bar{v}}\in\bF_q^\ell$ be the unique vector satisfying
    \begin{equation*}
      h_v^\top(\iota^0(f)_{\bar{v}}) = \evl_{E(v)}(f(X_0,\;x_1+\labV(v)_1X_0,\;\dots\;,x_t+\labV(v)_tX_0)).
    \end{equation*}
    Then $\im(\iota^0)\subseteq Z^0(\cF_{\bar{\cX}})$. Furthermore, $\iota^0|_{\bF_q[X_1,\dots,X_t]^{<\ell}}$ is injective.
  \item[\ref{it:imcom}.] $B^1(\cF_{\bar{\cX}}) \subseteq \evl_{E\times\bF_q^t}(\bF_q[X_0,\dots,X_t]^{<t(q-1)+|E|-\ell})$.
  \end{enumerate}
\end{lemma}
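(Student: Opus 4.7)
The plan is to analyze each of the three types separately, exploiting in each case the fact that the restriction of a multivariate polynomial to an affine line is a univariate polynomial of the same total degree. Throughout, I would use that $\bar{\Gamma}$ inherits a bipartite structure from $\Gamma$, so the incidence sign $\delta^{\bar{\cX}}_{\bar{e},\bar{v}}$ depends only on which side of the bipartition $\bar{v}$ lies on; write $\sigma(\bar{v}) \in \{+1,-1\}$ for this sign.

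For type \ref{it:ker}, I would take $c = \evl_{E \times \bF_q^t}(f)$ with $f \in \bF_q[X_0,\dots,X_t]^{<\ell}$ and check $\partial_1 c = 0$ at each vertex $\bar{v} = (0,x) + \spn\{(1,\labV(v))\}$. Under the isomorphism $\bar{E}(\bar{v}) \cong E(v)$ by projection to the first coordinate, the local restriction $(c_{\bar{e}})_{\bar{e} \in \bar{E}(\bar{v})}$ equals $(f(s, x_1 + s\labV(v)_1, \dots, x_t + s\labV(v)_t))_{s \in E(v)}$, which is the evaluation on $E(v)$ of a univariate polynomial in $s$ of degree $<\ell$. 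Hence it lies in $\evl_{E(v)}(\bF_q[X]^{<\ell}) = \ker(h_v)$, and bipartite consistency of the signs gives the vanishing of $\partial_1 c$ vertex by vertex.

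For type \ref{it:im}, well-definedness of $\iota^0(f)_{\bar{v}}$ is immediate: the restriction polynomial $s \mapsto f(s, x_1 + s\labV(v)_1, \dots)$ has degree $<\ell$, so its evaluation on $E(v)$ lies in $\evl_{E(v)}(\bF_q[X]^{<\ell}) = \im(h_v^\top)$, and $h_v^\top$ is injective because its image has full dimension $\ell$. To verify $\iota^0(f) \in Z^0(\cF_{\bar{\cX}})$, I would evaluate $(\delta_0 \iota^0(f))_{\bar{e}}$ at an edge $\bar{e} = (e,y)$ with endpoints $\bar{v}_0, \bar{v}_1$: each endpoint contribution equals $(h_{v_b}^\top \iota^0(f)_{\bar{v}_b})_e = f(\bar{e})$ (since the line through $\bar{v}_b$ passes through $\bar{e}$, and evaluating its restriction polynomial at $s = e$ recovers $f(\bar{e})$), and the opposite bipartite signs cancel the two contributions. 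For injectivity of $\iota^0|_{\bF_q[X_1,\dots,X_t]^{<\ell}}$, the hypothesis $\iota^0(f) = 0$ together with injectivity of each $h_v^\top$ forces the restriction of $f$ to every affine line to be the zero polynomial; evaluating at $s = 0$ gives $f(x) = 0$ for all $x \in \bF_q^t$, and since $\deg f < \ell \leq \Delta \leq q$, the standard fact that a total-degree-$<q$ polynomial vanishing on $\bF_q^t$ is zero yields $f = 0$.

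For type \ref{it:imcom}, the most delicate case, I would construct $F \in \bF_q[X_0,\dots,X_t]^{<t(q-1)+|E|-\ell}$ realizing $\delta_0 f$ as a sum of local contributions. Decomposing $f = \sum_{\bar{v}} f_{\bar{v}} \ind{\bar{v}}$, the definition of $h_v$ furnishes $g_{\bar{v}} \in \bF_q[X]^{<|E|-\ell}_{E \setminus E(v)}$ with $h_v^\top f_{\bar{v}} = \evl_{E(v)}(g_{\bar{v}})$. Using the identity $\1_{Y=0} = 1 - Y^{q-1}$ on $\bF_q$, the line indicator $\chi_{\bar{v}}(X_0,\dots,X_t) = \prod_{i=1}^t \bigl(1 - (X_i - x_i - \labV(v)_i X_0)^{q-1}\bigr)$ has total degree $t(q-1)$ and evaluates to $1$ on the affine line through $\bar{v}$ and $0$ elsewhere. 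Setting $F_{\bar{v}} = \sigma(\bar{v}) \chi_{\bar{v}}(X) \cdot g_{\bar{v}}(X_0)$ and $F = \sum_{\bar{v}} F_{\bar{v}}$ gives total degree at most $t(q-1) + |E| - \ell - 1$, which meets the required bound. The crux of the argument, and the main obstacle, is checking $\evl_{E \times \bF_q^t}(F) = \delta_0 f$: any edge $\bar{e} = (e,y)$ sits on lines through $|V|$ distinct vertices $\bar{v}$ (one per direction in $V$), but only two of these are its actual endpoints in $\bar{\Gamma}$. The vanishing condition $g_{\bar{v}}|_{E \setminus E(v)} = 0$, built into the type-\ref{it:imcom} choice of $\im(h_v^\top)$, is exactly what kills the $|V| - 2$ spurious line-contributions from non-incident vertices (for which $e \notin E(v)$), leaving only the endpoint contributions, whose sum matches $(\delta_0 f)_{\bar{e}}$. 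This tight alignment between the algebraic vanishing of $g_{\bar{v}}$ and the combinatorial geometry of the lift is precisely what motivates the unusual shortened-Reed-Solomon choice of local code in case \ref{it:imcom}.
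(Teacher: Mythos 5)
Your proposal is correct and follows essentially the same route as the paper: types \ref{it:ker} and \ref{it:im} via restriction of degree-$<\ell$ polynomials to the affine lines (with the bipartite sign cancellation at each edge), and type \ref{it:imcom} via the degree-$t(q-1)$ line-indicator polynomial multiplied by the local degree-$<|E|-\ell$ polynomial, with the vanishing on $E\setminus E(v)$ ensuring the formula agrees with $\delta_0$ off the incident edges. The only (immaterial) deviations are that you argue injectivity of $\iota^0|_{\bF_q[X_1,\dots,X_t]^{<\ell}}$ by the contrapositive (all line restrictions vanish identically, hence $f$ vanishes on $\bF_q^t$, hence $f=0$ since $\deg f<q$) rather than exhibiting a non-vanishing point as the paper does, and that you treat a general $0$-cochain at once in type \ref{it:imcom} where the paper handles single-vertex supports and extends by linearity.
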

\begin{proof}
  \begin{enumerate}
  \item[\ref{it:ker}.] By definition $Z_1(\cF_{\bar{\cX}})$ consists of those elements of $\bF_q^{E\times\bF_q^t}$ whose evaluation the intersection of every affine line $\bar{v}\in\bar{V}$ with $E\times\bF_q^{t-1}$ agrees with a univariate polynomial of degree $<\ell$. But by definition the restriction of every polynomial in $\bF_q[X_0,\dots,X_t]^{<\ell}$ to an affine line is a univariate polynomial of degree $<\ell$, so the desired inclusion follows.
  \item[\ref{it:im}.] Consider an arbitrary $f\in\bF_q[X_0,\dots,X_t]^{<\ell}$ and $\bar{e}\in\bar{E}=\bar{X}(1)=E\times\bF_q^t$. Let $e=\bar{e}_0\in E$. Also, for $b\in\{0,1\}$ let $\bar{v}_b=\bar{\ver}_b(\bar{e})$, and define $x^b\in\{0\}\times\bF_q^t$, $v_b\in V_b$ so that $\bar{v}_b=x^b+\spn\{(1,\labV(v))\}$. Then by definition,
    \begin{align*}
      (\delta^{\cF_{\bar{\cX}}}_0(\iota^0(f)))_{\bar{e}}
      &= \delta^{\bar{\cX}}_{\bar{e},\bar{v}_0}\cdot(h_{v_0}^\top(\iota^0(f)_{\bar{v}_0}))_e + \delta^{\bar{\cX}}_{\bar{e},\bar{v}_1}\cdot(h_{v_1}^\top(\iota^0(f)_{\bar{v}_1}))_e \\
      &= f(e,\; x_1^0+\labV(v_0)_1e,\; \dots\;, x_t^0+\labV(v_0)_te) - f(e,\; x_1^1+\labV(v_1)_1e,\; \dots\;, x_t^1+\labV(v_0)_te) \\
      &= f(\bar{e}) - f(\bar{e}) \\
      &= 0,
    \end{align*}
    where the third equality above holds because $\bar{\Gamma}$ is defined so that $\bar{e}\in E\times\bF_q^t$ equals the unique point in both the affine lines $\bar{v}_0$ and in $\bar{v}_1$ whose first (i.e.~index $0$) component equals $e$. Thus $\im(\iota^0)\subseteq\ker(\delta^{\cF_{\bar{\cX}}}_0)$.

    To see that $\iota^0|_{\bF_q[X_1,\dots,X_t]^{<\ell}}$ is injective, consider some nonzero $f\in\bF_q[X_1,\dots,X_t]^{<\ell}$. Because $\ell\leq q$, there exists some $\bar{e}\in E\times\bF_q^t$ such that $f(\bar{e})\neq 0$. Then defining $e,\bar{v}_0,v_0$ as above, we have $(h_{v_0}^\top(\iota^0(f)_{\bar{v}_0}))_e=f(\bar{e})\neq 0$, so $\iota^0(f)\neq 0$, as desired.
  \item[\ref{it:imcom}.] Consider an arbitrary $\bar{v}\in\bar{V}$ and $c\in\cF_{\bar{v}}=\bF_q^m=\bF_q^{\Delta-\ell}$. Let $\bar{v}=x+\spn\{(1,\labV(v))\}$ for $x\in\{0\}\times\bF_q^t$ and $v\in V$. Also let $f(X_0)\in\bF_q[X_0]^{<|E|-\ell}$ be the unique polynomial for which $h_v^\top c=\evl_{E(v)}(f)$. We may view $c$ as an element of $\cF_{\bar{\cX}}^0=\bigoplus_{\bar{v}'\in\bar{V}}\cF_{\bar{v}'}$ supported on $\cF_{\bar{v}}=\bF_q^m$. Then by definition, $\delta^{\cF_{\bar{\cX}}}_0(c)$ is an element of $\bF_q^{E\times\bF_q^t}$, consisting of evaluations of $f$ placed on points on the affine line $\bar{v}$, and with $0$s elsewhere. Formally, define $g(X_0,\dots,X_t)\in\bF_q[X_0,\dots,X_t]^{\leq t(q-1)}$ by
    \begin{equation*}
      g(X_0,\dots,X_t) = \prod_{i=1}^t(1-(X_t-x_t-\labV(v)_tX_0)^{q-1}),
    \end{equation*}
    so that for every $x\in E\times\bF_q^t$ we have $g(x_0,\dots,x_t)=\1_{x\in\bar{v}}$. Then by definition
    \begin{equation*}
      \delta^{\cF_{\bar{\cX}}}_0(c) = \evl_{E\times\bF_q^t}(f(X_0)g(X_0,\dots,X_t)).
    \end{equation*}
    Now the polynomial on the RHS above has degree $<t(q-1)+|E|-\ell$. As every $c'\in\cF_{\bar{\cX}}^1$ can be written as a linear combination of $c\in\cF_{\bar{v}}$ for $\bar{v}\in\bar{V}$ as considered above, it follows that $\im(\delta^{\cF_{\bar{\cX}}}_0)\subseteq\evl_{E\times\bF_q^t}(\bF_q[X_0,\dots,X_t]^{<t(q-1)+|E|-\ell})$, as desired.
  \end{enumerate}
\end{proof}

These planted RM codes will help us in two regards. First, the planted RM codes allow us to bound the dimension of the associated codes codes (Definition~\ref{def:cctocode}) of the chain complex $\cF_{\bar{\cX}}$, when naive counting of linear constraints fails to show positive dimension. Second, the planted RM codes provide explicit codewords that can be multiplied component-wise to give codewords of higher-degree RM codes; this property will be important for the transversal $C^{r-1}Z$ behavior.

This idea of planting Reed-Muller codes in a Sipser-Spielman construction with local Reed-Solomon codes was used in \cite{dinur_new_2023}, and more generally in the classical literature on ``lifted codes.'' In the quantum setting, \cite{golowich_nlts_2024} showed how to plant the all-1s vector (which can be viewed as a degree-$0$ Reed-Muller code) in qLDPC codes; our results will provide one way of obtaining higher-dimensional planted qLDPC codes.

Note that by Lemma~\ref{lem:RSparameters} and Corollary~\ref{cor:SRStoRS}, the local codes $\ker(h_v)$ and dual local codes $\im(h_v^\top)$ in Definition~\ref{def:RMplant} will have good distance for appropriate choices of $\ell$. Hence the global codes $Z^0(\cF_{\bar{\cX}})$ and $Z_1(\cF_{\bar{\cX}})$ will also have good distance by Lemma~\ref{lem:ssdis}.


\section{Quantum LDPC Codes with Transversal $C^{r-1}Z$ via Tensor Product}
\label{sec:qldpc}
In this section, we take tensor products of the classical LDPC codes from Section~\ref{sec:cldpc} to obtain quantum LDPC codes, which we show in appropriate parameter regimes have almost linear dimension, polynomial distance, polylogarithmic locality, and support transversal $C^{r-1}Z$ gates.

\begin{theorem}
  \label{thm:qldpcmain}
  For every fixed $\nu\in(0,1/2)$, $\delta\in(\nu,1-\nu]$, $r\in\bZ_{\geq 2}$, the following holds for every sufficiently large prime power $q$. Let $\eta=\eta(r)=1/50r$, and define $\tau,\Delta,t,\Gamma,\bar{\Gamma}$ as in Corollary~\ref{cor:expinst}. Let $\ell=\lfloor\Delta/2\rfloor$ and $\ell'=\lfloor\ell/10r\rfloor$. Let $\cF_{\bar{\cX}}$ be a 1-dimensional type-\ref{it:imcom} RM-planted complex with parameters $\nu,\delta,\eta,q,\ell$, and let $\cF_{\bar{\cX}}'$ be a 1-dimensional type-\ref{it:im} RM-planted complex with parameters $\nu,\delta,\eta,q,\ell'$. For $h\in[r]$, define the tensor products
  \begin{align*}
    \cC^{(h)} &= {\cF_{\bar{\cX}} '}^{\otimes h-1}\otimes \cF_{\bar{\cX}} \otimes {\cF_{\bar{\cX}}'}^{\otimes r-h}.
  \end{align*}
  Then the quantum codes at level $1$ of $\cC^{(h)}$ for $h\in[r]$ are
  \begin{equation*}
    [[N=q^{rt+O_r(1)},\; K\geq N^{\delta-\nu},\; D\geq N^{1/r}/\log(N)^{O_{\nu,\delta,r}(1)}]]_q
  \end{equation*}
  codes of locality $w^{\cC^{(h)}}\leq\log(N)^{O_{\nu,\delta,r}(1)}$. Furthermore, there exists an $r$-tuple $H'=({H^{(h)}}'\subseteq H^1(\cC^{(h)}))_{h\in[r]}$ of 1-cohomology subspaces on which there is a coboundary-invariant multilinear form
  \begin{equation*}
    \zeta:(\cC^{(1)})^1\times\cdots\times(\cC^{(r)})^1\rightarrow\bF_q
  \end{equation*}
  of locality $w^\zeta\leq\log(N)^{O_{\nu,\delta,r}(1)}$, and such that $\subrank(\zeta_{H'})\geq N^{\delta-\nu}$.
\end{theorem}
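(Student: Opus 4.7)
The proof has two stages. Stage 1 verifies the parameters $N, K, D, w$, relying on the K\"{u}nneth formula (Lemma~\ref{lem:kunneth}) for dimension, the product-distance bound (Lemma~\ref{lem:proddis}) for distance, and Sipser-Spielman estimates (Lemma~\ref{lem:ssdis}) for the factor complexes. Stage 2 constructs the coboundary-invariant multilinear form $\zeta$ as a sum of local intersection forms over the $r$-dimensional cubes of the underlying product cubical complex, and bounds its subrank via the planted Reed-Muller structure of Lemma~\ref{lem:RMplant}. I expect coboundary-invariance to be the principal obstacle.

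\textbf{Parameter verification.} From the tensor-product structure, $N = \dim(\cC^{(h)})^1 = r|\bar{V}|^{r-1}|\bar{E}| = q^{rt + O_r(1)}$ (using $|\bar{V}|, |\bar{E}| = q^{t+O(1)}$ and $t = q^{\Theta(\tau)}$ from Corollary~\ref{cor:expinst}), and the locality is $w^{\cC^{(h)}} \leq r\Delta = \log(N)^{O_{\nu,\delta,r}(1)}$. For the dimension lower bound, I apply Lemma~\ref{lem:kunneth} and extract a K\"{u}nneth summand whose dimension is controlled by the planted RM structure of Lemma~\ref{lem:RMplant}: the $r-1$ type-\ref{it:im} factors contribute $H^0$-dimensions of at least $\binom{\ell'-1+t}{t} \geq (\ell'/t)^t$ via the injective map $\iota^0$ applied to $\bF_q[X_1,\dots,X_t]^{<\ell'}$, while the type-\ref{it:imcom} factor at position $h$ contributes $\dim H^1 \geq |\bar{E}| - \dim B^1$, for which part~\ref{it:imcom} of Lemma~\ref{lem:RMplant} gives an upper bound on $B^1$ by the dimension of an explicit Reed-Muller evaluation space. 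Multiplying, and choosing $\tau$ sufficiently small, yields $K \geq N^{\delta - \nu}$. For distance, iterating Lemma~\ref{lem:proddis} and its dual reduces both $d^1(\cC^{(h)})$ and $d_1(\cC^{(h)})$ to a minimum over tuples of factor (co)cycle distances; each factor is a Sipser-Spielman code on $\bar{\Gamma}$ with $\lambda_2(\bar{\Gamma}) \leq \eta\Delta = \Delta/50r$ and with local Reed-Solomon codes of distance $\Omega(\Delta)$ (Lemma~\ref{lem:RSparameters}), hence $d^0, d_1 \geq \Omega(|\bar{V}|) \geq \Omega(N^{1/r})$ up to polylog factors by Lemma~\ref{lem:ssdis}, giving $D \geq N^{1/r}/\log(N)^{O_{\nu,\delta,r}(1)}$.

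\textbf{Construction of $\zeta$.} Following the cubical-complex description of Example~\ref{example:graphproduct} (with trivial group action) and the outline of Section~\ref{sec:mptoccz}, I define $\zeta$ as a sum of local multilinear forms, one per $r$-tuple of edges $(\bar{e}_1,\dots,\bar{e}_r) \in \bar{E}^r$ (the $r$-cubes of the product). Within each cube, the local form is the standard intersection pairing on a product of $r$ intervals, weighted by the Reed-Solomon local coefficient data along each direction; Section~\ref{sec:topcczinf} supplies the topological motivation (the $r$-dimensional color code setting). The resulting $\zeta$ has locality $w^\zeta \leq \Delta^{r-1} = \log(N)^{O_{\nu,\delta,r}(1)}$, since every physical qudit participates in at most $\Delta^{r-1}$ cubes.

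\textbf{Main obstacle: coboundary-invariance and subrank.} Coboundary-invariance is the central step: perturbing $c^{(1)}$ by an element of $B^1(\cC^{(1)})$ alters $\zeta$ cube-by-cube, and these alterations telescope across shared faces of adjacent cubes. The global cancellation holds precisely when the $r$-fold product of local Reed-Solomon codewords lies in a Reed-Solomon code of degree low enough that its evaluation sums to zero over full fibers, which by Lemma~\ref{lem:RSmult} reduces to $r\ell' < \ell$; this is ensured by the choice $\ell' = \lfloor \ell/10r \rfloor$. The detailed cancellation argument in the cubical setting is guided by the more general framework of the companion paper \cite{lin_transversal_2024}, specialized here. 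For the subrank, I take $({H^{(h)}}')$ to be the cohomology image of a planted Reed-Muller block, formed by combining the $\iota^0$ of part~\ref{it:im} of Lemma~\ref{lem:RMplant} at the $r-1$ type-\ref{it:im} positions with the planted RM inclusion at the type-\ref{it:imcom} position. On this block, $\zeta_{H'}$ computes, up to basis change, the non-degenerate evaluation pairing on products of $r$ multivariate polynomials in $\bF_q[X_0,\dots,X_t]^{<\ell'}$; by the multivariate analogue of Lemma~\ref{lem:RSmult}, the monomial basis of $\bF_q[X_1,\dots,X_t]^{<\ell'}$ witnesses a diagonal subtensor of rank at least $\binom{\ell'-1+t}{t} \geq (\ell'/t)^t \geq N^{\delta-\nu}$, giving $\subrank(\zeta_{H'}) \geq N^{\delta-\nu}$.
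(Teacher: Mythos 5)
Your Stage 1 (parameters) matches the paper's Lemma~\ref{lem:basicparam} in approach and is essentially correct, though slightly sloppy: $N$ is not literally $r|\bar{V}|^{r-1}|\bar{E}|$ because the $0$-cochain spaces carry local coefficient dimensions $\Delta-\ell$ and $\ell'$ per vertex, but since these are $q^{O(1)}$ while $|\bar{V}|,|\bar{E}|=q^{t+O(1)}$, the exponent $q^{rt+O_r(1)}$ comes out the same. Your $K$ and $D$ arguments correctly parallel the paper's use of Lemma~\ref{lem:kunneth}, Lemma~\ref{lem:RMplant}, Lemma~\ref{lem:proddis}, and Lemma~\ref{lem:ssdis}.

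The genuine gap is in the construction of $\zeta$. You define $\zeta$ as the sum over all $r$-cubes $y\in Y(r)$ of a local intersection form $\xi^{\loc}_y$. The paper defines $\zeta=\alpha\circ\xi$ where $\xi$ collects those same local intersection values into a vector in $\bF_q^{Y(r)}$, but $\alpha:\bF_q^{Y(r)}\to\bF_q$ is \emph{not} the all-ones functional: it interpolates the unique polynomial in $\bF_q[(U_i^{(h)})]$ representing the vector, discards all monomials whose degree in some block exceeds a threshold $t(q-1)+|E|-\ell+(r-1)\ell'$, and then sums evaluations over a carefully chosen small set $A^r\subsetneq Y(r)$. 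This extra layer is load-bearing on both sides of the argument. First, for coboundary-invariance, Lemma~\ref{lem:xilowdeg} shows that whenever a coboundary element enters one slot of $\xi$, the output lies in a space of evaluations of polynomials of degree below the threshold in some block $h^*$ --- and it is $\alpha$'s degree-truncation step that forces $\zeta$ to vanish on such inputs. Your proposed telescoping across shared faces will not produce this cancellation by itself: a straight sum over $Y(r)=(E\times\bF_q^t)^r$ would annihilate the planted-RM contributions as well, since the monomials $U_i^{q-\lfloor\ell/2t\rfloor+j_i}$ appearing in $L^1$ have exponents strictly between $0$ and $q-1$ and thus sum to zero over a full $\bF_q$ fiber, leaving nothing for the subrank bound. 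Second, your subrank estimate claims the monomial basis of $\bF_q[X_1,\dots,X_t]^{<\ell'}$ witnesses a diagonal subtensor of size $\binom{\ell'-1+t}{t}$; but $\zeta_{H'}$ being a non-degenerate evaluation pairing does not imply a diagonal subtensor of that size, since subrank is generally strictly smaller than rank. The paper instead achieves $\subrank(\zeta_{H'})\geq|A|^r=a^{rt}$ by working in the Lagrange (indicator) basis on $A^r$ --- this is exactly what the evaluation set $A$ in the definition of $\alpha$ is for --- and by a degree-balancing trick using the high-degree monomial $M$ in $L^1$ to ensure the product $F^{(1)}(c)\cdots F^{(r)}(c)$ lands in the surviving degree range after $\alpha$'s truncation (see~(\ref{eq:Fproddeg})). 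Without $\alpha$ and the $A$-set machinery, you have no concrete mechanism for either coboundary-invariance or the $N^{\delta-\nu}$ subrank bound; deferring these to the companion paper leaves the core of the theorem unproved.
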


Before proving Theorem~\ref{thm:qldpcmain}, we present the following corollary.

\begin{corollary}
  \label{cor:qldpcmain}
  For every $r\in\bZ_{\geq 2}$, every $\epsilon>0$, and every prime power $q$ (including $q=2$), there exists an infinite family of $r$-tuples of 2-dimensional cochain complexes $(\cC^{(h)})_{h\in[r]}$ for which the associated quantum codes at level $1$ are
  \begin{equation*}
    [[N=\dim(\cC^{(h)})^1,\; K\geq N^{1-\epsilon},\; D\geq N^{1/r}/\log(N)^{O_{\epsilon,r}(1)}]]_q
  \end{equation*}
  codes of locality $w^{\cC^{(h)}}\leq\log(N)^{O_{\epsilon,r}(1)}$, which support a transversal $C^{r-1}Z$ gate inducing $N^{1-\epsilon}$ logical $C^{r-1}Z$ gates (in the sense of Definition~\ref{def:transversal}). Furthermore, there exists a $\poly(N)$-time algorithm to construct $(\cC^{(h)})_{h\in[r]}$.
\end{corollary}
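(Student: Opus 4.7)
The plan is to derive Corollary~\ref{cor:qldpcmain} from Theorem~\ref{thm:qldpcmain} by (i) passing to a sufficiently large extension field, (ii) using Lemma~\ref{lem:alphred} to push the construction down to the desired alphabet $\bF_q$, and (iii) applying Lemma~\ref{lem:locred} to reduce the locality of $\zeta$ to $1$ so that Definition~\ref{def:transversal} applies. The algorithmic claim will follow because each ingredient (the expander $\bar{\Gamma}$ from Corollary~\ref{cor:expinst}, the cochain complexes of Definition~\ref{def:RMplant}, their tensor products, the form $\zeta$, and the procedures of Lemmas~\ref{lem:alphred} and~\ref{lem:locred}) is computable in time polynomial in the final block length.

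Concretely, given $\epsilon>0$ and $r\geq 2$, I would first fix $\nu=\epsilon/8$ and $\delta=1-\epsilon/8$, so that $\delta-\nu\geq 1-\epsilon/4$. Theorem~\ref{thm:qldpcmain} then applies to every sufficiently large prime power $Q$; choose an integer $m\geq 1$ so that $Q:=q^m$ exceeds the threshold demanded by Theorem~\ref{thm:qldpcmain} for these fixed values of $\nu,\delta,r$. Applying Theorem~\ref{thm:qldpcmain} over $\bF_Q$ yields an infinite family of $r$-tuples of $2$-dimensional cochain complexes $(\cC^{(h)})_{h\in[r]}$ over $\bF_Q$ with associated level-$1$ quantum codes of parameters
\begin{equation*}
  [[N_0,\; K_0\geq N_0^{1-\epsilon/4},\; D_0\geq N_0^{1/r}/\log(N_0)^{O_{\epsilon,r}(1)}]]_Q,
\end{equation*}
locality $w^{\cC^{(h)}}\leq\log(N_0)^{O_{\epsilon,r}(1)}$, and a coboundary-invariant multilinear form $\zeta$ of locality $w^\zeta\leq\log(N_0)^{O_{\epsilon,r}(1)}$ with $\subrank(\zeta_{H'})\geq N_0^{1-\epsilon/4}$.

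Next, I would apply Lemma~\ref{lem:alphred} with this choice of $m$ to produce $\bF_q$-cochain complexes $(\tilde{\cC}^{(h)})_{h\in[r]}$ and a multilinear form $\tilde{\zeta}$ over $\bF_q$. Since $m=m(\epsilon,r,q)$ is a constant, the bounds~(\ref{eq:subfield}) yield block length $m N_0$, $\bF_q$-dimension at least $m K_0$, the same distance bounds, locality $\leq m\cdot\log(N_0)^{O_{\epsilon,r}(1)}$ for the complexes, locality $\leq m^{r-1}\cdot w^\zeta$ for $\tilde{\zeta}$, and subrank preserved. Then I would apply Lemma~\ref{lem:locred} to $\tilde{\zeta}$ to obtain complexes $(\tilde{\tilde{\cC}}^{(h)})_{h\in[r]}$ and a multilinear form $\tilde{\tilde{\zeta}}$ of locality $w^{\tilde{\tilde{\zeta}}}=1$. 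By~(\ref{eq:decongest}), this multiplies the block length by $w^{\tilde{\zeta}}=\log(N_0)^{O_{\epsilon,r}(1)}$, multiplies $d^1$ by the same factor, preserves $d_1$ up to a factor of at least $1$, preserves cohomology dimension and subrank, and keeps the complex locality polylogarithmic in $N_0$.

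A short bookkeeping step then confirms the claimed parameters. Writing $N$ for the final block length, we have $N=N_0\cdot\log(N_0)^{O_{\epsilon,r}(1)}$, so $\log N=\Theta(\log N_0)$. The final $\bF_q$-dimension is at least $m\cdot N_0^{1-\epsilon/4}$, which for all sufficiently large $N_0$ is at least $N^{1-\epsilon}$ since the polylog factor converting $N_0$ to $N$ is absorbed into the $\epsilon/4$ slack. The distance is at least $\min\{D_0,\;D_0\cdot\log(N_0)^{O(1)}\}\geq N^{1/r}/\log(N)^{O_{\epsilon,r}(1)}$, the complex locality is $\log(N)^{O_{\epsilon,r}(1)}$, and $\tilde{\tilde{\zeta}}$ has locality $1$ with subrank at least $N_0^{1-\epsilon/4}\geq N^{1-\epsilon}$, so Definition~\ref{def:transversal} furnishes the required transversal $C^{r-1}Z$ gate inducing at least $N^{1-\epsilon}$ logical $C^{r-1}Z$ gates. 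Since Corollary~\ref{cor:expinst} produces $\bar{\Gamma}$ explicitly in time $|\bar{E}|^{O(1)}$, and all subsequent operations (forming the Sipser--Spielman complexes, taking tensor products, executing Lemmas~\ref{lem:alphred} and~\ref{lem:locred}) are polynomial in the block length, the overall construction runs in $\poly(N)$ time. The main obstacle is not any single difficult step but rather the careful tracking of parameters across the three reductions to verify that the $N^{1-\epsilon}$ bound on both $K$ and the subrank survives the polylogarithmic overhead introduced by the locality-reduction step.
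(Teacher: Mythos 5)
Your proposal follows essentially the same route as the paper: apply Theorem~\ref{thm:qldpcmain} over a large extension field, then use Lemma~\ref{lem:alphred} to reduce the alphabet to $\bF_q$ and Lemma~\ref{lem:locred} to bring $w^\zeta$ down to $1$, and check that the polylogarithmic losses are absorbed (the paper uses $\nu=\epsilon/4$, $\delta=1-\epsilon/4$; your extra slack with $\epsilon/8$ works just as well).

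Two small inaccuracies should be repaired, though neither changes the argument. First, for a \emph{fixed} prime power $Q=q^m$, Theorem~\ref{thm:qldpcmain} (via Corollary~\ref{cor:expinst}) produces a single $r$-tuple of complexes, not an infinite family: $t$, $\Delta$, and hence $N_0$ are determined by $Q$. The infinite family must come from letting $Q$ range over all sufficiently large powers of $q$ (as the paper does with ``an arbitrary sufficiently large power $q'$ of $q$''), so $m=\log_q Q$ is not a constant but grows like $O(\log N)$ along the family; your bookkeeping survives this, since the losses in Lemma~\ref{lem:alphred} are factors of $m$ and $m^{r-1}$, which are polylogarithmic in $N$ and get absorbed exactly like the other polylog factors. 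Second, the complexes produced by Theorem~\ref{thm:qldpcmain} are $r$-dimensional (tensor products of $r$ one-dimensional complexes), whereas Lemma~\ref{lem:locred} is stated for $2$-dimensional complexes; one must first truncate to levels $0,1,2$ (as noted in the footnote to that lemma and done explicitly in the paper), which preserves the level-$1$ quantum code and all relevant quantities. With these two adjustments your parameter tracking and the $\poly(N)$-time claim go through exactly as in the paper.
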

\begin{proof}
  Set $\nu=\epsilon/4$ and $\delta=1-\epsilon/4$. Then for an arbitrary sufficiently large power $q'$ of $q$, we obtain $(\cC^{(h)})_{h\in[r]}$ by truncating the complexes from Theorem~\ref{thm:qldpcmain} with parameters $\nu,\delta,r,q$ to levels $0,1,2$, and then applying Lemma~\ref{lem:alphred} followed by Lemma~\ref{lem:locred} to these complexes to reduce the alphabet size to $q$ and the multilinear form locality $w^\zeta$ to $1$. As Lemma~\ref{lem:alphred} and Lemma~\ref{lem:locred} both preserve the code length, dimension, distance and locality, as well as the multilinear form subrank, up to factors of $\log(N)^{O_{\epsilon,r}(1)}$, the resulting construction has all parameters as stated in the corollary.

  Meanwhile, Corollary~\ref{cor:expinst} ensures that the graph $\bar{\Gamma}$ underlying the complexes from Theorem~\ref{thm:qldpcmain} is constructable in polynomial time. As the local Reed-Solomon codes used to construct these complexes are explicit, the complexes from Theorem~\ref{thm:qldpcmain} are constructable in polynomial time. The transformations given by Lemma~\ref{lem:alphred} and Lemma~\ref{lem:locred} can also by definition be performed in polynomial time, so we have a $\poly(N)$-time algorithm to construct $(\cC^{(h)})_{h\in[r]}$, as desired.
\end{proof}

\begin{remark}
  \label{remark:isomorphiccodes}
  The complexes $\cC^{(h)}$ for $h\in[r]$ in Corollary~\ref{cor:qldpcmain} are by definition all isomorphic, with isomorphisms given by appropriate permutations of the basis elements of the cochain spaces (or equivalently, permutations of the qudits of the associated quantum codes).
\end{remark}

We now prove Theorem~\ref{thm:qldpcmain} in the sections below. Throughout these sections, we define all variables as in Theorem~\ref{thm:qldpcmain}.

\subsection{Proof of Basic Code Parameters}
\label{sec:basicparamproof}
In this section, we bound the length, dimension, distance, and locality of the codes in Theorem~\ref{thm:qldpcmain}:

\begin{lemma}
  \label{lem:basicparam}
  The quantum codes at level $1$ of $\cC^{(h)}$ for $h\in[r]$ as defined in Theorem~\ref{thm:qldpcmain} are
  \begin{equation*}
    [[N=q^{rt+O_r(1)},\; K\geq N^{\delta-\nu},\; D\geq N^{1/r}/\log(N)^{O_{\nu,\delta,r}(1)}]]_q
  \end{equation*}
  codes of locality $w^{\cC^{(h)}}\leq\log(N)^{O_{\nu,\delta,r}(1)}$.
\end{lemma}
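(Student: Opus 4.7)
The plan is to verify each of the four claimed parameters $(N, K, D, w)$ in turn, relying on the Künneth formula (Lemma~\ref{lem:kunneth}), the tensor-product distance bound (Lemma~\ref{lem:proddis}), the Sipser--Spielman distance bound (Lemma~\ref{lem:ssdis}), and most importantly the planted Reed--Muller structure from Lemma~\ref{lem:RMplant}. Throughout, set $\cD_h = \cF_{\bar{\cX}}$ and $\cD_{h'} = \cF_{\bar{\cX}}'$ for $h' \neq h$, so that $\cC^{(h)} = \bigotimes_{h' \in [r]} \cD_{h'}$. The length is handled by a direct dimension count: decomposing
\[
(\cC^{(h)})^1 \;=\; \bigoplus_{h'_0 \in [r]} (\cD_{h'_0})^1 \otimes \bigotimes_{h'' \neq h'_0}(\cD_{h''})^0,
\]
each summand is sandwiched between $|\bar{E}|(|\bar{V}|\Delta)^{r-1}$ from above and $|\bar{E}|(|\bar{V}|\ell')^{r-1}/O_r(1)$ from below, so substituting $|\bar{V}| = \Theta(q^{\nu+t})$, $|\bar{E}| = \Theta(q^{\nu+\delta+t})$, $\Delta \leq q^\delta$ from Corollary~\ref{cor:expinst} yields $N = q^{rt+O_r(1)}$ (the hidden constant absorbing $r(\nu+\delta)$).

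For the dimension $K$, the Künneth formula restricted to the single summand in which the position-$h$ factor contributes $H^1$ and all other factors contribute $H^0$ gives
\[
\dim H^1(\cC^{(h)}) \;\geq\; \dim H^1(\cF_{\bar{\cX}})\cdot(\dim H^0(\cF_{\bar{\cX}}'))^{r-1}.
\]
Lemma~\ref{lem:RMplant}(\ref{it:im}) exhibits an injection of $\bF_q[X_1,\dots,X_t]^{<\ell'}$ into $Z^0(\cF_{\bar{\cX}}') = H^0(\cF_{\bar{\cX}}')$, giving $\dim H^0(\cF_{\bar{\cX}}') \geq \binom{\ell'+t-1}{t}$. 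For the first factor I would use $\dim H^1(\cF_{\bar{\cX}}) = \dim(\cF_{\bar{\cX}})^1 - \dim B^1(\cF_{\bar{\cX}})$; by Lemma~\ref{lem:RMplant}(\ref{it:imcom}) the subspace $B^1$ is contained in the evaluation on $\bar{E} = E \times \bF_q^t$ of $\bF_q[X_0,\dots,X_t]^{<t(q-1)+|E|-\ell}$. After reducing modulo the vanishing ideal of $\bar{E}$ (so $X_0$ has degree $<|E|$ and each $X_j$ for $j\geq 1$ has degree $<q$), the reflection $(i_0,i_1,\dots,i_t) \mapsto (|E|-1-i_0,\,q-1-i_1,\dots,q-1-i_t)$ bijects the reduced monomials meeting the degree constraint with the complement of the tuples of total degree $<\ell$, yielding the clean codimension bound $\binom{\ell+t}{t+1}$ on $B^1$ inside $(\cF_{\bar{\cX}})^1$. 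Combining these estimates and using $\ell,\ell' = \Theta_r(q^\delta)$ and $t = \Theta(q^\tau)$ with $\tau < \nu$ produces $K \geq q^{rt(\delta-\tau)(1-o(1))} \geq N^{\delta-\nu}$ for all sufficiently large $q$.

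For the distance $D = \min\{d_1(\cC^{(h)}), d^1(\cC^{(h)})\}$ I would iterate Lemma~\ref{lem:proddis} across the $r$ factors. A short induction (each step uses only $\max\{x,y\} \geq y$ to drop the irrelevant ``other'' distance inside the max) shows that the iterated min-max collapses to $d^1(\cC^{(h)}) \geq \min_{h'\in[r]} d^0(\cD_{h'})$ and, dually, $d_1(\cC^{(h)}) \geq \min_{h'\in[r]} d_1(\cD_{h'})$. Applying Lemma~\ref{lem:ssdis} to each individual Sipser--Spielman complex $\cD_{h'}$, with local Reed--Solomon distances of order $\Omega_r(\Delta)$ on both sides (since $\ell, \ell', \Delta-\ell, \Delta-\ell'$ all exceed $\lambda_2(\bar{\Gamma}) \leq \Delta/(50r)$ by Corollary~\ref{cor:expinst}), then gives $d^0(\cD_{h'}) \geq |\bar{V}|/O_r(1)$ and $d_1(\cD_{h'}) \geq \Delta|\bar{V}|/O_r(1)$. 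Hence $D \geq |\bar{V}|/O_r(1) = q^{\nu+t}/O_r(1)$, and since $q = (\log N)^{\Theta_\tau(1)}$ this is $\geq N^{1/r}/(\log N)^{O_{\nu,\delta,r}(1)}$. The locality follows from the standard fact that the boundary of a tensor product has locality at most the sum of the per-factor localities, giving $w^{\cC^{(h)}} \leq r \cdot \max_{h'} w^{\cD_{h'}} \leq 2r\Delta \leq (\log N)^{O_{\nu,\delta,r}(1)}$.

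I expect the most delicate step to be the codimension computation for $B^1(\cF_{\bar{\cX}})$ in the dimension argument: one must carefully reduce multivariate polynomials modulo the full vanishing ideal of $E \times \bF_q^t$ before counting, and verify via the explicit reflection that exactly $\binom{\ell+t}{t+1}$ reduced monomials lie outside the planted Reed--Muller code. All other steps reduce to direct calculations or straightforward applications of the stated lemmas together with the known parameters of Reed--Solomon codes (Lemma~\ref{lem:RSparameters}, Corollary~\ref{cor:SRStoRS}).
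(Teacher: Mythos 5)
Your proposal is correct and follows essentially the same route as the paper: a direct count for $N$, the K\"unneth summand with one $H^1$ factor plus Lemma~\ref{lem:RMplant} for $K$, iterated use of Lemma~\ref{lem:proddis} followed by Lemma~\ref{lem:ssdis} (with $\eta=1/50r$ and local/dual-local RS distances $\geq\ell'+1=\Omega_r(\Delta)$) for $D$, and the sum-of-localities bound for $w$. The only real divergence is the sub-step bounding $\dim H^1(\cF_{\bar{\cX}})$: you bound the codimension of $B^1(\cF_{\bar{\cX}})$ by reducing modulo the vanishing ideal of $E\times\bF_q^t$ and reflecting exponent tuples $(i_0,\dots,i_t)\mapsto(|E|-1-i_0,q-1-i_1,\dots,q-1-i_t)$, giving $\binom{\ell+t}{t+1}$, whereas the paper exhibits the explicit monomials $X_0^{|E|-1}X_1^{q-1-j_1}\cdots X_t^{q-1-j_t}$ and gets $\binom{\ell+t}{t}$ independent classes; both rest on the same type-\ref{it:imcom} containment of $B^1$, your count is valid (all monomials of degree $<\ell$ are reduced since $\ell\leq\min\{|E|,q\}$) and in fact marginally stronger, and it feeds into the same final estimate. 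One harmless imprecision: Corollary~\ref{cor:expinst} only gives $t\in(q^\tau,q^{\nu/32})$ (indeed $t=\Theta(q^{\nu/32})$ there), not $t=\Theta(q^\tau)$, but since your dimension chain only needs the upper bound $t\leq q^{\nu/32}\ll q^\nu$, the conclusion $K\geq N^{\delta-\nu}$ is unaffected.
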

\begin{proof}
  By definition, for every $h\in[r]$ the quantum code at level $1$ of $\cC^{(h)}$ has length
  \begin{align}
    \label{eq:codelenbound}
    \begin{split}
      N
      &= \dim{\cC^{(h)}}^1 \\
      &= (r-1)\cdot\dim({\cF'_{\bar{\cX}}}^1)\cdot\dim(\cF_{\bar{\cX}}^0)\cdot\dim({\cF_{\bar{\cX}}'}^0)^{r-2} + \dim(\cF_{\bar{\cX}}^1)\cdot\dim({\cF_{\bar{\cX}}'}^0)^{r-1} \\
      &= (r-1)\cdot|\bar{E}|\cdot(\Delta-\ell)|\bar{V}|\cdot(\ell'|\bar{V}|)^{r-2} + |\bar{E}|\cdot(\ell'|\bar{V}|)^{r-1} \\
      &= |\bar{E}|\cdot\Theta_r(\Delta^{r-1})\cdot|\bar{V}|^{r-1} \\
      &= \Theta_r(|\bar{E}|^r) \\
      &= \Theta_r(|E|^rq^{rt}) \\
      &= q^{rt+O_r(1)},
    \end{split}
  \end{align}
  where the final equality above holds because $1\leq|E|\leq q$.\footnote{As a point of notation, here we let $O_r(1)$ be positive or negative, as long as its absolute value is bounded by a value depending only on $r$.}

  By the K\"{u}nneth formula (Lemma~\ref{lem:kunneth}), this code has dimension
  \begin{align}
    \label{eq:codedimbound}
    \begin{split}
      K
      &= \dim H^1(\cC^{(h)}) \\
      &\geq \dim(H^1(\cF_{\bar{\cX}}))\cdot\dim(H^0(\cF_{\bar{\cX}}'))^{r-1} \\
      &\geq {\ell+t\choose t}\cdot{\ell'-1+t\choose t}^{r-1} \\
      &\geq (\ell/t)^t\cdot(\ell'/t)^{(r-1)t} \\
      &\geq (\Delta/21rt)^{rt} \\
      &\geq (q^{\delta-\nu/2})^{rt} \\
      &\geq N^{\delta-\nu}
    \end{split}
  \end{align}
  The second inequality in~(\ref{eq:codedimbound}) follows from Lemma~\ref{lem:RMplant}. Specifically, the bound on type~\ref{it:imcom} complexes in Lemma~\ref{lem:RMplant} gives that
  \begin{equation*}
    B^1(\cF_{\bar{\cX}}) \subseteq \evl_{E\times\bF_q^t}(\bF_q[X_0,\dots,X_t]^{<t(q-1)+|E|-\ell}),
  \end{equation*}
  which implies that the evaluations on $E\times\bF_q^t$ of the ${\ell+t\choose t}$ monomials $X_0^{|E|-1}X_1^{q-1-j_1}\cdots X_t^{q-1j_t}$ for $(j_i\in\{0,\dots,\ell\})_{i\in[t]}$ with $\sum_{i=1}^tj_i\leq\ell$ generate linearly independent cohomology classes in $H^1(\cF_{\bar{\cX}})=\bF_q^{E\times\bF_q^t}/B^1(\cF_{\bar{\cX}})$. Similarly, the bound on type~\ref{it:im} complexes in Lemma~\ref{lem:RMplant} implies that
  \begin{equation*}
    \dim(H^0(\cF_{\bar{\cX}}')) = \dim(Z^0(\cF_{\bar{\cX}}')) \geq \dim(\bF_q[X_1,\dots,X_t]^{<\ell}) = {\ell-1+t\choose t}.
  \end{equation*}
  The fifth inequality in~(\ref{eq:codedimbound}) follows because by definition, for fixed $\nu,\delta,r$ we have $\Delta=q^{\delta-o(1)}$ and $t\leq q^{\nu/32}$. The sixth inequality in~(\ref{eq:codedimbound}) then follows from~(\ref{eq:codelenbound}).

  By Lemma~\ref{lem:proddis}, the quantum code at level $1$ of $\cC^{(h)}$ has distance
  \begin{align*}
    D
    &= \min\{d^1(\cC^{(h)}),d_1(\cC^{(h)})\} \\
    &\geq \min\{d^0(\cF_{\bar{\cX}}),d^0(\cF_{\bar{\cX}}'),d_1(\cF_{\bar{\cX}}),d_1(\cF_{\bar{\cX}}')\}.
  \end{align*}
  By Lemma~\ref{lem:RSparameters} and Corollary~\ref{cor:SRStoRS}, all of the local codes $\ker(h_{\bar{v}}),\ker(h_{\bar{v}}')$ and dual local codes $\im(h_{\bar{v}}^\top),\im({h_{\bar{v}}'}^\top)$ used to define $\cF$ and $\cF'$ have distance at least $\min\{\ell+1,\Delta-\ell+1,\ell'+1,\Delta-\ell'+1\}=\ell'+1\geq\Delta/21r$. Therefore because $\eta=1/50r$, it follows by Lemma~\ref{lem:ssdis} that
  \begin{align*}
    D
    &\geq \left(\frac{\Delta}{21r}-\eta\Delta\right)\cdot\frac{1}{\Delta}\cdot|\bar{V}| \\
    &\geq \frac{\Delta}{50r}\cdot\frac{1}{\Delta}\cdot\frac{2}{\Delta}\cdot|\bar{E}| \\
    &\geq \frac{q^{t-\delta}}{25r} \\
    &\geq \frac{N^{1/r}}{\log(N)^{O_{\nu,\delta,r}(1)}},
  \end{align*}
  where the fourth inequality above holds by~(\ref{eq:codelenbound}), so that in particular $\log(N)\geq t\geq q^\tau=q^{\Theta_{\nu,\delta,r}(1)}$, and thus every polynomial $\poly(q)=q^{O_{\nu,\delta,r}(1)}$ is bounded above by $\log(N)^{O_{\nu,\delta,r}(1)}$.

  By the definition of a chain complex tensor product, we have the locality bound
  \begin{equation*}
    w^{\cC^{(h)}} \leq w^{\cF_{\bar{\cX}}}+(r-1)w^{\cF_{\bar{\cX}}'}.
  \end{equation*}
  Meanwhile, by definition we have
  \begin{equation}
    \label{eq:locbound}
    w^{\cF_{\bar{\cX}}},w^{\cF_{\bar{\cX}}'}\leq 2\Delta.
  \end{equation}
  Specifically, each basis element of $\cF_{\bar{\cX}}^0$ or of ${\cF_{\bar{\cX}}'}^0$ lies within $\cF_{\bar{v}}$ or $\cF'_{\bar{v}}$ respectively for some vertex $\bar{v}\in\bar{\Gamma}$, and hence such a basis element can only be incident to (i.e.~have a nonzero coboundary matrix element with) the $\Delta$ basis elements of $\cF_{\bar{\cX}}^1=\bF_q^{\bar{E}}={\cF_{\bar{\cX}}'}^1$ corresponding to the $\Delta$ edges incident to vertex $\bar{v}$. Similarly, each basis element of $\cF_{\bar{\cX}}^1=\bF_q^{\bar{E}}={\cF_{\bar{\cX}}'}^1$ corresponding to some edge $\bar{e}\in\bar{E}$ can only be incident in $\cF_{\bar{\cX}}$ or $\cF_{\bar{\cX}}'$ to the $\leq\Delta$ basis elements of $\cF_{\bar{v}}$ or $\cF_{\bar{v}}'$ respectively for the two vertices $\bar{v}$ in the edge $\bar{e}$. Thus
  \begin{align*}
    w^{\cC^{(h)}}
    &\leq r\cdot 2\Delta \leq 2rq^\delta \leq \log(N)^{O_{\nu,\delta,r}(1)},
  \end{align*}
  as desired.
\end{proof}

\subsection{Definition of Multilinear Form}
\label{sec:zetadef}
In this section, we define the multilinear form $\zeta$ in Theorem~\ref{thm:qldpcmain}, and we bound the locality $w^\zeta$ of $\zeta$.
Here we provide a formal presentation; for more intuition regarding the definition of $\zeta$, the reader is referred to Appendix~\ref{sec:ciformintuition}.

As described in (the $G=\{0\}$ case of) Example~\ref{example:graphproduct}, our complex $\cC^{(h)}$ for $h\in[r]$ is equal to $\cG^{(h)}_{\cY}$ for $\cY=\bar{\cX}^{\otimes r}$ and
\begin{align*}
  \cG^{(h)}_{x_1\times\cdots\times x_r} &= \cF'_{x_1}\otimes\cdots\otimes\cF'_{x_{h-1}}\otimes\cF_{x_h}\otimes\cF'_{x_{h+1}}\otimes\cdots\otimes\cF'_{x_r}
\end{align*}
with the local coefficient maps $\cG^{(h)}_{y'\leftarrow y}$ for $y'\triangleright y$ defined as in Example~\ref{example:graphproduct}. The reader is also referred to Example~\ref{example:graphproduct} for an interpretation of $\cY$ as a ``cubical complex.'' In particular, we will sometimes refer to elements of $Y(0)$ as ``vertices,'' edges of $Y(1)$ as ``edges,'' and elements of $Y(i)$ for $i\in[r]$ as ``$i$-dimensional faces/cubes.''

To begin, let
\begin{equation}
  \label{eq:adef}
  a=\lfloor\ell/10rt\rfloor,
\end{equation}
and fix arbitrary sets $A_0\subseteq E\setminus\{0\}$ and $A_i\subseteq\bF_q^*$ for $i\in[t]$ such that $|A_0|=1$ and $|A_i|=a$. Recall here that $E\subseteq\bF_q$ is the edge set of the graph $\Gamma$ as defined in Section~\ref{sec:expconstruct}. Let
\begin{equation}
  \label{eq:Adef}
  A=A_0\times A_1\times\cdots\times A_t.
\end{equation}

We now use this set $A$ to define a linear functional $\alpha:\bF_q^{Y(r)}\rightarrow\bF_q$ as follows. Given $c\in\bF_q^{Y(r)}$, because $Y^{(r)}=(E\times\bF_q^t)^r$ where $E\subseteq\bF_q$, there exists a unique polynomial $g(U_i^{(h)})_{i\in\{0,\dots,t\}}^{h\in[r]}\in\bF_q[(U_i^{(h)})_{i\in\{0,\dots,t\}}^{h\in[r]}]$ that has degree $<|E|$ in each variable $U_0^{(h)}$, has degree $<q$ in each variable $U_i^{(h)}$ for $i\in[t]$, and has $g(u)=c_u$ for every $u\in Y^{(r)}$. Let $g'\in\bF_q[(U_i^{(h)})_{i\in\{0,\dots,t\}}^{h\in[r]}]$ be the polynomial obtained from $g$ by removing (i.e.~zeroing out the coefficient of) every monomial $\prod_{i,h}(U_i^{(h)})^{j_i^{(h)}}$ in $g$ such that for some $h\in[r]$, it holds that $\sum_{i=0}^tj_i^{(h)}\geq t(q-1)+|E|-\ell+(r-1)\ell'$. Then define $\alpha(c)=\sum_{u\in A^r}g'(u)$. By construction $\alpha:\bF_q^{Y(r)}\rightarrow\bF_q$ is a linear functional, so it has the form $\alpha(c)=\sum_{y\in Y(r)}\alpha_yc_y$ for some coefficients $\alpha_y\in\bF_q$ for $y\in Y(r)$. Thus we can also view $\alpha=(\alpha_y)_{y\in Y(r)}$ as a vector in $\bF_q^{Y(r)}$.

We let
\begin{equation}
  \label{eq:zetadef}
  \zeta(f^{(1)},\dots,f^{(r)}) = \alpha(\xi(f^{(1)},\dots,f^{(r)}))
\end{equation}
for a multilinear function
\begin{equation*}
  \xi:(\cG_{\cY}^{(1)})^1\times\cdots\times(\cG_{\cY}^{(r)})^1 \rightarrow \bF_q^{Y(r)}
\end{equation*}
that we will now define. This function $\xi$ will bear some resemblance to the cup product from algebraic topology. More topological intuition is given in Appendix~\ref{sec:ciformintuition}.

To define $\xi$, we first define a ``local'' $r$-multilinear form over $\bZ$
\begin{equation*}
  \xi^{\loc}:({\cY^{\loc}}^1)^r \rightarrow {\cY^{\loc}}^r=\bZ^{Y^{\loc}(r)}=\bZ,
\end{equation*}
which acts on 1-cochains of the $O_r(1)$-sized ``local'' boolean hypercube incidence complex $\cY^{\loc}$ defined in Example~\ref{example:graphproduct}. We will then define $\xi$ globally by applying $\xi^{\loc}$ locally within each $r$-dimensional cube in $Y(r)$.

For an edge $e\in Y^{\loc}(1)$, we define the \textbf{direction} $\direc(e)\in[r]$ to equal the position of the unique '$*$' in $T(e)$, so that $T(e)_{\direc(e)}=*$ and $T(e)_i\in\{0,1\}$ for every $i\in[r]\setminus\{\direc(e)\}$. Similarly, for $e\in Y(1)$, we let $\direc(e)=\direc(T(e))$.

For a permutation $\pi\in \cS_r$ let $(e^1(\pi),\dots,e^r(\pi))\in Y^{\loc}(1)^r$ denote the unique length-$r$ path in the $r$-dimensional boolean hypercube starting at vertex $0^r$ and ending at vertex $1^r$, such that the $i$th edge $e^i(\pi)$ points in direction $\direc(e^i(\pi))=\pi(i)$. Then for $(f^{(1)},\dots,f^{(r)})\in({\cY^{\loc}}^1)^r$, we define
\begin{equation}
  \label{eq:xilocdef}
  \xi^{\loc}(f^{(1)},\dots,f^{(r)}) = \sum_{\pi\in \cS_r}\sgn(\pi)\cdot f^{(1)}_{e^1(\pi)}\cdots f^{(r)}_{e^r(\pi)}.
\end{equation}
Note that by replacing $\bZ^{Y(i)}$ with $\bF_q^{Y(i)}$ for $i=1,r$, we may also view $\xi^{\loc}$ as a bilinear form over $\bF_q$, and $\xi^{\loc}$ commutes with the natural ring homomorphism $\bZ\rightarrow\bF_q$.

Given $f^{(h)}\in(\cG_{\cY}^{(h)})^i$ and $y\in Y(j)$ for $i\leq j$, we define $f^{(h)}|_y\in(\cG^{(h)}_y)^{Y_{\preceq y}(i)}$ by $(f^{(h)}|_y)_{y'}=\cG^{(h)}_{y\leftarrow y'}f^{(h)}_{y'}$. In particular, for a 1-cochain $f^{(h)}\in(\cG_{\cY}^{(h)})^1$ and an $r$-dimensional cube $y\in Y(r)$, then $f^{(h)}|_y\in\bF_q^{Y_{\preceq y}(1)}$ assigns the value $\cG^{(h)}_{y\leftarrow y'}f^{(h)}_{y'}\in\bF_q$ to each edge $y'$ in the $r$-dimensional cube $y$. The type function $T:Y\rightarrow\{0,1,*\}^r=Y^{\loc}$ from Example~\ref{example:graphproduct} induces a natural isomorphism $T|_{Y_{\preceq y}}:Y_{\preceq y}\rightarrow Y^{\loc}$, which we can apply component-wise to $f^{(h)}|_y$ to obtain $T(f^{(h)})|_y\in(\cY^{\loc})^1$. We also let $e^i_y(\pi)=T|_{Y\preceq y}^{-1}(e^i(y))$. Then for $y\in Y(r)$ and $(f^{(1)},\dots,f^{(r)})\in(\cG_{\cY}^{(1)})^1\times\cdots\times(\cG_{\cY}^{(r)})^1$, we define
\begin{align*}
  \begin{split}
    \xi^{\loc}_y(f^{(1)}|_y,\dots,f^{(r)}|_y)
    &= \xi^{\loc}(T(f^{(1)}|_y),\dots,T(f^{(r)}|_y)) \\
    &= \sum_{\pi\in\cS_r}\sgn(\pi) \cdot (f^{(1)}|_y)_{e^1_y(\pi)}\cdots (f^{(r)}|_y)_{e^r_y(\pi)},
  \end{split}
\end{align*}
and we let
\begin{align}
  \label{eq:xidef}
  \xi(f^{(1)},\dots,f^{(r)})_y
  &= \xi^{\loc}_y(f^{(1)}|_y,\dots,f^{(r)}|_y).
\end{align}

Now that we have defined $\zeta$, we may bound its locality (in the sense of Definition~\ref{def:cobinv}).

\begin{lemma}
  \label{lem:zetaloc}
  The multilinear form $\zeta$ has locality $w^\zeta\leq q^{O_r(1)}\leq \log(N)^{O_{\nu,\delta,r}(1)}$.
\end{lemma}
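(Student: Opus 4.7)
The plan is to bound $w^\zeta$ by directly estimating, for a fixed basis element $a^{(h)}\in C^{(h)}(1)$, the number of $r$-tuples $(a^{(1)},\dots,a^{(r)})\in C^{(1)}(1)\times\cdots\times C^{(r)}(1)$ with $a^{(h)}$ in its $h$-th slot for which $\zeta(\ind{a^{(1)}},\dots,\ind{a^{(r)}})\neq 0$. Since $\zeta=\alpha\circ\xi$, a necessary condition is that $\xi(\ind{a^{(1)}},\dots,\ind{a^{(r)}})_y\neq 0$ for at least one $y\in Y(r)$. By~(\ref{eq:xidef}), each term in the permutation sum defining $\xi^{\loc}_y$ is the product of $(f^{(i)}|_y)_{e^i_y(\pi)}$ across $i$, where each factor comes from evaluating $\ind{a^{(i)}}|_y$ on a specific edge of $y$. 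Recalling that a basis element $a^{(i)}$ corresponds to a pair $(y_0^{(i)},\mathbf{b}^{(i)})$ with $y_0^{(i)}\in Y(1)$ and $\mathbf{b}^{(i)}$ a basis element of the local tensor space $\cG^{(i)}_{y_0^{(i)}}$, the contribution $(\ind{a^{(i)}}|_y)_{y'}=\cG^{(i)}_{y\leftarrow y_0^{(i)}}\mathbf{b}^{(i)}\cdot\mathbf 1_{y'=y_0^{(i)}}$ vanishes unless $y'=y_0^{(i)}$. Hence a nonzero $y$-component forces the edges $\{y_0^{(i)}\}_{i\in[r]}$ to be the $r$ edges of some path through $y$ given by a permutation, i.e., they all lie in $y$ and occupy $r$ distinct directions $\direc(y_0^{(i)})$ within the $r$-cube~$y$.

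The counting then proceeds in three steps. First, fix $a^{(h)}$, so $y_0^{(h)}$ is fixed and lies in some direction $k\in[r]$. An $r$-cube $y=y_1\times\cdots\times y_r\in Y(r)$ containing $y_0^{(h)}$ as a direction-$k$ edge is determined by choosing, for each $j\neq k$, an edge $y_j\in\bar E$ of $\bar\Gamma$ incident to the corresponding vertex coordinate of $y_0^{(h)}$; since $\bar\Gamma$ is $\Delta$-regular, at most $\Delta^{r-1}$ such cubes $y$ exist. Second, for each such $y$, the remaining $r-1$ edges $y_0^{(i)}$ ($i\neq h$) must be chosen as direction-$k_i$ edges of $y$ for $r-1$ distinct directions $k_i\in[r]\setminus\{k\}$; the number of direction assignments is $(r-1)!$, and each direction contains $2^{r-1}$ edges in $y$, giving at most $(r-1)!\cdot 2^{(r-1)^2}$ choices. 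Third, for each fixed $y_0^{(i)}$ with $i\neq h$, the local space $\cG^{(i)}_{y_0^{(i)}}$ is a tensor product of $r-1$ factors of dimension $\ell'$ or $\Delta-\ell$ (with the remaining factor being $\bF_q$), hence has dimension at most $\Delta^{r-1}$, so there are at most $\Delta^{r-1}$ choices for $\mathbf{b}^{(i)}$.

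Multiplying these counts yields at most
\begin{equation*}
  \Delta^{r-1}\cdot(r-1)!\cdot 2^{(r-1)^2}\cdot(\Delta^{r-1})^{r-1} \;=\; O_r(\Delta^{r^2-r})
\end{equation*}
valid tuples, which bounds $w^\zeta$. Since $\Delta\leq q^\delta$ by Corollary~\ref{cor:expinst} and $\log N\geq t\geq q^\tau=q^{\Theta_{\nu,\delta,r}(1)}$ (as observed in the proof of Lemma~\ref{lem:basicparam}), this gives $w^\zeta\leq q^{O_r(1)}\leq\log(N)^{O_{\nu,\delta,r}(1)}$. The only nontrivial step is the geometric observation in the first paragraph that a nonzero $\xi$-component on $y$ pins all the $y_0^{(i)}$ to lie as edges of $y$ in distinct directions, after which the analysis is essentially routine combinatorics on the cubical complex $\cY$; no cancellation arguments are needed because we only seek an upper bound on the number of tuples at which $\zeta$ is nonzero.
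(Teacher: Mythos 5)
Your argument is correct and follows essentially the same route as the paper's proof: fix the basis element $a^{(h)}$, observe that a nonzero value of $\zeta$ forces the supporting edges of all $r$ basis elements to lie below a common $r$-cube $y\in Y(r)$, and then count the $\Delta^{r-1}$ cubes above the fixed edge, the $O_r(1)$ edges within each cube, and the $\leq\Delta^{r-1}$ local basis elements per edge, concluding via $\log N\geq q^{\Theta_{\nu,\delta,r}(1)}$. Your only deviation is a marginally tighter in-cube count (using the distinct-direction/path structure of $\xi^{\loc}$ to get $(r-1)!\cdot 2^{(r-1)^2}$ rather than the paper's $(r\cdot 2^{r-1})^{r-1}$), which does not change the $q^{O_r(1)}$ bound.
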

\begin{proof}
  For some $h\in[r]$, let $b^{(h)}\in C^{(h)}(1)$ be a 1-cochain basis element of $\cC^{(h)}=\cG_{\cY}^{(h)}$, meaning that $b^{(h)}$ is one of the $\leq\Delta^{r-1}$ basis elements of the local coefficient space $\cG^{(h)}_{e^{(h)}}$ for some edge $e^{(h)}\in Y(1)$. Then for $(b^{(i)}\in C^{(i)}(1))_{i\in[r]\setminus\{h\}}$, we can only have $\zeta(b^{(1)},\dots,b^{(r)})\neq 0$ if each $b^{(i)}$ for $i\neq h$ corresponds to one of the $\leq\Delta^{r-1}$ basis elements of $\cG^{(i)}_{e^{(i)}}$ for some edges $(e^{(i)}\in Y(1))_{i\in[r]\setminus\{h\}}$ that satisfy $e^{(1)},\dots,e^{(r)}\prec y$ for some $y\in Y(r)$. As there are $\Delta^{r-1}$ faces $y\in Y(r)$ with $y\succ e^{(h)}$, and there are $r\cdot 2^{r-1}$ edges $e\prec y$, it follows that there are $\leq\Delta^{r-1}\cdot(r\cdot 2^{r-1}\cdot\Delta^{r-1})^{r-1}\leq q^{O_r(1)}$ choices of $(b^{(i)}\in C^{(i)}(1))_{i\neq h}$ that could possibly yield $\zeta(b^{(1)},\dots,b^{(r)})\neq 0$, where here we recall that $\Delta\leq q$. Hence $w^\zeta\leq q^{O_r(1)}$. Meanwhile, as observed previously, (\ref{eq:codelenbound}) implies that $q^{O_r(1)}\leq\log(N)^{O_{\nu,\delta,r}(1)}$, which yields the desired bound on $w^\zeta$ in the lemma statement.
\end{proof}



\subsection{Definition of 1-Cohomology Subspaces}
\label{sec:Hpdef}
In this section, we define the $r$-tuple $H'=({H^{(h)}}'\subseteq H^1(\cC^{(h)}))_{h\in[r]}$ of 1-cohomology subspaces in Theorem~\ref{thm:qldpcmain}. In the subsequent sections, we will show that $\zeta$ is coboundary-invariant on $H'$, and we will bound the subrank of $\zeta_{H'}$.

Define a monomial $M\in\bF_q[U_0,\dots,U_t]$ by
\begin{equation}
  \label{eq:Mdef}
  M(U_0,\dots,U_t) = U_0^{|E|-1}U_1^{q-\lfloor\ell/2t\rfloor}\cdots U_t^{q-\lfloor\ell/2t\rfloor}.
\end{equation}
Recalling from~(\ref{eq:adef}) that $a=\lfloor\ell/10rt\rfloor$, we then define sets of monomials $L^0,L^1\subseteq\bF_q[U_0,\dots,U_t]$ of size $|L|=|L'|=a^t$ by
\begin{align*}
  L^0 &= \{U_1^{j_1}\cdots U_t^{j_t}:0\leq j_i\leq a-1\;\forall i\in[t]\} \\
  L^1 &= \{M\cdot M':M'\in L^0\}.
\end{align*}
Define $\iota^0:\bF_q[U_0,\dots,U_t]^{<\ell}\rightarrow\cF_{\bar{\cX}}^0$ as in Lemma~\ref{lem:RMplant} (with the variables $U_i$ replacing $X_i$), and define $\iota^1:\bF_q[U_0,\dots,U_t]\rightarrow\cF_{\bar{\cX}}^1$ by $\iota^1(f)=\evl_{E\times\bF_q^t}(f)$. For $h\in[r]$, we then define
\begin{align*}
  {H^{(h)}}'
  &= \spn\left\{\bigotimes_{i\in[r]}\iota^{\1_{h=i}}(f_i)+B^1(\cG_{\cY}^{(h)}):f_i\in L^{\1_{h=i}}\;\forall i\in[r]\right\}.
\end{align*}
For $h\in[r]$, to see that ${H^{(h)}}'$ is a well-defined subspace of $H^1(\cC^{(h)})$, it suffices to show that $\bigotimes_{i\in[r]}\iota^{\1_{h=i}}(f_i)\in Z^1(\cG_{\cY}^{(h)})$ for every $(f_i\in L^{\1_{h=i}})_{i\in[r]}$. For this purpose, for every $i\neq h$, then $f_i$ is a monomial of total degree $<ta<\ell$, so item~\ref{it:im} of Lemma~\ref{lem:RMplant} implies that $\iota^0(f_i)\in Z^0(\cF_{\bar{\cX}}')$. Meanwhile, for $i=h$ by definition $\iota^1(f_h)\in Z^1(\cF_{\bar{\cX}})$, as $Z^1(\cF_{\bar{\cX}})=\cF_{\bar{\cX}}^1$ because $\cF_{\bar{\cX}}$ is a 1-dimensional complex. Therefore $\bigotimes_{i\in[r]}\iota^{\1_{h=i}}(f_i)$ must lie in the space $Z^1(\cG_{\cY}^{(h)})$ of 1-cochains of the product complex, as desired; this conclusion follows directly from the definition of tensor products of chain complexes, and can alternatively be seen from the K\"{u}nneth formula (Lemma~\ref{lem:kunneth}).

\subsection{Proof of Coboundary-Invariance}
We now show that $\zeta$ is coboundary-invariant on $H'$:

\begin{lemma}
  \label{lem:cobinv}
  The multilinear form $\zeta$ defined in Section~\ref{sec:zetadef} is coboundary-invariant on the tuple $H'$ defined in Section~\ref{sec:Hpdef}.
\end{lemma}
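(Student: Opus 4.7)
The proof strategy combines a Leibniz-type (Stokes) identity for the antisymmetric cup product $\xi$ with the polynomial-truncation structure of the functional $\alpha$.

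By multilinearity of $\zeta$, coboundary-invariance on $H'$ reduces to showing that for every $h \in [r]$, every $b \in (\cC^{(h)})^0$, and every collection of cocycles $z^{(i)} \in Z^1(\cC^{(i)})$ for $i \in [r] \setminus \{h\}$,
\[
  \zeta(z^{(1)}, \ldots, z^{(h-1)}, \delta^{\cC^{(h)}} b, z^{(h+1)}, \ldots, z^{(r)}) = 0.
\]
Notice that this statement does not require the $z^{(i)}$ to represent classes in ${H^{(i)}}'$; in fact we aim to prove coboundary-invariance on the entire cohomology $\prod_i H^1(\cC^{(i)})$, which subsumes coboundary-invariance on $H'$.

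The local form $\xi^{\loc}$ of~(\ref{eq:xilocdef}) is the discrete analogue of the top wedge product on an $r$-cube, and its antisymmetric structure encodes a combinatorial Leibniz identity: on each $r$-cube $y^* \in Y(r)$, replacing one input of $\xi^{\loc}_{y^*}$ by a local coboundary produces a signed sum of contributions localized to the $(r-1)$-dimensional faces of $y^*$ in direction $h$. Expanding the tensor-product coboundary $\delta^{\cC^{(h)}} b = \sum_{k \in [r]} \pm \delta_k b$ as a sum over the $r$ tensor-product directions and globalizing the local identity over all $y^* \in Y(r)$, the cocycle conditions $\delta^{\cC^{(i)}} z^{(i)} = 0$ for $i \neq h$ together with the antisymmetric sign structure of $\xi^{\loc}$ eliminate all but the direction-$h$ contributions. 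What remains is a sum indexed by $\bar v \in \bar V$ of coboundaries of $\cF_{\bar{\cX}}$, which by Lemma~\ref{lem:RMplant}\ref{it:imcom} factor into the explicit form $f_{\bar v}(X_0) \cdot g_{\bar v}(X_0, \ldots, X_t)$ with $g_{\bar v}$ the line-indicator polynomial of total degree $t(q-1)$ associated to the affine line $\bar v$, and $f_{\bar v}$ an arbitrary univariate polynomial of degree $< |E| - \ell$.

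To finish, we verify that $\alpha$ annihilates these contributions. Under the polynomial identification of $\bF_q^{Y(r)}$ with multivariate polynomials of bounded degree, the multiplication property of Reed-Solomon codes (Lemma~\ref{lem:RSmult}) applied to the tensor structure expresses the cup product of the $(z^{(i)})_{i \neq h}$ against the coboundary factor $f_{\bar v} \cdot g_{\bar v}$ in the $h$-block of variables as a polynomial whose $h$-block total degree is at most $t(q-1) + |E| - \ell - 1 + (r-1) \ell'$, with the line-indicator $g_{\bar v}$ responsible for the $t(q-1)$ contribution. Combined with the specific truncation threshold $t(q-1) + |E| - \ell + (r-1)\ell'$ in the definition of $\alpha$, and the choice $|A_0| = 1$ (so that the $U_0^{(h)}$-direction is evaluated at a single point inside $E \setminus \{0\}$, reducing the line indicator to a point indicator in the remaining $t$ variables that is supported outside $A_1 \times \cdots \times A_t$ for a generic but explicit choice of the $A_i$), the $A^r$-summation of the truncated polynomial vanishes.

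The main technical obstacle is the second paragraph — carefully unwinding the antisymmetric combinatorial structure of $\xi^{\loc}$ on $r$-cubes with the nontrivial mixed Reed-Solomon local coefficients and using the cocycle conditions on the $(z^{(i)})_{i \neq h}$ to cancel all off-direction contributions, isolating the direction-$h$ coboundary in $\cF_{\bar{\cX}}$. Once this reduction is accomplished, the third paragraph is a direct bookkeeping verification from the explicit polynomial forms of $\iota^0$, $\iota^1$, the truncation in $\alpha$, and the multiplication property of Reed-Solomon codes.
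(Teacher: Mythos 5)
The central gap is your opening claim that you will prove coboundary-invariance on all of $\prod_h H^1(\cC^{(h)})$ and that this subsumes the statement. The truncation-plus-summation functional $\alpha$ only annihilates $\xi(g^{(1)},\dots,g^{(r)})$ when the block-$h^*$ total degree of its interpolating polynomial falls below the threshold $t(q-1)+|E|-\ell+(r-1)\ell'$, and the $(r-1)\ell'$ slack in your own degree count is exactly what the planted structure of $H'$ buys: for the generators $\bigotimes_i\iota^{\1_{h=i}}(f_i)$ with $f_i\in L^0$ for $i\ne h$, the dependence on the off-block variables is a \emph{global} polynomial of total degree at most $t(a-1)\le\ell'$ (this is~(\ref{eq:logform}) in the paper's proof of Lemma~\ref{lem:xilowdeg}). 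A general cocycle has no such global low-degree structure: already for $r=2$, take $z^{(2)}=u\otimes v$ with $u\in Z^0(\cF_{\bar{\cX}}')$ an arbitrary Sipser--Spielman codeword; its pushed-up contribution, viewed as a function of the block-$1$ variables, interpolates to a polynomial of degree as large as $t(q-1)+|E|-1$, above the truncation threshold, so $\alpha$ does not kill the coboundary cross-terms and the argument collapses. Nor can one appeal to a Stokes/Leibniz identity abstractly, since $\alpha$ is not the pairing with a closed $r$-chain --- the interpolation-and-truncation step is precisely what breaks that structure. So your first paragraph claims more than this mechanism can deliver, and your third paragraph silently reintroduces the planted hypothesis through the $(r-1)\ell'$ bound; the proposal never actually invokes the definition of $H'$, the monomial sets $L^0,L^1$, or $M$, which is where the paper's proof (restricting to the generators of $H'$ and then proving Lemma~\ref{lem:xilowdeg}) does the real work.

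Two secondary points. First, your Leibniz-type reduction attributes to the cocycle conditions what in the paper follows from the support/product structure of the generators (they are supported on direction-$i$ edges, which is what kills all permutations with $\pi(i)\ne i$), and it does not address the case where two or more slots carry coboundaries: there the paper needs the vertex-type analysis and the sign-reversing involution of Claims~\ref{claim:pathstructure} and~\ref{claim:Yprime} before a single special direction $h^*$ and a single contributing permutation can be isolated, and this is where most of the technical effort lies. Second, the vanishing under $\alpha$ is due purely to the degree truncation; it does not and cannot rely on choosing the $A_i$ ``generically'' so that the evaluation points avoid the relevant lines, since those lines range over $q^{\Theta(t)}$ possibilities --- fortunately no such avoidance is needed once the degree bound of Lemma~\ref{lem:xilowdeg} is in hand.
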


The following lemma is the key result for proving Lemma~\ref{lem:cobinv}

\begin{lemma}
  \label{lem:xilowdeg}
  Let $S\subseteq[r]$ be an arbitrary nonempty subset. Let $(g^{(1)},\dots,g^{(r)})\in(\cG_{\cY}^{(1)})^1\times\cdots\times(\cG_{\cY}^{(r)})^1$ be a tuple satisfying the following:
  \begin{enumerate}
  \item For every $h\in S$, there exists some $f^{(h)}\in(\cG_{\cY}^{(h)})^0$ supported on a single 0-dimensional face $v_h\in Y(0)$ such that $g^{(h)}=\delta_0^{\cG_{\cY}^{(h)}}(f^{(h)})$.
  \item For every $h\in[r]\setminus S$, there exists some $(f^{(h)}_i\in L^{\1_{h=i}})_{i\in[r]}$ such that $g^{(h)}=\bigotimes_{i\in[r]}\iota^{\1_{h=i}}(f^{(h)}_i)$.
  \end{enumerate}
  Then there exists some $h^*\in S$ such that
  \begin{equation}
    \label{eq:xiinprod}
    \xi(g^{(1)},\dots,g^{(r)}) \in (\bF_q^{E\times\bF_q})^{\otimes(h^*-1)}\otimes\evl_{E\times\bF_q^t}(\bF_q[U_0,\dots,U_t]^{<t(q-1)+|E|-\ell+(r-1)\ell'})\otimes(\bF_q^{E\times\bF_q})^{\otimes(r-h^*)}.
  \end{equation}
\end{lemma}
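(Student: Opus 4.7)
The plan is to bound the polynomial degree of $\xi(g^{(1)},\dots,g^{(r)})$ in the direction-$h^*$ variables $U_0^{(h^*)},\dots,U_t^{(h^*)}$ for a suitably chosen (in fact, arbitrary) $h^*\in S$, using a direct case analysis combined with a cancellation argument for the ``bad'' permutations. First, I would use multilinearity in each $f^{(h)}_{v_h}$ (for $h\in S$) to reduce to the case where $f^{(h)}_{v_h}=\bigotimes_{i=1}^r a^{(h)}_i$ is a pure tensor with each $a^{(h)}_i$ supported on a single vertex $x^{(h)}_i\in\bar V$ (the $i$th factor of $v_h$), and expand $\delta_0 f^{(h)}_{v_h}=\sum_{k=1}^r a^{(h)}_1\otimes\cdots\otimes\delta_0 a^{(h)}_k\otimes\cdots\otimes a^{(h)}_r$. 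Substituting into $\xi(g)_y=\sum_\pi\sgn(\pi)\prod_j(g^{(j)}|_y)_{e^j_y(\pi)}$, the support structure forces $\pi(h)=h$ for $h\notin S$ (since $g^{(h)}$ is supported only on direction-$h$ edges), so the sum collapses to permutations $\pi$ of $S$, with the direction $k_h=\pi(h)$ of the coboundary component fixed accordingly.

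Second, I would compute the direction-$h^*$ polynomial form of each summand. Each factor $(g^{(j)}|_y)_{e^j_y(\pi)}$ contributes in direction $h^*$: for $j\notin S$, the monomial $f^{(j)}_{h^*}(U^{(h^*)})\in L^0$ of degree ${<}ta\le\ell'$; for $j\in S$ with $\pi(j)=h^*$, a type-\ref{it:imcom} value of degree ${<}|E|-\ell$ in $u^{(h^*)}_0$ (if $j=h^*$) or a type-\ref{it:im} value of degree ${<}\ell'$ in $u^{(h^*)}_0$ (if $j\neq h^*$), times a line indicator $\1_{u^{(h^*)}\in x^{(j)}_{h^*}}$; and for $j\in S$ with $\pi(j)\neq h^*$, an analogous type-\ref{it:imcom}/type-\ref{it:im} lifted value times a side indicator $\1_{\bar\ver_{b_{h^*}(j)}(u^{(h^*)})=x^{(j)}_{h^*}}$, where $b_{h^*}(j)=\mathbf 1_{\pi^{-1}(h^*)<j}$. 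Each indicator is a polynomial in $U^{(h^*)}$ of degree $t(q-1)$; after rewriting $\1_{u\in x}$ as $\1_{\bar\ver_{s}(u)=x}$ with $s\in\{0,1\}$ the bipartition side of $x$, the ``shape'' of the indicator becomes uniform across $\pi$.

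Third, I would run the case split. In \emph{Case I}, where all lines $x^{(j)}_{h^*}$ for $j\in S$ coincide with some common $L\in\bar V$, the full product of line/side indicators in direction $h^*$ collapses to a single $\1_{u^{(h^*)}\in L}$ of degree $t(q-1)$, and summing the polynomial degrees yields
\begin{equation*}
(|E|-\ell-1)+(|S|-1)(\ell'-1)+(r-|S|)(ta-1)+t(q-1)<t(q-1)+|E|-\ell+(r-1)\ell',
\end{equation*}
which is exactly the claimed bound. In \emph{Case II}, where some two lines $x^{(j_1)}_{h^*}\neq x^{(j_2)}_{h^*}$ differ, the side-indicator product in direction $h^*$ forces $b_{h^*}(j)=s_j$ for every $j\in S\setminus\{\pi^{-1}(h^*)\}$, where $s_j$ is the side of $x^{(j)}_{h^*}$. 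This is a discrete constraint on $\pi$ whose non-vanishing terms all share the same (side-determined) indicator product and polynomial factor; the signed sum $\sum_\pi\sgn(\pi)(\cdots)$ over the set of $\pi$'s consistent with these constraints then reduces to a common factor times $\sum_\pi\sgn(\pi)=0$, using a sign-reversing involution (e.g.\ composition with a transposition of two suitable elements of $S$ that preserves the side constraints) on the set of valid permutations. For $r=2$ this is a single two-term cancellation between $\pi=\mathrm{id}$ and $\pi=(1\;2)$, which one verifies by direct computation.

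The main obstacle will be Case II: rigorously constructing the sign-reversing involution for general $r$ and confirming that the paired permutations give identical indicator products in \emph{every} direction (not just $h^*$), so that the polynomial factors genuinely match and cancel. The combinatorial input is that once $\pi$ is constrained by all the side conditions $b_i(j)=s^{(j)}_i$ coming from every direction $i$ and every $j\in S$ with $\pi(j)\neq i$, the remaining freedom in $\pi$ organizes into orbits on which $\sgn$ sums to zero. I expect this to follow from the antisymmetric structure of $\xi^{\loc}$ combined with the bipartite structure of $\Gamma$, but the bookkeeping is delicate; an alternative route is to recognize $\xi$ as an Alexander--Whitney-style cup product and invoke the Leibniz rule $\xi(g^{(1)},\dots,\delta_0 f,\dots,g^{(r)})=\pm\delta(\xi(\cdots,f,\cdots))$, which would then realize $\xi(g^{(1)},\dots,g^{(r)})$ as a coboundary whose direction-$h^*$ component lies in $B^1(\cF_{\bar\cX})$ by Lemma~\ref{lem:RMplant}\ref{it:imcom}, yielding the bound directly.
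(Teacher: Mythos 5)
The central gap is your choice of $h^*$: you assert it can be arbitrary in $S$, with the ``bad'' configurations (two distinct lines $x^{(j_1)}_{h^*}\neq x^{(j_2)}_{h^*}$) killed by a sign-reversing involution. Both halves fail, already for $r=2$. Take $S=\{1,2\}$, $v_1=u_0\times w$ and $v_2=u_1\times w$ with $u_0,w\in\bar V_0$, $u_1\in\bar V_1$, so $T(v_1)=00$ and $T(v_2)=10$. The swapped permutation contributes nothing, since the edge $e^2(\pi)$ for $\pi=(1\,2)$ has type $*1$, which does not contain $T(v_2)=10$; only the identity survives, and its contribution is generically nonzero, so there is no two-term cancellation. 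Moreover, the surviving term, viewed as a function of the first coordinate $y_1\in E\times\bF_q^t$, is supported on the single point where the two distinct affine lines $u_0$ and $u_1$ meet, and a point indicator on the grid $E\times\bF_q^t$ cannot be the evaluation of any polynomial of total degree below $t(q-1)+|E|-1$, which exceeds $t(q-1)+|E|-\ell+(r-1)\ell'$. Hence the containment \eqref{eq:xiinprod} is genuinely false for $h^*=1$ in this configuration; it holds only for $h^*=2$, the direction in which the two vertices share the component $w$, so that the direction-$h^*$ support is a whole affine line (indicator of degree $t(q-1)$) times low-degree local-code evaluations --- your Case~I. In other words, when two of the lines in direction $h^*$ differ, your Case~II does not cancel; it just means that $h^*$ was the wrong choice.

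What is missing is the structural selection of $h^*$, which is where the paper's work lies. The paper first shows (Claim~\ref{claim:pathstructure}), pointwise in $y$, that if two \emph{consecutive} elements of $S$ have $S$-projected vertex types at Hamming distance $2$, then swapping their images under $\pi$ is a sign-reversing involution and that entry of $\xi$ vanishes; consequently on the support the types $(T(v_h))_{h\in S}$ have consecutive distances at most $1$ and exactly one permutation survives in the sum. Since such a configuration cannot stretch from $0^S$ to $1^S$, there is a coordinate $h^*\in S$ on which all $T(v_h)$, $h\in S$, agree (Claim~\ref{claim:Yprime}); in that direction all components $v_{h,h^*}$ coincide with a single vertex, and only then does your Case~I degree count apply. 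Your fallback via a cup-product Leibniz rule does not rescue the argument either: it would only exhibit $\xi(g^{(1)},\dots,g^{(r)})$ as a top-level coboundary of the product complex, i.e.\ a \emph{sum over all directions} of direction-wise coboundaries, which does not yield membership in the single-direction space of \eqref{eq:xiinprod} for one fixed $h^*$ --- and that single-direction statement is exactly what the definition of $\alpha$ and the coboundary-invariance argument require.
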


\begin{remark}
  \label{remark:equivdegbound}
  An equivalent way to state the condition~(\ref{eq:xiinprod}) is that there is some polynomial in $\bF_q[(U_i^{(h)})_{i\in\{0,\dots,t\}}^{h\in[r]}]$ that agrees with $\xi(g^{(1)},\dots,g^{(r)})$ at all inputs in $Y(r)=(E\times\bF_q^t)^r$, and such that every monomial $\prod_{i,h}(U_i^{(h)})^{j_i^{(h)}}$ with nonzero coefficient has $\sum_{i=0}^tj_i^{(h^*)}<t(q-1)+|E|-\ell+(r-1)\ell'$.
\end{remark}

We first prove coboundary-invariance assuming Lemma~\ref{lem:xilowdeg}, and then we subsequently prove Lemma~\ref{lem:xilowdeg}.

\begin{proof}[Proof of Lemma~\ref{lem:cobinv} assuming Lemma~\ref{lem:xilowdeg}]
  By the multilinearity of $\zeta$, it suffices to show that for every choice of $(f_i^{(h)}\in L^{\1_{h=i}})_{h,i\in[r]}$ and $(b^{(h)}\in B^1(\cG_{\cY}^{(h)}))_{h\in[r]}$ then letting $z^{(h)}=\bigotimes_{i\in[r]}\iota^{\1_{h=i}}(f_i^{(h)})$, we have
  \begin{equation}
    \label{eq:cobinvgoal}
    \zeta(z^{(1)},\dots,z^{(r)}) = \zeta(z^{(1)}+b^{(1)},\dots,z^{(r)}+b^{(r)}).
  \end{equation}
  Indeed, such $z^{(h)}$ of this form by definition comprise a generating set of ${H^{(h)}}'$, so~(\ref{eq:cobinvgoal}) immediately implies the sufficient condition for coboundary invariance described in Definition~\ref{def:cobinv}. But we may express each $b^{(h)}\in B^1(\cG_{\cY}^{(h)})$ as a sum of coboundaries of the form $\delta_0^{\cG_{\cY}^{(h)}}f^{(h)}$ for some $f^{(h)}\in(\cG_{\cY})^0$ supported on a single vertex $v_h\in Y(0)$. Then by multilinearity, the RHS of~(\ref{eq:cobinvgoal}) equals the LHS $\zeta(z^{(1)},\dots,z^{(r)})$ plus a sum of terms of the form $\zeta(g^{(1)},\dots,g^{(r)})$ for $g^{(1)},\dots,g^{(r)}$ satisfying the conditions in the statement of Lemma~\ref{lem:xilowdeg}; namely, every $g^{(h)}$ either equals some $z^{(h)}$ or some $\delta_0^{\cG_{\cY}^{(h)}}f^{(h)}$ for some $f^{(h)}\in(\cG_{\cY})^0$ supported on a single vertex $v_h\in Y(0)$, with the latter being the case for at least one $h\in[r]$. Thus to proven~(\ref{eq:cobinvgoal}), it suffices to show that every $g^{(1)},\dots,g^{(r)}$ satisfying the conditions in Lemma~\ref{lem:xilowdeg} has $\zeta(g^{(1)},\dots,g^{(r)})=0$. But~(\ref{eq:zetadef}) along with the definition of $\alpha:\bF_q^{Y(r)}\rightarrow\bF_q$ implies that $\zeta$ must vanish on every $g^{(1)},\dots,g^{(r)}$ for which~(\ref{eq:xiinprod}) holds. Hence Lemma~\ref{lem:xilowdeg} implies~(\ref{eq:cobinvgoal}), so $\zeta$ is coboundary-invariant on $H'$, as desired.
\end{proof}
  
\begin{proof}[Proof of Lemma~\ref{lem:xilowdeg}]
  If $\xi(g^{(1)},\dots,g^{(r)})=0$, the result holds trivially. Therefore we will fix $g^{(1)},\dots,g^{(r)}$ such that $\xi(g^{(1)},\dots,g^{(r)})\neq 0$. Under this assumption, we begin by showing the following claim, which characterizes the structure of the vertices $v_h$ for $h\in S$, and provides a simpler expression for $\xi(g^{(1)},\dots,g^{(r)})$.

  Below, we say that elements $h,h'\in S$ are \textbf{consecutive} if there does not exist $h''\in S$ with $h<h''<h'$. Also, we let $Y^{\loc,S} $ denote the $|S|$-dimensional boolean hypercube, and we let $\Pi_S:Y^{\loc}\rightarrow Y^{\loc,S}$ denote the projection given by $\Pi_s(y^{\loc})=y^{\loc}|_S$, so that $\Pi_S$ collapses all directions in $[r]\setminus S$. 

  \begin{claim}
    \label{claim:pathstructure}
    Assume that $\xi(g^{(1)},\dots,g^{(r)})\neq 0$. Then for every $y\in Y(r)$ with $\xi(g^{(1)},\dots,g^{(r)})_y\neq 0$, it holds that $v_h\preceq y$ for every $h\in S$. Also, for every consecutive $h,h'\in S$, the Hamming distance $|\Pi_S(T(v_{h'}))-\Pi_S(T(v_h))|\leq 1$. Furthermore, there exists some permutation $\pi\in\cS_r$ depending only on $(v_h)_{h\in S}$ such that $\pi(h)=h\;\forall h\notin S$, and such that for every $y\in Y(r)$,
    \begin{equation*}
      \xi(g^{(1)},\dots,g^{(r)})_y = \sgn(\pi) \cdot (g^{(1)}|_y)_{e^1_y(\pi)}\cdots (g^{(r)}|_y)_{e^r_y(\pi)}.
    \end{equation*}
    Specifically, $\pi\in\cS_r$ is the unique permutation with $\pi(h)=h\;\forall h\notin S$ such that $T(v_h)\prec e^h(\pi)$ for every $h\in S$.
  \end{claim}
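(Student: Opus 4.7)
The plan is to begin by extracting the support of each $g^{(h)}$. For $h\notin S$, the tensor expression $g^{(h)}=\bigotimes_i\iota^{\1_{h=i}}(f_i^{(h)})$ lies in the direct summand of $(\cG_{\cY}^{(h)})^1$ indexed by type vectors with $*$ in position $h$; equivalently, $g^{(h)}$ vanishes on every 1-face of $Y$ whose direction is not $h$. For $h\in S$, since $g^{(h)}=\delta_0^{\cG_{\cY}^{(h)}}(f^{(h)})$ with $f^{(h)}$ supported on a single vertex $v_h\in Y(0)$, the coboundary $g^{(h)}$ is supported on 1-faces $e\succ v_h$. Hence any $\pi\in\cS_r$ contributing a nonzero term $\sgn(\pi)\prod_h(g^{(h)}|_y)_{e^h_y(\pi)}$ to $\xi(g^{(1)},\dots,g^{(r)})_y$ must satisfy $\pi(h)=h$ for every $h\notin S$ (from the direction constraint) and must have $v_h$ as an endpoint of $e^h_y(\pi)$ for every $h\in S$, in particular $v_h\preceq y$. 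This establishes part~(1) of the claim.

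The next step is to extract clean scalar formulas for the two kinds of factors in the product. For $h\in S$, using the coboundary formula together with the tensor-product sign structure that $\cY^{\loc}=(\cX^{\loc})^{\otimes r}$ inherits, one obtains
\[
(g^{(h)}|_y)_{e^h_y(\pi)} \;=\; (-1)^{T(v_h)_{\pi(h)}}\,\alpha_h\cdot \1_{T(v_h)\in\{T_{h-1}(\pi),T_h(\pi)\}},
\]
where $\alpha_h=(f^{(h)}|_y)_{v_h}\in\bF_q$ depends only on $y$ and $v_h$, not on $\pi$. For $h\notin S$, the crucial step is to exploit item~\ref{it:im} of Lemma~\ref{lem:RMplant}: the local coefficient map satisfies $\cF'_{\bar{e}\leftarrow\bar{v}}(\iota^0(f))_{\bar{v}}=f(\bar{e})$ whenever $\bar{e}\succ\bar{v}$, and analogously $(\iota^1(f))_{\bar{e}}=f(\bar{e})$ for the edge factor. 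Writing $y=\bar{e}_y^{(1)}\times\cdots\times\bar{e}_y^{(r)}$, composing these identities across all $r$ tensor factors yields
\[
(g^{(h)}|_y)_{e} \;=\; \prod_{k=1}^{r} f_k^{(h)}(\bar{e}_y^{(k)}) \;=:\; \beta_h(y)
\]
for every direction-$h$ edge $e$ of $y$. The essential point is that $\beta_h(y)$ is independent of the vertex choices within $y$: both endpoints of $\bar{e}_y^{(k)}$ lie on the line, so either restriction of $f_k^{(h)}$ evaluated at the common point $\bar{e}_y^{(k)}$ produces the same value. Consequently the $h\notin S$ factors pull out of the sum over $\pi$ as a single constant $\prod_{h\notin S}\beta_h(y)$.

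Substituting, the claim reduces to analyzing
\[
\sum_{\pi|_{[r]\setminus S}=\id}\sgn(\pi)\prod_{h\in S}(-1)^{T(v_h)_{\pi(h)}}\,\1_{T(v_h)\in\{T_{h-1}(\pi),T_h(\pi)\}}.
\]
Writing $S=\{h_1<\cdots<h_k\}$ and using $\Pi_S(T_{h_i-1}(\pi))=\{\pi(h_1),\dots,\pi(h_{i-1})\}$ and $\Pi_S(T_{h_i}(\pi))=\{\pi(h_1),\dots,\pi(h_i)\}$, the indicator at index $i$ forces $\Pi_S(T(v_{h_i}))$ to equal one of these two sets; denote by $a_i\in\{0,1\}$ the size-offset. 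A case analysis on consecutive pairs shows that the Hamming distance $d_i=|\Pi_S(T(v_{h_i}))\triangle \Pi_S(T(v_{h_{i-1}}))|$ lies in $\{0,1,2\}$, with the value $d_i=2$ occurring precisely in the configuration $(a_{i-1},a_i)=(0,1)$. In that configuration, the involution $\pi\mapsto\pi':=\pi\circ(h_{i^*-1},h_{i^*})$, applied at the smallest $i^*$ with $d_{i^*}=2$, preserves the full tuple $(a_i)_{i}$ and hence leaves the indicator and cube-sign factors unchanged, while flipping the permutation signature: $\sgn(\pi')=-\sgn(\pi)$. Thus every $\pi$ with some $d_i=2$ is paired with a partner of identical magnitude and opposite sign, so all such contributions cancel. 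This establishes part~(2). In the surviving regime where $d_i\leq 1$ for every consecutive pair, a short induction on $i$ shows that each $\pi(h_i)$ is forced by $(T(v_{h_j}))_{j\leq i+1}$: in the cases $(0,0)$ and $(1,1)$ a single $S$-difference directly reads off $\pi(h_{i-1})$ or $\pi(h_i)$, and in the case $(1,0)$ the value $\pi(h_{i-1})$ is still determined and $\pi(h_i)$ is pinned down by the next step. This yields the unique $\pi$ described in part~(3), and the formula for $\xi_y$ reduces to its single surviving summand.

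The main obstacle will be the sign bookkeeping in the pairwise-cancellation step: both the cubical cobound\-ary sign $(-1)^{T(v_h)_{\pi(h)}}$ (inherited from the tensor product $\cY^{\loc}=(\cX^{\loc})^{\otimes r}$) and the permutation signature $\sgn(\pi)$ must interact so that swapping adjacent positions in $S$ produces genuine cancellation at every distance-2 transition. The clean constant-across-direction property $(g^{(h)}|_y)_e=\beta_h(y)$ for $h\notin S$, which comes directly from the planted Reed--Muller structure of $\iota^0,\iota^1$ in Lemma~\ref{lem:RMplant}, is the load-bearing ingredient: without it, the $h\notin S$ factors would vary with $\pi$ and the $\pi\leftrightarrow\pi'$ pairing would no longer yield equal magnitudes.
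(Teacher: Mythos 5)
Your proposal is correct and follows essentially the same route as the paper's proof: the same support/direction analysis forcing $\pi(h)=h$ for $h\notin S$ and $v_h\preceq y$, the same observation that the non-sign factors are independent of $\pi$ (via the coboundary structure for $h\in S$ and the evaluation property of $\iota^0,\iota^1$ for $h\notin S$), and the same adjacent-transposition involution yielding pairwise cancellation at a distance-$2$ consecutive pair. Your uniqueness step (pinning the prefix sets by induction on the $(a_i)$ configurations) is only a repackaging of the paper's first-divergence contradiction argument, not a genuinely different approach.
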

  \begin{proof}
    For $y\in Y(r)$ with $\xi(g^{(1)},\dots,g^{(r)})_y\neq 0$, by definition
    \begin{align}
      \label{eq:xilocydef}
      \begin{split}
        \xi(g^{(1)},\dots,g^{(r)})_y
        &= \xi^{\loc}_y(g^{(1)}|_y,\dots,g^{(r)}|_y) \\
        &= \xi^{\loc}(T(g^{(1)}|_y),\dots,T(g^{(r)}|_y)) \\
        &= \sum_{\pi\in\cS_r}\sgn(\pi) \cdot (g^{(1)}|_y)_{e^1_y(\pi)}\cdots (g^{(r)}|_y)_{e^r_y(\pi)}.
      \end{split}
    \end{align}
    For $h\in S$, as $g^{(h)}$ is by definition supported on edges containing the vertex $v_h$, then by the assumption that the RHS of~(\ref{eq:xilocydef}) is nonzero, we have for some $\pi\in\cS_r$ that $v_h\prec e_y^h(\pi)\preceq y$. Thus we have shown the first statement in the claim, namely that if $\xi(g^{(1)},\dots,g^{(r)})_y\neq 0$ then every $h\in S$ has $v_h\preceq y$.


    For $h\in[r]\setminus S$, by definition $g^{(h)}$ is supported on edges $e\in Y(1)$ pointing in direction $\direc(e)=h$. Therefore if $\pi(h)\neq h$ for $h\notin S$, the RHS of~(\ref{eq:xilocydef}) vanishes, so we can write
    \begin{align}
      \label{eq:xigoodpi}
      \begin{split}
        \xi(g^{(1)},\dots,g^{(r)})_y
        &= \sum_{\pi\in\cS_r:\pi(h)=h\;\forall h\notin S}\sgn(\pi) \cdot (g^{(1)}|_y)_{e^1_y(\pi)}\cdots (g^{(r)}|_y)_{e^r_y(\pi)}.
      \end{split}
    \end{align}
    


    Assume that $y\in Y(r)$ is such that the sum on the RHS of~(\ref{eq:xigoodpi}) is nonzero, and consider some $\pi\in\cS_r$ with $\pi(h)=h\;\forall h\notin S$ whose associated term in this sum is nonzero. As observed previously, by the definition of $g^{(h)}$ for $h\in S$, it follows that for every $h\in S$, we have $v_h\prec e_y^h(\pi)$. Then $\Pi_S(e^h(\pi))_{h\in S}$ forms a length-$|S|$ path from $0^S$ to $1^S$ in $Y^{\loc,S}$, for which each edge $\Pi_S(e^h(\pi))$ contains vertex $\Pi_S(T(v_h))$. Therefore if $h,h'\in S$ are consecutive elements (meaning $\nexists h''\in S$ with $h<h''<h'$) , then we must have the Hamming distance $|\Pi_S(T(v_{h'}))-\Pi_S(T(v_h))|\leq 2$, as these two vertices lie in consecutive edges in a path.
    
    However, if $|\Pi_S(h')-\Pi_S(h)|=2$, then $\xi(g^{(1)},\dots,g^{(r)})_y=0$. To see that this statement holds, assume for a contradiction that $\xi(g^{(1)},\dots,g^{(r)})_y\neq 0$ and that $|\Pi_S(h')-\Pi_S(h)|=2$ for consecutive $h,h'\in S$. Consider the involution $W=W_{h,h'}$ acting on permutations $\pi\in\cS_r:\pi(h)=h\;\forall h\notin S$ that simply swaps the values of $\pi(h)$ and $\pi(h')$. By definition $\sgn(\pi)=-\sgn(W(\pi))$. Now for every $i\in S$, it holds that
    \begin{equation}
      \label{eq:pathssame}
      (g^{(i)}|_{y})_{e_y^i(\pi)} = (\delta_0^{\cY})_{e_y^i(\pi),v_i} \cdot \cG^{(i)}_{y\leftarrow v_i}(f^{(i)}_{v_i}) = (\delta_0^{\cY})_{e_y^i(W(\pi)),v_i} \cdot \cG^{(i)}_{y\leftarrow v_i}(f^{(i)}_{v_i}) = (g^{(i)}|_{y})_{e_y^i(W(\pi))}
    \end{equation}
    Indeed, the paths $e_y^1(\pi),\dots,e_y^r(\pi)$ and $e_y^1(W(\pi)),\dots,e_y^r(W(\pi))$ agree except for the segment from edge $h$ through edge $h'$, where they take different subpaths from vertex $v_h$ to vertex $v_{h'}$. Hence for every $i\in S\setminus\{h,h'\}$, we have $e_y^i(\pi)=e_y^i(W(\pi))$, and~(\ref{eq:pathssame}) holds. For $i\in\{h,h'\}$, (\ref{eq:pathssame}) holds because $(\delta_0^{\cY})_{e_y^h(\pi),v_h}=(\delta_0^{\cY})_{e_y^h(W(\pi)),v_h}=+1$ and $(\delta_0^{\cY})_{e_y^{h'}(\pi),v_{h'}}=(\delta_0^{\cY})_{e_y^{h'}(W(\pi)),v_{h'}}=-1$, as by Definition~\ref{def:graphinc} and Definition~\ref{def:product}, for $e\in Y(1)$ and $v\in Y(0)$ with $v\prec e$, then $(\delta_0^{\cY})_{e,v}=(-1)^{T(v)_{\direc(e)}}$. Projecting down to the (2-dimensional) square in directions $S'=\{\pi(h),\pi(h')\}$, then $(\Pi_{S'}(e^h(\pi)),\Pi_{S'}(e^{h'}(\pi)))$ and $(\Pi_{S'}(e^h(W(\pi))),\Pi_{S'}(e^{h'}(W(\pi))))$ are the two length-$2$ paths from $00=\Pi_{S'}(T(v_h))$ to $11=\Pi_{S'}(T(v_{h'}))$ in the square $Y^{\loc,S'}$. Hence $T(v_h)_{\direc(e^h(\pi))}=T(v_h)_{\direc(e^h(W(\pi)))}=0$ and $T(v_{h'})_{\direc(e^{h'}(\pi))}=T(v_{h'})_{\direc(e^{h'}(W(\pi)))}=1$, so~(\ref{eq:pathssame}) holds for $i\in\{h,h'\}$. Similarly, for every $i\in[r]\setminus S$, writing $y=y_1\times\cdots\times y_r\in(E\times\bF_q^t)^r=Y(r)$ so that each $y_i\in E\times\bF_q^t=\bar{X}(1)$, we have
    \begin{equation}
      \label{eq:logsame}
      (g^{(i)}|_{y})_{e_y^i(\pi)} = f_1^{(i)}(y_1)\cdots f_r^{(i)}(y_r) = (g^{(i)}|_{y})_{e_y^i(W(\pi))},
    \end{equation}
    where in the middle expression of~(\ref{eq:logsame}) we recall that each $f_j^{(i)}\in\bF_q[U_0,\dots,U_t]$. Specifically, the equalities in~(\ref{eq:logsame}) hold because by definition $g^{(i)}=\bigotimes_{j\in[r]}\iota^{\1_{i=j}}(f_j^{(i)})$, so for every direction-$i$ edge $e=x_1\times\cdots x_r\in Y(1)$, $e\preceq y$ (so each $x_j\in \bar{X}(\1_{i=j})$), then the definition of $\iota^0,\iota^1$ ensures that
    \begin{equation}
      \label{eq:logform}
      (g^{(i)}|_{y})_e = \cG^{(i)}_{y\leftarrow e}g^{(i)}_e = \iota^1(f_i^{(i)})_{y_i}\cdot\prod_{j\in[r]\setminus\{i\}}\cF'_{y_j\leftarrow x_j}(\iota^0(f_j^{(i)})) = f_i^{(i)}(y_i)\cdot\prod_{j\in[r]\setminus\{i\}}f_j^{(i)}(y_j).
    \end{equation}
    Setting $e=e_y^i(\pi)$ and $e=e_y^i(W(\pi))$ in the above equation yields~(\ref{eq:logsame}). Thus we have shown that for consecutive $h,h'\in S$ with $|\Pi_S(T(v_h'))-\Pi_S(T(v_h))|=2$, then for every $i\in[r]$,
    \begin{equation*}
      (g^{(i)}|_{y})_{e_y^i(\pi)} = (g^{(i)}|_{y})_{e_y^i(W(\pi))}.
    \end{equation*}
    Hence because $W$ is an involution on $\{\pi\in\cS_r:\pi(i)=i\;\forall i\notin S\}$ with $\sgn(W(\pi))=-\sgn(\pi)$, it follows from~(\ref{eq:xigoodpi}) that
    \begin{align*}
      \hspace{1em}&\hspace{-1em}\xi(g^{(1)},\dots,g^{(r)})_y \\
                  &= \sum_{\pi\in \cS_r:\pi(i)=i\;\forall i\notin S,\; \sgn(\pi)=+1} ((g^{(1)}|_{y})_{e_y^1(\pi)}\cdots (g^{(r)}|_{y})_{e_y^r(\pi)} - (g^{(1)}|_{y})_{e_y^1(W(\pi))}\cdots (g^{(r)}|_{y})_{e_y^r(W(\pi))}) \\
                  &= 0,
    \end{align*}
    contradicting the assumption that $\xi(g^{(1)},\dots,g^{(r)})_y\neq 0$. Thus indeed for every consecutive $h,h'\in S$, we must have $|\Pi_S(T(v_{h'}))-\Pi_S(T(v_h))|\leq 1$.

    Now assume for a contradiction that there exist two distinct permutations $\pi,\pi':\pi(i)=\pi'(i)=i\;\forall i\notin S$ for which $T(v_i)\prec e^i(\pi),e^i(\pi')\;\forall i\in S$ (which, as shown previously, is a necessary condition for having the terms associated to $\pi,\pi'$ on the RHS of~(\ref{eq:xigoodpi}) to be nonzero). Letting $h\in S$ be the least element such that $\pi(h)\neq\pi(h')$, then applying the projection $\Pi_S$, we see that the paths $(\Pi_S(e^i(\pi)))_{i\in S}$ and $(\Pi_S(e^i(\pi')))_{i\in S}$ from $0^S$ to $1^S$ first diverge at the unique shared vertex of $\Pi_S(e^h(\pi))$ and $\Pi_S(e^h(\pi'))$. Therefore this shared vertex must be $\Pi_S(T(v_h))$. Letting $h'\in S$ be the least element greater than $h$, so that $h,h'\in S$ are consecutive, then because $T(v_{h'})\prec e^{h'}(\pi),e^{h'}(\pi')$, the paths must converge at vertex $\Pi_S(T(v_{h'}))$, which must be the unique shared vertex of edges $\Pi_S(e^{h'}(\pi))$ and $\Pi_S(e^{h'}(\pi'))$. But then $|\Pi_S(T(v_{h'}))-\Pi_S(T(v_h))|=2$, which contradicts the bound $|\Pi_S(T(v_{h'}))-\Pi_S(T(v_h))|\leq 1$ we showed above. Thus there is a unique $\pi\in\cS_r$ with $\pi(i)=i\;\forall i\notin S$ such that $T(v_i)\prec e^i(\pi)\;\forall i\in S$. This $\pi$ also gives the unique term with a nonzero contribution to the sum in~(\ref{eq:xigoodpi}), so
    \begin{align*}
      \xi(g^{(1)},\dots,g^{(r)})_y
      &= \sgn(\pi) \cdot (g^{(1)}|_y)_{e^1_y(\pi)}\cdots (g^{(r)}|_y)_{e^r_y(\pi)},
    \end{align*}
    as desired.
  \end{proof}

  The following claim chooses the desired element $h^*\in S$ described in the lemma statement.

  \begin{claim}
    \label{claim:Yprime}
    Assume that $\xi(g^{(1)},\dots,g^{(r)})\neq 0$. Then there exists some $h^*\in S$ and some ${y^{\loc}}'\in Y^{\loc}(r-1)$ with $b:=T({y^{\loc}}')_{h^*}\in\{0,1\}$ (so that $T({y^{\loc}}')=*^{h^*-1}b*^{r-h^*}$) such that ${y^{\loc}}'\succ T(v_h)$ for every $h\in S$. Furthermore, the set $Y'\subseteq Y(r-1)$ defined by
    \begin{equation*}
      Y' = \{y'\in Y(r-1):T(y')={y^{\loc}}',\; y'\succeq v_h\;\forall h\in S\}.
    \end{equation*}
    is nonempty, and every $y\in Y(r)$ with $y\succ v_h\;\forall h\in S$ lies above a unique element $y'\in Y'$, which is given by $y'=T|_{Y_{\preceq y}}^{-1}({y^{\loc}}')$.
  \end{claim}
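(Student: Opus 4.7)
The plan is to locate the direction $h^*$ via a pigeonhole argument using the Hamming constraints from Claim~\ref{claim:pathstructure}, and then to build the element of $Y'$ by ``collapsing'' in direction $h^*$ any $r$-cube $y$ witnessing $\xi(g^{(1)},\dots,g^{(r)})\neq 0$.

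First I would enumerate $S=\{s_1<\cdots<s_m\}$ and consider the sequence of $S$-projected types $\Pi_S(T(v_{s_1})),\dots,\Pi_S(T(v_{s_m}))\in\{0,1\}^S$. Claim~\ref{claim:pathstructure} guarantees that consecutive entries of this sequence differ in at most one $S$-coordinate, so across the $m-1$ transitions at most $m-1$ of the $m=|S|$ coordinates of $S$ ever get flipped. By pigeonhole some coordinate $h^*\in S$ is never flipped; I then set $b\in\{0,1\}$ to be the common value of $T(v_h)_{h^*}$ for $h\in S$ and ${y^{\loc}}':=*^{h^*-1}\,b\,*^{r-h^*}\in Y^{\loc}(r-1)$. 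The relation ${y^{\loc}}'\succ T(v_h)$ for every $h\in S$ is then immediate, since the only non-$*$ coordinate of ${y^{\loc}}'$ is $h^*$ and $T(v_h)_{h^*}=b$.

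Next, by the standing assumption $\xi(g^{(1)},\dots,g^{(r)})\neq 0$ there exists some $y\in Y(r)$ with $\xi(g^{(1)},\dots,g^{(r)})_y\neq 0$, and Claim~\ref{claim:pathstructure} then gives $y\succ v_h$ for every $h\in S$. Writing $y=y_1\times\cdots\times y_r$ with each $y_i\in\bar{E}$, I define $y':=T|_{Y_{\preceq y}}^{-1}({y^{\loc}}')$, which concretely has $y'_i=y_i$ for $i\neq h^*$ and $y'_{h^*}=\ver_b(y_{h^*})$. The key point is to verify $v_h\preceq y'$ for every $h\in S$: for $i\neq h^*$ this follows from $(v_h)_i\preceq y_i=y'_i$; for $i=h^*$, the relation $v_h\preceq y$ forces $(v_h)_{h^*}\in\{\ver_0(y_{h^*}),\ver_1(y_{h^*})\}$, and the type constraint $T(v_h)_{h^*}=b$ then pins $(v_h)_{h^*}=\ver_b(y_{h^*})=y'_{h^*}$. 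Hence $y'\in Y'$, so $Y'\neq\emptyset$.

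Finally, for any $y\in Y(r)$ with $y\succ v_h\;\forall h\in S$, the restriction $T|_{Y_{\preceq y}}$ is a bijection onto $Y^{\loc}$, so $T|_{Y_{\preceq y}}^{-1}({y^{\loc}}')$ is the unique face of $y$ of type ${y^{\loc}}'$; applying the previous paragraph's argument to this $y$ (rather than to the specific $y$ with $\xi_y\neq 0$) shows it lies in $Y'$, giving the ``lies above a unique element'' statement. The main (mild) obstacle is that the pigeonhole count is tight ($m$ coordinates versus $m-1$ transitions), so one must use the sharp Hamming-distance-$\leq 1$ bound from Claim~\ref{claim:pathstructure} rather than the a priori $\leq 2$; everything else is a routine unwinding of the cubical structure on $Y$ and the identification of faces of an $r$-cube with type vectors in $\{0,1,*\}^r$.
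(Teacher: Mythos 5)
Your proof is correct and follows essentially the same route as the paper's: invoke Claim~\ref{claim:pathstructure}, find a coordinate $h^*\in S$ on which all the $T(v_h)$ agree, take ${y^{\loc}}'=*^{h^*-1}b*^{r-h^*}$, and then use the bijection $T|_{Y_{\preceq y}}:Y_{\preceq y}\to Y^{\loc}$ for nonemptiness and uniqueness. The only (benign) deviation is how $h^*$ is located: you use a pigeonhole on the at most $|S|-1$ coordinates flipped across the consecutive Hamming-distance-$\leq 1$ transitions, whereas the paper uses the monotone-path structure and a triangle inequality to take $h^*$ as the direction of the first or last edge of the path $\pi$; both yield a coordinate on which $T(v_h)_{h^*}$ is constant, which is all that the claim and its downstream use require.
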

  \begin{proof}
    By Claim~\ref{claim:pathstructure}, $(\Pi_S(T(v_h)))_{h\in S}$ is a sequence of $|S|$ points in the $|S|$-dimensional boolean hypercube $Y^{\loc,S}$, such that each consecutive pair of points has Hamming distance $\leq 1$. Furthermore, letting $\pi$ be the permutation given by Claim~\ref{claim:pathstructure}, then $\Pi_S(T(v_h))$ lies on the edge $\Pi_S(e^h(\pi))$ of the length-$|S|$ path $(T(e^h(\pi)))_{h\in S}$ from $0^S$ to $1^S$. 


    Letting $\underline{h}=\min S$ and $\overline{h}=\max S$, it follows by the triangle inequality that $|\Pi_S(v_{\overline{h}})-\Pi_S(v_{\underline{h}})|\leq|S|-1$. Therefore because $|1^S-0^S|=|S|$, it follows that either $\Pi_S(T(v_{\underline{h}}))\neq 0^S$ or that $\Pi_S(T(v_{\overline{h}}))\neq 1^S$. Let
    \begin{equation*}
      h^* = \begin{cases}
        \pi(\underline{h})=\direc(e^{\underline{h}}(\pi)),&\Pi_S(v_{\underline{h}})\neq 0^S \\
        \pi(\overline{h})=\direc(e^{\overline{h}}(\pi),&\Pi_S(v_{\underline{h}})=0^S.
      \end{cases}
    \end{equation*}
    Then $h^*\in S$, every $T(v_h)$ for $h\in S$ lies inside some $(r-1)$-dimensional face ${y^{\loc}}'\in Y^{\loc}(r-1)$ with $T({y^{\loc}}')_{h^*}\in\{0,1\}$. Indeed, if $h^*=\underline{h}$, then every $v_h$ for $h\in S$ has $T(v_h)_{h^*}=1$, so we can let ${y^{\loc}}'=*^{h^*-1}1*^{r-h^*}$. If instead $h^*=\overline{h}$, then every $v_h$ for $h\in S$ has $T(v_h)_{h^*}=0$, so we can let ${y^{\loc}}'=*^{h^*-1}0*^{r-h^*}$.

    By assumption there exists some $y\in Y(r)$ with $\xi(g^{(1)},\dots,g^{(r)})_y\neq 0$, which by Claim~\ref{claim:pathstructure} implies that $y\succ v_h$ for every $h\in S$. Therefore $T|_{Y_{\preceq y}}^{-1}({y^{\loc}}')\in Y'$, so $Y'$ is nonempty. Furthermore, for distinct $y',y''\in Y'$, there cannot exist $y\in Y(r)$ such that $y\succ y',y''$, as then we would have the isomorphism $T|_{Y_{\preceq y}}(y')={y^{\loc}}'=T|_{Y_{\preceq y}}(y'')$, so applying $T|_{Y_{\preceq y}}^{-1}$ gives $y'=y''$.
  \end{proof}

  Fix $\pi$ to be the permutation given by Claim~\ref{claim:pathstructure}. Then by Claim~\ref{claim:pathstructure} and Claim~\ref{claim:Yprime},
  \begin{align}
    \label{eq:isolateinner}
    \begin{split}
      \xi(g^{(1)},\dots,g^{(r)})
      &= \sum_{y'\in Y'}\sum_{y\triangleright y'} \ind{y}\cdot\xi(g^{(1)},\dots,g^{(r)})_y \\
      &= \sgn(\pi) \cdot \sum_{y'\in Y'}\sum_{y\triangleright y'} \ind{y}\cdot(g^{(1)}|_{y})_{e_y^1(\pi)}\cdots (g^{(r)}|_{y})_{e_y^r(\pi)}.
    \end{split}
  \end{align}
  By definition, for $y'\in Y'$ and $y\triangleright y'$,
  \begin{align*}
    (g^{(1)}|_{y})_{e_y^1(\pi)}\cdots (g^{(r)}|_{y})_{e_y^r(\pi)}
    &= \prod_{h\in[r]} \cG^{(h)}_{y\leftarrow e_y^h(\pi)}(g^{(h)}_{e_y^h(\pi)}) \\
    &= \prod_{h\in S} \left((\delta^{\cY}_0)_{e_y^h(\pi),v_h}\cdot\cG^{(h)}_{y\leftarrow v_h}(f^{(h)}_{v_h})\right) \cdot \prod_{h\in[r]\setminus S} \left(f^{(h)}_1(y_1)\cdots f^{(h)}_r(y_r)\right),
  \end{align*}
  where we write $y=y_1\times\cdots\times y_r\in(E\times\bF_q^t)^{\times r}=\bar{E}^{\times r}=\bar{X}(1)^{\times r}=Y(r)$. Note that the second equality above holds by the definition of the $g^{(h)}$ along with~(\ref{eq:logform}). Observe that
  \begin{equation*}
    \prod_{h\in S}(\delta^{\cY}_0)_{e_y^h(\pi),v_h} = \prod_{h\in S}(\delta^{\cY^{\loc}}_0)_{e^h(\pi),T(v_h)} \in \{\pm 1\}
  \end{equation*}
  does not depend on $y$. Therefore letting $s\in\{\pm 1\}$ be the value above, then the innermost sum on the RHS of~(\ref{eq:isolateinner}) becomes
  \begin{align}
    \label{eq:pulloutincidence}
    \sum_{y\triangleright y'}\ind{y}\cdot(g^{(1)}|_{y})_{e_y^1(\pi)}\cdots (g^{(r)}|_{y})_{e_y^r(\pi)}
    &= s\cdot\sum_{y\triangleright y'}\ind{y}\cdot\prod_{h\in S} \cG^{(h)}_{y\leftarrow v_h}(f^{(h)}_{v_h}) \cdot \prod_{h\in[r]\setminus S} \left(f^{(h)}_1(y_1)\cdots f^{(h)}_r(y_r)\right),
  \end{align}
  where above we used~(\ref{eq:logform}) to express $(g^{(h)}_y)_{e^h_y(\pi)}=f^{(h)}_1(y_1)\cdots f^{(h)}_r(y_r)$ for $h\in[r]\setminus S$.
  Now we may write each $v_h=v_{h,1}\times\cdots\times v_{h,r}\in\bar{V}^{\times r}=\bar{X}(0)^{\times r}=Y(0)$, and for $y'\in Y'$ we write $y'=y_1'\times\cdots\times y_r'\in\bar{E}^{\times h^*-1}\times\bar{V}\times\bar{E}^{\times r-h^*}$. Because $y'\succeq v_h$ for each $h\in S$, we must have $y'_{h^*}=v_{h,h^*}$ for very $h\in S$. Thus the $\Delta$ different $r$-dimensional faces $y\triangleright y'$ all have $y_h=y'_h\;\forall h\neq h^*$, and then have any of the $\Delta$ choices of $y_{h^*}\in\bar{E}(y_{h^*}')$.

  Let $\Pi_0$ denote the natural projection from $\bar{\Gamma}$ to $\Gamma$, so that $\Pi_0:\bar{E}=E\times\bF_q^t\rightarrow E$ denotes projection onto the first coordinate, and $\Pi_0:\bar{V}=\bigsqcup_{v\in V}\bF_q^{t+1}/\spn\{(1,\labV(v))\}\rightarrow V$ maps cosets of $\spn\{(1,\labV(v))\}$ to $v$. Then by definition, for every $h\in S$, we have\footnote{Recall that for vertices $v$, here the variables $h_v$ and $h'_v$ denote the matrices from Definition~\ref{def:RMplant} used to define local codes within $\cF$ and $\cF'$ respectively, and in particular are distinct from the variables $h,h^*\in[r]$.}
  \begin{align*}
    \cG^{(h)}_{y\leftarrow v_h}(f^{(h)}_{v_h})
    &= \left(\bigotimes_{i=1}^{h-1}\cF'_{y_i\leftarrow v_{h,i}} \otimes \cF_{y_{h}\leftarrow v_{h,h}} \otimes \bigotimes_{i=h+1}^{r}\cF'_{y_i\leftarrow v_{h,i}}\right)(f^{(h)}_{v_{h}}) \\
    &= \ind{y}^\top\left(\bigotimes_{i=1}^{h-1}{h'_{v_{h,i}}}^\top \otimes h_{v_{h,h}}^\top \otimes \bigotimes_{i=h+1}^{r}{h'_{v_{h,i}}}^\top\right)(f^{(h)}_{v_{h}}) \\
    &= \ind{\Pi_0^{\times r}(y)}^\top\left(\bigotimes_{i=1}^{h-1}{h'_{\Pi_0(v_{h,i})}}^\top \otimes h_{\Pi_0(v_{h,h})}^\top \otimes \bigotimes_{i=h+1}^{r}{h'_{\Pi_0(v_{h,i})}}^\top\right)(f^{(h)}_{v_{h}}).
  \end{align*}
  Now by definition, for every $v\in V$, we have $\im(h_{v}^\top)=\evl_{E(v)}(\bF_q[U_0]^{<|E|-\ell}_{E\setminus E(v)})$, while $\im({h'_{v}}^\top)=\evl_{E(v)}(\bF_q[U_0]^{<\ell'})$. Therefore the vector
  \begin{align*}
    c^{(h)}
    &:= \left(\bigotimes_{i=1}^{h-1}{h'_{\Pi_0(v_{h,i})}}^\top \otimes h_{\Pi_0(v_{h,h})}^\top \otimes \bigotimes_{i=h+1}^{r}{h'_{\Pi_0(v_{h,i})}}^\top\right)(f^{(h)}_{v_{h}}) \\
    &\in \bigotimes_{i=1}^{h-1}\evl_{E(\Pi_0(v_{h,i}))}(\bF_q[U_0]^{<\ell'}) \otimes \evl_{E(\Pi_0(v_{h,h}))}(\bF_q[U_0]^{<|E|-\ell}_{E\setminus E(\Pi_0(v_{h,h}))}) \otimes \bigotimes_{i=h+1}^{r}\evl_{E(\Pi_0(v_{h,i}))}(\bF_q[U_0]^{<\ell'}).
  \end{align*}
  satisfies $c^{(h)}_{\Pi_0^{\times r}(y)}=\cG^{(h)}_{y\leftarrow v_h}(f^{(h)}_{v_h})$ for every $y\triangleright y'$. Therefore for some $n\in\bN$, there exist polynomials $(f^{(h)}_{i,j}\in\bF_q[U_0])_{i\in[r],j\in[n]}$ such that $f^{(h)}_{h,j}\in\bF_q[U_0]^{<|E|-\ell}_{E\setminus E(\Pi_0(v_{h,h}))}$ and $f^{(h)}_{i,j}\in\bF_q[U_0]^{<\ell'}$ for $i\neq h$, and such that for every $y\triangleright y'$ we have
  \begin{equation}
    \label{eq:tensortopoly}
    \cG^{(h)}_{y\leftarrow v_h}(f^{(h)}_{v_h}) = c^{(h)}_{\Pi_0^{\times r}(y)} = \sum_{j\in[n]}f^{(h)}_{1,j}(\Pi_0(y_1))\cdots f^{(h)}_{r,j}(\Pi_0(y_r)).
  \end{equation}
  
  Also for $h\neq h^*$ let $I^{(h)}_{y'}\in\bF_q[U_0,\dots,U_t]$ be some polynomial (of arbitrary degree) with $I^{(h)}_{y'}(u)=\1_{u=y'_h}$, and let $I^{(h^*)}_{y'}\in\bF_q[U_0,\dots,U_t]^{\leq t(q-1)}$ be the polynomial with $I^{(h^*)}_{y'}(u)=\1_{u\in y'_{h^*}}$. Specifically, recalling that $y'_{h^*}$ is an affine line in $\bF_q^{t+1}$ given by $y'_{h^*}=x'+\spn\{(1,\labV(v'))\}$ for some $x'\in\{0\}\times\bF_q^t$ and $v'\in V$, then
  \begin{equation*}
    I^{(h^*)}_{y'}(U_0,\dots,U_t) = \prod_{i\in[r]}(1-(U_i-\labV(v')_iU_0-x'_i)^{q-1}).
  \end{equation*}
  Then for every $y\in Y(r)=(E\times\bF_q^t)^r$, it follows that $\xi(g^{(1)},\dots,g^{(r)})_y$ equals the evaluation of the $r(t+1)$-variate polynomial $G(U^{(h)}_i)^{h\in[r]}_{i\in\{0,\dots,t\}}\in\bF_q[(U^{(h)}_i)^{h\in[r]}_{i\in\{0,\dots,t\}}]$ given by
  \begin{align}
    \label{eq:Gdef}
    \begin{split}
      G(U^{(h)}_i)^{h\in[r]}_{i\in\{0,\dots,t\}}
      \hspace{10em}&\hspace{-10em}= \sgn(\pi)\cdot s\cdot\sum_{y'\in Y'}\sum_{(j^{(h)})_{h\in S}\in[n]^S}
      I^{(1)}_{y'}(U^{(1)})\cdots I^{(r)}_{y'}(U^{(r)}) \\
      &\cdot\prod_{h\in S} \left(f^{(h)}_{1,j^{(h)}}(U_0^{(1)})\cdots f^{(h)}_{r,j^{(h)}}(U_0^{(r)})\right) \\
      &\cdot\prod_{h\in[r]\setminus S} \left(f^{(h)}_1(U^{(1)})\cdots f^{(h)}_r(U^{(r)})\right).
    \end{split}
  \end{align}
  at $U=(U^{(1)},\dots,U^{(r)})=(y_1,\dots,y_r)=y$, where we let $U^{(h)}=(U^{(h)}_0,\dots,U^{(h)}_t)$. Specifically, given $y'\in Y'$, then $I^{(1)}_{y'}(U^{(1)})\cdots I^{(r)}_{y'}(U^{(r)})$ is the indicator function for the affine line $y'$ in $(\bF_q^{t+1})^r$, and for every $j\in[n]$ by definition $f^{(h^*)}_{h^*,j}(U_0^{(h^*)})$ vanishes at every $U_0^{(h^*)}\in E\setminus E(\Pi_0(y'_{h^*}))$. Therefore the sum on the RHS of~(\ref{eq:Gdef}) can only have a nonzero term at $y'\in Y'$ for evaluation points $U=y$ such that $y\in y'$ and $\Pi_0(y_{h^*})\in E(\Pi_0(y'_{h^*}))$, or equivalently, such that $y\triangleright y'$. Thus~(\ref{eq:Gdef}) with $U=y$ is equivalent to
  \begin{align*}
    G(y)
    &= \sgn(\pi)\cdot s\cdot\sum_{y'\in Y'} \1_{y\triangleright y'} \cdot \sum_{(j^{(h)})_{h\in S}\in[n]^S} \prod_{h\in S} \left(f^{(h)}_{1,j^{(h)}}(\Pi_0(y_1))\cdots f^{(h)}_{r,j^{(h)}}(\Pi_0(y_r))\right) \\
    &\hspace{16em}\cdot \prod_{h\in[r]\setminus S} \left(f^{(h)}_1(y_1)\cdots f^{(h)}_r(y_r)\right).
  \end{align*}
  The fact that $\xi(g^{(1)},\dots,g^{(r)})_y=G(y)$ then follows directly by applying~(\ref{eq:isolateinner}), (\ref{eq:pulloutincidence}), and (\ref{eq:tensortopoly}) with the above equation.

  Now for a polynomial in $F\in\bF_q[(U^{(h)}_i)^{h\in[r]}_{i\in\{0,\dots,t\}}]$, let $\deg^{(h^*)}(F)$ denote the maximum value of $k_0^{(h^*)}+\cdots+k_t^{(h^*)}$ over all monomials $\prod_{i,h}(U^{(h)}_i)^{k^{(h)}_i}$ with a nonzero coefficient in $F$. Then~(\ref{eq:Gdef}), implies that
  \begin{align*}
    \deg^{(h^*)}(G)
    &\leq \max_{y'\in Y',\;(j^{(h)})_{h\in S}\in[n]^S} \deg(I^{(h^*)}_{y'}) + \sum_{h\in S}\deg(f^{(h)}_{h^*,j^{(h)}}) + \sum_{h\in[r]\setminus S}\deg(f^{(h)}_{h^*}) \\
    &\leq t(q-1) + (|E|-\ell-1) + (|S|-1)(\ell'-1) + (r-|S|-1)t(a-1) \\
    &< t(q-1) + |E| - \ell + (r-1)\ell',
  \end{align*}
  where the second inequality above holds because by definition $f^{(h^*)}_{h^*,j^{(h^*)}}\in\bF_q[U_0^{(h^*)}]^{<|E|-\ell}$, $f^{(h)}_{h^*,j^{(h)}}\in\bF_q[U_0^{(h^*)}]^{<\ell'}$ for $h\in S\setminus\{h^*\}$, and $f^{(h)}_{h^*}\in L^0$ for $h\in[r]\setminus S$; the third inequality above holds because by definition $a-1\leq\ell'/t$. Thus the desired bound~(\ref{eq:xiinprod}) follows by Remark~\ref{remark:equivdegbound}.
\end{proof}

\subsection{Proof of Subrank Bound}
\label{sec:subrankproof}
In this section, we bound the subrank of $\zeta_{H'}$ for $\zeta$ defined in Section~\ref{sec:zetadef} and $H'$ defined in Section~\ref{sec:Hpdef}. Recall below the definitions of $a$ and $A$ from~(\ref{eq:adef}) and~(\ref{eq:Adef}), respectively.
  
\begin{lemma}
  \label{lem:subrankbound}
  $\subrank(\zeta_{H'})\geq|A|^r=a^{rt}\geq N^{\delta-\nu}$.
\end{lemma}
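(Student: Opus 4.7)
The plan is to exhibit $s = |A|^r = a^{rt}$ explicit cohomology classes $\{v_j^{(h)}\}_{j \in A^r}$ in each ${H^{(h)}}'$ for which $\zeta_{H'}$, restricted to these vectors, is a nonzero rescaling of the diagonal tensor. For each $k \in [t]$, let $\{p_u^k \in \bF_q[U_k]^{<a} : u \in A_k\}$ be the Lagrange dual basis with $p_u^k(u') = \1_{u=u'}$ for $u' \in A_k$, and for $u = (u_1,\dots,u_t) \in A_1 \times \cdots \times A_t$ set $p_u = \prod_{k=1}^t p_{u_k}^k \in \spn L^0$. Identifying $A = A_0 \times A_1 \times \cdots \times A_t$ with $A_1 \times \cdots \times A_t$ (since $|A_0| = 1$), the polynomials $\{p_u\}_{u \in A}$ form a basis of $\spn L^0$. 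For every $j = (u^{(1)}, \ldots, u^{(r)}) \in A^r$ and every $h \in [r]$, define
\begin{equation*}
v_j^{(h)} \;=\; \bigotimes_{i \in [r]} \iota^{\1_{h=i}}\bigl(M^{\1_{h=i}} \cdot p_{u^{(i)}}\bigr) \;+\; B^1(\cG_\cY^{(h)}) \;\in\; {H^{(h)}}',
\end{equation*}
where the factor $M^{\1_{h=i}}$ inserts the monomial $M$ at the one position $i=h$ where we are required to land in $\spn L^1 = M \cdot \spn L^0$.

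Next I will compute $\zeta_{H'}(v_{j_1}^{(1)}, \ldots, v_{j_r}^{(r)})$ for arbitrary $j_1,\dots,j_r \in A^r$, writing $j_h = (u_h^{(1)}, \ldots, u_h^{(r)})$. Taking canonical cochain representatives $z^{(h)}$, each $z^{(h)}$ is supported only on direction-$h$ edges of $Y(1)$ (the unique factor in its tensor decomposition that lives in degree $1$ sits at position $h$), so all permutations other than the identity drop out of $\xi_y^{\loc}$. Using the type-\ref{it:im} formula for $\iota^0$ exactly as in the calculations of Section~\ref{sec:zetadef} gives the clean factorization
\begin{equation*}
\xi(z^{(1)}, \ldots, z^{(r)})_y \;=\; \prod_{i=1}^r F_i(y_i), \qquad F_i \;=\; M \cdot \prod_{h=1}^r p_{u_h^{(i)}}.
\end{equation*}
The Lagrange relation $p_{u'}(u) = \1_{u=u'}$ for $u,u' \in A$ immediately collapses the naive evaluation sum to a product
\begin{equation*}
\prod_{i=1}^r \sum_{u_i \in A} F_i(u_i) \;=\; c_{j_1} \cdot \1_{j_1 = \cdots = j_r}, \qquad c_{j_1} = a_0^{r(|E|-1)}\prod_{i,k}(u_1^{(i)})_k^{q-\lfloor \ell/(2t)\rfloor},
\end{equation*}
which is a nonzero element of $\bF_q$ because $a_0 \in \bF_q^*$ and each $(u_1^{(i)})_k \in A_k \subseteq \bF_q^*$.

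The remaining step is to verify that the high-degree truncation $g \mapsto g'$ inside the definition of $\alpha$ does not alter this naive sum, i.e.\ that $\alpha(\xi(z^{(1)},\dots,z^{(r)})) = c_{j_1}\cdot\1_{j_1 = \cdots = j_r}$; this is the main obstacle. The plan is to choose each $A_k$ (for $k \in [t]$) as a multiplicative subgroup of $\bF_q^*$ of order $a$ (as we may, since $|A_k|$ was left arbitrary in Section~\ref{sec:zetadef}, and $a$ can be arranged to divide $q-1$ for sufficiently large $q$), so that the classical power-sum identity $\sum_{u \in A_k} u^j = 0$ whenever $a \nmid j$ becomes available. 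Every monomial of $\prod_i F_i(U^{(i)})$ that is dropped by the threshold is, by the degree bookkeeping done in the proof of Lemma~\ref{lem:xilowdeg}, supported in some block $h$ with a variable $U_k^{(h)}$ whose exponent is not congruent to zero modulo $a$, and therefore sums to $0$ over $A^r$. Granting this reduction, dividing $v_j^{(1)}$ by $c_j$ exhibits the required diagonal tensor of rank $|A|^r = a^{rt}$, giving $\subrank(\zeta_{H'}) \geq a^{rt}$. The final inequality $a^{rt} \geq N^{\delta-\nu}$ is the same dimension estimate already carried out in the bound on $K$ in the proof of Lemma~\ref{lem:basicparam}.
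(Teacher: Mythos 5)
Your choice of cohomology classes is essentially the paper's: the classes $\bigotimes_i\iota^{\1_{h=i}}(M^{\1_{h=i}}p_{u^{(i)}})+B^1(\cG_{\cY}^{(h)})$ are, up to the nonzero scalars $M(y_h)$, exactly the images $\phi^{(h)}_{H'}(\ind{y})$ used in the paper, and your Lagrange-collapse computation of $\sum_{u\in A^r}\prod_iF_i(u_i)$ matches Claim~\ref{claim:zetaphi}. The rescaling of the first-slot vectors by $c_j\neq 0$ to pass from a nonzero diagonal tensor to the identity tensor is also fine, as is quoting the estimate $a^{rt}\geq N^{\delta-\nu}$ from the proof of Lemma~\ref{lem:basicparam}.

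The gap is precisely the step you flag as ``the main obstacle'': showing that $\alpha$ applied to $\xi(z^{(1)},\dots,z^{(r)})$ equals the naive sum over $A^r$. Your proposed fix does not work as stated. First, it modifies the construction: $a=\lfloor\ell/10rt\rfloor$ is fixed by $q,\nu,\delta,r$, and for a given prime power $q$ there is no freedom to ``arrange'' $a\mid q-1$; shrinking $a$ to the largest divisor of $q-1$ below $\lfloor\ell/10rt\rfloor$ can collapse $a$ to $O(1)$ (e.g.\ when $q-1$ is twice a prime), destroying the bound $a^{rt}\geq N^{\delta-\nu}$. Second, the key claim that every monomial removed by the truncation contains a variable whose exponent is not divisible by $a$ is unsupported; the degree bookkeeping in Lemma~\ref{lem:xilowdeg} concerns the coboundary terms, not this product, and says nothing about exponents modulo $a$. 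The correct resolution is simpler and needs no special structure on the $A_k$: since each $F_i=M\cdot\prod_hp_{u_h^{(i)}}$ has degree exactly $|E|-1$ in $U_0^{(i)}$ and degree in $[\,q-\lfloor\ell/2t\rfloor,\;q-\lfloor\ell/2t\rfloor+r(a-1)\,]$ in each $U_k^{(i)}$ with $r(a-1)<\lfloor\ell/2t\rfloor$, the product $\prod_iF_i(U^{(i)})$ already has all individual variable degrees below the interpolation caps ($<|E|$ in $U_0^{(i)}$, $<q$ otherwise), so it \emph{is} the canonical interpolating polynomial $g$ used to define $\alpha$; moreover every one of its monomials has block-$i$ degree in the window of~(\ref{eq:Fproddeg}), i.e.\ at least $t(q-1)+|E|-\ell+(r-1)\ell'$ in every block, so the truncation $g\mapsto g'$ removes nothing (these are exactly the monomials $\alpha$ is designed to keep, complementary to the ones killed in the coboundary-invariance argument). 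Hence $\alpha$ literally outputs $\sum_{u\in A^r}\prod_iF_i(u_i)$, which is the verification your proposal leaves open.
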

\begin{proof}
  For $h\in[r]$, we define a linear map
  \begin{equation*}
    \phi^{(h)}:\bF_q^{A^r}=(\bF_q^A)^{\otimes r}\rightarrow{H^{(h)}}'
  \end{equation*}
  as follows. For $j\in\{0,\dots,t\}$, the map $\evl_{A_j}:\bF_q[U_j]^{<|A_j|}\xrightarrow{\sim}\bF_q^{A_j}$ is an isomorphism, so letting
  \begin{equation*}
    \cA = \spn\{L^0\}^{\otimes r} = \bigotimes_{i=1}^r\bigotimes_{j=0}^t\bF_q[U_j^{(i)}]^{<|A_j|} \subseteq \bF_q[(U_j^{(i)})_{j\in\{0,\dots,t\}}^{i\in[r]}]
  \end{equation*}
  denote the span of all $r(t+1)$-variate monomials whose degree in each $U_j^{(i)}$ is $<|A_j|$, then
  \begin{equation*}
    \evl_{A^r}:\cA\xrightarrow{\sim}\bF_q^{A^r}
  \end{equation*}
  is an isomorphism. Therefore given $c\in\bF_q^{A^r}$, recalling the definition of $M(U_0,\dots,U_t)$ from~(\ref{eq:Mdef}), we define $c/M^{(h)}\in\bF_q^{A_r}$ by $(c/M^{(h)})_y=c_y/M(y_h)$ for $y=(y_1,\dots,y_r)\in A^r$; the quotient $c_y/M(y_h)\in\bF_q$ is well defined because by assumption $0\notin A_i$ for each $0\leq i\leq t$, so $M(y_h)\neq 0$ for every $y_h\in A$. Now we define
  \begin{align}
    \label{eq:Fhdef} F^{(h)}(c) &= M(U_0^{(h)},\dots,U_t^{(h)})\cdot\evl_{A^r}^{-1}(c/M^{(h)}) \in \bigotimes_{i\in[r]}\spn\{L^{\1_{h=i}}\}\\
    \nonumber \phi^{(h)}(c) &= \left(\bigotimes_{i\in[r]}\iota^{\1_{h=i}}\right)(F^{(h)}(c)) \in Z^1(\cG_{\cY}^{(h)})\\
    \nonumber \phi_{H'}^{(h)}(c) &= \phi^{(h)}(c) + B^1(\cG_{\cY}^{(h)}) \in {H^{(h)}}'.
  \end{align}
  To clarify the notation in the definition of $\phi^{(h)}(c)$ above, for each monomial $\prod_{j,i}(U_j^{(i)})^{k_j^{(i)}}$, then
  \begin{equation*}
    \left(\bigotimes_{i\in[r]}\iota^{\1_{h=i}}\right)\left(\prod_{j,i}(U_j^{(i)})^{k_j^{(i)}}\right) = \bigotimes_{i\in[r]}\iota^{\1_{h=i}}\left((U_0^{(i)})^{k_0^{(i)}},\dots,(U_t^{(i)})^{k_t^{(i)}}\right).
  \end{equation*}

  The following claim implies that $\zeta_{H'}$ diagonalizes in the basis $\{\phi^{(h)}_{H'}(\ind{y}):y\in A^r\}$, which in turn implies that $\subrank(\zeta_{H'})\geq|A^r|$.

  \begin{claim}
    \label{claim:zetaphi}
    It holds for every $(c^{(1)},\dots,c^{(r)})\in(\bF_q^{A^r})^r$ that
    \begin{equation*}
      \zeta_{H'}(\phi_{H'}^{(1)}(c^{(1)}),\dots,\phi_{H'}^{(r)}(c^{(r)})) = \sum_{y\in A^r}c^{(1)}_y\cdots c^{(r)}_y.
    \end{equation*}
  \end{claim}
  \begin{proof}
    Because $\zeta$ is coboundary-invariant on $H'$, it suffices to show that
    \begin{equation}
      \label{eq:zetagoal}
      \zeta(\phi^{(1)}(c^{(1)}),\dots,\phi^{(r)}(c^{(r)})) = \sum_{y\in A^r}c^{(1)}_y\cdots c^{(r)}_y.
    \end{equation}
    For $y\in Y(r)$, by definition
    \begin{align*}
      \xi(\phi^{(1)}(c^{(1)}),\dots,\phi^{(r)}(c^{(r)}))
      &= \sum_{\pi\in\cS_r}\sgn(\pi) \cdot (\phi^{(1)}(c^{(1)})|_y)_{e_y^1(\pi)}\cdots(\phi^{(r)}(c^{(r)})|_y)_{e_y^r(\pi)}.
    \end{align*}
    By definition $\phi^{(h)}(c^{(h)})$ is supported on direction-$h$ edges, so the only nonvanishing term in the sum above is given by the identity permutation $\pi=\text{it}$, and for $y=(y_1,\dots,y_r)\in Y(r)$ we have
    \begin{align}
      \label{eq:xitopeval}
      \begin{split}
        \xi(\phi^{(1)}(c^{(1)}),\dots,\phi^{(r)}(c^{(r)}))_y
        &= (\phi^{(1)}(c^{(1)})|_y)_{e_y^1(\text{id})}\cdots(\phi^{(r)}(c^{(r)})|_y)_{e_y^r(\text{id})} \\
        &= F^{(1)}(c)(y)\cdots F^{(r)}(c)(y).
      \end{split}
    \end{align}
    Here the second equality above holds because for every $h\in[r]$ and every $(f^{(h)}_i(U_0^{(i)},\dots,U_t^{(i)})\in L^{\1_{h=i}})_{i\in[r]}$, we showed in~(\ref{eq:logform}) that $g^{(h)}:=\bigotimes_{i\in[r]}\iota^{\1_{h=i}}(f^{(h)}_i)$ satisfies $(g^{(h)}|_y)_{e_y^h(\text{id})}=f_1^{(h)}(y_1)\cdots f_r^{(h)}(y_r)=(f_1^{(h)}\cdots f_r^{(h)})(y)$. As $\phi^{(h)}(c^{(h)})$ is by definition a linear combination of such 1-cochains $g^{(h)}$, where $F^{(h)}(c)$ is the associated linear combination of polynomials $f_1^{(h)}(U^{(1)})\cdots f_r^{(h)}(U^{(r)})$ with each $U^{(i)}=(U_0^{(i)},\dots,U_t^{(i)})$, it follows that $(\phi^{(h)}(c^{(h)})|_y)_{e_y^h(\text{id})}=F^{(h)}(c)(y)$, and hence~(\ref{eq:xitopeval}) holds.

    Now because each $F^{(h)}(U_j^{(i)})_{j\in\{0,\dots,t\}}^{i\in[r]}\in\bigotimes_{i\in[r]}\spn\{L^{\1_{h=1}}\}$, we have
    \begin{align*}
      F^{(1)}(c)\cdots F^{(r)}(c)
      &\in \bigotimes_{i\in[r]}\spn\{f^{(1)}_i(U^{(i)})\cdots f^{(r)}_i(U^{(i)}):f^{(h)}_i\in L^{\1_{h=i}}\;\forall h\in[r]\} \\
      &= \bigotimes_{i\in[r]}\spn\{M(U^{(i)})\cdot f^{(1)}_i(U^{(i)})\cdots f^{(r)}_i(U^{(i)}):f^{(h)}_i\in L^{0}\;\forall h\in[r]\}.
    \end{align*}
    The RHS above is precisely the span of all monomials $\prod_{j,i}(U_j^{(i)})^{k_j^{(i)}}$ where for every $i\in[r]$, we have $k_0^{(i)}=|E|-1$ and $k_j^{(i)}\in[q-\lfloor\ell/2t\rfloor,\; q-\lfloor\ell/2t\rfloor+r(a-1)]$ for $j\in[t]$. Therefore if the monomial $\prod_{j,i}(U_j^{(i)})^{k_j^{(i)}}$ has nonzero coefficient in $F^{(1)}(c)\cdots F^{(r)}(c)$, then for every $i\in[r]$, we have
    \begin{align}
      \label{eq:Fproddeg}
      \begin{split}
        \sum_{j=0}^tk_j^{(i)}
        &\in [|E|-1+t(q-\lfloor\ell/2t\rfloor),\; |E|-1+t(q-\lfloor\ell/2t\rfloor+r(a-1))] \\
        &\subseteq [t(q-1)+|E|-1-\ell/2,\; t(q-1)+|E|-1-\ell/2+tra] \\
        &\subseteq [t(q-1)+|E|-\ell+(r-1)\ell',\; t(q-1)+|E|-1],
      \end{split}
    \end{align}
    where the third inclusion above holds assuming $q$ is sufficiently large because by definition $\ell=\lfloor\Delta/2\rfloor$ with $\Delta=\lfloor q^\nu\rfloor\cdot\lfloor q^{\delta-\nu}\rfloor$ and $t\in(q^\tau,q^{\nu/32})$, $\ell'=\lfloor\ell/10r\rfloor$, and $a=\lfloor\ell/10rt\rfloor$. Now it follows that
    \begin{align*}
      \zeta(\phi^{(1)}(c^{(1)}),\dots,\phi^{(r)}(c^{(r)}))
      &= \alpha(\xi(\phi^{(1)}(c^{(1)}),\dots,\phi^{(r)}(c^{(r)}))) \\
      &= \alpha(\evl_{Y(r)}(F^{(1)}(c)\cdots F^{(r)}(c))) \\
      &= \sum_{y\in A^r}F^{(1)}(c)(y)\cdots F^{(r)}(c)(y) \\
      &= \sum_{y\in A^r}c^{(1)}_y\cdots c^{(r)}_y,
    \end{align*}
    where the first equality above holds by the definition of $\zeta$ in~(\ref{eq:zetadef}), the second equality holds by~(\ref{eq:xitopeval}), the third equality holds because~(\ref{eq:Fproddeg}) implies that $\alpha(\evl_{Y(r)}(F^{(1)}(c)\cdots F^{(r)}(c)))$ computes the polynomial $g'=F^{(1)}(c)\cdots F^{(r)}(c)$ (see the definition of $\alpha$ above) and then outputs $\sum_{y\in A^r}g'(y)$, and the fourth equality holds because~(\ref{eq:Fhdef}) implies that $F^{(h)}(c)(y)=M(y_h)\cdot c_y/M(y_h)=c_y$ for every $h\in[r]$. Thus~(\ref{eq:zetagoal}) holds, as desired.
  \end{proof}

  Claim~\ref{claim:zetaphi} implies that for every $(y^{(1)},\dots,y^{(r)})\in(A^r)^r$,
  \begin{equation*}
    \zeta_{H'}(\phi_{H'}^{(1)}(\ind{y^{(1)}}),\dots,\phi_{H'}^{(r)}(\ind{y^{(r)}})) = \1_{y^{(1)}=\cdots=y^{(r)}}.
  \end{equation*}
  It follows by Definition~\ref{def:subrank} that $\subrank(\zeta_{H'})\geq|A|^r$. Therefore we obtain the desired bound
  \begin{align*}
    \subrank(\zeta_{H'})
    &\geq a^{rt} \\
    &= \lfloor\ell/10rt\rfloor^{rt} \\
    &\geq (\Delta/21rt)^{rt} \\
    &\geq (q^{\delta-\nu/2})^{rt} \\
    &\geq N^{\delta-\nu},
  \end{align*}
  where the third inequality above holds because $\Delta=q^{\delta-o(1)}$ and $t\leq q^{\nu/32}$, and the fourth inequality holds by~(\ref{eq:codelenbound}).
\end{proof}

\subsection{Putting it All Together}
Combining the results in Sections~\ref{sec:basicparamproof}-\ref{sec:subrankproof} immediately yields Theorem~\ref{thm:qldpcmain}:

\begin{proof}[Proof of Theorem~\ref{thm:qldpcmain}]
  The result follows directly from Lemma~\ref{lem:basicparam}, Lemma~\ref{lem:zetaloc}, Lemma~\ref{lem:cobinv}, and Lemma~\ref{lem:subrankbound}.
\end{proof}

\section{Classical LTCs with Multiplication Property via Balanced Product}
\label{sec:cltc}
In this section, we take balanced products of the classical LDPC codes from Section~\ref{sec:cldpc} to obtain classical LDPC codes, which we show in appropriate parameter regimes have almost linear dimension, nearly linear distance, polylogarithmic locality, inverse polylogarithmic soundness, and exhibit the multiplication property. Recall that this multiplication property, which requires products of codewords to belong to a larger code with similar paramters, is in some sense a classical analogue of transversal $C^{r-1}Z$ gates on quantum codes.

\begin{theorem}
  \label{thm:cltcmain}
  There exists a sufficiently small constant $\eta>0$ such that for every fixed $\nu\in(0,1/2)$, $\delta\in(\nu,1-\nu]$, the following holds for every sufficiently large prime power $q$. Define $\tau,\Delta,t,\Gamma,\bar{\Gamma}$ as in Corollary~\ref{cor:expinst}. For $\ell\in[\Delta]$, let $\cF_{\bar{\cX}}^{(\ell)}$ be a 1-dimensional type-\ref{it:ker} RM-planted complex with parameters $\nu,\delta,\eta,q,\ell$. Define the balanced product
  \begin{equation*}
    \cC^{(\ell)} = \cF_{\bar{\cX}}^{(\ell)}\otimes_{\bF_q^t}\cF_{\bar{\cX}}^{(\ell)}
  \end{equation*}
  using the natural free action of $\bF_q^t$ on $\cF_{\bar{\cX}}^{(\ell)}$ given by Lemma~\ref{lem:expaction} and Example~\ref{example:graphproduct}. If $\ell\leq\Delta/4$, then the classical code $Z_2(\cC^{(\ell)})$ at level $2$ of $\cC^{(\ell)}_*$ is a
  \begin{equation*}
    \left[N=q^{t+O(1)},\; K\geq{(\ell-1)+(t+2)\choose t+2},\; D\geq\frac{N}{\log(N)^{O_{\nu,\delta}(1)}}\right]_q
  \end{equation*}
  code of locality $w^{\cC^{(\ell)}}\leq\log(N)^{O_{\nu,\delta}(1)}$ that is locally testable with soundness $\rho_2(\cC^{(\ell)})\geq\log(N)^{-O_{\nu,\delta}(1)}$. Furthermore, for every $r,\ell'\in\bN$ with $r(\ell-1)\leq\ell'-1$, we have
  \begin{equation}
    \label{eq:cltcmult}
    Z_2(\cC^{(\ell)})^{*r} \subseteq Z_2(\cC^{(\ell')}).
  \end{equation}
\end{theorem}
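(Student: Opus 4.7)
The plan is to establish the four claims of the theorem --- length and locality, dimension, the multiplication property, and distance/soundness --- in order, after first giving a concrete local description of $Z_2(\cC^{(\ell)})$.

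First, I identify the 2-chain space. By Definition~\ref{def:product} and the identifications $\bar E \cong E \times \bF_q^t$ and $\bar V \cong V \times \bF_q^t$ from Lemma~\ref{lem:islift}, the balanced product yields $\cC^{(\ell)}_2 = \cF_1 \otimes_{\bF_q^t} \cF_1 \cong \bF_q^{E^2 \times \bF_q^t}$ (the $\bF_q^t$-invariants of $\bar E \times \bar E$ under the diagonal anti-action), so $N = |E|^2 q^t = q^{t + O(1)}$. A 2-chain is equivalently an $\bF_q^t$-invariant function $c : \bar E \times \bar E \to \bF_q$, and unpacking $\partial_2$ on each of $\cF_1 \otimes_{\bF_q^t} \cF_0$ and $\cF_0 \otimes_{\bF_q^t} \cF_1$, the condition $c \in Z_2(\cC^{(\ell)})$ becomes: for every $\bar v = (v, y) \in \bar V$ and every $\bar e_2 \in \bar E$, the vector $(c((a, y + a\labV(v)), \bar e_2))_{a \in E(v)}$ lies in the local Reed-Solomon code $\ker(h_v) = \evl_{E(v)}(\bF_q[X]^{<\ell})$, and symmetrically for the second factor. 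The locality bound $w^{\cC^{(\ell)}} \leq 2\Delta = \log(N)^{O_{\nu,\delta}(1)}$ follows immediately.

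Next, for the dimension I plant a Reed-Muller code of total degree $< \ell$ in $t+2$ variables: for $f \in \bF_q[U_1, U_2, X_1, \dots, X_t]$ with $\deg(f) < \ell$, set $c_f(\bar e_1, \bar e_2) := f(e_1, e_2, x_1 + x_2)$, which is manifestly $\bF_q^t$-invariant. Substituting $U_1 = a$ and $X_i = y_i + a\labV(v)_i + (x_2)_i$ yields a polynomial in $a$ of degree at most $\deg(f) < \ell$, so both the row and column restrictions lie in the required local Reed-Solomon codes. Since $\min(|E|, q) \geq \ell$, the map $f \mapsto c_f$ is injective, giving $\dim Z_2(\cC^{(\ell)}) \geq \binom{(\ell-1)+(t+2)}{t+2}$. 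The multiplication property then follows directly: $\bF_q^t$-invariance is preserved under component-wise products, and by Lemma~\ref{lem:RSmult} the component-wise product of $r$ Reed-Solomon codewords of degree $<\ell$ lies in the Reed-Solomon code of degree $\leq r(\ell-1) \leq \ell' - 1$, which satisfies the local constraint for $\cC^{(\ell')}$ row- and column-wise, so $c^{(1)} * \cdots * c^{(r)} \in Z_2(\cC^{(\ell')})$.

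The hardest step is the distance and soundness bounds, for which I would invoke the framework of \cite{dinur_expansion_2024} (building on \cite{polishchuk_nearly-linear_1994, panteleev_asymptotically_2022, dinur_good_2023, leverrier_quantum_2022-1}): balanced products of Sipser-Spielman complexes over spectral expanders inherit systolic distance and cycle expansion at the middle level from (i) the product-expansion of the pair of local codes and their duals, and (ii) spectral expansion of the base graph. The local Reed-Solomon codes have rate $\ell/\Delta \leq 1/4$, and by Lemma~\ref{lem:RSparameters} their duals are again Reed-Solomon codes, both with good distance; jointly they satisfy the product-expansion bound of \cite{polishchuk_nearly-linear_1994}. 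Combined with $\lambda_2(\bar\Gamma) \leq \eta\Delta$ from Corollary~\ref{cor:expinst} for sufficiently small $\eta$, the framework of \cite{dinur_expansion_2024} then yields the desired bounds $d_2(\cC^{(\ell)}) \geq N / \log(N)^{O_{\nu,\delta}(1)}$ and $\rho_2(\cC^{(\ell)}) \geq \log(N)^{-O_{\nu,\delta}(1)}$. The main technical obstacle is verifying that our specific 2-dimensional vertex-labeled balanced-product construction fits cleanly into the hypotheses of \cite{dinur_expansion_2024} and carefully tracking how the rate-dependent product-expansion constant of \cite{polishchuk_nearly-linear_1994} enters the hidden $\log^{O_{\nu,\delta}(1)} N$ factors.
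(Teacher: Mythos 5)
Your proposal follows essentially the same route as the paper: the same identification of the balanced-product 2-chains with $\bF_q^{E^2\times\bF_q^t}$, the same local Reed--Solomon characterization of $Z_2(\cC^{(\ell)})$, the same planting of a $(t+2)$-variate degree-$<\ell$ Reed--Muller code for the dimension bound, degree additivity for the multiplication property, and the combination of the Polishchuk--Spielman product-expansion bound with the framework of \cite{dinur_expansion_2024} for distance and soundness. The only slight imprecision is that the quoted hypothesis of \cite{dinur_expansion_2024} (as stated in the paper) requires product-expansion only of the pair of local codes $\ker(h_v)$, not of their duals, so your extra appeal to the dual Reed--Solomon codes is unnecessary (and would not literally follow from the $\ell\le\Delta/4$ form of the Polishchuk--Spielman statement), but this does not affect the correctness of the argument.
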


In Theorem~\ref{thm:cltcmain}, we typically think of taking $\ell=\Theta(\Delta)$, which yields the following corollary.

\begin{corollary}
  \label{cor:cltcmain}
  For an arbitrary fixed constant $r_0\in\bN$, define all variables as in Theorem~\ref{thm:cltcmain} and set $\ell_0=\lfloor\Delta/4r_0\rfloor$. Then for every sufficiently large prime power $q$ and every $\ell_0\leq\ell\leq\Delta/4$, the code $Z_2(\cC^{(\ell)})$ is a
  \begin{equation*}
    \left[N=q^{t+O(1)},\; K\geq N^{\delta-\nu},\; D\geq\frac{N}{\log(N)^{O_{\nu,\delta}(1)}}\right]_q
  \end{equation*}
  code of locality $w^{\cC^{(\ell)}}\leq\log(N)^{O_{\nu,\delta}(1)}$ that is locally testable with soundness $\rho_2(\cC^{(\ell)})\geq\log(N)^{-O_{\nu,\delta}(1)}$. Furthermore, (\ref{eq:cltcmult}) holds for every $r,\ell,\ell'$ with $r(\ell-1)\leq\ell'-1$.
\end{corollary}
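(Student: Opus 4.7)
The plan is to invoke Theorem~\ref{thm:cltcmain} directly for each $\ell$ in the stated range, and then upgrade the binomial dimension bound into the cleaner form $K\geq N^{\delta-\nu}$. All other parameters asserted by the corollary (length, distance, locality, soundness) and the multiplication property~(\ref{eq:cltcmult}) transfer verbatim from Theorem~\ref{thm:cltcmain}, so the real content is a single calculation about the dimension.

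First I would record the asymptotic regime of the parameters supplied by Corollary~\ref{cor:expinst}. By construction $\Delta=\lfloor q^{\nu}\rfloor\cdot\lfloor q^{\delta-\nu}\rfloor=q^{\delta-o(1)}$ and the ambient dimension satisfies $t\in(q^{\tau},\,q^{\nu/32})$, so in particular $t\leq q^{\nu/32}$. Since $r_0$ is a fixed constant, the assumption $\ell\geq\ell_0=\lfloor\Delta/4r_0\rfloor$ gives $\ell\geq q^{\delta-o(1)}/(5r_0)$ for all sufficiently large $q$, and hence
\begin{equation*}
  \frac{\ell}{t}\;\geq\;\frac{q^{\delta-o(1)}}{5r_0\cdot q^{\nu/32}}\;\geq\;q^{\delta-\nu/16}
\end{equation*}
again for large $q$ (absorbing the constant $5r_0$ into the $o(1)$).

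Next I would combine this with the dimension bound $K\geq\binom{(\ell-1)+(t+2)}{t+2}$ from Theorem~\ref{thm:cltcmain}. Using the elementary inequality $\binom{m+s}{s}\geq(m/s)^{s}$ with $m=\ell-1$ and $s=t+2$, and noting that $\ell/(t+2)\geq(\ell/t)\cdot(1-o(1))\geq q^{\delta-\nu/15}$, we obtain
\begin{equation*}
  K\;\geq\;\left(\frac{\ell-1}{t+2}\right)^{t+2}\;\geq\;q^{(t+2)(\delta-\nu/15)}.
\end{equation*}
On the other hand, the length bound $N=q^{t+O(1)}$ from Theorem~\ref{thm:cltcmain} yields $N^{\delta-\nu}=q^{(t+O(1))(\delta-\nu)}$. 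Since $\delta-\nu/15>\delta-\nu$ and the constant additive slack $O(1)$ in the exponent of $N$ is dominated by the factor $t\to\infty$, we get $K\geq N^{\delta-\nu}$ for all sufficiently large~$q$.

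The remaining claims require no extra work: the length, locality, distance, and soundness bounds are the same symbols that appear in Theorem~\ref{thm:cltcmain} (note that all the $\log(N)^{O_{\nu,\delta}(1)}$ factors absorb the constant $r_0$), and the multiplication inclusion~(\ref{eq:cltcmult}) is quoted from Theorem~\ref{thm:cltcmain} under the identical hypothesis $r(\ell-1)\leq\ell'-1$. The only mild subtlety is ensuring that for every $\ell$ in the interval $[\ell_0,\Delta/4]$ the ``sufficiently large $q$'' threshold from Theorem~\ref{thm:cltcmain} can be chosen uniformly, but this is automatic because the threshold depends only on $\nu,\delta$, and those are fixed. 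I do not anticipate any real obstacle here; this is essentially a bookkeeping corollary whose only nontrivial move is the binomial-to-power-function estimate above.
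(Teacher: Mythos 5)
Your proposal is correct and matches the paper's own proof: the paper likewise treats everything except the dimension as immediate from Theorem~\ref{thm:cltcmain} and performs the same binomial-to-power estimate, chaining $K\geq\binom{(\ell-1)+(t+2)}{t+2}\geq(\ell/t)^{t}\geq(\Delta/5r_0t)^{t}\geq(q^{\delta-\nu/2})^{t}\geq N^{\delta-\nu}$ using $\Delta=q^{\delta-o(1)}$ and $t\leq q^{\nu/32}$. Your exponent bookkeeping (with $\nu/15$ or $\nu/16$ in place of $\nu/2$) is just a slightly different but equally valid way of absorbing the constants and the $O(1)$ in $N=q^{t+O(1)}$.
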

\begin{proof}
  The corollary follows immediately from the fact that
  \begin{align*}
    K
    &\geq {(\ell-1)+(t+2)\choose t+2} \geq (\ell/t)^t \geq (\Delta/5r_0t)^t \geq (q^{\delta-\nu/2})^t \geq N^{\delta-\nu},
  \end{align*}
  where the fourth inequality above holds because $\Delta=q^{\delta-o(1)}$ and $t\leq q^{\nu/32}$.
\end{proof}

Note that for an arbitrarily small constant $\epsilon>0$, setting $\nu=\epsilon/2$ and $\delta=1-\epsilon/2$ ensures that the codes in Corollary~\ref{cor:cltcmain} have close-to-linear dimension $K\geq N^{1-\epsilon}$. 
As described in Section~\ref{sec:intro}, Corollary~\ref{cor:cltcmain} achieves similar parameters (up to polylog factors) as previously known constructions of qLTCs with the multiplication property, such as Reed-Muller codes with the derandomized low-degree test of \cite{ben-sasson_randomness-efficient_2003}, as well as the codes of \cite{dinur_new_2023}. Therefore rather than achieving new parameters, Theorem~\ref{thm:cltcmain} and Corollary~\ref{cor:cltcmain} highlight the generality of our techniques, as we are able to obtain classical codes (complementing the quantum codes in Section~\ref{sec:qldpc}) with desirable fault-tolerance properties.

We now turn to proving Theorem~\ref{thm:cltcmain}. To prove the distance and soundness bounds, we will apply the results of \cite{polishchuk_nearly-linear_1994,dinur_expansion_2024}, as described below. To begin, we will need the following definition.

\begin{definition}
  \label{def:prodexp}
  A pair of classical codes $C_1,C_2\subseteq\bF_q^n$ is said to be \textbf{$\rho$-product-expanding} if for every $c\in C_1\otimes\bF_q^n+\bF_q^n\otimes C_2$, there exists a decomposition $c=c_1+c_2$ with $c_1\in C_1\otimes\bF_q^n$, $c_2\in\bF_q^n\otimes C_2$ such that
  \begin{equation*}
    |c| \geq \rho n(|c_1|_1+|c_2|_2),
  \end{equation*}
  where $|c_1|_1$ (resp.~$|c_2|_2$) denotes the number of nonzero columns (resp.~rows) in the $n\times n$ matrix $c_1$ (resp.~$c_2$).
\end{definition}

As stated below, \cite{polishchuk_nearly-linear_1994} showed that Reed-Solomon codes have good product-expansion, though to the best of our knowledge \cite{kalachev_two-sided_2023} were the first to translate their result to the language of product-expansion.

\begin{theorem}[\cite{polishchuk_nearly-linear_1994}]
  \label{thm:RSpe}
  There exists an absolute constant $\rho>0$ such that for every prime power $q$, every $E_1,E_2\subseteq\bF_q$ with $\Delta:=|E_1|=|E_2|$, and every $\ell\leq\Delta/4$, the pair of codes $\evl_{E_1}(\bF_q[X]^{<\ell}),\evl_{E_2}(\bF_q[X]^{<\ell})$ is $\rho$-product-expanding.
\end{theorem}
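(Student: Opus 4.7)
The plan is to prove product-expansion by identifying matrices in $\bF_q^{E_1\times E_2}$ with bivariate polynomials via the isomorphism $\evl_{E_1\times E_2}:\bF_q[X_1]^{<\Delta}\otimes\bF_q[X_2]^{<\Delta}\xrightarrow{\sim}\bF_q^{E_1\times E_2}$, which is an isomorphism because $|E_1|=|E_2|=\Delta$. Under this identification, $C_1\otimes\bF_q^{E_2}$ corresponds to the space of bivariate polynomials of $X_1$-degree $<\ell$, and $\bF_q^{E_1}\otimes C_2$ to those of $X_2$-degree $<\ell$. Given $c$ in the sum, I identify it with $\hat{c}=f_1+f_2$ where $\deg_{X_1}(f_1)<\ell$ and $\deg_{X_2}(f_2)<\ell$. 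The decomposition has exactly the freedom to shift any element of $\bF_q[X_1]^{<\ell}\otimes\bF_q[X_2]^{<\ell}$ between $f_1$ and $f_2$. I then pick the decomposition $c=c_1+c_2$ (with $c_i=\evl(f_i)$) that minimizes $|c_1|_1+|c_2|_2$, setting $S_1\subseteq E_2$ to the nonzero columns of $c_1$ and $S_2\subseteq E_1$ to the nonzero rows of $c_2$.

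Next, I would exploit the Reed-Solomon distance. For each $x_2\in S_1$, the univariate restriction $f_1(X_1,x_2)$ is a nonzero polynomial of degree $<\ell$, so has at most $\ell-1\leq\Delta/4-1$ roots in $E_1$; symmetrically for $x_1\in S_2$. Moreover, outside $(S_2\times E_2)\cup(E_1\times S_1)$ both $c_1$ and $c_2$ vanish. Counting nonzeros of $c=c_1+c_2$ in the ``pure'' rectangles $S_2\times(E_2\setminus S_1)$ (where $c=c_2$) and $(E_1\setminus S_2)\times S_1$ (where $c=c_1$) yields
\begin{equation*}
|c|\;\geq\;|S_2|\cdot\bigl(\Delta-\ell+1-|S_1|\bigr)^+\;+\;|S_1|\cdot\bigl(\Delta-\ell+1-|S_2|\bigr)^+.
\end{equation*}
When $|S_1|,|S_2|\leq\Delta/2$, each parenthesized factor is at least $\Delta/4+1$ (using $\ell\leq\Delta/4$), so $|c|\geq(\Delta/4)(|S_1|+|S_2|)$, giving product-expansion with $\rho=1/4$ in this regime.

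The remaining case---when the optimal decomposition has, say, $|S_1|>\Delta/2$---is the main obstacle and is the heart of the Polishchuk-Spielman argument. Here the above naive bound degenerates, and one must use minimality of the decomposition. The idea is to factor $f_1=\gcd(f_{1,0},\dots,f_{1,\ell-1})(X_2)\cdot\tilde{f}_1$, where $f_{1,i}(X_2)$ are the coefficients of $X_1^i$ in $f_1$; the set $S_1$ is then essentially $E_2$ minus the zeros of this gcd. By the shift-freedom, one can add any element of $\bF_q[X_1]^{<\ell}\otimes\bF_q[X_2]^{<\ell}$ to $f_1$ and subtract it from $f_2$; minimality of $|S_1|+|S_2|$ then forces a rigid structure on the gcd polynomial and its interaction with the analogous $\gcd(f_{2,\cdot})(X_1)$ factor for $f_2$. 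Exploiting this rigidity together with a Bezout-style count of common zeros of the two gcds shows that the support of $\hat{c}$ still occupies at least a constant fraction of each relevant rectangle, recovering $|c|\geq\rho\Delta(|S_1|+|S_2|)$ for an absolute $\rho>0$.

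The technical heart, and the step I expect to be hardest to execute cleanly, is this rigidity/gcd analysis: without it, one cannot rule out a minimizing decomposition in which $f_1$ and $f_2$ conspire so that $\hat{c}$ vanishes on most of $(S_2\times E_2)\cup(E_1\times S_1)$. The Polishchuk-Spielman bivariate identity showing that such massive cancellation forces an algebraic factorization incompatible with the degree bounds $\deg_{X_1}(f_1)<\ell\leq\Delta/4$ and $\deg_{X_2}(f_2)<\ell\leq\Delta/4$ is what ultimately supplies the absolute constant $\rho$; I would invoke it (as stated by \cite{polishchuk_nearly-linear_1994}) to close the remaining case rather than re-derive it from scratch.
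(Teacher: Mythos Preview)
The paper does not prove this theorem: it is stated as a citation to \cite{polishchuk_nearly-linear_1994} (with the remark that \cite{kalachev_two-sided_2023} first phrased it as product-expansion), and is then used as a black box in the proof of Theorem~\ref{thm:cltcmain}. There is therefore no proof in the paper to compare your proposal against.

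Your sketch is a reasonable outline of how one extracts product-expansion from the Polishchuk--Spielman machinery. The easy case ($|S_1|,|S_2|\leq\Delta/2$) is handled correctly via the Reed--Solomon distance bound. For the hard case, your description is somewhat imprecise: the Polishchuk--Spielman argument does not literally proceed by extracting gcd factors of the coefficient polynomials and then running a B\'ezout count. The actual lemma says, roughly, that if a function on a grid $E_1\times E_2$ agrees on most rows with a polynomial of $X_2$-degree $<\ell$ and on most columns with a polynomial of $X_1$-degree $<\ell$, and if $\ell/\Delta$ is small enough, then the function is globally close to a single polynomial of bidegree $<(\ell,\ell)$. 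The proof goes via an auxiliary ``error-locator'' polynomial $E(X_1,X_2)$ of controlled bidegree that vanishes on the disagreement set, and then uses a divisibility argument (\emph{not} a gcd-of-coefficients factorization) to show that the low-degree part divides out. Since you already plan to invoke the result from \cite{polishchuk_nearly-linear_1994} rather than re-derive it, this distinction does not affect the validity of your plan, but your description of the mechanism is not quite accurate.
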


\cite{dinur_expansion_2024} showed the following result bounding the distance and local testability (i.e.~soundness) of codes obtained from balanced products of 1-dimensional chain complexes given by Definition~\ref{def:sscode}. The presentation in \cite{dinur_expansion_2024} is fairly general, but we only state the case that is relevant to this paper.

\begin{theorem}[\cite{dinur_expansion_2024}]
  \label{thm:dlv}
  For every $\rho>0$, there exists a sufficiently small $\eta=\eta(\rho)>0$ such that the following holds. Let $G$ be an abelian group, and let $\cX$ be the 1-dimensional incidence complex associated to some $\Delta$-regular bipartite graph $\Gamma=\Gamma^{(1)}=\Gamma^{(2)}=(V,E,\ver)$ with $\lambda_2(\Gamma)\leq\eta\Delta$ that respects a free action $\sigma$ of an abelian group $G$. Let $\cF^{(1)}_{\cX},\cF^{(2)}_{\cX}$ be 1-dimensional complexes from Definition~\ref{def:sscode} such that for every $i\in\{1,2\}$, $v\in V$, and $g\in G$, the local code parity-check matrices $h_v^{(i)}\in\bF_q^{m\times E(v)}$ and $h_{\sigma(g)v}^{(i)}\in\bF_q^{m\times E(\sigma(g)v)}$ are equal, under the isomorphism $E(v)\cong E(\sigma(g)v)$ given by $\sigma(g)$. (This setup is precisely the $r=2$ case of Example~\ref{example:graphproduct}.) Also assume that for every $v^{(1)},v^{(2)}\in V$, the pair of local codes $\ker(h^{(1)}_{v^{(1)}}),\ker(h^{(2)}_{v^{(2)}})$ is $\rho$-product expanding. Then the classical code at level $2$ of the balanced product complex $\cC=\cF^{(1)}_{\cX}\otimes_G\cF^{(2)}_{\cX}$ has distance
  \begin{equation*}
    d_2(\cC) \geq \frac{|E|}{\Delta^{O(1)}}
  \end{equation*}
  and is locally testable with soundness
  \begin{equation*}
    \rho_2(\cC) \geq \frac{1}{(\Delta\cdot|E|/|G|)^{O(1)}}.
  \end{equation*}
\end{theorem}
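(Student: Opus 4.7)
The plan is to verify the seven claimed properties of $Z_2(\cC^{(\ell)})$ essentially one at a time. For the length and locality, I would use that the action of $G = \bF_q^t$ on $\bar E = E \times \bF_q^t$ by translation in the second coordinate is free, so orbit representatives yield the identification $\bar E \times_G \bar E \leftrightarrow E \times E \times \bF_q^t$ via $[(e_1, 0), (e_2, g)] \mapsto (e_1, e_2, g)$, giving $N = |E|^2 q^t \le q^{t+2} = q^{t+O(1)}$. The balanced-product boundary is $\partial\otimes I \pm I\otimes\partial$, so $w^{\cC^{(\ell)}} \le 2 w^{\cF_{\bar\cX}^{(\ell)}} \le 4\Delta \le q^{O(1)} \le \log(N)^{O_{\nu,\delta}(1)}$, using $q \le \log(N)^{1/\tau}$ once $N \ge q^t \ge q^{q^\tau}$.

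For the distance and soundness, I would apply Theorem~\ref{thm:dlv} directly with $\Gamma = \bar\Gamma$, $G = \bF_q^t$, and both local-code families equal to the Reed-Solomon codes $\evl_{E(v)}(\bF_q[X]^{<\ell})$ coming from the type-\ref{it:ker} construction. Theorem~\ref{thm:RSpe} guarantees that these local codes are $\rho$-product-expanding for some absolute $\rho > 0$ whenever $\ell \le \Delta/4$, so I choose $\eta$ below the threshold Theorem~\ref{thm:dlv} requires for this $\rho$, and Corollary~\ref{cor:expinst} then supplies the needed spectral gap $\lambda_2(\bar\Gamma) \le \eta \Delta$. Theorem~\ref{thm:dlv} yields $d_2(\cC^{(\ell)}) \ge |\bar E|/\Delta^{O(1)}$ and $\rho_2(\cC^{(\ell)}) \ge 1/(\Delta\cdot |\bar E|/|G|)^{O(1)} = 1/(\Delta |E|)^{O(1)}$. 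Rewriting $|\bar E| = N/|E|$ and bounding $\Delta, |E| \le q = \log(N)^{O_{\nu,\delta}(1)}$ then gives the claimed polylogarithmic-overhead bounds for both $D$ and $\rho_2$.

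For the dimension, I would construct an explicit injection
\[
\iota:\bF_q[Z_1, Z_2, Y_1, \dots, Y_t]^{<\ell} \hookrightarrow Z_2(\cC^{(\ell)})
\]
by setting $\iota(h)([(e_1, 0), (e_2, g)]) = h(e_1, e_2, g)$ on orbit representatives. To check that $\iota(h) \in Z_2(\cC^{(\ell)})$, I would compute $\partial_2$ at each 1-face $[\bar v_1, \bar e_2]$: the edges $\bar e_1 \in \bar E(\bar v_1)$ are parameterized by $e_1 \in E(v_1)$ with coordinates $(e_1, x^{v_1} + \labV(v_1) e_1)$, and normalizing the pair $(\bar e_1, \bar e_2)$ to first-$x$-coordinate-zero yields $\iota(h)([\bar e_1, \bar e_2]) = h(e_1, e_2,\, x_2 + x^{v_1} + \labV(v_1)\cdot e_1)$, which is a polynomial in $e_1$ of total degree $<\ell$. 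Thus the slice lies in $\evl_{E(v_1)}(\bF_q[X]^{<\ell}) = \ker(h_{v_1})$, so $\partial_2$ vanishes on this 1-face; the symmetric 1-face $[\bar e_1, \bar v_2]$ is handled identically. Injectivity of $\iota$ follows because $\ell \le \min(|E|, q)$ forces a nonzero polynomial in $t+2$ variables of degree $<\ell$ to have a nonzero evaluation on $E \times E \times \bF_q^t$, so the image has dimension $\binom{(\ell-1)+(t+2)}{t+2}$ as claimed.

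The multiplication property $Z_2(\cC^{(\ell)})^{*r} \subseteq Z_2(\cC^{(\ell')})$ then drops out of Reed-Solomon multiplicativity: for any $c_1, \dots, c_r \in Z_2(\cC^{(\ell)})$ and any 1-face $[\bar v_1, \bar e_2]$, each $c_i$ restricts on the slice $\{[\bar e_1, \bar e_2]: \bar e_1 \in \bar E(\bar v_1)\}$ to an element of $\evl_{E(v_1)}(\bF_q[X]^{<\ell})$, and by Lemma~\ref{lem:RSmult} the componentwise product of $r$ such restrictions lies in $\evl_{E(v_1)}(\bF_q[X]^{<r(\ell-1)+1}) \subseteq \ker(h^{(\ell')}_{v_1})$ whenever $r(\ell-1) \le \ell'-1$; the symmetric boundary direction is identical. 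The main technical obstacle I anticipate is the dimension argument, specifically the rigorous identification of the $Z_2$-cycle condition in the balanced product with ``slice-restrictions lie in the local Reed-Solomon code.'' This is genuinely subtle because the projection $\bF_q^{\bar E \times \bar E} \to \bF_q^{\bar E \times_G \bar E}$ is the orbit-sum map, which annihilates every $G$-invariant chain in characteristic $p$; one therefore cannot lift planted polynomials to $G$-invariant cycles in $Z_2(\cF \otimes \cF)$ and must compute the balanced-product boundary intrinsically, checking at each 1-face that $\partial_2$ is exactly the application of the local parity-check to the $\Delta$-element incidence slice.
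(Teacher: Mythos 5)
Your proposal does not actually address the statement at hand. The statement is Theorem~\ref{thm:dlv}: the assertion that, assuming $\lambda_2(\Gamma)\leq\eta\Delta$ and $\rho$-product-expansion of every pair of local codes, the balanced product complex $\cC=\cF^{(1)}_{\cX}\otimes_G\cF^{(2)}_{\cX}$ satisfies $d_2(\cC)\geq|E|/\Delta^{O(1)}$ and $\rho_2(\cC)\geq 1/(\Delta\cdot|E|/|G|)^{O(1)}$. This is a result the paper imports from \cite{dinur_expansion_2024}, accompanied only by a remark explaining that the proof there goes through after relaxing two minor assumptions (that $\Gamma$ is a bipartite double cover, and that the local parity-check matrices are identical across all vertices). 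What you have written instead is a proof sketch of the downstream Theorem~\ref{thm:cltcmain}: you verify the length, locality, dimension (via the planted Reed--Muller code), and multiplication property of $Z_2(\cC^{(\ell)})$, and for the distance and soundness you ``apply Theorem~\ref{thm:dlv} directly.'' Relative to the statement you were asked to prove, this is circular: the one genuinely hard step --- deducing near-linear systolic distance and inverse-polynomial cycle expansion of the balanced product from product-expansion of the local Reed--Solomon codes together with spectral expansion of $\bar\Gamma$ --- is exactly what Theorem~\ref{thm:dlv} asserts, and you invoke it rather than prove it.

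A proof of Theorem~\ref{thm:dlv} would require the local-to-global analysis of \cite{dinur_expansion_2024} (building on \cite{panteleev_asymptotically_2022,dinur_good_2023,leverrier_quantum_2022-1}), or at the very least an argument that their analysis survives dropping the double-cover and identical-local-code assumptions, which is the content of the paper's remark following the theorem; none of this appears in your write-up. The portions of your proposal that concern Theorem~\ref{thm:cltcmain} (the identification $\bar E\times_G\bar E\cong E\times E\times\bF_q^t$, the evaluation-map injection of $\bF_q[Z_1,Z_2,Y_1,\dots,Y_t]^{<\ell}$ into $Z_2(\cC^{(\ell)})$, and the multiplication property via Lemma~\ref{lem:RSmult}) do essentially track the paper's proof of that other theorem, but they cannot be spliced in as a proof of the cited result here.
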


\begin{remark}
  While \cite{dinur_expansion_2024} proves a significantly more general result than the statement in Theorem~\ref{thm:dlv}, they make two minor (and fortunately, unnecessary) assumptions that we needed to relax to make Theorem~\ref{thm:dlv} applicable to proving Theorem~\ref{thm:cltcmain}. First, \cite{dinur_expansion_2024} assume that the bipartite graph $\Gamma$ is the double cover of a non-bipartite graph (see the assumption in \cite[Section~3]{dinur_expansion_2024} that the permutation sets $A_i$ are closed under inverse). Second, \cite{dinur_expansion_2024} assume that for each $i\in\{1,2\}$, the local code parity-check matrices $h^{(i)}_v$ are the same for each $v\in V$. Fortunately, the proof of \cite{dinur_expansion_2024} applies equally well without these assumptions, and hence implies Theorem~\ref{thm:dlv}.
\end{remark}

\begin{proof}[Proof of Theorem~\ref{thm:cltcmain}]
  As described in the $G=\bF_q^t$ case of Example~\ref{example:graphproduct}, our complex $\cC^{(\ell)}$ for $\ell\in[\Delta]$ is equal to $\cG_{\cY}^{(\ell)}$ for $\cY=\bar{\cX}\otimes_{G}\bar{\cX}$ and
  \begin{equation*}
    \cG_{x_1\times_Gx_2}^{(\ell)} = \cF_{x_1}^{(\ell)}\otimes\cF_{x_2}^{(\ell)}
  \end{equation*}
  with the local coefficient maps $\cG^{(\ell)}_{y'\leftarrow y}$ for $y'\triangleright y$ defined as in Example~\ref{example:graphproduct}.

  By definition, the classical code $Z_2(\cC^{(\ell)})=Z_2(\cG_{\cY}^{(\ell)})$ at level $2$ of $\cG_{\cY}^{(\ell)}$ has length
  \begin{align}
    \label{eq:ltclenbound}
    N
    &= \dim(\cG_{\cY}^{(\ell)})_2 = |Y(2)| = |X(1)\times_G X(1)| = |E^2\times\bF_q^t| = q^{t+O(1)},
  \end{align}
  where we use the fact that $1\leq |E|\leq q$, and that
  \begin{equation}
    \label{eq:setbalprod}
    (E\times\bF_q^t)\times_G(E\times\bF_q^t) \cong E\times E\times\bF_q^t,
  \end{equation}
  with the isomorphism above given by $(e,x)\times_G(e',x')\mapsto(e,e',x+x')$. We will use this isomorphism throughout this proof below.  

  By the definition of the boundary map $\partial_2^{\cG_{\cY}^{(\ell)}}$, the code $Z_2(\cG_{\cY}^{(\ell)})=\ker\partial_2^{\cG_{\cY}^{(\ell)}}$ is the space of all elements $g\in\bF_q^{Y(2)}=\bF_q^{E^2\times\bF_q^t}$ whose restriction to components within certain affine lines agree with polynomials of degree $<\ell$. Formally, $g\in Z_2(\cG_{\cY}^{(\ell)})$ if and only if the following two constraints hold for every $\bar{v}=(0,x)+\spn\{(1,\labV(v))\}\in\bar{V}$ for $x\in\bF_q^t$ and $v\in V$, and every $\bar{e}=(\bar{e}_0,\bar{e}_{[t]})\in\bar{E}$:
  \begin{enumerate}
  \item\label{it:edgeconstraint1} There exists some polynomial $f(U)\in\bF_q[U]^{<\ell}$ such that for every $u\in E(v)$,
    \begin{equation*}
      g_{(u,x+\labV(v)\cdot u)\times_G\bar{e}} = g_{(u,\bar{e}_0,x+\labV(v)\cdot u+\bar{e}_{[t]})} = f(u).
    \end{equation*}
  \item\label{it:edgeconstraint2} There exists some polynomial $f'(U)\in\bF_q[U]^{<\ell}$ such that for every $u\in E(v)$,
    \begin{equation*}
      g_{\bar{e}\times_G(u,x+\labV(v)\cdot u)} = g_{(\bar{e}_0,u,x+\labV(v)\cdot u+\bar{e}_{[t]})} = f'(u).
    \end{equation*}
  \end{enumerate}
  Note that the first equality in each of the two equations above simply applies the isomorphism~(\ref{eq:setbalprod}).

  Now by definition, for every multivariate polynomial $g'\in\bF_q[X_0,\dots,X_{t+1}]^{<\ell}$, and every affine line $L(U)=(L_0U,\dots,L_{t+1}U_{t+1})$ for $(L_0,\dots,L_{t+1})\in\bF_q^{t+2}$, then $g'(L(U))\in\bF_q[U]^{<\ell}$. Therefore $g=\evl_{E^2\times\bF_q^t}(g')$ satisfies the conditions~\ref{it:edgeconstraint1} and~\ref{it:edgeconstraint2} above, and thus lies inside $Z_2(\cG_{\cY}^{(\ell)})$. Furthermore, if $g'\neq 0$, then beacuse $|E|\geq\Delta\geq\ell$, we have $\evl_{E^2\times\bF_q^t}(g')$. Thus the classical code at level~$2$ of $\cG_{\cY}^{(\ell)}$ has dimension
  \begin{equation*}
    K = \dim(Z_2(\cG_{\cY}^{(\ell)})) \geq \dim(\bF_q[X_0,\dots,X_{t+1}]^{<\ell}) = {(\ell-1)+(t+2)\choose t+2}.
  \end{equation*}

  For some $r\in\bN$ and $\ell\in[\Delta]$, consider some $g_1,\dots,g_r\in Z_2(\cG_{\cY}^{(\ell)})$, so that conditions~\ref{it:edgeconstraint1} and~\ref{it:edgeconstraint2} hold for $g=g_i$ for each $i\in[r]$. Fix any $\ell'\in\bN$ with $r(\ell-1)\leq\ell'-1$. Then because the product of $r$ polynomials of degree $\leq\ell-1$ is a polynomial of degree $\leq r(\ell-1)\leq\ell'-1$, it follows that conditions~\ref{it:edgeconstraint1} and~\ref{it:edgeconstraint2} with $\ell$ replaced by $\ell'$ must hold for $g=g_1*\cdots*g_r$, so $g_1*\cdots*g_r\in Z_2(\cG_{\cY}^{(\ell')})$. Thus~(\ref{eq:cltcmult}) holds.

  By the definition of a chain complex tensor product, we have that $w^{\cC^{(\ell)}}\leq 2w^{\cF_{\bar{\cX}}^{(\ell)}}$. Meanwhile, by definition $w^{\cF_{\bar{\cX}}^{(\ell)}}\leq 2\Delta$; such a bound was shown for type-\ref{it:im} and type-\ref{it:imcom} RM-planted complexes in~(\ref{eq:locbound}) the proof of Theorem~\ref{thm:qldpcmain}, and the same reasoning applies to the type-\ref{it:ker} complexes we consider here. Thus we obtain the desired locality bound
  \begin{align*}
    w^{\cC^{(\ell)}}
    &\leq 4\Delta \leq 4q^\delta \leq \log(N)^{O_{\nu,\delta}(1)},
  \end{align*}
  where the final inequality above holds because $t\in(q^\tau,q^{\nu/32})$, so by~(\ref{eq:ltclenbound}) we have $\log(N)=tq^{O(1)}=q^{\Theta_{\nu,\delta}(1)}$.

  Now by the definition our type-\ref{it:ker} RM-planted complex $\cF_{\bar{\cX}}^{(\ell)}$, every local code $\ker(h_{\bar{v}}^{(\ell)})$ is a Reed-Solomon code of length $\Delta$ and dimension $\leq\Delta/4$. Therefore Theorem~\ref{thm:RSpe} with Theorem~\ref{thm:dlv} implies that there exists some absolute constant $\eta>0$ such that as long as $\lambda_2(\bar{\Gamma})\leq\eta\Delta$, then the classical code at level~$2$ of $\cG_{\cY}^{(\ell)}$ has distance
  \begin{align*}
    D
    &= d_2(\cG_{\cY}^{(\ell)}) \geq \frac{|\bar{E}|}{\Delta^{O(1)}} \geq \frac{N}{\log(N)^{O_{\nu,\delta}(1)}}
  \end{align*}
  and soundness
  \begin{align*}
    \rho_2(\cG_{\cY}^{(\ell)})
    &\geq \frac{1}{(\Delta\cdot|\bar{E}|/|\bF_q^t|)^{O(1)}} \geq \frac{1}{q^{O(1)}} \geq \frac{1}{\log(N)^{O_{\nu,\delta}(1)}},
  \end{align*}
  as desired. Note that the final inequality in each of the two equations above follows by~(\ref{eq:ltclenbound}) and because $\log(N)=q^{\Theta_{\nu,\delta}(1)}$, as described above.
\end{proof}


\section{Acknowledgments}
We thank Niko Breuckmann for insightful discussions that helped motivate this work. We thank Omar Alrabiah for suggesting a way to simplify the proof of expansion in Theorem~\ref{thm:expproof}, and we thank Venkatesan Guruswami for valuable feedback on the manuscript that improved the presentation.

\bibliographystyle{alpha}
\bibliography{library}

\appendix

\section{Toplogical View of the Coboundary-Invariant Form}
\label{sec:ciformintuition}
In this section, we provide some topological intuition for the definition of the multilinear form $\zeta$ in the proof of Theorem~\ref{thm:qldpcmain}.

Recall from~(\ref{eq:zetadef}) that $\zeta=\alpha\circ\xi$, where $\alpha$ is a linear functional and $\xi$ is a multilinear function taking values in $\bF_q^{Y(r)}$. We first briefly comment on the definition of $\alpha$, before turning to provide intuition for the definition of $\xi$.

The functional $\alpha$ views the output of $\xi$ as the evaluation of a polynomial, so that $\alpha$ simply interpolates this polynomial, zeroes out all low-degree monomials, and then outputs the sum of the evaluations on a certain set of points. Therefore in the language of quantum codes, $\alpha$ enforces a sort of systematic encoding of the underlying logical (i.e.~message) qudits, meaning that the logical qudits correspond to evaluation points of polynomials. Similar techniques using systematic encodings have been previously used in constructions of (non-LDPC) quantum codes supporting transversal $CCZ$ gates \cite{krishna_towards_2019,wills_constant-overhead_2024,golowich_asymptotically_2024,nguyen_good_2024}. However, these prior works did not need to zero out low-degree monomials after the polynomial interpolation step. This additional step in our setting ultimately stems from the fact that the LDPC condition makes it more difficult to control the structure of the polynomials whose evaluations correspond to coboundaries.

We now discuss the definition of $\xi$. Recall from~(\ref{eq:xidef}) that for $(f^{(1)},\dots,f^{(r)})\in(\cG_{\cY}^{(1)})^1\times\cdots\times(\cG_{\cY}^{(r)})^1$, the value of $\xi(f^{(1)},\dots,f^{(r)})$ at a given point $y\in Y(r)$ equals the value of a local multilinear form $\xi^{\loc}$ applied to the restrictions of $f^{(1)},\dots,f^{(r)}$ to edges within the $r$-dimensional cube $y$. The definition of $\xi^{\loc}$ is given in~(\ref{eq:xilocdef}), and consists of a sum of terms corresponding to length-$r$ paths in the $r$-dimensional boolean hypercube from $0^r$ to $1^r$. Below, we provide a topological view of this definition of $\xi^{\loc}$; here we focus on intuition, and do not make any rigorous claims.

Consider a chain complex associated to an oriented, closed $r$-dimensional manifold. We can often obtain a coboundary-invariant $r$-multilinear form $\zeta(f^{(1)},\dots,\zeta(f^{(r)})$ on 1-cocycles $f^{(h)}$ of this manifold by taking the \textit{cup product} $f^{(1)}\cup\cdots\cup f^{(r)}$, which is a well-known operation from algebraic topology. Informally, the resulting multilinear form can be viewed as a sum of terms corresponding to points in the manifold, where the term at a given point is nonzero if the Poincar\'{e} duals of $f^{(1)},\dots,f^{(r)}$, viewed as $(r-1)$-dimensional submanifolds, all intersect at that point. Indeed, this approach is taken in~\cite{zhu_non-clifford_2023} to obtain qLDPC codes with transversal non-Clifford gates, albeit with low (at most logarithmic) distance.

To extend this approach to chain complexes from cubical complexes, which are more general than manifolds, we use a similar local multilinear form $\xi^{\loc}$ within each $r$-dimensional cube $y\in Y(r)$ that counts (signed) intersection points, and then sum up these local forms over all $y\in Y(r)$. Specifically, fix some values $1>s_1>\cdots>s_r>0$. For each $h\in[r]$, consider the $r$ hyperplanes $P^{(r,1)},\dots,P^{(r,r)}$ of dimension $r-1$ in the $r$-dimensional cube $[0,1]^r\subseteq\bR^r$ (parametrized by $u=(u_1,\dots,u_r)$), such that $P^{(r,i)}=\{u\in[0,1]^r:u_i=s_i\}$. For a given $i\in[r]$, by definition the hyperplanes $P^{(r,j)}$ for $j\neq i$ hyperplanes cut $P^{(r,i)}$ into $2^{r-1}$ pieces, where each piece touches a unique direction-$i$ edge in the boolean hypercube given by the boundary of $[0,1]^r$. Thus for each direction-$i$ edge $e\in Y^{\loc}(1)$ in the $r$-dimensional boolean hypercube, we have a piece $P^{(r)}_e$ of an $(r-1)$-dimensional hypercube intersecting edge $e$, so that $\bigcup_{e\in Y^{\loc}(1):\direc(e)=i}P^{(r)}_e=P^{(r,i)}$.

Now given $(f^{(1)},\dots,f^{(r)})\in({\cY^{\loc}}^1)^r=(\bZ^{Y^{\loc}(1)})^r$, for every sequence of edges $(e_1,\dots,e_r)\in Y^{\loc}(1)^r$ such that $P^{(1)}_{e_1}\cap\cdots\cap P^{(r)}_{e_r}\neq\emptyset$, we add a term of the form $\pm f^{(1)}_{e_1}\cdots f^{(r)}_{e_r}$ to $\xi^{\loc}(f^{(1)},\dots,f^{(r)})$. Note that we must have $\{\direc(e_1),\dots,\direc(e_r)\}=[r]$, as for $h,h',i\in[r]$ with $h\neq h'$, the hyperplanes $P^{(h,i)},P^{(h',i)}$ are parallel and do not intersect. To see that the resulting $\xi^{\loc}$ will have the form~(\ref{eq:xilocdef}), consider that if $e_1$ is not incident to the vertex $0^r$ in the $r$-dimensional boolean hypercube, then in some directon $i\neq\direc(e_1)\in[r]$, we have that the projection of $P^{(1)}_{e_1}$ to the $i$th coordinate lies inside $[s_1,1]$. But as $s_1>s_2>\cdots>s_r$, for whichever $h\in\{2,\dots,r\}$ has $\direc(e_h)=i$, then the projection to the $i$th coordinate of $P^{(h)}_{e_h}\subseteq P^{(h,i)}$ is $\{s_h\}\not\subseteq[s_1,1]$, and hence $P^{(1)}_{e_1}\cap P^{(h)}_{e_h}=\emptyset$. Thus $\xi^{\loc}(f^{(1)},\dots,f^{(r)})$ only has a term of the form $\pm f^{(1)}_{e_1}\cdots f^{(r)}_{e_r}$ if $e_1$ is incident to the vertex $0^r$. Inductively applying this reasoning to $f^{(2)},\dots,f^{(r)}$ on the restriction to the $(r-1)$-dimensional subcube with $u_{\direc(e_1)}=1$, we find that $\xi^{\loc}(f^{(1)},\dots,f^{(r)})$ only has a term of the form $\pm f^{(1)}_{e_1}\cdots f^{(r)}_{e_r}$ if $e_1,\dots,e_r$ forms a path from $0^r$ to $1^r$, as indeed is the case in~(\ref{eq:xilocdef}). The sign $\sgn(\pi)\in\{\pm 1\}$ in~(\ref{eq:xilocdef}) is then chosen to make the analysis go through; note that if $\bF_q$ has characteristic $2$, the sign becomes meaningless and the above description provides a complete definition of $\xi^{\loc}$.

Our approach described above can be viewed as expressing a cubical complex as a gluing of individual $r$-dimensional cubes. We then view each such cube as a manifold with boundary, and endow it with a local multilinear form based on intersecting submanifolds within the cube. A similar view was taken in~\cite{scruby_quantum_2024}, who also expressed certain cubical-like complexes as a gluing of individual cubes, in order to obtain chain complexes with a coboundary-invariant multilinear form. However, the quantum LDPC codes obtained in \cite{scruby_quantum_2024} are similar to those of \cite{zhu_non-clifford_2023} in that their distance remains low (i.e.~logarithmic). In contrast, by applying these techniques to more general cubical complexes, we are able to obtain qLDPC codes with polynomial distance and almost linear dimension, albeit with polylogarithmic locality.
\end{document}